\newlist{myEnumerate}{enumerate}{9}
\setlist[myEnumerate,1,2,3,4,5,6,7,8,9]{label*=\arabic*.} 
\setlist[itemize]{label=\textbullet}
\setlist[itemize,2]{label=--}
\setlist[itemize,3]{label=*}
\setlist[itemize,5]{label=--}
\setlist[itemize,6]{label=*}
\setlist[itemize,8]{label=--}
\crefname{paragraph}{section}{sections}
\DeclareRobustCommand{\bigDelta}{\mathop{\vphantom{\sum}\mathpalette\bigDelta@\relax}\slimits@}
\newcommand{\bigDelta@}[2]{\vcenter{\sbox\z@{$#1\sum$}\hbox{\resizebox{.9\dimexpr\ht\z@+\dp\z@}{!}{$\m@th\Delta$}}}}
\newcommand{\VM}{\ensuremath{\mathrm{VERTEXMINOR}}}
\newcommand{\kVM}{\ensuremath{k\mathrm{-VERTEXMINOR}}}
\newcommand{\SVM}{\ensuremath{\mathrm{STARVERTEXMINOR}}}
\newcommand{\kSVM}{\ensuremath{k\mathrm{-STARVERTEXMINOR}}}
\newcommand{\QM}{\ensuremath{\mathrm{QUBITMINOR}}}
\newcommand{\SOET}{\ensuremath{\mathrm{SOET}}}
\newcommand{\HAMSOET}{\ensuremath{\mathrm{HAMSOET}}}
\newcommand{\kSOET}{\ensuremath{k\mathrm{-SOET}}}
\newcommand{\NP}{\ensuremath{\mathbb{NP}}}
\newcommand{\Poly}{\ensuremath{\mathbb{P}}}
\newcommand{\CubHam}{\ensuremath{\mathrm{CubHam}}}
\newcommand{\DPP}[1]{\ensuremath{#1\mathrm{-DPP}}}
\newcommand{\LCLPMCC}{\ensuremath{\mathrm{LC}+\mathrm{LPM}+\mathrm{CC}}}
\newcommand{\bs}[1]{\ensuremath{\boldsymbol{#1}}}
\newtheorem{theorem}{Theorem}[section]
\newenvironment{thm}{$\vspace{-0.5em}$\begin{theorem}}{\hfill$\diamond$\end{theorem}}
\crefname{thm}{theorem}{theorems}
\newtheorem{corollary}{Corollary}[theorem]
\newenvironment{cor}{$\vspace{-0.5em}$\begin{corollary}}{\hfill$\diamond$\end{corollary}}
\crefname{cor}{corollary}{corollaries}
\newtheorem{lemma}{Lemma}[section]
\newenvironment{lem}{$\vspace{-0.5em}$\begin{lemma}}{\hfill$\diamond$\end{lemma}}
\crefname{lem}{lemma}{lemmas}
\newtheorem{definition}{Definition}[section]
\newenvironment{mydef}{$\vspace{-0.5em}$\begin{definition}}{\hfill$\diamond$\end{definition}}
\crefname{mydef}{definition}{definitions}
\newsavebox{\mybox}
\newtheorem{problem}{Problem}[section]
\newenvironment{prm}{$\vspace{-0.5em}$\begin{problem}}{\hfill$\diamond$\end{problem}}
\crefname{problem}{problem}{problems}
\newcommandx{\unsure}[2][1=]{\todo[linecolor=red,backgroundcolor=red!25,bordercolor=red,#1]{#2}}
\newcommandx{\change}[2][1=]{\todo[linecolor=blue,backgroundcolor=blue!25,bordercolor=blue,#1]{#2}}
\newcommandx{\info}[2][1=]{\todo[linecolor=OliveGreen,backgroundcolor=OliveGreen!25,bordercolor=OliveGreen,#1]{#2}}
\newcommandx{\improvement}[2][1=]{\todo[linecolor=Plum,backgroundcolor=Plum!25,bordercolor=Plum,#1]{#2}}
\newcommandx{\thiswillnotshow}[2][1=]{\todo[disable,#1]{#2}}
\newcommand{\poly}{\mbox{poly}}
\title{How to transform graph states using single-qubit operations: \\
\large computational complexity and algorithms}
\author{Axel Dahlberg \and Jonas Helsen \and Stephanie Wehner}
\begin{document}
\maketitle

\begin{abstract}
    Graph states are ubiquitous in quantum information with diverse applications ranging from quantum network protocols to measurement based
quantum computing.
    Here we consider the question whether 
	one graph (\emph{source}) state can be transformed into another graph (\emph{target}) state, using a specific set of quantum operations (\LCLPMCC):
single-qubit Clifford operations (LC), single-qubit Pauli measurements (LPM) and classical communication (CC) between sites holding the individual qubits.

    We first show that deciding whether a graph state $\ket{G}$ can be transformed into another graph state $\ket{G'}$ using \LCLPMCC\ is \NP-Complete, even if $\ket{G'}$ is restricted to be the GHZ-state.
    However, we also provide efficient algorithms for two situations of practical interest: 
\begin{enumerate}
\item $\ket{G}$ has \emph{Schmidt-rank width} one and $\ket{G'}$ is a GHZ-state. 
The Schmidt-rank width is an entanglement measure of quantum states, meaning this algorithm is efficient if the original state has little entanglement. 
Our algorithm has runtime $\mathcal{O}(\abs{V(G')}\abs{V(G)}^3)$, and is 
also efficient in practice even on small instances as further showcased by a freely available software implementation.

\item $\ket{G}$ is in a certain class of states with unbounded Schmidt-rank width, and $\ket{G'}$ is a GHZ-state of a constant size. 
    Here the runtime is $\mathcal{O}(\poly(|V(G)|))$, showing that more efficient algorithms can in principle be found even for states holding a large amount
of entanglement, as long as the output state has constant size. 
\end{enumerate}

    Our results make use of the insight that deciding whether a graph state $\ket{G}$ can be transformed to another graph state $\ket{G'}$ is equivalent to a known decision problem in graph theory, namely the problem of deciding whether a graph $G'$ is a \emph{vertex-minor} of a graph $G$.
    We prove that the vertex-minor problem is \NP-Complete by relating it to a new decision problem on 4-regular graphs which we call the \emph{semi-ordered Eulerian tour} (SOET) problem, and show that a version of the Hamiltonian cycle problem can be reduced to SOET. 
	The SOET problem and many of the technical tools developed to obtain our results may be of independent interest.
\end{abstract}

\newpage

\tableofcontents

\newpage


\section{Introduction}
A key concept in realizing quantum technologies is the preparation of specific resource states, which then enable further quantum processing. 
For example, many quantum network protocols 
first ask to prepare a specific resource state that is shared amongst the network nodes, followed by a subsequent measurement and exchange of classical communication. 
The simplest instance of this concept is indeed quantum key distribution~\cite{ekert1991quantum,bennett2014quantum}, in which we first produce a maximally entangled state, followed by random measurements. 
Similarly, measurement based quantum computing~\cite{Raussendorf2001} proceeds by first preparing the quantum 
device in a large resource state, followed by subsequent measurements on the qubits. 

An important class of such resource states are known as graph states. These states can 
be described by a simple undirected and unweighted graph where the vertices correspond to the qubits of the state~\cite{Hein2006}.
Apart from their broad range of applications, an appealing feature of graph states is that they can be efficiently described classically. 
Specifically, to describe a graph state on $n$ qubits, only $\frac{n(n-1)}{2}$ bits are needed to specify the edges of the graph. This is in sharp contrast to the 
$2^n$ complex numbers required to describe a general quantum state~\cite{Nielsen2010}.
It turns out that for graph states, and indeed the more general class of \emph{stabilizer states}, their evolution under \emph{Clifford} operations and \emph{Pauli} measurement can be simulated efficiently on a classical computer~\cite{Gottesman1999}.

Well known applications of graph states include cluster states~\cite{nielsen2006cluster} used in measurement based quantum computing, where together with arbitrary single-qubit measurements, these states nevertheless do form a universal resource for measurement-based quantum computation~\cite{Raussendorf2001}.
Graph states also arise as logical codewords of many error-correcting codes~\cite{Schlingemann2002}.
In the domain of quantum networking, a specific class of graph states is of particular interest. Specifically, these are states which are GHZ-like, i.e., they are equivalent to the GHZ state up to single-qubit Clifford operations. 
GHZ-states have been shown to be useful for applications such as quantum secret sharing~\cite{Markham2008}, anonymous transfer~\cite{Christandl2005}, conference key agreement~\cite{Ribeiro2017} and clock synchronization~\cite{komar2014quantum}.
It turns out that graph states described by either a star graph or a complete graph are precisely those GHZ-states~\cite{Hein2006}.

Given the desire for graph states, we may thus ask how they can effectively be prepared, and transformed. We consider the situation in which we already have a specific starting state (the \emph{source} state), and we wish to transform it to a desired \emph{target} state, using an available set of operations. Motivated by the fact that on a quantum network or distributed quantum processor, local operations
are typically much faster and easier to implement, we consider the set of operations consisting of single-qubit Clifford operations (LC), single-qubit Pauli measurements (LPM), and classical communication (CC).
Applications of an efficient algorithm that finds a series of operations to transform a source to a target state includes
the ability to make effective routing - in this context, state preparation - decisions, on a distributed quantum processor or network. 
Here, fast decisions are essential since quantum memories are inherently noisy and the source state will therefore become useless if too much time is spent on forming a decision.
Such algorithms could also be used as a design tool in the study of quantum repeater schemes~\cite{azuma2015all},
and the discovery of effective code switching procedures in quantum error correction~\cite{gottesman1997stabilizer,nautrup2017fault}.

To understand which graph states are related under \LCLPMCC\ operations 
we introduced the notion of a \emph{qubit-minor} in~\cite{Dahlberg2018}.
A qubit-minor of a graph state $\ket{G}$ is another graph state $\ket{G'}$ such that $\ket{G}$ can be transformed to $\ket{G'}$ using only single-qubit Clifford operations, single-qubit Pauli measurement and classical communication.
It turns out that the question of whether a graph state has a certain qubit-minor or not, can be completely phrased in graph theoretical terms.
In this setting the single-qubit Clifford operations correspond to a graph operation called a \emph{local complementation}.
Consequently, the single-qubit Pauli measurements and classical communication correspond to local complementations and vertex-deletions.
In graph theory, a well studied notion is that of a \emph{vertex-minor}s of a graph, which are exactly the graphs reachable by performing local complementations and vertex-deletions.
Interestingly, we show in~\cite{Dahlberg2018} that the notion of qubit-minors are equivalent to vertex-minors, in the sense that $\ket{G'}$ is a qubit-minor of $\ket{G}$ if and only if $G'$ is a vertex-minor of $G$.

Vertex-minors play an important role in algorithmic graph theory, together with the notion of \emph{rank-width}, which is a complexity measure of graphs.
Specifically, one can efficiently decide membership of a graph in some set of graphs, if this set is closed under 
taking vertex-minors and of fixed (bounded) rank-width~\cite{Oum2005}.
An example of such a set is the set of \emph{distance-hereditary graphs}, which are in fact exactly the graphs with rank-width one~\cite{Oum2005}.
Another example of a set of graphs which is closed under taking vertex-minors are \emph{circle graphs}, which are however of unbounded rank-width(\cite[Proposition 6.3]{Oum2006} and \cite{Courcelle2008}).
An appealing connection between the rank-width of graphs, and the entanglement in the corresponding graph states was identified in~\cite{VandenNest2007}, where it is shown that the rank-width corresponds to the Schmidt-rank width of the graph state, which is an entanglement
measure.
Specifically, the higher rank-width a graph has, the more entanglement there is in terms of this measure.
Another interpretation of the \emph{Schmidt-rank width} is that it captures how complex the quantum state is.
One way to describe quantum states is by a technique called \emph{tree-tensor networks} and it was shown in~\cite{VandenNest2007} that the minimum dimension of the tensors needed to describe a state is in fact given by the \emph{Schmidt-rank width}.

In the domain of complexity theory, the rank-width and related measures such as the tree and clique-width~\cite{courcelle1993handle,bertele1972nonserial} also form a measure
of the inherent complexity of the underlying problem, and feature prominently in the study of fixed-parameter tractable (FPT) algorithms~\cite{Downey1999}. Specifically, a problem is called fixed-parameter tractable in terms of a parameter $r$, if any instance $I$ of the problem of 
fixed $r$, is solvable in time $f(r) \cdot |I|^{\mathcal{O}(1)}$, where $|I|$ is the size of the instance and $f$ is an arbitrary function 
of $r$~\cite{Downey1999}. In this work, the $r$ is the rank-width, and for graphs of constant rank-width 
the techniques of Courcelle~\cite{Courcelle2007} and its generalizations~\cite{Courcelle2011}, can be used to obtain polynomial time 
algorithms for problems such as Graph Coloring~\cite{Ganian2010}, or Hamiltonian Path~\cite{Langer2014}. 
While very appealing from a complexity theory point of view, however, a direct application of these techniques does however not usually lead to polynomial time algorithms that are also efficient in practice, since $f(r)$ is often prohibitively large.

Since the problem of deciding whether a graph state $\ket{G'}$ is a qubit-minor of $\ket{G}$ (\QM) is equivalent to deciding if $G'$ is a vertex-minor of $G$ (\VM)~\cite{Dahlberg2018}, an efficient algorithm for \VM, directly provide an efficient algorithm for \QM. This in turn can be used for fast decisions on how to transform graph states in a quantum network or distributed quantum processor.
However, not much was previously known about the computational complexity of \VM\ and therefore if efficient algorithms exists.
For a related but slightly more restrictive minor-relation, namely \emph{pivot-minors} it has been shown in~\cite{Pivot-minors2016} that checking whether a graph $G$ has a pivot-minor isomorphic to another graph $G'$ is \NP-Complete.
However the complexity of deciding whether $G'$ is a vertex-minor of $G$ was left as an open problem. What's more, we emphasize that 
for our application we are interested in preparing a specific target state $G'$ on a specific set of qubits, as qubits are generally not interchangeable in the applications of our algorithm. As such, our question is not whether we can obtain a graph that is isomorphic to $G'$, but rather whether we can prepare $G'$ on a specific set of vertices.

Evidently, for fixed rank-width, it is not difficult to apply the techniques of Courcelle~\cite{Courcelle2007}, to obtain an FPT algorithm for our problem that is efficient if both the size of $G'$, as well as the rank-width of $G$ are bounded 
 (as we have shown in~\cite{Dahlberg2018}).
Indeed, a powerful method for deciding if a graph problem is fixed-parameter tractable is by Courcelle's theorem and its generalizations~\cite{Courcelle2011}. It turns out that also for our case, however, a direct implementation of Courcelle's theorem does not give an algorithm that is efficient
in practice.
In fact, in the case of \VM, this constant factor obtained by applying the techniques of Courcelle in~\cite{Dahlberg2018} can be shown to be\footnote{Far greater than the number of atoms in the universe.} a power of twos
\begin{equation}\label{eq:tower}
    f(r) = 2^{2^{\cdot^{\cdot^{\cdot^{2^r}}}}}
\end{equation}
where $r$ is the rank-width of the input graph $G$ and the height of the tower is 10~\cite{Dahlberg2018}.

\subsubsection*{Results}
In this paper we determine the computational complexity of \VM\ and therefore of \QM.
In particular we prove that it is in general \NP-Complete to decide whether a graph $G'$ is a vertex-minor of another graph $G$.
We however also give efficient algorithms for this problem whenever the input graphs belong to particular graph classes.
Any overview of the complexity of the problem for different classes of graphs considered in this paper can be seen in \cref{fig:graph_classes}.

We point out that our results of \NP-Completeness and the presented algorithms also apply to the more general class of stabilizer states of relevance in quantum error correction.
This is because any stabilizer state can be transformed to some graph state using only single-qubit Clifford operations.
Furthermore, given a stabilizer state on $n$ qubits, a graph state equivalent under single-qubit Clifford operations can be found efficiently in time $\mathcal{O}(n^3)$~\cite{VandenNest2004}.

\begin{enumerate}
    \item We show that the problem of deciding whether a graph $G'$ is a vertex-minor of another graph $G$ is \NP-Complete.
        This implies that \QM\ is also \NP-Complete.
        The hardness part of the above completeness result is obtained by considering a less general version of \VM~where $G'$ is restricted to be a star graph, which corresponds to transforming graph states to GHZ states. We call this problem \SVM~and show that it is also \NP-Complete, even when $G$ is in a strict subclass of circle graphs.
        To show \NP-Completeness of \SVM~we introduce the concept of a \emph{semi-ordered Eulerian tour} (\SOET)\footnote{Pronounced as "suit".}. This new graph-theoretical structure might be of independent interest. We show that the problem of deciding whether a graph allows for a 
        \SOET~ (which we call the \SOET~problem) can be reduced to \SVM. Furthermore, we show that deciding whether a $3$-regular graph is Hamiltonian (\CubHam) can be reduced to \SOET.
        Since the \CubHam\ has previously been shown to be \NP-Complete, this implies that \VM~is as well, since we also show that \VM\ is in \NP.
        As a key technical tool we also introduce a new class of graphs which we call triangular-expanded graphs.
        These graphs allow us to relate Hamiltonian tours on $3$-regular graphs to certain Eulerian tours on $4$-regular graphs and in turn to reduce \CubHam\ to \SOET.
    \item We provide an efficient algorithm for \SVM, which we prove to be correct if $G$ is distance-hereditary, or equivalently if $G$ has rank-width one.
        The run-time of this algorithm is $\mathcal{O}(\abs{V(G')}\abs{V(G)}^3)$, where $V(G)$ denotes the vertex-set of $G$.
        This algorithm can therefore be used to decide how to transform graph states, with Schmidt-rank width one, to GHZ-states using single-qubit Clifford operations, single-qubit Pauli measurements and classical communication.
        As mentioned above, a more general method to find efficient algorithms for certain graph problems on graphs with bounded rank-width is by using Courcelle's theorem~\cite{Courcelle2011}.
        Compared to the algorithm provided by a direct implementation of Courcelle's theorem, see~\cite{Dahlberg2018}, our algorithm presented here does not suffer from a huge constant factor in the runtime, as in \cref{eq:tower}.
        In fact, we have implemented our efficient algorithm and see that it typically takes for example $~50$ ms to run for the case when $\abs{V(G)}=50$ on a standard desktop computer, see \cref{fig:runtimes}.
        Furthermore, our algorithm is still efficient even if the size of $G'$ is unbounded.

        Distance-hereditary graphs, and therefore graphs with rank-width one, are exactly the graphs that can be reached by adding leaves and performing twin-splits from a graph with one vertex~\cite{Mulder1986}.
        To prove that our algorithm is correct we also present some new interesting results relating vertex-minors, distance-hereditary graphs and leaves and twins.
        For example we show that if $v$ is a leaf or a twin in $G$ but not a vertex in $G'$, then $G'$ is a vertex-minor of $G$ if and only if $G'$ is a vertex-minor of $G\setminus v$, where $\setminus v$ denotes vertex-deletion.

        \begin{figure}[H]
            \centering
            \includegraphics[width=0.6\textwidth]{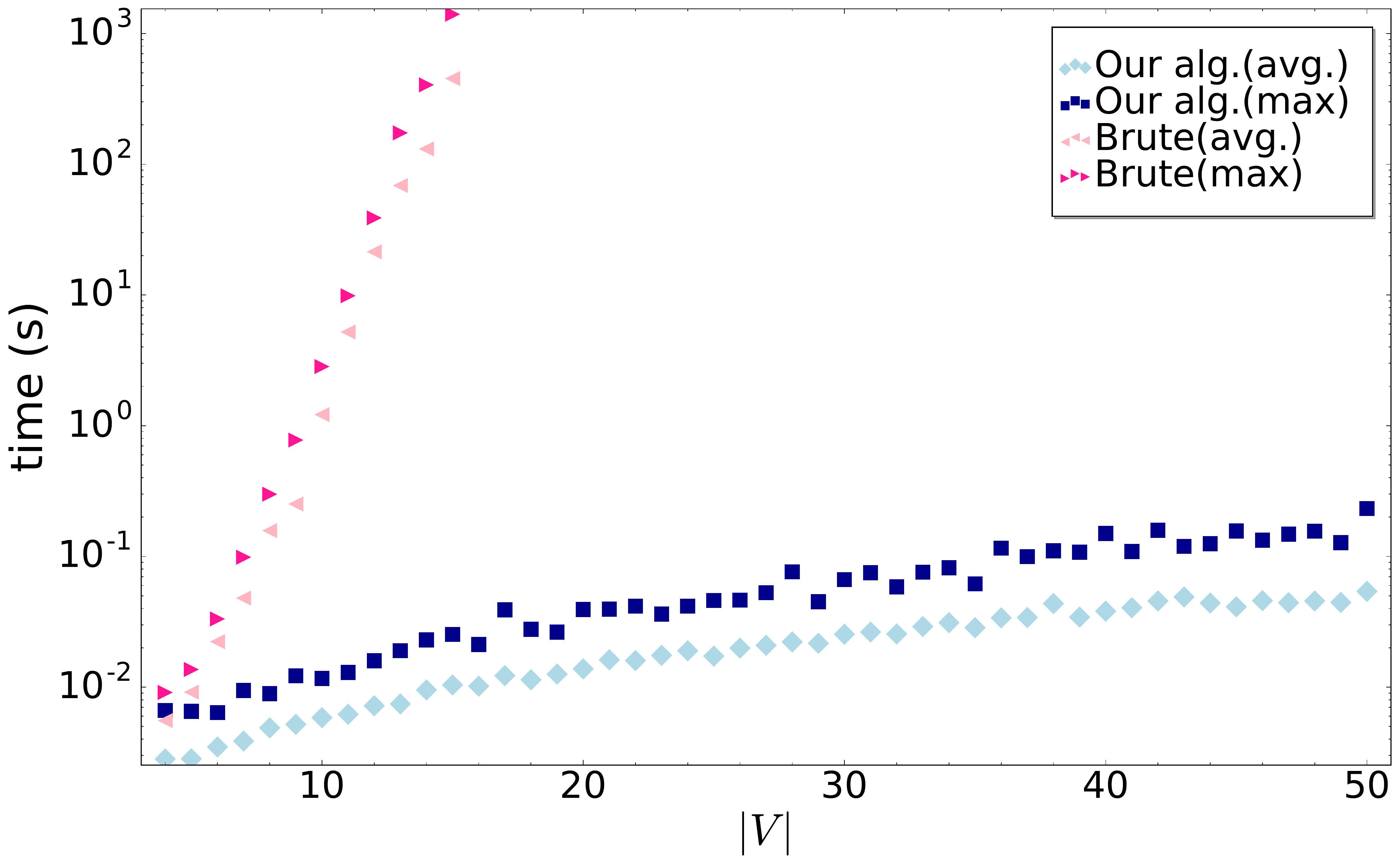}
            \caption{Average and maximal observed run-times for two algorithms that check if a star graph on four vertices is a vertex-minor of a randomly generated connected graph $G$ on vertices $V$.
                Random connected distance-hereditary graphs are generated by starting from a single-vertex graph and randomly adding leaves or performing twin-splits, see \cref{sec:DH}, which generates any connected distance-hereditary graph~\cite{Mulder1986}.
                    "Our alg." refers to the algorithm described in \cref{sec:thealg} and "Brute" is the non-efficient algorithm described in~\cite{Dahlberg2018}. The algorithm of~\cite{Dahlberg2018} based on the techniques of Courcelle~\cite{Courcelle2007} is not depicted here since
the pre factor makes an application impractical in practice whenever $|V| < f(r)$ of~\cref{eq:tower}.  
                    For each size of $V$, 10 random graphs are generated for "Brute" and 100 random graphs for "Our alg.", from which the average "(avg.)" and max "(max)" runtime is computed.
                    Both algorithms are implemented in SAGE~\cite{sage} and the tests were performed on an iMac with 3.2 GHz Intel Core i5 processor with 8 GB of 1600 MHz RAM.
            }
            \label{fig:runtimes}
        \end{figure}

    \item We call \kSVM\ the restriction of \SVM\ where $G'$ is restricted to a star graph having $k$ vertices, corresponding to a GHZ-state (up to LC) on $k$ qubits.
        We show that \kSVM\ is in \Poly\ if $G$ is a circle graph\footnote{Not to be confused with cycle graph.}.
        This means that \SVM\ is fixed-parameter tractable in the size of $G'$ on circle graphs.
        Interestingly the class of circle graphs has unbounded rank-width~(\cite[Proposition 6.3]{Oum2006} and \cite{Courcelle2008}) and the corresponding graph states therefore have unbounded entanglement according to the Schmidt-rank width.
    \item We show that any connected graph $G'$ on three vertices or less is a vertex-minor of any connected graph $G$ if and only if the vertices of $G'$ are also in $G$. An efficient algorithm for finding the transformation that takes the former graph to the latter is also provided.
    \item We also prove several technical results that may be interesting in their own right. An example of this is \cref{thm:LU} which points out the interesting behavior of a certain class of graphs with respect to a bipartition of their vertices. Another example would be \cref{thm:T_size} where we show that any distance heredity graph one more than four vertices has a \emph{foliage} (the set of leaves, axils and twins in the graph) of size at least four.
\end{enumerate}

\begin{figure}[H]
    \centering
    \includegraphics[width=0.8\textwidth]{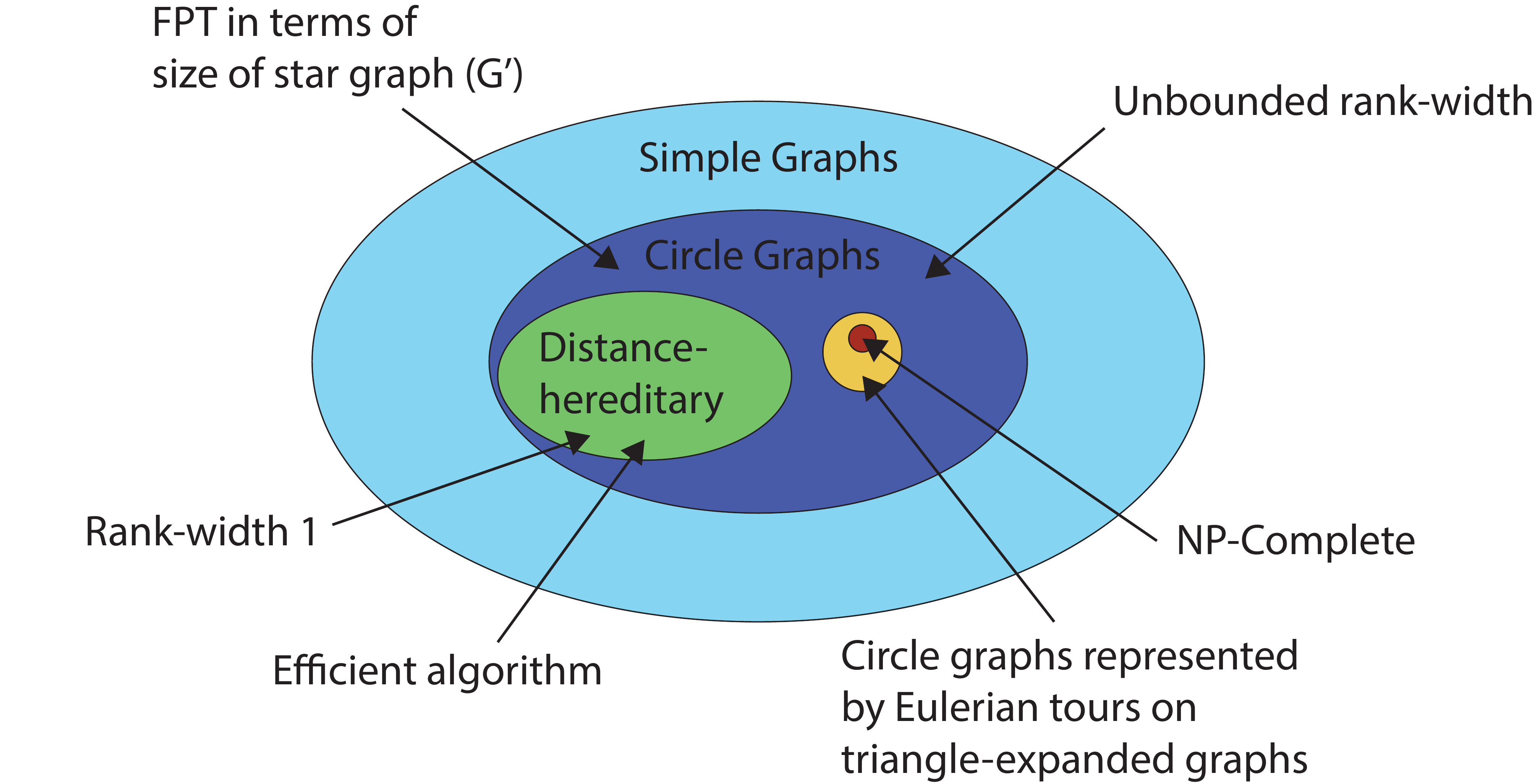}
    \caption{An overview of the graph classes discussed in this paper and what the computational complexity of solving \SVM\ on these classes is.
        The sizes of the sets in the figure are not exact, however their intersections and non-intersections are.
        The definitions of these classes graphs are given in \cref{sec:circle} and \cref{sec:DH}.
        We show that the \SVM\ is \NP-Complete (result 1.) on graphs in a strict subclass of circle graphs (red).
        The same therefore holds for any superclass and in particular for circle graphs (dark blue).
        On the other hand we show that \kSVM\ is in \Poly\ on circle graphs (result 3.)
        On distance-hereditary graphs (green) we found an efficient algorithm for \SVM\ (result 2.)
        Finally it is still an open question if \kSVM\ is \NP-Complete on simple graphs (light blue) in general.}
    \label{fig:graph_classes}
\end{figure}

\subsubsection*{Overview}
The paper is structured as follows.
In \cref{sec:prel} we describe graph states and consider several notions of graph theory we will need throughout the paper.
We also introduce the \VM~and \SVM~problems and the notion of a semi-ordered Eulerian tour.
We also prove a few technical results concerning distance hereditary graphs, circle graphs and vertex-minors which we will need later.
Result 1. above is given in \cref{sec:complexity}, result 2. in \cref{sec:DH_alg}, result 3. in \cref{sec:FPT} and result 4. in \cref{sec:small_star}.


\section{Preliminaries}\label{sec:prel}

In this section we set our notation and recall various concepts which will be used throughout the rest of the paper.
We start by providing the definitions of graph states, qubit-minors and the relation to vertex-minors.
We then recall the definitions of local complementation and vertex deletion as operations on graphs and use these to state the problems \VM~and \SVM~which are the core objects of study in this paper. Furthermore, we discuss circle graphs and their various characterizations and discuss how local complementation behaves on these graphs. We also introduce the concept of semi-ordered Eulerian tours, which is a key technical concept for the results later in this paper. Finally we discuss distance-hereditary graphs, which form a subclass of circle graphs. We discuss how these graphs can be built up out of elementary pieces and prove some technical results which will be used later.

\subsection{Notation and definitions}
Here we introduce some notation and vocabulary that will be used throughout this paper.
We assume familiarity of the general notation of quantum information theory, see~\cite{Nielsen2010} for more details.

\underline{\emph{Quantum operations}}\newline
The Pauli matrices will be denoted as
\begin{equation}
    \mathbb{I}=\begin{pmatrix}1 & 0\\0 & 1\end{pmatrix},\quad X=\begin{pmatrix}0 & 1\\1 & 0\end{pmatrix},\quad Y=\begin{pmatrix}0 & -\mathrm{i}\\\mathrm{i} & 0\end{pmatrix},\quad Z=\begin{pmatrix}1 & 0\\0 & -1\end{pmatrix}.
\end{equation}
The single-qubit Clifford group $\mathcal{C}$ consists of operations which leave the Pauli group $\mathcal{P}=\langle \mathrm{i}\mathbb{I},X,Z\rangle$ invariant.
More formally, $\mathcal{C}$ is the normalizer of the Pauli group, i.e.
\begin{equation}
    \mathcal{C}=\big\{C\in \mathcal{U}:(\forall P\in\mathcal{P}:CPC^\dagger\in\mathcal{P})\big\},
\end{equation}
where $\mathcal{U}$ is the single-qubit unitary operations.

\underline{\emph{Sequences and words}}\newline
A \emph{sequence} $\bs{X}=x_1x_2\dots x_k$ is an ordered, possibly empty, tuple of elements in some set $X$.
We also call a sequence a \emph{word} and its elements \emph{letters}.
We write $\bs{X}\subseteq X$, when all letters of $\bs{X}$ are in the set $X$.
A \emph{sub-word} $\bs{X}'$ of $\bs{X}$, is a word which can be obtained from $\bs{X}$ by iteratively deleting the first or last element of $\bs{X}$.
We denote the concatenation of two words $\bs{X}_1=x_1\dots x_{k_1}$ and $\bs{X}_2=y_1\dots y_{k_2}$ as $\bs{X}_1\Vert\bs{X}_2=x_1\dots x_{k_1}y_1\dots y_{k_2}$. We also denote the `mirror image' by an overset tilde, e.g. if $\bs{X} =ab$ then $\widetilde{\bs{X}} = ba$.

\underline{\emph{Sets}}\newline
The set containing the natural numbers from $1$ to $n$ is denoted $[n]$.
The symmetric difference $X\Delta Y$ between two sets $X$ and $Y$ is the set of elements of $X$ and $Y$ that occur in $X$ or $Y$ exclusively, i.e. $X\Delta Y=(X\cup Y)\setminus (X\cap Y)$.

\underline{\emph{Graphs}}\newline
A simple undirected graph $G=(V,E)$ is a set of vertices $V$ and a set of edges $E$.
Edges are 2-element subsets of $V$ for simple undirected graphs.
Importantly, we only consider labeled graphs, i.e. we consider a complete graph with vertices $\{1,2,3\}$ to be different from a complete graph with vertices $\{2,3,4\}$, even though these graphs are isomorphic.
The reason for considering labeled graphs is that these will be used to represent graph states on specific qubits, possibly at different physical locations in the case of a quantum network.
In a simple undirected graph, there are no multiple edges or self-loops, in contrast with a multi-graph: An undirected multi-graph $H=(V,E)$ is a set of vertices $V$ and a multi-set of edges $E$.
For undirected multi-graphs, edges are unordered pairs of elements in $V$.
We will often write $V(G)=V$ and $E(G)=E$ to mean the vertex- and edge-set of the (multi-)graph $G=(V,E)$.

Next we list some glossary about (multi-)graphs:
\begin{itemize}
    \item If a vertex $v\in V$ is an element of an edge $e\in E$, i.e. $v\in e$, then $v$ and $e$ are said to be \emph{incident} to one another.
    \item Two vertices which are incident to a common edge are called \emph{adjacent}.
    \item The set of all vertices adjacent to a given vertex $v$ in a (multi-)graph $G$ is called the \emph{neighborhood} $N^{(G)}_v$ of $v$.
        We will sometimes just write $N_v$ if it is clear which (multi-)graph is considered.
    \item The \emph{degree} of a vertex $v$ is the number of neighbors of $v$, i.e. $\abs{N_v}$.
    \item A $k$-\emph{regular} (multi-)graph is a (multi-)graph such that every vertex in the (multi-)graph has degree $k$.
    \item A \emph{walk} $W=v_1e_1v_2\dots e_kv_{k+1}$ is an alternating sequence of vertices and edges such that $e_i$ is incident to $v_{i}$ and $v_{i+1}$ for $i\in[k]$.
    \item The vertices $v_1$ and $v_{k+1}$ are called the \emph{ends} of $W$.
    \item If the ends of a walk are the same vertex, it is called \emph{closed}.
    \item A \emph{trail} is a walk which does not include any edge twice.
    \item A closed trail is called a \emph{tour}.
    \item A \emph{path} is a walk which does not include any vertex twice, apart from possibly the ends.
    \item A closed path is called a \emph{cycle}.
    \item Two vertices $u$ and $v$ are called \emph{connected} if there exist a path with $u$ and $v$ as ends.
    \item A (multi-)graph is called \emph{connected} if any two vertices are connected in the (multi-)graph.
    \item $G'=(V',E')$ is a \emph{subgraph }of $G=(V,E)$ if $V'\subseteq V$ and $E'\subseteq E$.
    \item An \emph{induced subgraph} $G[V']$ of $G=(V,E)$ is a subgraph on a subset $V'\subseteq V$ and with the edge-set 
        \begin{equation}
            E'=\{(u,v)\in E:u,v\in V'\}.
        \end{equation}
    \item A \emph{connected component} of a (multi-)graph $G=(V,E)$ is a connected induced subgraph $G[V']$ such that no vertex in $V'$ is adjacent to a vertex in $V\setminus V'$ in the (multi-)graph $G$.
    \item A \emph{cut-vertex} $v$ of a (multi-)graph $G = (V,E)$ is a vertex such that $G[V\setminus \{v\}]$ has strictly more connected components than $G$
    \item The \emph{distance} $d_G(v,w)$ between two vertices $v,w$ in a (multi-)graph $G$ is equal to the number of edges in the shortest path that connects $v$ and $w$.
    \item The \emph{complement} $G^C$ of a graph $G=(V,E)$ is a graph with vertex-set $V^C=V$ and edge-set
        \begin{equation}
            E^C=\{(u,v)\in V\times V:(u,v)\notin E\land u\neq v\}.
        \end{equation}
\end{itemize}

A graph $G$ is assumed to be simple and undirected, unless specified and will be denoted as $G$, $G_i$, $G'$, $\tilde{G}$ or similar.
A multi-graph is assumed to be undirected, unless specified and will be denoted as $H$, $H_i$, $H'$, $\tilde{H}$ or similar.
Furthermore, 3-regular simple graphs will be denoted as $R$, $R_i$, $R'$ or similar and 4-regular multi-graphs as $F$, $F_i$, $F'$ or similar.
We will denote the \emph{complete graph} on a set of vertices $V$ as $K_V$ and the \emph{star graph} on a set of vertices $V$ with center $c$ by $S_{V,c}$. We will often not care about the choice of center, writing $S_V$ to mean any choice of star graph on the vertex set $V$.

\subsection{Graph states}\label{sec:graph_states}
A graph state is a multi-partite quantum state $\ket{G}$ which is described by a graph $G$, where the vertices of $G$ correspond to the qubits of $\ket{G}$.
The graph state is formed by initializing each qubit $v\in V(G)$ in the state $\ket{+}_v=\frac{1}{\sqrt{2}}(\ket{0}_v+\ket{1}_v)$ and for each edge $(u,v)\in E(G)$ applying a controlled phase gate between qubits $u$ and $v$.
Importantly, all the controlled phase gates commute and are invariant under changing the control- and target-qubits of the gate.
This allows the edges describing these gates to be unordered and undirected.
Formally, a graph state $\ket{G}$ is given as
\begin{equation}
    \ket{G}=\prod_{(u,v)\in E(G)}C_Z^{(u,v)}\left(\bigotimes_{v\in V(G)}\ket{+}_v\right),
\end{equation}
where $C_Z^{(u,v)}$ is a controlled phase gate between qubit $u$ and $v$, i.e.
\begin{equation}
    C_Z^{(u,v)}=\ket{0}\bra{0}_u\otimes\mathbb{I}_v+\ket{1}\bra{1}_u\otimes Z_v
\end{equation}
and $Z_v$ is the Pauli-$Z$ matrix acting on qubit $v$.

Any graph state is also a stabilizer state~\cite{Hein2006}.
The GHZ states are an important class of stabilizer states given as
\begin{equation}
    \ket{\mathrm{GHZ}}_k=\frac{1}{\sqrt{2}}\left(\ket{0}^{\otimes k}+\ket{1}^{\otimes k}\right).
\end{equation}
It is easy to verify that any graph state given by a star or complete graph, i.e. $\ket{S_{V,c}}$ or $\ket{K_V}$, can be turned into a GHZ state on the qubits $V$ using only single-qubit Clifford operations.

In the next section we discuss local complementations and vertex-deletions on graph states.
It turns out that single-qubit Clifford operations (LC), single-qubit Pauli measurements (LPM) and classical communication (CC): \LCLPMCC, which take graph states to graph states, can be completely characterized by local complementations and vertex-deletions on the corresponding graphs.
More concretely, any sequence of single-qubit Clifford operations, mapping graph states to graph states, can be described as some sequence of local complementations on the corresponding graph.
Moreover, measuring qubit $v$ of a graph state $\ket{G}$ in the Pauli-$X$, Pauli-$Y$ or Pauli-$Z$ basis, gives a stabilizer state that is single-qubit Clifford equivalent to $\ket{X_v(G)}$, $\ket{Y_v(G)}$, $\ket{Z_v(G)}$ respectively.
The operations $X_v$, $Y_v$ and $Z_v$ are graph operations consisting of sequences of local complementations together with the deletion of vertex $v$, which we define in \cref{def:XYZ}.
As mentioned the post-measurement state of for example a Pauli-$X$ measurement on qubit $v$ is only single-qubit Clifford equivalent to the graph state $\ket{X_v(G)}$.
The single-qubit Clifford operations that take the post-measurement state to $\ket{X_v(G)}$ depend on the outcome of the measurement of the qubit $v$ and act on qubits adjacent to $v$~\cite{Hein2006}.
This means classical communication is required to announce the measurement result at the vertex $v$ to its neighboring vertices.

In~\cite{Dahlberg2018} we introduced to notion of a \emph{qubit-minor} which captures exactly which graph states can be reached from some initial graph state under  \LCLPMCC.
Formally we define a qubit-minor as:

\begin{mydef}[Qubit-minor]\label{def:QM} Assume $\ket{G}$ and $\ket{G'}$ are graph states on the sets of qubits $V$ and $U$ respectively.
    $\ket{G'}$ is called a qubit-minor of $\ket{G}$ if there exists a sequence of single-qubit Clifford operations (LC), single-qubit Pauli measurements (LPM) and classical communication (CC) that takes $\ket{G}$ to $\ket{G'}$, i.e.
    \begin{equation}
        \ket{G}\xrightarrow[\mathrm{LPM}+\mathrm{CC}]{\mathrm{LC}}\ket{G'}\otimes\ket{\mathrm{junk}}_{V\setminus U}.
    \end{equation}
    If $\ket{G'}$ is a qubit-minor of $\ket{G}$, we denote this as
    \begin{equation}
        \ket{G'}<\ket{G}.
    \end{equation}
\end{mydef}

In~\cite{Dahlberg2018} we have shown that the notion of qubit-minors for graph states is equivalent to the notion of \emph{vertex-minors} for graphs.
We will define and discuss vertex-minors in \cref{sec:vertex-minors}, however we formally state the relation between vertex-minors here.
For a proof see~\cite{Dahlberg2018}.

\begin{thm}[Theorem 2.2 in~\cite{Dahlberg2018}]\label{thm:QMVM}
    Let $\ket{G}$ and $\ket{G'}$ be two graph states such that no vertex in $G'$ has degree zero.
    Then $\ket{G'}$ is a qubit-minor of $\ket{G}$ if and only if $G'$ is a vertex-minor of $G$, i.e.
    \begin{equation}
        \ket{G'}<\ket{G}\quad\Leftrightarrow\quad G'<G.
    \end{equation}
\end{thm}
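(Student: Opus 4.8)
The plan is to prove \cref{thm:QMVM} by reducing both directions to careful bookkeeping of how the three elementary operations --- single-qubit Clifford unitaries, single-qubit Pauli measurements, and classical communication --- act on the stabilizer description of a graph state. The essential ingredient is the well-known dictionary between quantum operations and graph operations, which I would state as a preliminary lemma: (i) any single-qubit Clifford $C$ applied to $\ket{G}$ yields a graph state $\ket{G''}$ with $G''$ obtained from $G$ by a sequence of local complementations on the supporting vertex (together with possible relabeling-free adjustments), and (ii) a Pauli-$X$, $Y$, or $Z$ measurement of qubit $v$ produces a state that is single-qubit-Clifford equivalent to $\ket{X_v(G)}$, $\ket{Y_v(G)}$, or $\ket{Z_v(G)}$ respectively, where these are the local-complementation-plus-vertex-deletion operations of \cref{def:XYZ}. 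Both facts are from \cite{Hein2006}; I would cite them rather than reprove them. The subtlety flagged in the excerpt is that the Clifford correction after a measurement depends on the measurement outcome and acts on the neighbours of $v$, which is exactly why classical communication appears in the operation set; I would make this explicit so that the ``$\Leftarrow$'' direction genuinely uses CC.

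For the ``$\Leftarrow$'' direction ($G' < G \implies \ket{G'} < \ket{G}$), I would proceed by induction on the length of the sequence of local complementations and vertex-deletions witnessing that $G'$ is a vertex-minor of $G$. A local complementation at $v$ is implemented by the local Clifford $\sqrt{-\mathrm{i}X}_v \bigotimes_{w \in N_v}\sqrt{\mathrm{i}Z}_w$ (LC alone, no measurement needed). A vertex-deletion of $v$ is implemented by a Pauli-$Z$ measurement on qubit $v$, which by fact (ii) yields $\ket{Z_v(G)} = \ket{G \setminus v}$ up to a local Clifford correction on $N_v$ determined by the outcome --- here CC is used to broadcast the outcome. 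Composing the implementations of each step in the vertex-minor sequence gives an \LCLPMCC\ protocol taking $\ket{G}$ to $\ket{G'} \otimes \ket{\mathrm{junk}}$, where the junk register holds the (computational-basis, post-measurement) states of the deleted qubits. The hypothesis that no vertex of $G'$ has degree zero plays no role in this direction.

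For the ``$\Rightarrow$'' direction ($\ket{G'} < \ket{G} \implies G' < G$), I would again induct on the length of the \LCLPMCC\ protocol, but now the work is to show that \emph{every} such protocol can be rewritten so that each elementary operation maps the current graph state to another graph state (not merely a general stabilizer state), and hence corresponds to a vertex-minor operation. The key move is that a single-qubit Pauli measurement of $\ket{G}$ --- in \emph{any} basis, since an arbitrary single-qubit Pauli basis is a local Clifford image of a standard one --- produces a stabilizer state that is single-qubit-Clifford equivalent to one of $\ket{X_v(G)},\ket{Y_v(G)},\ket{Z_v(G)}$, and those extra single-qubit Cliffords can be absorbed into the next round (or, at the very end, discarded together with the junk). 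Thus after normalizing, the protocol becomes an alternating sequence of local complementations and vertex-deletions applied to $G$, ending at some graph $G''$ with $\ket{G''} = \ket{G'} \otimes \ket{\mathrm{junk}}$ and $V(G'') \supseteq U$; the vertices outside $U$ must be isolated in $G''$, so deleting them (again a $Z$-measurement / vertex-deletion) leaves exactly $G'$, establishing $G' < G$. I expect the main obstacle here to be the last identification: deducing from $\ket{G''} = \ket{G'}\otimes\ket{\mathrm{junk}}$ that, after discarding the junk qubits, the graph on $U$ is literally $G'$ rather than merely local-Clifford-equivalent to it. This is where the hypothesis ``no vertex of $G'$ has degree zero'' is needed --- it rules out the ambiguity whereby a degree-zero vertex in state $\ket{+}$ could be mistaken for a junk qubit or vice versa, and more generally it is what pins down the graph (not just the graph-state equivalence class) from the quantum state. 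I would handle this by invoking the fact that a graph state determines its graph once we know which qubits are ``active'' (non-junk), modulo exactly the degree-zero ambiguity, which is precisely what the hypothesis excludes; this is the step I would expect to cite most carefully from \cite{Dahlberg2018}.
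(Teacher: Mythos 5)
The paper does not actually prove this statement: it is imported as Theorem~2.2 of~\cite{Dahlberg2018}, with the text ``For a proof see~\cite{Dahlberg2018}'', so there is no in-paper argument to compare yours against. Your reconstruction follows essentially the route that reference takes and that the surrounding machinery of this paper presupposes — the \cite{Hein2006} dictionary between single-qubit Cliffords/Pauli measurements and local complementations/the operations of \cref{def:XYZ}, pushing the outcome-dependent Clifford corrections through to the end, and then invoking the fact that local-Clifford equivalence of graph states coincides with LC-equivalence of graphs together with \cref{thm:multi_vertex-minor} — and you correctly locate the role of the no-degree-zero hypothesis in the ``$\Rightarrow$'' direction (e.g.\ $\ket{+}\otimes\ket{+}$ is reachable from $\ket{K_2}$ by a Pauli measurement plus local Cliffords, while the edgeless graph on two vertices is not a vertex-minor of $K_2$), so your outline is sound as far as it can be checked against this paper.
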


Note that one can also include the case where $G'$ has vertices of degree zero.
Let's denote the vertices of $G'$ which have degree zero as $I$.
We then have that
\begin{equation}
    \ket{G'}<\ket{G}\quad\Leftrightarrow\quad G'[V(G')\setminus I]<G.
\end{equation}

\Cref{thm:QMVM} is very powerful since it allows us to consider graph states under \LCLPMCC, purely in terms of vertex-minors of graphs.
We will therefore in the rest of this paper use the formalism of vertex-minors to study the computational complexity of transforming graph states using \LCLPMCC~and provide efficient algorithms for transforming graph state using \LCLPMCC.

\subsection{Local complementations and vertex-deletions}
Local complementation is a fundamental operation on graphs~\cite{Bouchet1990}.
We have the following definition.

\begin{mydef}[Local complementation]\label{def:LC}
    A local complementation $\tau_v$ is a graph operation specified by a vertex $v$, taking a graph $G$ to $\tau_v(G)$ by replacing the induced subgraph on the neighborhood of $v$, i.e.  $G[N_v]$, by its complement.
    The neighborhood of any vertex $u$ in the graph $\tau_v(G)$ is therefore given by
    \begin{equation}
    N_u^{(\tau_v(G))}=\begin{cases}N_u\Delta (N_v\setminus\{u\}) & \quad \text{if } (u,v)\in E(G) \\ N_u & \quad \text{else}\end{cases},
    \end{equation}
    where $\Delta$ denotes the symmetric difference between two sets.
    Given a sequence of vertices $\bs{v}=v_1\dots v_k$, we denote the induced sequence of local complementations, acting on a graph $G$, as
    \begin{equation}
        \tau_{\bs{v}}(G)=\tau_{v_k}\circ\dots\circ\tau_{v_1}(G).
    \end{equation}
\end{mydef}
Below we show a simple example of the action of local complementation on a graph (in particular we consider a local complementation on the vertex labeled $2$).
\begin{equation}\label{eq:vis_LC}
    \raisebox{-0.08\textwidth}{\includegraphics[scale=0.5]{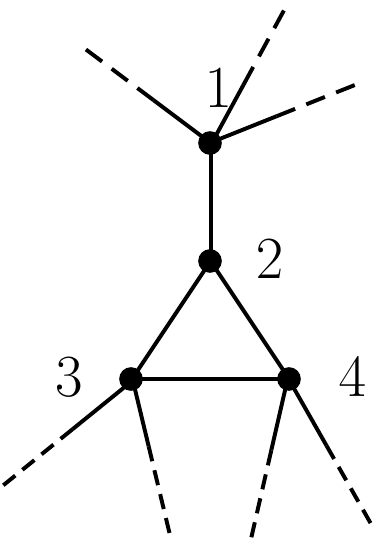}}\quad\xrightarrow{\tau_2}\quad\raisebox{-0.08\textwidth}{\includegraphics[scale=0.5]{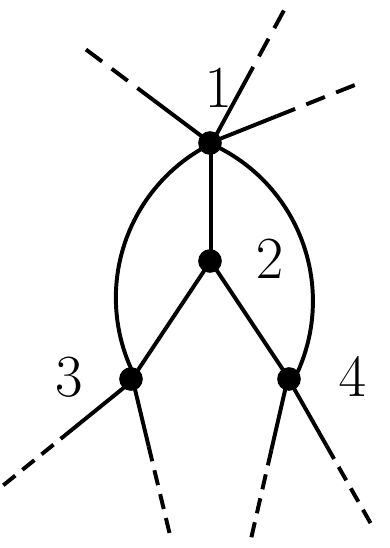}}
\end{equation}

If two graphs $G_1$ and $G_2$ are related by a sequence of local complementations, i.e. $\exists \bs{v}:\tau_{\bs{v}}(G_1)=G_2$, we call the two graphs LC-equivalent and denote this as $G_1\sim_\mathrm{LC}G_2$.
Checking whether two graphs are LC-equivalent can be done in time $\mathcal{O}(n^4)$, where $n$ is the size of the graphs, as shown in~\cite{Bouchet1991}.
This result was used in~\cite{VandenNest2004a} to find an efficient algorithm for checking whether two graph states are equivalent under single-qubit Clifford operations, by proving that two graph state are equivalent under single-qubit Cliffords if and only if their corresponding graphs are LC-equivalent.

Notable about local complementation is its action on star and complete graphs. For a vertex set $V$ and $c\in V$ we have that $\tau_c(S_{V,c})= K_V$ and for any $v \in V$ we have $\tau_v(K_V)= S_{V,v}$. This means all star graphs on a vertex set $V$ are equivalent to each other under local complementation and also to the complete graph on $V$. Moreover, no other graph is equivalent to the star or complete graph.

Another operation which we will make use of is the pivot.

\begin{mydef}[Pivot]\label{def:pivot}
    A pivot $\rho_e$ is a graph operation specified by an edge $e=(u,v)$, taking a graph $G$ to $\rho_e(G)$ such that
    \begin{equation}
        \rho_e(G)=\tau_v\circ\tau_u\circ\tau_v(G).
    \end{equation}
\end{mydef}
The pivot can simply be specified by an undirected edge since
\begin{equation}
    \tau_v\circ\tau_u\circ\tau_v(G)=\tau_u\circ\tau_v\circ\tau_u(G)\quad \text{if }(u,v)\in E(G)
\end{equation}
as shown in~\cite{Bouchet1988a}.

It will be useful to be able to specify a pivot simply by a vertex $v$.
We therefore also introduce the following definition:
\begin{mydef}\label{def:pivot_vertex}
    The graph operation $\rho_v$ is specified by a vertex, taking a graph $G$ to $\rho_v(G)$ such that
    \begin{equation}
        \rho_v(G)=\begin{cases}\rho_{e_v}(G) & \quad\text{if }\abs{N_v}>0 \\ G & \quad\text{if }\abs{N_v}=0\end{cases}
    \end{equation}
    where $e_v$ is an edge incident on $v$ chosen in some consistent way.
    For example we could assume that the vertices of $G$ are ordered and that $e_v=(v,\min(N_v))$.
    The specific choice will not matter but importantly $e_v$ only depends on $G$ and $v$, and the same therefore holds for $\rho_v(G)$.
\end{mydef}

Another fundamental operation on a graph is that of vertex-deletion, which relates to measuring a qubit of a graph state in the standard basis~\cite{Hein2006}.
We denote the deletion of vertex $v$ from the graph $G$ as $G\setminus v=G[V(G)\setminus \{v\}]$.
It turns out that given a sequence of local complementations and vertex-deletions, acting on some graph, one can always perform the vertex-deletions at the end of the sequence and arrive at the same graph.
This fact follows inductively from the following lemma.

\begin{lem}\label{lem:VD_last}
    Let $G=(V,E)$ be a graph and $v,u\in V$ be vertices such that $v\neq u$, then
    \begin{equation}
        \tau_v(G\setminus u)=\tau_v(G)\setminus u.
    \end{equation}
\end{lem}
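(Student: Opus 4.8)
The plan is to prove the identity $\tau_v(G \setminus u) = \tau_v(G) \setminus u$ by showing that both graphs have the same vertex set, namely $V \setminus \{u\}$, and the same edge set, which I will check by comparing the neighborhood of an arbitrary vertex $w \in V \setminus \{u, v\}$ in each graph (the neighborhood of $v$ itself is unchanged by $\tau_v$, so that case is immediate, and the neighborhoods of $u$ are irrelevant since $u$ is deleted on both sides). The key observation that makes this work is that $u \neq v$, so deleting $u$ does not remove $v$, and local complementation at $v$ is a ``local'' operation in the sense that it only modifies edges between pairs of neighbors of $v$.

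First I would fix notation: write $N_w$ for $N_w^{(G)}$, and note that in $G \setminus u$ the neighborhood of any vertex $w \neq u$ is $N_w \setminus \{u\}$. Then I would apply \cref{def:LC} to compute $N_w^{(\tau_v(G \setminus u))}$ and $N_w^{(\tau_v(G))} \setminus \{u\}$ separately and check they agree. There are two cases according to whether $(w,v)$ is an edge. If $(w,v) \notin E(G)$ (equivalently $(w,v) \notin E(G \setminus u)$, since $w, v \neq u$), then $\tau_v$ does not touch $w$'s neighborhood on either side, and both expressions reduce to $N_w \setminus \{u\}$. If $(w,v) \in E(G)$, then on the left we get $(N_w \setminus \{u\}) \,\Delta\, ((N_v \setminus \{u\}) \setminus \{w\})$, while on the right we get $\big(N_w \,\Delta\, (N_v \setminus \{w\})\big) \setminus \{u\}$.

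The one genuine computational step — and the only place where anything could go wrong — is verifying that these two set expressions are equal. This is where $u \neq v$ is used again: on the right-hand side, $u$ can only have entered the symmetric difference $N_w \,\Delta\, (N_v \setminus \{w\})$ via $N_w$ or via $N_v$, and removing it afterwards has the same effect as removing it from each of $N_w$ and $N_v$ before taking the symmetric difference, because for any sets $A, B$ and element $x$ we have $(A \,\Delta\, B) \setminus \{x\} = (A \setminus \{x\}) \,\Delta\, (B \setminus \{x\})$. Applying this with $x = u$, $A = N_w$, $B = N_v \setminus \{w\}$ — and noting $u \neq w$ so that $(N_v \setminus \{w\}) \setminus \{u\} = (N_v \setminus \{u\}) \setminus \{w\}$ — gives exactly the left-hand side. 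I expect this symmetric-difference bookkeeping to be the main (and only mild) obstacle; everything else is just unwinding definitions.

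Finally I would remark, as the statement's surrounding text already anticipates, that this lemma immediately yields by induction the stronger fact that in any sequence of local complementations and vertex-deletions one may commute all the vertex-deletions to the end: one repeatedly uses the lemma to push a deletion $\setminus u$ past each preceding local complementation $\tau_v$ (with $v \neq u$, which holds since $v$ still exists in the graph at that point), and deletions of distinct vertices trivially commute with each other.
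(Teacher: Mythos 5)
Your proposal is correct and follows essentially the same route as the paper's own proof: compare neighborhoods of each surviving vertex, split on whether it is adjacent to $v$, and reduce the adjacent case to the identity $(A\,\Delta\,B)\setminus\{x\}=(A\setminus\{x\})\,\Delta\,(B\setminus\{x\})$. In fact you make explicit the symmetric-difference bookkeeping (and the use of $u\neq w$) that the paper leaves as a one-line computation, so nothing is missing.
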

\begin{proof}
    Note first that it is important that $v\neq u$ since the operation $\tau_v(G\setminus u)$ is otherwise undefined.
    To prove that the graphs $G_1=\tau_v(G\setminus u)$ and $G_2=\tau_v(G)\setminus u$ are equal, we show that the neighborhoods of any vertex in the graphs are the same, i.e. $N^{(G_1)}_w=N^{(G_2)}_w$ for all $w\in V(G)\setminus u$.
    The local complementation only changes the neighborhoods for vertices which are adjacent to $v$, so for any vertex $w\neq u$ which is not adjacent to $v$, we have that
    \begin{equation}
        N^{(G_1)}_w=N^{(G_2)}_w=N^{(G)}_w\setminus \{u\}.
    \end{equation}
    On the other hand, for a vertex $w$ which is adjacent to $v$, its neighborhood becomes
    \begin{equation}
        N^{(G_1)}_w=\big(N^{(G)}_w\setminus \{u\}\big)\Delta\Big(\big(N^{(G)}_v\setminus \{u\}\big)\setminus\{w\}\Big)=\Big(N^{(G)}_w\Delta\big(N^{(G)}_v\setminus \{w\}\big)\Big)\setminus\{u\}=N^{(G_2)}_w
    \end{equation}
    by the definition of a local complementation.
\end{proof}

\subsubsection{Vertex-minors}\label{sec:vertex-minors}

Using the two operations local complementation and vertex-deletion, we can formulate the notion of a vertex-minor of a graph.

\begin{mydef}[Vertex-minor]\label{def:vertex-minor}
    A graph $G'$ is called a vertex-minor of $G$ if and only if there exist a sequence of local complementations and vertex-deletions that takes $G$ to $G'$.
    Since vertex-deletions can always be performed last in such a sequence (see \cref{lem:VD_last}), an equivalent definition is the following: A graph $G'$ is called a vertex-minor of $G$ if and only if there exist a sequence of local complementations $\bs{v}$ such that $\tau_{\bs{v}}(G)[V(G')]=G$.
    If $G'$ is a vertex-minor of $G$ we write this as
    \begin{equation}
        G'<G
    \end{equation}
    and if $G'$ is not a vertex-minor of $G$ then
    \begin{equation}
        G'\nless G.
    \end{equation}
\end{mydef}

Vertex-minors were first studied in~\cite{Bouchet1988a} but by the name of $l$-reductions.
Note that if $G_1$ and $G_2$ are two LC-equivalent graphs, then $G'<G_1$ if and only if $G'<G_2$.
It is interesting to consider under which conditions a graph $G'$ is a vertex-minor of another graph $G$.
As \cref{thm:multi_vertex-minor} below states, to decide whether $G'<G$ it is sufficient to check whether $G'$ is LC-equivalent to at least one out of $3^{\abs{V(G)}-\abs{V(G')}}$ graphs.
To formally state the theorem we introduce the following three operations.

\begin{mydef}\label{def:XYZ}
    The graph operations $X_v$, $Y_v$ and $Z_v$, specified with a vertex $v$, act on a graph $G$ by transforming it to
    \begin{equation}
        X_v(G)=\rho_v(G)\setminus v,\quad Y_v(G)=\tau_v(G)\setminus v,\quad Z_v(G)=G\setminus v
    \end{equation}
    When we need to specify which edge incident on $v$ the pivot of $X_v$ acts on, we write $X^{(u)}_v(G)=\rho_{(u,v)}(G)\setminus v$.
\end{mydef}

The three graph operations $X_v$, $Y_v$ and $Z_v$ correspond to how Pauli-$X$, -$Y$ and -$Z$ measurements act on graph states (as proven in~\cite{Hein2006}).
As mentioned in \cref{sec:graph_states}, measuring qubit $v$ of a graph state $\ket{G}$ in the Pauli-$X$, -$Y$ or -$Z$ basis gives a stabilizer state which is single-qubit Clifford equivalent to $\ket{X_v(G)}$, $\ket{Y_v(G)}$ and $\ket{Z_v(G)}$ respectively.
\Cref{eq:measZ,eq:measY,eq:measX} shows examples of how these operations can act on graphs.
\begin{align}
    Z_6\left(\raisebox{-0.06\textwidth}{\includegraphics[scale=0.5]{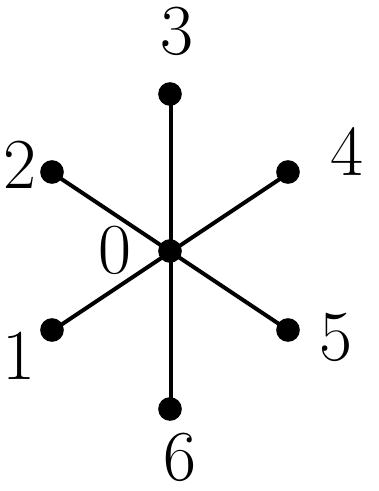}}\right)\;&=\;\raisebox{-0.04\textwidth}{\includegraphics[scale=0.5]{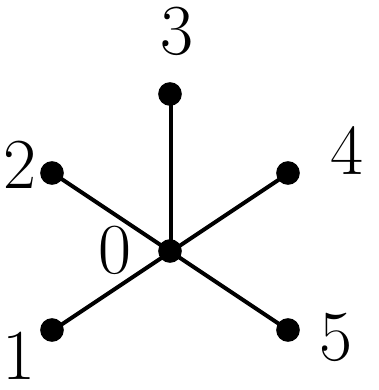}}\label{eq:measZ}\\
    Y_5\left(\raisebox{-0.05\textwidth}{\includegraphics[scale=0.5]{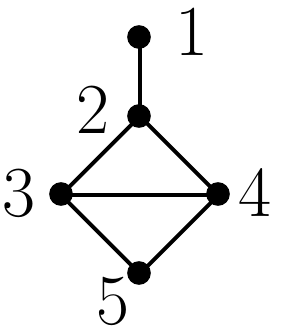}}\right)&=\raisebox{-0.04\textwidth}{\includegraphics[scale=0.5]{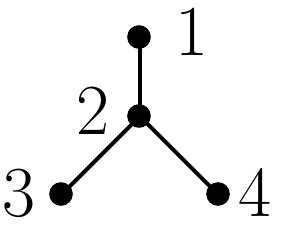}}\label{eq:measY}\\
    X^{(2)}_1\left(\raisebox{-0.05\textwidth}{\includegraphics[scale=0.5]{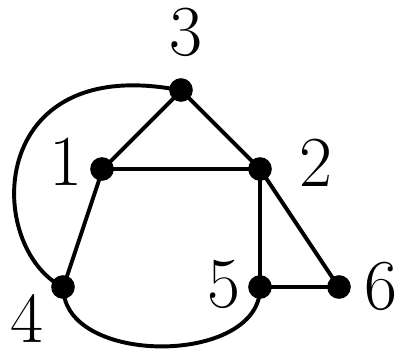}}\right)&=\raisebox{-0.05\textwidth}{\includegraphics[scale=0.5]{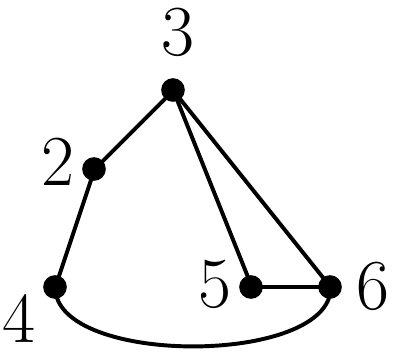}}\label{eq:measX}
\end{align}

The operation $X_v^{(u)}$ is the most complicated one, so we will here quickly describe what happens to a graph when $X_v^{(u)}$ is applied.
One can check that after the operation $X_v^{(u)}$, the vertex $u$ will have the neighbors that $v$ previously had, except $v$ itself.
Furthermore, some edges between vertices in $(N_v\cup N_u)\setminus\{u,v\}$ will be complemented, i.e. removed if present or added if not.
To know which of these edges gets complemented, let's introduce the following three sets
\begin{equation}\label{eq:CFDpartition}
    V_{vu}=N_v\cap N_u,\quad V_v=N_v\setminus(N_u\cup\{u\}),\quad V_u=N_u\setminus(N_v\cup\{v\})
\end{equation}
which form a partition of $(N_v\cup N_u)\setminus\{u,v\}$.
In \cref{eq:measX}, these sets are $V_{12}=\{3\}$, $V_1=\{4\}$ and $V_2=\{5,6\}$.
An edge $(w_1,w_2)$ between vertices in $(N_v\cup N_u)\setminus\{u,v\}$ gets complemented if and only if $w_1$ and $w_2$ belong to different sets of the partition $(V_{vu},V_v,V_u)$.
All other edges in the graph, i.e. edges containing a vertex not in $N_v\cup N_v$, will be unchanged.\\

It turns out that the three operations $\{X_v,Y_v,Z_v\}$ are sufficient to check whether some graph is a vertex-minor of another graph.
This is formalized in the following theorem we proved in~\cite{Dahlberg2018}.

\begin{thm}[Theorem 3.1 in~\cite{Dahlberg2018}]\label{thm:multi_vertex-minor}
    Let $G$ and $G'$ be two graphs and let $\mathbf{u}=(v_1,\dots,v_l)$, where $l=\abs{  V(G)\setminus V(G')}$ be an ordered tuple such that each element of   $V(G)\setminus V(G')$ occurs exactly once in $\mathbf{u}$.
    Furthermore, let $\mathcal{P}_\mathbf{u}$ denote the set of graph operations
    \begin{equation}\label{eq:P_u}
        \mathcal{P}_\mathbf{u}=\{P_{v_l}\circ\dots\circ P_{v_1}\;:\;P_{v}\in\{X_v,Y_v,Z_v\}\}
    \end{equation}
    Then we have that
    \begin{equation}
        G'<G\quad\Leftrightarrow\quad \exists P\in\mathcal{P}_\mathbf{u}\;:\;G'\sim_\mathrm{LC}P(G).
    \end{equation}
\end{thm}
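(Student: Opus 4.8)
The plan is to prove the two implications of the stated equivalence separately; the reverse direction is short and the forward direction carries the weight. For the reverse direction ($\Leftarrow$): by \cref{def:XYZ} each of $X_v$, $Y_v$, $Z_v$ is a (possibly empty) sequence of local complementations followed by the deletion of $v$, so for any $P=P_{v_l}\circ\dots\circ P_{v_1}\in\mathcal{P}_\mathbf{u}$ the graph $P(G)$ is obtained from $G$ by local complementations and vertex-deletions; by transitivity of the vertex-minor relation (concatenating the defining sequences, moving deletions to the end via \cref{lem:VD_last}), $P(G)<G$. Since LC-equivalent graphs have the same vertex-minors (noted just after \cref{def:vertex-minor}), $G'\sim_\mathrm{LC}P(G)$ then gives $G'<G$.

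For the forward direction ($\Rightarrow$) the work reduces to a \emph{single-vertex reduction} statement: for any graph $G$, vertex $v\in V(G)$, and vertex sequence $\bs{w}$, there is a choice $P_v\in\{X_v,Y_v,Z_v\}$ with $\tau_{\bs{w}}(G)\setminus v\sim_\mathrm{LC}P_v(G)$ (LC-equivalence on $V(G)\setminus\{v\}$). Granting this, I would induct on $l=\abs{V(G)\setminus V(G')}$. For $l=0$, $\mathcal{P}_\mathbf{u}=\{\mathrm{id}\}$ and, by \cref{def:vertex-minor}, $G'<G$ is exactly $G'\sim_\mathrm{LC}G$. For $l\ge1$, \cref{def:vertex-minor} together with \cref{lem:VD_last} gives a sequence $\bs{w}$ with $\tau_{\bs{w}}(G)\setminus\{v_1,\dots,v_l\}=G'$, which since vertex-deletions commute equals $(\tau_{\bs{w}}(G)\setminus v_1)\setminus\{v_2,\dots,v_l\}$. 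The single-vertex reduction applied to $G,v_1,\bs{w}$ yields $P_{v_1}\in\{X_{v_1},Y_{v_1},Z_{v_1}\}$ and a sequence $\bs{w}'$ on $V(G)\setminus\{v_1\}$ with $\tau_{\bs{w}}(G)\setminus v_1=\tau_{\bs{w}'}(P_{v_1}(G))$; hence $G'<P_{v_1}(G)$ with one fewer vertex to delete. By the induction hypothesis applied to $P_{v_1}(G)$, $G'$, and the tuple $(v_2,\dots,v_l)$, there is $P'=P_{v_l}\circ\dots\circ P_{v_2}$ with $G'\sim_\mathrm{LC}P'(P_{v_1}(G))=(P_{v_l}\circ\dots\circ P_{v_1})(G)$, and $P_{v_l}\circ\dots\circ P_{v_1}\in\mathcal{P}_\mathbf{u}$, closing the induction.

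The main obstacle is the single-vertex reduction itself: as $\bs{w}$ varies, $\tau_{\bs{w}}(G)\setminus v$ takes infinitely many values, yet only three LC-classes occur and they are represented by $X_v(G)$, $Y_v(G)$, $Z_v(G)$. I see two routes. Combinatorially (the viewpoint of \cite{Bouchet1988a} for $l$-reductions): split the letters of $\bs{w}$ into those equal to $v$, those in $N_v$, and the rest, and use the commutation/absorption identities for local complementation and pivoting (\cref{def:pivot}, together with \cref{lem:VD_last}) to rewrite $\tau_{\bs{w}}(G)\setminus v$ in the form $\tau_{\bs{w}''}(Q(G))$, where $\bs{w}''$ is a sequence on $V(G)\setminus\{v\}$ and $Q$ is a short prefix involving only $v$ and its neighbours followed by the deletion of $v$; one then checks $Q$ can only be identity-then-delete, $\tau_v$-then-delete, or $\rho_v$-then-delete, i.e. $Z_v$, $Y_v$, or $X_v$. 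Alternatively, via the graph-state correspondence recalled in \cref{sec:graph_states}: write $\ket{\tau_{\bs{w}}(G)}=\big(\bigotimes_{u}C_u\big)\ket{G}$ with each $C_u$ a single-qubit Clifford; the graph state $\ket{\tau_{\bs{w}}(G)\setminus v}$ is local-Clifford equivalent to the state obtained by measuring qubit $v$ of $\ket{\tau_{\bs{w}}(G)}$ in the $Z$-basis, and since $\bra{b}_vC_v$ is proportional to $\bra{\phi}_v$ for a single-qubit stabilizer state $\ket{\phi}$ — an eigenstate of $X$, $Y$, or $Z$ — this is local-Clifford equivalent to the post-measurement state of a Pauli-$X$, $-Y$, or $-Z$ measurement of $\ket{G}$ at $v$, which by \cite{Hein2006} equals $\ket{X_v(G)}$, $\ket{Y_v(G)}$, or $\ket{Z_v(G)}$ up to local Clifford; translating back through the graph-state/LC dictionary of \cite{VandenNest2004a} yields the statement. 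In either route the delicate point — exactly what the pivot-then-delete operation $X_v$ is built to absorb — is the interaction of local complementations performed at $v$ itself and at neighbours of $v$ with the subsequent deletion of $v$, and the reason the count is precisely three is that there are exactly three non-identity single-qubit Paulis (equivalently, three nonzero vectors of $\mathbb{F}_2^2$).
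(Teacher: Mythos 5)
The paper does not actually prove this theorem: it is imported verbatim as Theorem~3.1 of~\cite{Dahlberg2018} and used as a black box, so there is no in-paper proof to compare against. Judged on its own terms, your argument is sound and has the right architecture. The reverse direction is exactly as easy as you say. For the forward direction you correctly isolate the crux as the single-vertex reduction --- that for every LC-sequence $\bs{w}$ the graph $\tau_{\bs{w}}(G)\setminus v$ falls into one of only three LC-classes, represented by $X_v(G)$, $Y_v(G)$, $Z_v(G)$ --- and the induction wrapped around it is clean (the base case $l=0$ and the commutation of deletions via \cref{lem:VD_last} are handled correctly). Of your two routes to the key lemma, the quantum one is essentially complete given the results the paper already cites: Van den Nest's theorem converts $\tau_{\bs{w}}$ into a tensor product of single-qubit Cliffords, conjugating the $Z$-measurement at $v$ by $C_v$ yields a Pauli measurement, and the Hein et al.\ correspondence identifies the post-measurement state with $\ket{X_v(G)}$, $\ket{Y_v(G)}$ or $\ket{Z_v(G)}$ up to local Cliffords; translating back needs only the unrestricted form of Van den Nest's LC-equivalence result (the degree-zero caveat of \cref{thm:QMVM} concerns the qubit-minor relation, not LC-equivalence, so it does not bite here). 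The combinatorial route, by contrast, is only a gesture: the rewriting of an arbitrary word $\bs{w}$ into the form $\tau_{\bs{w}''}\circ Q$ with $Q\in\{Z_v,Y_v,X_v\}$ is precisely the nontrivial content of Bouchet's $l$-reduction analysis and of the proof in~\cite{Dahlberg2018}, and you have not carried it out. So if you intend the combinatorial route to stand alone, there is a gap; if you rest on the quantum route, the proof is complete modulo the cited results.
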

Note that in~\cite{Dahlberg2018} we indexed $\mathcal{P}_\mathbf{u}$ simply with the set associated to the word $\bs{u}$ since the statement is independent of the ordering of the elements of $\bs{u}$.
A direct corollary of the above theorem is therefore:

\begin{cor}\label{cor:order_independent}
    Let $G$ and $G'$ be two graphs.
    Furthermore, let $\mathbf{u}$ and $\mathbf{u}'$ be two ordered tuples such that each element of $V(G)\setminus V(G')$ occurs exactly once in both $\mathbf{u}$ and $\mathbf{u}'$.
    Then we have that
    \begin{equation}\label{eq:pu_eq_pup}
        \exists P\in\mathcal{P}_\mathbf{u}\;:\;G'\sim_\mathrm{LC}P(G)\quad\Leftrightarrow\quad \exists P\in\mathcal{P}_{\mathbf{u}'}\;:\;G'\sim_\mathrm{LC}P(G).
    \end{equation}
\end{cor}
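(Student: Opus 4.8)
The plan is to invoke \cref{thm:multi_vertex-minor} twice, once for each of the two orderings, and to exploit the fact that its left-hand side is manifestly independent of any ordering. Concretely, \cref{thm:multi_vertex-minor} applies to \emph{any} ordered tuple in which every element of $V(G)\setminus V(G')$ occurs exactly once, and both $\mathbf{u}$ and $\mathbf{u}'$ are such tuples by hypothesis. Hence
\begin{equation}
    G'<G\quad\Leftrightarrow\quad \exists P\in\mathcal{P}_\mathbf{u}\;:\;G'\sim_\mathrm{LC}P(G)
\end{equation}
and simultaneously
\begin{equation}
    G'<G\quad\Leftrightarrow\quad \exists P\in\mathcal{P}_{\mathbf{u}'}\;:\;G'\sim_\mathrm{LC}P(G).
\end{equation}

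First I would note that the predicate $G'<G$ appearing on the left of both equivalences is, by \cref{def:vertex-minor}, defined purely through the existence of a sequence of local complementations and vertex-deletions carrying $G$ to $G'$; it refers to no ordering of $V(G)\setminus V(G')$, and in particular is literally the same statement in both displays. Therefore the two right-hand sides are each equivalent to one and the same proposition, hence equivalent to each other, which is exactly \cref{eq:pu_eq_pup}. The only thing one has to check is the trivial observation that the hypothesis of \cref{thm:multi_vertex-minor} (``each element occurs exactly once'') is a property of the tuple that both $\mathbf{u}$ and $\mathbf{u}'$ enjoy, so the theorem may be quoted verbatim for each of them.

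I do not expect any real obstacle here: the entire content is already carried by \cref{thm:multi_vertex-minor}, and this corollary is simply the remark that its left-hand side has no dependence on the ordering. An alternative, more self-contained route — should one wish to avoid appealing to the full equivalence of \cref{thm:multi_vertex-minor} — would be to argue directly that reordering the single-vertex operations in a composition $P_{v_l}\circ\dots\circ P_{v_1}$ yields graphs that are LC-equivalent; but since the operations $X_v,Y_v,Z_v$ on distinct vertices do not commute in general (one must track how a pivot's chosen incident edge and the induced local complementations interact when vertices are reordered), this direct approach is considerably more delicate, and I would not pursue it in favour of the short argument above.
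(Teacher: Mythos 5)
Your argument is exactly the paper's own proof: apply \cref{thm:multi_vertex-minor} once for each of $\mathbf{u}$ and $\mathbf{u}'$ and observe that both right-hand sides are equivalent to the ordering-independent statement $G'<G$, hence to each other. Correct and complete; the closing remark about why a direct commutation argument would be harder is a sensible aside but not needed.
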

\begin{proof}
    This follows directly from \cref{thm:multi_vertex-minor} since both sides in \cref{eq:pu_eq_pup} are true if and only if $G'<G$.
\end{proof}

Note that \cref{thm:multi_vertex-minor} does not give an efficient method to check if $G'$ is a vertex-minor of $G$, since the set $\mathcal{P}_\mathbf{u}$ is of exponential size for all $\bf{u}$.
To study this problem further we formally define the vertex-minor problem.

\begin{prm}[\VM]\label{prob:vertex_minor}
Given a graph $G$ and a graph $G'$ defined on a subset of $V(G)$, decide whether $G'$ is a vertex-minor of $G$.
\end{prm}
Note again that we deal with labeled graphs here. We will often consider the special case where $G'$ is a star graph $S_{V'}$ defined on a subset $V'$ of $V(G)$.
Remember that a graph state described by a star graph is single-qubit Clifford equivalent to a GHZ-state.
Thus checking if $S_{V'}$ is a vertex-minor of $G$ is equivalent to checking if $\ket{G}$ can be transformed to GHZ-state on the qubits $V'$ by only using \LCLPMCC.
We will give this problem a separate name.

\begin{prm}[\SVM]\label{prob:star_vertex_minor}
Given a graph $G$ and a vertex subset $V'$ of $V(G)$, decide whether $S_{V'}$ is a vertex-minor of $G$.
\end{prm}
Note that we have not specified which star graph on $V'$ we use. This is not ambiguous since all star graphs on $V'$ are equivalent under local complementation. In the rest of the text we will often leave the choice of star graph open.

\subsection{Circle graphs}\label{sec:circle}

Here we introduce circle graphs and representations of these under the action of local complementations.
Circle graphs are graphs with edges represented as intersections of chords on a circle.
These graphs are also sometimes called alternance graphs since they can be described by a double occurrence word such that the edges of the graph are the given by the alternances induced by this word.
We will make use of the latter description here, which was introduced by Bouchet in~\cite{Bouchet1972} and also described in~\cite{Bouchet1994}.
This description is also related to yet another way to represent circle graphs, as Eulerian tours of 4-regular multi-graphs, introduced by Kotzig in~\cite{Kotzig1977}. 
For an overview and the history of circle graphs see for example the book by Golumbic~\cite{Golumbic2004}.

\subsubsection{Double occurrence words}
Let us first define double occurrence words and equivalence classes of these. This will allow us to define circle graphs.

\begin{mydef}[Double occurrence word]\label{def:dow}
    A double occurrence word $\bs{X}$ is a word with letters in some set $V$, such that each element in $V$ occur exactly twice in $\bs{X}$.
    Given a double occurrence word $\bs{X}$ we will write $V(\bs{X})=V$ for its set of letters.
\end{mydef}

\begin{mydef}[Equivalence class of double occurrence words]\label{def:d_X}
    We say that a double occurrence word $\bs{Y}$ is equivalent to another $\bs{X}$, i.e. $\bs{Y}\sim \bs{X}$, if $\bs{Y}$ is equal to $\bs{X}$, the mirror $\widetilde{\bs{X}} $ or any cyclic permutation of $\bs{X}$ or $\widetilde{\bs{X}}$.
    We denote by $\bs{d}_{\bs{X}}= \{\bs{Y} \::\:\bs{Y}\sim \bs{X}\}$ the equivalence class of $\bs{X}$, i.e. the set of words equivalent to $\bs{X}$.
\end{mydef}

Next we define alternances of these equivalence classes, which will represent the edges of an alternance graph.

\begin{mydef}[Alternance]\label{def:alternance}
    An \emph{alternance} $(u,v)$ of the equivalence class $\bs{d}_{\bs{X}}$ is a pair of distinct elements $u,v\in V$ such that a double occurrence word of the form $\dots u\dots v\dots u\dots v\dots$ is in $\bs{d}_{\bs{X}}$.
\end{mydef}

Note that if $(u,v)$ is an alternance of $\bs{d}_{\bs{X}}$ then so is $(v,u)$, since the mirror of any word in $\bs{d}_{\bs{X}}$ is also in $\bs{d}_{\bs{X}}$.

\begin{mydef}[Alternance graph]\label{def:alternance_graph}
    The alternance graph $\mathcal{A}(\bs{X})$ of a double occurrence word $\bs{X}$ is a graph with vertices $V(\bs{X})$ and edges given exactly by the alternances of $\bs{d}_{\bs{X}}$, i.e.
    \begin{equation}
        E(\mathcal{A}(\bs{X}))=\{(u,v)\in V(\bs{X})\times V(\bs{X})\,:\,(u,v)\text{ is an alternance of }\bs{d}_{\bs{X}}\}
    \end{equation}
\end{mydef}

Note that since $\mathcal{A}(\bs{X})$ only depends on the equivalence class of $\bs{X}$, the alternance graphs $\mathcal{A}(\bs{X})$ and $\mathcal{A}(\bs{Y})$ are equal if $\bs{X}\sim\bs{Y}$.
Now we can formally define\footnote{Circle graphs are usually defined as graphs which edges are represented by intersections of chords on a circle. However, the definition in terms of alternances of double occurrence words turn out to be equivalent~\cite{Bouchet1994}.} circle graphs.

\begin{mydef}[Circle graph]\label{def:circle_graph}
    A graph $G$ which is the alternance graph of some double occurrence word $\bs{X}$ is called a circle graph.
\end{mydef}

\subsubsection{Eulerian tours on 4-regular multi-graphs}
There is yet another way to represent circle graphs, closely related to double occurrence words, as Eulerian tours of $4$-regular multi-graphs.

\begin{mydef}[Eulerian tour]
    Let $F$ be a connected 4-regular multi-graph.
    An Eulerian tour $U$ on $F$ is a tour that visits each edge in $F$ exactly once.
\end{mydef}

Any $4$-regular multi-graph is Eulerian, i.e. has a Eulerian tour, since each vertex has even degree~\cite{biggs1976graph}.

Furthermore, any Eulerian tour on a $4$-regular multi-graph $F$ traverses each vertex exactly twice, except for the vertex which is both the start and the end of the tour.
Such a Eulerian tour induces therefore a double occurrence word, the letters of which are the vertices of $F$, and consequently a circle graph as described in the following definition.

\begin{mydef}[Induced double occurrence word]\label{def:eul_tour}
    Let $F$ be a connected $4$-regular multi-graph on $k$ vertices $V(F)$.
    Let $U$ be a Eulerian tour on $F$ of the form
    \begin{equation}\label{eq:eul_tour}
        U=x_1e_1x_2\dots x_{2k-1}e_{2k-1}x_{2k}e_{2k}x_1.
    \end{equation}
    with $x_i\in V$. Note that every element of $V$ occurs exactly twice in $U$.
    From a Eulerian tour $U$ as in \cref{eq:eul_tour} we define an induced double occurrence word as
    \begin{equation}
        m(U)=x_1x_2\dots x_{2k-1}x_{2k}.
    \end{equation}
    To denote the alternance graph given by the double occurrence word induced by a Eulerian tour, we will write $\mathcal{A}(U)\equiv\mathcal{A}(m(U))$.
\end{mydef}

Similarly to double occurrence words, we also introduce equivalence classes of Eulerian tours under cyclic permutation or reversal of the tour.

\begin{mydef}[Equivalence class of Eulerian tours]
    Let $F$ be a connected 4-regular multi-graph and $U$ be an Eulerian tour on $F$.
    We say that an Eulerian tour $U'$ on $F$ is equivalent to $U$, i.e. $U\sim U'$, if $U'$ is equal to $U$, the reversal $\widetilde{U}$ or any cyclic permutation of $U$ or $\widetilde{U}$.
    We denote by $\bs{t}_U$ the equivalence class of $U$, i.e. the set of Eulerian tours on $F$ which are equivalent to $U$.
\end{mydef}

It is clear that if the Eulerian tours $U$ and $U'$ on a 4-regular multi-graph $F$ are equivalent, then so are the double occurrence words $m(U)$ and $m(U')$.
Furthermore, as for double occurrence words, two equivalent Eulerian tours on a connected 4-regular multi-graph induce the same alternance graph.


\subsubsection{Local complementations on circle graphs}

We will now introduce an operation $\widetilde{\tau}_v$ on double occurrence words that will be the equivalent of performing a local complementation on the corresponding alternance graph.

\begin{mydef}[$\widetilde{\tau}_v$]
    Let $\bs{X}$ be a double occurrence word and $v$ be an element in $V(\bs{X})$.
    We can then always find sub-words $\bs{A}$, $\bs{B}$ and $\bs{C}$ not containing $v$, such that $\bs{X}=\bs{A}v\bs{B}v\bs{C}$.
    Note that some of the sub-words $\bs{A}$, $\bs{B}$ and $\bs{C}$ are possibly empty.
    The operation $\widetilde{\tau}_v$ acting on a double occurrence word is then defined as
    \begin{equation}\label{eq:LC_on_m}
        \widetilde{\tau}(\bs{A}v\bs{B}v\bs{C})=\bs{A}v\widetilde{\bs{B}}v\bs{C}.
    \end{equation}
    If $\mathbf{v}=(v_1,\dots,v_l)$ is a sequence of elements of $V(\bs{X})$ we use the notation $\widetilde{\tau}_\mathbf{v}(\bs{X})=\widetilde{\tau}_{v_l}\circ\dots\circ\widetilde{\tau}_{v_1}(\bs{X})$.
\end{mydef}

The operation $\widetilde{\tau}_v$ in the above definition maps equivalence classes to equivalence classes, as defined in \cref{def:d_X}.
That is, if $\bs{X}\sim \bs{Y}$ and $v\in V(\bs{X})$, then $\widetilde{\tau}_v(\bs{X})\sim\widetilde{\tau}_v(\bs{Y})$.
For example, assume that $\bs{Y}$ is the mirror of $\bs{X}$, i.e. $\bs{Y}=\widetilde{\bs{X}}$.
Then we see that
\begin{equation}
    \widetilde{\tau}_v(\bs{X})=\bs{A}v\widetilde{\bs{B}}v\bs{C}\sim\widetilde{\bs{A}v\widetilde{\bs{B}}v\bs{C}}=\widetilde{\bs{C}}v\bs{B}v\widetilde{\bs{A}}=\widetilde{\tau}_v(\bs{Y}).
\end{equation}
The case when $\bs{Y}$ is a obtained by a cyclic permutation of $\bs{X}$ can be checked similarly.

In~\cite{Bouchet1994} it was shown that the alternance graph of $\mathcal{A}(\widetilde{\tau}_v(\bs{X})$, where $\bs{X}$ is a double occurrence word and $v\in V(\bs{X})$, is the same as the graph obtained by performing a local complementation at $v$, i.e.
\begin{equation}\label{eq:tautau}
    \tau_v(\mathcal{A}(\bs{X}))=\mathcal{A}(\widetilde{\tau}_v(\bs{X})).
\end{equation}

Similar to the above we can also define an operation $\bar{\tau}_v$ on Eulerian tours $U$ on $4$-regular multi-graphs which also has the effect of a local complementation on the graph $\mathcal{A}(U)$.

\begin{mydef}[$\bar{\tau}_v$]\label{def:tau_on_eul}
    Let $F$ be a connected 4-regular multi-graph.
    Let $U$ be a Eulerian tour on $F$ and $v$ be a vertex in $V$.
    Let $P_v$ be the first subtrail of $U$ that starts and ends at $v$, i.e. $U=U_1P_vU_2$, from some $U_1$ and $U_2$.
    We define $\bar{\tau}_v(U)$ to be the Eulerian tour obtained by traversing $U_1$, the reversal of $P_v$ and then $U_2$, i.e. $\bar{\tau}_v(U)=U_1\widetilde{P_v}U_2$.
    When $\mathbf{v}=v_1\dots v_l$ is a sequence of vertices in $V$ we write $\bar{\tau}_\mathbf{v}(U)\equiv \bar{\tau}_{v_l}\circ\dots\circ\bar{\tau}_{v_1}(U)$.
\end{mydef}

Note in particular that $\bar{\tau}_v(U)$, where $U$ is an Eulerian tour on $F$, is also a Eulerian tour on $F$.

We have now defined $\tau$-operations on circle graphs, $\widetilde{\tau}$-operations on double occurrence words and $\bar{\tau}$-operations on Eulerian tours of $4$-regular multi-graphs.
They are given similar names since they are in some sense the same operation but in different representations of circle graphs.
To see this note that
\begin{equation}\label{eq:tautau2}
    m(\bar{\tau}_v(U))=m(U_1\widetilde{P_v}U_2)=\widetilde{\tau}_v(m(U))
\end{equation}
where $U=U_1P_vU_2$ as in \cref{def:tau_on_eul}.
From \cref{eq:tautau} and the shorthand $\mathcal{A}(U) = \mathcal{A}(m(U))$ we also have that
\begin{equation}\label{eq:tautau3}
    \mathcal{A}(\bar{\tau}_v(U))=\mathcal{A}(\widetilde{\tau}_v(m(U)))=\tau_v(\mathcal{A}(U)).
\end{equation}
The operation $\bar{\tau}_v$ on Eulerian tours of 4-regular multi-graphs was introduced by Kotzig in \cite{Kotzig1966}, where he called it a $\kappa$-transformation.

As stated by Bouchet in~\cite{Bouchet1994}, Kotzig~\cite{Kotzig1966} proved that any two Eulerian tours of a 4-regular multi-graph are related by a sequence of $\kappa$-transformations.

\begin{thm}[Proposition 4.1 in~\cite{Bouchet1994},~\cite{Kotzig1966}]\label{thm:Kotzig}
    Let $U$ and $U'$ be Eulerian tours on the same connected 4-regular multi-graph. Then there exist a sequence $\mathbf{v}$ such that $\tau_\mathbf{v}(U)\sim U'$.
\end{thm}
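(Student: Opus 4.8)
The plan is to prove that any two Eulerian tours $U$ and $U'$ on a connected $4$-regular multi-graph $F$ are related by a sequence of $\kappa$-transformations (i.e.\ $\bar\tau$-operations), by induction on the number of vertices where $U$ and $U'$ ``disagree''. To make this precise, recall that at each vertex $x$ of $F$ (other than the basepoint), a Eulerian tour pairs up the four incident edges into two \emph{transition pairs}: the two edges traversed consecutively through the first visit to $x$, and the two edges traversed consecutively through the second visit. Thus a Eulerian tour, viewed up to the equivalence $\sim$ of \cref{def:tau_on_eul}, is determined by a choice of transition system — one pairing of the four half-edges at each vertex — together with the requirement that the resulting $2$-regular structure is a single closed trail rather than several disjoint ones. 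I would first reduce the statement to a purely local claim: it suffices to show that if $U$ and $U'$ differ in their transition pair at a \emph{single} vertex $v$, then (possibly after further $\kappa$-transformations at other vertices) one can reach $U'$ from $U$; more robustly, I would show directly that applying $\bar\tau_v$ changes the transition pairing of $U$ at $v$ to (one of the) other admissible pairing(s) while leaving the pairings at all other vertices unchanged.

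The key computational step is therefore to analyze the effect of $\bar\tau_v$ on transition systems. Write $U = U_1 P_v U_2$ where $P_v$ is the first closed subtrail based at $v$, so $\bar\tau_v(U) = U_1 \widetilde{P_v} U_2$. Reversing the middle segment $P_v$ leaves every transition pair intact at every vertex $w \neq v$ (a vertex visited twice inside $P_v$ keeps both of its pairs; a vertex visited once inside and once outside keeps both pairs; a vertex visited twice outside is untouched), but at $v$ itself the edge that was the ``exit'' into $P_v$ and the edge that was the ``entrance'' from $P_v$ get swapped with the edges on the $U_1$/$U_2$ side, so the two transition pairs at $v$ are recombined into a different admissible pairing. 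A short case analysis (there are exactly three pairings of four half-edges, and $\bar\tau_v$ together with the choice of which subtrail is ``first'' lets us reach both of the other two) shows that from a tour with a given pairing at $v$ we can reach a tour with any prescribed pairing at $v$, without disturbing the other vertices. I would present this via \cref{eq:tautau2}, i.e.\ $m(\bar\tau_v(U)) = \widetilde\tau_v(m(U))$, translating the claim into the statement that $\widetilde\tau_v$ on double occurrence words realizes exactly the local recombination $\bs{A}v\bs{B}v\bs{C} \mapsto \bs{A}v\widetilde{\bs{B}}v\bs{C}$ at position $v$ and nowhere else — which is immediate from \cref{def:tau_on_eul} and the definition of $\widetilde\tau_v$.

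With the local claim in hand, the induction is routine: let $D(U,U')$ be the set of vertices at which $U$ and $U'$ have different transition pairings. If $D(U,U') = \emptyset$ then $U$ and $U'$ have the same transition system, hence (being both single closed trails on $F$) satisfy $U \sim U'$. Otherwise pick $v \in D(U,U')$, apply the local step to obtain $\bar\tau_{\bs{w}}(U)$ agreeing with $U'$ at $v$ and at every vertex where $U$ already agreed with $U'$ (so $D$ strictly decreases), and iterate. Concatenating the sequences of vertices used at each stage produces the desired $\bs{v}$ with $\bar\tau_{\bs{v}}(U) \sim U'$ (written $\tau_{\bs{v}}(U) \sim U'$ in the theorem statement). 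The main obstacle is the bookkeeping in the local step: one must be careful that ``the first subtrail based at $v$'' genuinely gives enough freedom to hit \emph{any} target pairing at $v$, and that the operation never accidentally breaks the single-trail property — but this last point is automatic, since $\bar\tau_v(U)$ is by construction again an Eulerian tour on $F$ (as noted right after \cref{def:tau_on_eul}). An alternative, if one prefers to avoid the explicit transition-system argument, is to invoke connectivity of $F$ together with an exchange argument on the two tours' shared initial segments, but I expect the transition-system / double-occurrence-word route above to be the cleanest.
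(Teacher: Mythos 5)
First, a point of comparison: the paper does not prove \cref{thm:Kotzig} at all — it is imported as a known result of Kotzig, via Bouchet's Proposition 4.1. So your proposal is not competing with an in-paper argument but attempting to reconstruct a classical theorem. Your setup (transition systems, the fact that $\bar\tau_v$ preserves the transition pairs at every vertex other than $v$, and the fact that equal transition systems force $U\sim U'$) is all correct. Unfortunately the reconstruction has a genuine gap at its central step.

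The gap is the claim that ``$\bar\tau_v$ together with the choice of which subtrail is `first' lets us reach both of the other two'' pairings at $v$, i.e.\ that you can impose \emph{any} prescribed transition pairing at $v$ without disturbing the other vertices. This is false. Fix the transition pairs of $U$ at all vertices other than $v$ and write the two visits to $v$ as $\dots e_1\, v\, e_2 \dots e_3\, v\, e_4 \dots$, so $U$'s pairing at $v$ is $\{e_1,e_2\},\{e_3,e_4\}$. Of the two other pairings, $\{e_1,e_3\},\{e_2,e_4\}$ again yields a single closed trail — and this is exactly what $\bar\tau_v$ produces — whereas $\{e_1,e_4\},\{e_2,e_3\}$ closes the subtrail $P_v$ up into a separate circuit and therefore corresponds to no Eulerian tour agreeing with $U$ elsewhere. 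Reversing the \emph{other} closed subtrail based at $v$ instead of the first one gives a tour equivalent under $\sim$ to $\bar\tau_v(U)$, so the ``choice of which subtrail is first'' buys nothing: up to $\sim$, $\bar\tau_v$ is an involution toggling between the two connected pairings at $v$, and the third pairing is unreachable by operations at $v$ alone. Consequently your induction on $\lvert D(U,U')\rvert$ breaks down precisely when $U'$'s pairing at the chosen vertex is the disconnecting one relative to $U$'s current transition system — which certainly occurs, since $U'$ also differs from $U$ at other vertices. Repairing this requires first changing the pairing at some other vertex shared by the two circuits that the ``bad'' pairing creates, and then showing that such detours can always be organized so that some global measure of disagreement still decreases; that is essentially the entire content of Kotzig's theorem and is the part your sketch omits. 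The honest options are to carry out that harder argument in full or to cite Kotzig/Bouchet, as the paper does.
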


\subsubsection{Vertex-deletion on circle graphs}

When we are considering vertex-minors of circle graphs, it is useful to have an operation on the double occurrence word that corresponds to the deletion of a vertex in the corresponding alternance graph.
Let $\bs{X}=\bs{A}v\bs{B}v\bs{C}$ be a double occurrence word and $v$ be an element in $V(\bs{X})$.
We will denote by $\bs{X}\setminus v$ the deletion of the element $v$, i.e.
\begin{equation}\label{eq:VD_on_m}
    \bs{X}\setminus v\equiv (\bs{A}v\bs{B}v\bs{C})\setminus v=\bs{A}\bs{B}\bs{C}.
\end{equation}
The resulting word $\bs{ABC}$ is also a double occurrence word and furthermore we have that
\begin{equation}
    \mathcal{A}(\bs{X})\setminus v=\mathcal{A}(\bs{X}\setminus v).
\end{equation}
If $W=\{w_1,w_2\dots,w_l\}$ is a subset of $V$, we will write $\bs{X}\setminus W$ as the deletion of all elements in $W$, i.e.
\begin{equation}
    \bs{X}\setminus W=(\dots((\bs{X}\setminus w_1)\setminus w_2)\dots)\setminus w_l.
\end{equation}
Connected to this we can also define an induced double occurrence sub-word $m[W]=\bs{X}\setminus (V\setminus W)$.
The reason for calling this an induced double occurrence sub-word stems from its relation to induced subgraphs of the alternance graph as
\begin{equation}\label{eq:induced_alternance}
    \mathcal{A}(\bs{X})[W]=\mathcal{A}(\bs{X}[W]).
\end{equation}


\subsubsection{Vertex-minors of circle graphs}
Since we now have expressions for local complementation and vertex deletion on circle graphs in terms of double occurrence words, we can consider vertex-minors of circle graphs completely in terms of double occurrence words.
More precisely we have the following theorem.

\begin{thm}\label{thm:vm_of_dow}
    Let $G$ be a circle graph such that $G=\mathcal{A}(\bs{X})$ for some double occurrence word $\bs{X}$.
    Then $G'$ is a vertex-minor of $G$ if and only if there exist a sequence $\mathbf{v}$ of elements in $V(G)=V(\bs{X})$ such that
    \begin{equation}\label{eq:dow_vertex-minor}
        G'=\mathcal{A}(\widetilde{\tau}_\mathbf{v}(\bs{X})[V(G')]).
    \end{equation}
\end{thm}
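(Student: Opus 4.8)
The plan is to combine \cref{thm:multi_vertex-minor}, which characterizes vertex-minors in terms of the operations $X_v, Y_v, Z_v$, with the translation of local complementation and vertex-deletion into the double occurrence word picture given by \cref{eq:tautau} and \cref{eq:VD_on_m}. The forward direction ($\Leftarrow$) is the easy one: if there is a sequence $\mathbf{v}$ with $G' = \mathcal{A}(\widetilde{\tau}_\mathbf{v}(\bs{X})[V(G')])$, then I would first use \cref{eq:tautau} iteratively to get $\tau_\mathbf{v}(\mathcal{A}(\bs{X})) = \mathcal{A}(\widetilde{\tau}_\mathbf{v}(\bs{X}))$, hence $\tau_\mathbf{v}(G) = \mathcal{A}(\widetilde{\tau}_\mathbf{v}(\bs{X}))$, and then apply \cref{eq:induced_alternance} to conclude $\tau_\mathbf{v}(G)[V(G')] = \mathcal{A}(\widetilde{\tau}_\mathbf{v}(\bs{X})[V(G')]) = G'$. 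By the second (equivalent) formulation in \cref{def:vertex-minor}, this exactly says $G' < G$.

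For the converse ($\Rightarrow$), suppose $G' < G$. By \cref{def:vertex-minor} there is a sequence of local complementations $\mathbf{w}$ with $\tau_\mathbf{w}(G)[V(G')] = G'$. I would translate $\mathbf{w}$ directly: applying \cref{eq:tautau} one local complementation at a time gives $\tau_\mathbf{w}(\mathcal{A}(\bs{X})) = \mathcal{A}(\widetilde{\tau}_\mathbf{w}(\bs{X}))$, so $\tau_\mathbf{w}(G) = \mathcal{A}(\widetilde{\tau}_\mathbf{w}(\bs{X}))$. Then \cref{eq:induced_alternance} gives $G' = \tau_\mathbf{w}(G)[V(G')] = \mathcal{A}(\widetilde{\tau}_\mathbf{w}(\bs{X}))[V(G')] = \mathcal{A}(\widetilde{\tau}_\mathbf{w}(\bs{X})[V(G')])$, which is \cref{eq:dow_vertex-minor} with $\mathbf{v} = \mathbf{w}$. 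So in fact the theorem follows almost immediately once one observes that the word operation $\widetilde{\tau}$ and the word-deletion commute with taking alternance graphs in exactly the way needed; one does not even need \cref{thm:multi_vertex-minor} here, only the two displayed identities \cref{eq:tautau} and \cref{eq:induced_alternance}, plus an easy induction to lift \cref{eq:tautau} from a single vertex to a sequence.

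The one point that requires a small amount of care — and is the closest thing to an obstacle — is the induction lifting $\tau_v(\mathcal{A}(\bs{X})) = \mathcal{A}(\widetilde{\tau}_v(\bs{X}))$ to sequences $\mathbf{v}$. This is where I must use that $\widetilde{\tau}_v$ is well-defined on the equivalence class $\bs{d}_{\bs{X}}$ (noted right after \cref{def:tau_on_eul}'s analogue, i.e. the remark that $\widetilde{\tau}_v$ maps equivalence classes to equivalence classes), since after one application $\widetilde{\tau}_{v_1}(\bs{X})$ is only determined up to equivalence, yet $\mathcal{A}$ is constant on equivalence classes, so the induction step $\tau_{v_2}(\mathcal{A}(\widetilde{\tau}_{v_1}(\bs{X}))) = \mathcal{A}(\widetilde{\tau}_{v_2}(\widetilde{\tau}_{v_1}(\bs{X})))$ goes through. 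I would also remark that since $V(\widetilde{\tau}_\mathbf{v}(\bs{X})) = V(\bs{X}) = V(G)$ for any $\mathbf{v}$, the sequence $\mathbf{v}$ in \cref{eq:dow_vertex-minor} may be taken with letters in $V(G)$ as stated, and the restriction to $V(G')$ in the word picture (an induced double occurrence sub-word, \cref{eq:induced_alternance}) matches the induced subgraph on the graph side. With these observations the proof is a direct chaining of the identities established earlier in \cref{sec:circle}.
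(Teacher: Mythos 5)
Your proof is correct and follows essentially the same route as the paper's: rewrite $\mathcal{A}(\widetilde{\tau}_\mathbf{v}(\bs{X})[V(G')])$ as $\tau_\mathbf{v}(\mathcal{A}(\bs{X}))[V(G')]$ using \cref{eq:induced_alternance} and (the sequence version of) \cref{eq:tautau}, then invoke the second formulation of \cref{def:vertex-minor}. Your extra care about lifting \cref{eq:tautau} to sequences is fine but not strictly necessary, since $\widetilde{\tau}_{v}(\bs{X})$ is itself a concrete double occurrence word to which the single-vertex identity applies again.
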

\begin{proof}
    By using \cref{eq:induced_alternance} and \cref{eq:tautau3} on the right hand side of \cref{eq:dow_vertex-minor} we have that
    \begin{equation}
        \mathcal{A}\Big(\widetilde{\tau}_\mathbf{v}(\bs{X})[V(G')]\Big)=\mathcal{A}\Big(\widetilde{\tau}_\mathbf{v}(\bs{X})\Big)[V(G')]=\tau_\mathbf{v}(\mathcal{A}(\bs{X}))[V(G')]
    \end{equation}
    Since $G'$ is a vertex-minor of $G=\mathcal{A}(\bs{X})$ if and only if there exist a sequence $\mathbf{v}$ of elements in $V(G)$ such that
    \begin{equation}
        G'=\tau_\mathbf{v}(G)[V(G')]
    \end{equation}
    the theorem follows.
\end{proof}

We can also consider vertex minors of circle graphs in terms of their representations as Eulerian tours on connected $4$-regular multi-graphs, which we will use in \cref{sec:complexity} to prove that \VM\ is \NP-Complete.
\Cref{thm:Kotzig}, together with \cref{eq:tautau3}, implies that connected 4-regular multi-graphs describe equivalence classes of circle graphs under local complementations.
Bouchet pointed out this fact in~\cite{Bouchet1994}.
We formalize this here as a theorem together with a formal proof:

\begin{thm}\label{thm:equiv_eul}
    Let $U$ be an Eulerian tour of a connected $4$-regular multi-graph $F$ with vertices $V$.
    Then (1) any graph LC-equivalent to $\mathcal{A}(U)$ is an alternance graph of some Eulerian tour of $F$ and (2) any alternance graph of a Eulerian tour of $F$ is a graph LC-equivalent to $\mathcal{A}(U)$.
\end{thm}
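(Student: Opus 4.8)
The plan is to reduce both halves of the statement to the dictionary already established between $\tau$-operations on circle graphs and $\bar\tau$-operations on Eulerian tours, namely \cref{eq:tautau3}, together with Kotzig's theorem \cref{thm:Kotzig}. The key observations I will chain together are: (i) $\mathcal{A}(\bar\tau_v(U)) = \tau_v(\mathcal{A}(U))$, which extends by induction along a sequence $\mathbf{v}$ to $\mathcal{A}(\bar\tau_{\mathbf{v}}(U)) = \tau_{\mathbf{v}}(\mathcal{A}(U))$; (ii) $\bar\tau_v$ sends an Eulerian tour of $F$ to an Eulerian tour of $F$ (the remark after \cref{def:tau_on_eul}); and (iii) equivalent Eulerian tours on a connected $4$-regular multi-graph induce the same alternance graph (the remark following \cref{def:eul_tour}). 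Note that the vertex sets all agree, $V(\mathcal{A}(U)) = V(m(U)) = V(F) = V$, so a sequence of vertices can be applied interchangeably to $\mathcal{A}(U)$ (as $\tau$) and to $U$ (as $\bar\tau$).

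For part (1), I would start from an arbitrary graph $G$ that is LC-equivalent to $\mathcal{A}(U)$. By definition of LC-equivalence there is a sequence $\mathbf{v}$ of vertices in $V$ with $\tau_{\mathbf{v}}(\mathcal{A}(U)) = G$. Applying observation (i) gives $G = \mathcal{A}(\bar\tau_{\mathbf{v}}(U))$, and by observation (ii) the tour $U' := \bar\tau_{\mathbf{v}}(U)$ is again an Eulerian tour of $F$. Hence $G = \mathcal{A}(U')$ is the alternance graph of an Eulerian tour of $F$, which is exactly the claim.

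For part (2), I would take any Eulerian tour $U'$ of $F$ and set $G = \mathcal{A}(U')$. By \cref{thm:Kotzig} there is a sequence $\mathbf{v}$ of vertices such that $\bar\tau_{\mathbf{v}}(U) \sim U'$; by observation (iii) this implies $\mathcal{A}(\bar\tau_{\mathbf{v}}(U)) = \mathcal{A}(U') = G$. Applying observation (i) once more yields $G = \tau_{\mathbf{v}}(\mathcal{A}(U))$, so $G$ is LC-equivalent to $\mathcal{A}(U)$, as required.

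There is essentially no substantial obstacle here: the whole argument is bookkeeping that threads \cref{eq:tautau3} and \cref{thm:Kotzig} together. The two points that need care are purely notational — first, that \cref{thm:Kotzig} as quoted writes the tour operation as "$\tau_{\mathbf{v}}(U)$" where $\bar\tau_{\mathbf{v}}(U)$ is meant, so one should make this identification explicit before invoking it; and second, that the induction extending \cref{eq:tautau3} from a single vertex to a sequence $\mathbf{v}$ should be spelled out (the inductive step is immediate from the single-vertex identity applied to the tour $\bar\tau_{v_1\dots v_{i-1}}(U)$, which by (ii) is still an Eulerian tour of $F$).
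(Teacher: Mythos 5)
Your proof is correct and follows essentially the same route as the paper: both parts chain \cref{eq:tautau3} with the fact that $\bar\tau_{\mathbf{v}}(U)$ remains an Eulerian tour of $F$, and part (2) invokes \cref{thm:Kotzig} plus the invariance of the alternance graph under tour equivalence, exactly as in the paper's argument. Your two notational caveats (the $\tau$ vs.\ $\bar\tau$ typo in the statement of \cref{thm:Kotzig}, and the induction extending \cref{eq:tautau3} to sequences) are sensible but do not change the substance.
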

\begin{proof}
    We start by proving (1), so let us therefore assume that $G$ is a graph LC-equivalent to $\mathcal{A}(U)$.
    This means, by definition, that there exist a sequence $\mathbf{v}$ of vertices in $V$ such that $G=\tau_\mathbf{v}(\mathcal{A}(U))$.
    By using \cref{eq:tautau3} we have that 
    \begin{equation}
        G=\mathcal{A}(\bar{\tau}_\mathbf{v}(U)).
    \end{equation}
    which shows that $G$ is an alternance graph induced by a Eulerian tour of $F$, since $\bar{\tau}_\mathbf{v}(U)$ is a Eulerian tour on $F$.
    To prove (2), assume that $U'$ is a Eulerian tour of $F$.
    We will now prove that the alternance graph of $U'$, $\mathcal{A}(U')$, is LC-equivalent to $\mathcal{A}(U)$.
    By \cref{thm:Kotzig}, we know that there exist a sequence of $\bar{\tau}_v$-transformations that relates $U$ and $U'$, i.e. there exist a sequence $\mathbf{v}$ such that
    \begin{equation}
        \bar{\tau}_\mathbf{v}(U)\sim U'.
    \end{equation}
    Since these Eulerian tours are equivalent, their induced alternance graphs are equal, i.e.
    \begin{equation}
        \mathcal{A}(\bar{\tau}_\mathbf{v}(U))=\mathcal{A}(U').
    \end{equation}
    Finally, using \cref{eq:tautau3} on the above equation gives
    \begin{equation}
        \tau_\mathbf{v}(\mathcal{A}(U))=\mathcal{A}(U')
    \end{equation}
    which shows that $\mathcal{A}(U)$ and $\mathcal{A}(U')$ are indeed LC-equivalent.
\end{proof}

Similarly to \cref{thm:vm_of_dow} we can decide if a circle graph has a certain vertex-minor by considering Eulerian tours of a 4-regular graph, which is captured in the following theorem.

\begin{thm}\label{thm:vm_of_eul}
    Let $F$ be a connected $4$-regular multi-graph and let $G$ be a circle graph such that $\mathcal{A}(U)$ for some Eulerian tour $U$ on $F$.
    Then $G'$ is a vertex-minor of $G$ if and only if there exist a Eulerian tour $U'$ on $F$ such that
    \begin{equation}\label{eq:vm_of_eul}
        G'=\mathcal{A}(m(U')[V(G')]).
    \end{equation}
\end{thm}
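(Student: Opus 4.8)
The statement should follow by assembling three ingredients already available: the characterization of vertex-minors of $G$ as induced subgraphs of graphs LC-equivalent to $G$ (\cref{def:vertex-minor}), the dictionary between local complementation on $\mathcal{A}(U)$ and the $\bar\tau$-operation on Eulerian tours (\cref{eq:tautau3}, and the remark that $\bar\tau_v(U)$ is again an Eulerian tour on $F$), and the fact that induced subgraphs of alternance graphs correspond to induced sub-words, $\mathcal{A}(\bs X)[W]=\mathcal{A}(\bs X[W])$ (\cref{eq:induced_alternance}), applied with $W=V(G')$. I would also note at the outset that $V(m(U'))=V(F)=V(\mathcal{A}(U))=V(G)\supseteq V(G')$, so that the induced sub-word $m(U')[V(G')]$ is well-defined for every Eulerian tour $U'$ on $F$.

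For the forward direction, I would assume $G'<G$ and use \cref{def:vertex-minor} to get a sequence $\mathbf v$ of vertices with $G'=\tau_{\mathbf v}(G)[V(G')]=\tau_{\mathbf v}(\mathcal{A}(U))[V(G')]$. Applying \cref{eq:tautau3}, this equals $\mathcal{A}(\bar\tau_{\mathbf v}(U))[V(G')]$, and since $U'\coloneqq\bar\tau_{\mathbf v}(U)$ is an Eulerian tour on $F$ (remark after \cref{def:tau_on_eul}), \cref{eq:induced_alternance} with $W=V(G')$ turns $\mathcal{A}(U')[V(G')]=\mathcal{A}(m(U'))[V(G')]$ into $\mathcal{A}(m(U')[V(G')])$, giving exactly \cref{eq:vm_of_eul}.

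For the converse, I would start from a Eulerian tour $U'$ on $F$ with $G'=\mathcal{A}(m(U')[V(G')])$, rewrite the right-hand side via \cref{eq:induced_alternance} as $\mathcal{A}(U')[V(G')]$, and invoke part (2) of \cref{thm:equiv_eul} to conclude that $\mathcal{A}(U')$ is LC-equivalent to $\mathcal{A}(U)=G$, i.e.\ $\mathcal{A}(U')=\tau_{\mathbf v}(G)$ for some sequence $\mathbf v$. Then $G'=\tau_{\mathbf v}(G)[V(G')]$, which is precisely the definition of $G'<G$.

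Since every step is just a substitution using a previously established identity, I do not expect a genuine obstacle; the only points requiring a line of care are the bookkeeping of vertex sets (so that all induced sub-words and induced subgraphs make sense) and being explicit that $\bar\tau_{\mathbf v}(U)$ stays a Eulerian tour on $F$, so that the tour produced in the forward direction is admissible in \cref{eq:vm_of_eul}. If anything, one might prefer to state and cite \cref{thm:equiv_eul} symmetrically in both directions rather than only using part (2) for the converse, but part (1) is not actually needed here.
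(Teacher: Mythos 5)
Your proof is correct and follows essentially the same route as the paper: the forward direction is identical (apply \cref{eq:tautau3} and then \cref{eq:induced_alternance} to $\tau_{\mathbf v}(\mathcal{A}(U))[V(G')]$), and for the converse the paper invokes \cref{thm:Kotzig} directly to write $U'$ as $\bar\tau_{\mathbf v}(U)$, whereas you route the same fact through \cref{thm:equiv_eul}(2), which is merely a packaged form of that argument.
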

\begin{proof}
    Lets first assume that $G'$ is a vertex-minor of $G$.
    This means that there exists a sequence $\mathbf{v}$ such that $G'=\tau_\mathbf{v}(G)[V(G')]$.
    Since $G=\mathcal{A}(U)$ we have that
    \begin{align}
        G'&=\tau_\mathbf{v}\Big(\mathcal{A}(U)\Big)[V(G')]\\
          &=\mathcal{A}\Big(\bar{\tau}_v(U)\Big)[V(G')]\\
          &=\mathcal{A}\Big(m(\bar{\tau}_v(U))[V(G')]\Big)
    \end{align}
    where we used \cref{eq:tautau3} in the first step and \cref{eq:induced_alternance} in the second.
    We therefore see that $U'=\bar{\tau}_v(U)$ is a Eulerian tour on $F$ satisfying \cref{eq:vm_of_eul}.

    To prove the converse let us assume that there exist a Eulerian tour $U'$ on $F$ satisfying \cref{eq:vm_of_eul}.
    From \cref{thm:Kotzig} we know that there exist a sequence $\mathbf{v}$ such that $U'=\bar{\tau}_\mathbf{v}(U)$.
    We can then replace $U'$ by $\bar{\tau}_\mathbf{v}(U)$ in \cref{eq:vm_of_eul} such that
    \begin{align}
        G'&=\mathcal{A}\Big(m(\bar{\tau}_\mathbf{v}(U))[V(G')]\Big)\\
          &=\mathcal{A}\Big(m(\bar{\tau}_\mathbf{v}(U))\Big)[V(G')]\\
          &=\tau_\mathbf{v}\Big(\mathcal{A}(U)\Big)[V(G')]\label{eq:proof_vm_of_eul}
    \end{align}
    where we again made use of \cref{eq:induced_alternance} and \cref{eq:tautau3}.
    From \cref{eq:proof_vm_of_eul} we see that $G'$ is indeed a vertex-minor of $G$, see \cref{def:vertex-minor}, since $G=\mathcal{A}(U)$.
\end{proof}

\subsubsection{Semi-Ordered Eulerian tours}\label{sec:soet}

From the previous sections we have seen that circle graphs and their vertex-minors can be described by Eulerian tours on connected 4-regular multi-graphs.
As discussed in \cref{sec:graph_states}, the question of whether a given graph $G$ has vertex-minors on the subset $V'\subseteq V(G)$ in the form of star or complete graphs is of importance in quantum information theory, since it corresponds to transforming the graph state $\ket{G}$ to a GHZ-state on the qubits $V'$.
A natural question is therefore: Given a set of vertices $V'$, what property should a connected 4-regular multi-graph $F$ satisfy, such that $S_{V'}$ is a vertex-minor of $\mathcal{A}(U)$, for some Eulerian tour $U$ on $F$.
As we will see in this section, a necessary and sufficient condition is that $F$ allows for what we call a \emph{semi-ordered Eulerian tour} (SOET) with respect to $V'$.

The existence of a \SOET~on a four regular graph $F$ with respect to some vertex set $V'$ will therefore be a key technical tool when considering \SVM\ on circle graphs, as described in \cref{sec:complexity}. We formally define a SOET as follows.

\begin{mydef}[SOET]\label{def:SOET}
    Let $F$ be a 4-regular multi-graph and let $V'\subseteq V(F)$ be a subset of its vertices.
    Furthermore, let $\bs{s}=s_1s_2\dots s_k$ be a word with letters in $V'$ such that each element of $V'$ occurs exactly once in $\bs{s}$ and where $k=\abs{V'}$.
    A semi-ordered Eulerian tour $U$ with respect to $V'$ is a Eulerian tour such that $m(U)=\bs{X}_0s_1\bs{X}_1s_2\dots s_k\bs{X}_{k}s_1\bs{Y}_{1}s_2\dots s_k\bs{Y}_{k}$ for some $\bs{s}$ and where $\bs{X}_0,\bs{X}_1,\dots,\bs{X}_{k},\bs{Y}_1,\dots,\bs{Y}_{k}$ are words (possibly empty) with letters in $V\setminus V'$.
    This can also be stated as $m(U)[V']=\bs{s}\bs{s}$, for some $\bs{s}$.
\end{mydef}
Note that the multi-graph $F$ is not assumed to be simple, so multi-edges and self-loops are allowed.
A SOET is a Eulerian tour on $F$ that traverses the elements of $V'$ in some order once and then again in the same order.
The particular order in which the \SOET~traverses $V'$ will not be important here, only that it traverses $V'$ in the same order twice.
An example of a graph that allows for a SOET with respect to the set $V'=\{a,b,c,d\}$ can be seen in \cref{fig:SOET_yes}.
A \SOET\ for this graph is for example $m(U)=abcdaebced$.
The graph in \cref{fig:SOET_no} on the other hand does not allow for any SOET with respect to the set $V'=\{a,b,c,d\}$.

\begin{figure}[H]
    \centering
    \begin{subfigure}{0.15\textwidth}
        \includegraphics[width=1\textwidth]{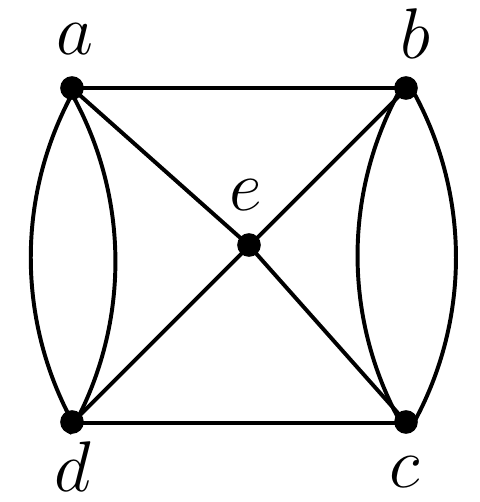}
        \caption{}
        \label{fig:SOET_yes}
    \end{subfigure}
    \hspace{2cm}
    \begin{subfigure}{0.18\textwidth}
        \includegraphics[width=1\textwidth]{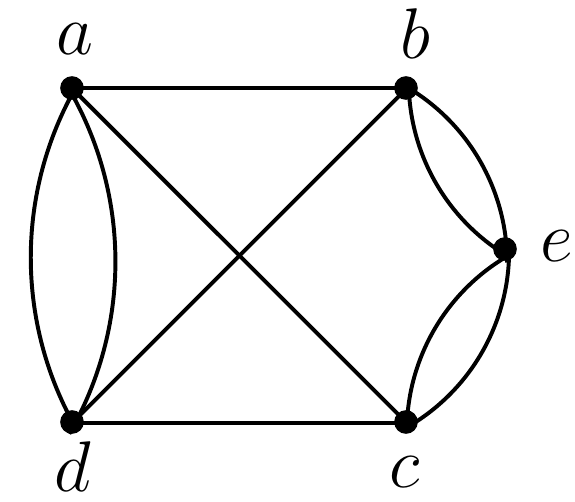}
        \caption{}
        \label{fig:SOET_no}
    \end{subfigure}
    \caption{Examples of two $4$-regular multi-graphs.
    The graph in \cref{fig:SOET_yes} allows for a SOET with respect to the set $\{a,b,c,d\}$ but the graph in \cref{fig:SOET_no} does not.}
    \label{fig:SOET_ex}
\end{figure}

We also formally define the \SOET-decision problem, which takes a $4$-regular multi-graph $F$ and a subset $V'$ of the vertices as input and asks to decide whether or not the graph $F$ allows for a Semi-Ordered Eulerian Tour with respect to the vertex set $V'$. 

\begin{prm}[SOET]\label{prob:SOET}
Let $F$ be a $4$-regular multi-graph and let $V'$ be a subset $V(F)$. Decide whether there exists a \SOET~$U$ on $F$ with respect to the set $V'$.
\end{prm}

As mentioned, the reason for introducing the notion of a \SOET\ is that a 4-regular multi-graph $F$ allows for a \SOET\ with respect to a subset $V'\subseteq V(F)$ if and only if a star graph on $V'$ is a vertex-minor of an alternance graph $\mathcal{A}(U)$ induced by a Eulerian tour $U$ on $F$.
This is captured in the following theorem, formulated as a corollary of \cref{thm:vm_of_eul}.

\begin{cor}\label{cor:reg_to_circle}
    Let $F$ be a connected 4-regular multi-graph and let $G$ be a circle graph given by the alternance graph of a Eulerian tour $U$ on $F$, i.e. $G=\mathcal{A}(U)$.
    Then $S_{V'}$ is a vertex-minor of $G$ if and only if $F$ allows for a \SOET~(see \cref{def:SOET}) with respect to $V'$.
\end{cor}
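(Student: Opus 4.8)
The plan is to deduce this corollary directly from \cref{thm:vm_of_eul} by showing that, once $G' = S_{V'}$ is fixed, the condition ``$\exists$ Eulerian tour $U'$ on $F$ with $S_{V'} = \mathcal{A}(m(U')[V'])$'' is equivalent to ``$F$ allows a \SOET\ with respect to $V'$''. By \cref{thm:vm_of_eul}, $S_{V'} < G$ iff there is a Eulerian tour $U'$ on $F$ with $S_{V'} = \mathcal{A}(m(U')[V'])$. So the whole task reduces to the following purely combinatorial equivalence: for a double occurrence word $\bs{X}'$ on the letter set $V'$ (which will be $m(U')[V']$, an induced double occurrence sub-word, hence genuinely a double occurrence word on $V'$), we have $\mathcal{A}(\bs{X}') = S_{V'}$ if and only if $\bs{X}' \sim \bs{s}\bs{s}$ for some word $\bs{s}$ listing each element of $V'$ exactly once. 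Combining this with \cref{def:SOET} (which says $U'$ is a \SOET\ with respect to $V'$ precisely when $m(U')[V'] = \bs{s}\bs{s}$ up to the equivalence) gives the corollary.

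So I would first record the reduction: apply \cref{thm:vm_of_eul} with $G' = S_{V'}$, noting that $V(G') = V'$, so $S_{V'} < G$ iff $\exists$ Eulerian tour $U'$ on $F$ with $S_{V'} = \mathcal{A}(m(U')[V'])$. Then I would prove the key lemma characterizing which double occurrence words on $V'$ have a star graph as their alternance graph. For the ``if'' direction: if $\bs{X}' \sim \bs{s}\bs{s}$ with $\bs{s} = s_1 \dots s_k$, then for any two distinct indices $i < j$ the subword pattern in $\bs{s}\bs{s}$ restricted to $\{s_i, s_j\}$ is $s_i s_j s_i s_j$, an alternance, so $\mathcal{A}(\bs{s}\bs{s})$ is the complete graph $K_{V'}$; and since the complete graph and all star graphs on $V'$ are LC-equivalent and in fact represent the same vertex-minor relation (and more to the point $\mathcal{A}$ of a \SOET\ differs only by the choice of star versus complete graph, which are both acceptable targets since we left the choice of star graph open), $S_{V'} = \mathcal{A}(\bs{X}')$ holds up to the freedom in the problem statement. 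For the ``only if'' direction: suppose $\mathcal{A}(\bs{X}') = S_{V'}$; I would argue that the center $c$ is an alternance with every other letter while no two non-center letters are alternances, and then analyze the possible interleaving patterns of a double occurrence word on $V'$ to conclude $\bs{X}'$ must (up to the equivalence of \cref{def:d_X}) have the form $\bs{s}\bs{s}$ — alternatively, I could route through the complete-graph case by noting $S_{V'} \sim_{\mathrm{LC}} K_{V'}$ and using \cref{eq:tautau} to reduce to characterizing double occurrence words whose alternance graph is $K_{V'}$, which is the cleaner statement: $\mathcal{A}(\bs{X}') = K_{V'}$ iff every pair of distinct letters interleaves iff $\bs{X}' \sim \bs{s}\bs{s}$.

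The main obstacle I expect is handling the slight mismatch between ``star graph'' and ``complete graph'' cleanly: \cref{def:SOET} is phrased so that $m(U)[V'] = \bs{s}\bs{s}$, whose alternance graph is $K_{V'}$, not a star; but the corollary asserts $S_{V'} < G$. This is reconciled because $S_{V'} \sim_{\mathrm{LC}} K_{V'}$ (as noted right after \cref{def:pivot_vertex}), and vertex-minors are invariant under LC-equivalence of the target in the sense that $\tau_{\bs{v}}(\bs{X}')$ can absorb one extra local complementation; concretely, $K_{V'} = \mathcal{A}(m(U')[V'])$ for a \SOET\ $U'$, and then $\tau_c$ applied appropriately (via $\widetilde{\tau}_c$ on the word, or simply noting $K_{V'}[V'] = K_{V'}$ and $\tau_c(K_{V'}) = S_{V',c}$) yields a star; conversely a star is LC-equivalent to the complete graph. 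I would make sure to invoke \cref{thm:vm_of_eul} with $G' = K_{V'}$ if that is the more convenient target, and then pass between $K_{V'}$ and $S_{V'}$ using the fact that they are LC-equivalent and that LC-equivalent graphs have the same vertex-minors. Once that bookkeeping is done, the argument is short: the content is entirely in the one-line combinatorial observation that a double occurrence word on $V'$ has all pairs interleaving iff it is equivalent to $\bs{s}\bs{s}$.
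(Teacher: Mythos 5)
Your proposal is correct and follows essentially the same route as the paper's proof: invoke \cref{thm:vm_of_eul}, pass between $S_{V'}$ and $K_{V'}$ via their LC-equivalence, and use the combinatorial fact that $\mathcal{A}(\bs{X}')=K_{V'}$ for a double occurrence word $\bs{X}'$ on $V'$ exactly when $\bs{X}'\sim\bs{s}\bs{s}$, which is precisely the \SOET\ condition on $m(U')[V']$. The ``cleaner'' alternative you identify (working with the complete graph as the target throughout) is exactly what the paper does.
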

\begin{proof}
    Note first that $S_{V'}\leq G$ if and only if $K_{V'}\leq G$, since $S_{V'}$ and $K_{V'}$ are LC~equivalent.
    From \cref{thm:vm_of_eul} we know that $K_{V'}$ is a vertex-minor of $G$ if and only if there exist an Eulerian tour $U'$ on $F$ such that
    \begin{equation}
        K_{V'}=\mathcal{A}(m(U')[V']).
    \end{equation}
    It is easy to verify that $\mathcal{A}(\bs{X})$ is a complete graph on $V'$ if and only if $\bs{X}=s_1s_2\dots s_ks_1s_2\dots s_k$ where $\bs{s}=s_1s_2\dots s_k$ is a word with letters in $V'$ such that each element of $V'$ occur exactly once in $\bs{s}$.
    The result then follows, since $m(U')[V']$ is of this form if and only if $U'$ is a \SOET~with respect to $V'$.
\end{proof}

One can see that the existence of a \SOET~on a $4$-regular multi-graph $F$ with respect to $V'$, imparts an ordering on the subset of vertices $V'$. We will in particular be interested in vertices in $V'$ that are `consecutive' with respect to the \SOET. Consecutiveness is defined as follows.

\begin{mydef}[Consecutive vertices]\label{def:consecutive}
    Let $F$ be a $4$-regular graph and $U$ a~\SOET~on $F$ with respect to a subset $V'\subseteq V(F)$.
    Two vertices $u,v\in V'$ are called consecutive in $U$ if there exist a sub-word $u\bs{X}v$ or $v\bs{X}u$ of $m(U)$ such that no letter of $\bs{X}$ is in $V'$.
\end{mydef}

We also define the notion of a ``maximal sub-word" associated with two consecutive vertices.

\begin{mydef}[Maximal sub-words]\label{def:maximal}
    Let $F$ be a $4$-regular multi-graph and $U$ a SOET on $F$ with respect to a subset $V'\subseteq V(F)$.
    The double occurrence word induced by $U$ is then of the form $m(U)=\bs{X}_0s_1\bs{X}_1s_2\dots s_k\bs{X}_{k}s_1\bs{Y}_{1}s_2\dots s_k\bs{Y}_{k}$, where $k=\abs{V'}$, $s_1,\dots,s_k\in V'$ and $X_0,\dots,X_k,
    Y_1,\dots,Y_k$ are words (possibly empty) with letters in $V(F)\setminus V'$.
    For $i\in[k-1]$, we call $\bs{X}_i$ and $\bs{Y}_i$ the two maximal sub-words associated with the consecutive vertices $s_{i}$ and $s_{i+1}$.
    Furthermore, we call $\bs{X}_k$ and $\bs{Y}_k\bs{X}_0$ the two maximal sub-words associated with the consecutive vertices $s_{k}$ and $s_1$.
    Given two consecutive vertices $u$ and $v$, we will denote their two maximal sub-words as $\bs{X}$ and $\bs{X}'$, $\bs{Y}$ and $\bs{Y}'$ or similar.
\end{mydef}

\subsection{Leaves, twins and axils}\label{sec:reductions}
In this section we will consider certain vertices called leaves, twins and axils.
First we will prove that such vertices can in many cases be removed when considering the vertex-minor problem, which can simplify the problem significantly. 
We capture this in \cref{thm:reductions}.
This motivates us to consider distance-hereditary graphs, since it turns out that these are exactly the graphs that can be reached from a single-vertex graph by adding leaves or performing twin-splittings. We will leverage these properties in \cref{sec:DH_alg} to find an efficient algorithm for \SVM~when the input graph is distance hereditary.
We define and consider distance-hereditary graphs in \cref{sec:DH}.

Let us first formally define leaves, twins and axils.

\begin{mydef}[Leaves and axils]\label{def:leaf}
    A \emph{leaf} is vertex with degree one.
    An \emph{axil} is the unique neighbor of a leaf.
\end{mydef}

\begin{mydef}[Twin]\label{def:twin}
    A \emph{twin} is a vertex $v$ such that there exist a different vertex $u$ with the same neighborhood, i.e. $v$ is a twin if and only if
    \begin{equation}\label{eq:twin_def}
        \exists u\in V\setminus\{v\}:\big(N_v\setminus\{u\}=N_u\setminus\{v\}\big).
    \end{equation}
    A vertex $u$ as in \cref{eq:twin_def} is called a \emph{twin-partner} of $v$ and $v,u$ form a \emph{twin-pair}.
    If $v$ and $u$ are adjacent, they form a \emph{true twin-pair} and otherwise a \emph{false twin-pair}.
\end{mydef}


\begin{mydef}[Foliage]\label{def:T_set}
    The \emph{foliage} of a graph $G$ is the set of leaves, axils and twins in a graph $G$ and is denoted
    \begin{equation}
        T(G)=\{v\in V(G):\text{$v$ is a leaf, axil or twin}\}
    \end{equation}
\end{mydef}

We are now ready to prove the following theorem which can be used to simplify some instances of \VM, in particular when considering distance-hereditary graphs, see \cref{sec:DH}.

\begin{thm}\label{thm:reductions}
    Let $G$ and $G'$ be graphs and $v$ be a vertex in $G$ but not in $G'$.
    Then the following is true:
    \begin{itemize}
        \item If $v$ is a leaf or a twin, then $G'$ is a vertex-minor of $G$ if and only if $G'$ is a vertex-minor of $G\setminus v$, i.e.
            \begin{equation}
                G'<G\quad\Leftrightarrow\quad G'<(G\setminus v).
            \end{equation}
        \item If $v$ is an axil, then $G'$ is a vertex-minor of $G$ if and only if $G'$ is a vertex-minor of $\tau_w\circ\tau_v(G)\setminus v$, where $w$ is the leaf associated to $v$, i.e.
            \begin{equation}
                G'<G\quad\Leftrightarrow\quad G'<(\tau_w\circ\tau_v(G)\setminus v).
            \end{equation}
    \end{itemize}
\end{thm}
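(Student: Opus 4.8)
The plan is to handle the three cases (leaf, twin, axil) separately, in each case exploiting a structural fact about how the special vertex $v$ interacts with local complementations, together with \cref{thm:multi_vertex-minor} and \cref{cor:order_independent}, which tell us that $G' < G$ iff $G' \sim_{\mathrm{LC}} P(G)$ for some $P \in \mathcal{P}_{\mathbf{u}}$, and that the choice of ordering of $\mathbf{u}$ is irrelevant. Since $v \in V(G) \setminus V(G')$, we may always choose an ordering $\mathbf{u}$ in which $v$ is the \emph{last} vertex to be eliminated, so it suffices to understand the effect of applying one of $X_v, Y_v, Z_v$ as the final step, and to compare against applying the analogous $(|V(G)|-|V(G')|-1)$-step elimination to the smaller graph named in the theorem.

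\textbf{Leaf case.} Suppose $v$ is a leaf with unique neighbor $w$. The key observation is that $X_v(G) = \rho_{(v,w)}(G) \setminus v$, $Y_v(G) = \tau_v(G)\setminus v$ and $Z_v(G) = G\setminus v$ all give the \emph{same} graph, up to possibly an isolated re-attachment, because a local complementation at a degree-one vertex does nothing ($\tau_v(G)=G$ when $|N_v|=1$ as the induced subgraph on a single vertex has no edges), so $Y_v(G) = G\setminus v$; and one checks that $X_v(G) = \rho_{(v,w)}(G)\setminus v$ equals $\tau_v \circ \tau_w \circ \tau_v(G)\setminus v = \tau_w(G)\setminus v$, which is $G\setminus v$ with the neighborhood of $w$ complemented — but this is LC-equivalent to $G\setminus v$ since it is just $\tau_w(G\setminus v)$ by \cref{lem:VD_last}. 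Hence $\{X_v(G), Y_v(G), Z_v(G)\}$ all lie in a single LC-equivalence class, namely that of $G\setminus v$. Plugging this into \cref{thm:multi_vertex-minor} with $v$ eliminated last collapses the case-split over $P_v \in \{X_v,Y_v,Z_v\}$, giving $G'<G \Leftrightarrow G' < G\setminus v$.

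\textbf{Twin case.} Suppose $v$ has a twin-partner $u$. Again $v$ is eliminated last. One shows that $Z_v(G)=G\setminus v$, and that $X_v(G)$ and $Y_v(G)$ are each LC-equivalent to $G\setminus v$, using that $u$ and $v$ have (almost) the same neighborhood: a local complementation at $v$ affects exactly the edges inside $N_v$, which — because $N_v\setminus\{u\} = N_u\setminus\{v\}$ — can be ``undone'' on $G\setminus v$ by a local complementation at $u$ (for a false twin-pair, $\tau_v(G)\setminus v = \tau_u(G\setminus v)$; for a true twin-pair one needs to track the extra edge $uv$ and possibly compose with another $\tau_u$ or a pivot). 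Similarly $X_v(G)=\rho_{(v,w)}(G)\setminus v$ for a suitable neighbor $w$ reduces, via \cref{def:pivot} and the explicit description of $X_v^{(u)}$ given after \cref{def:XYZ}, to something LC-equivalent to $G\setminus v$. Once all three of $X_v(G),Y_v(G),Z_v(G)$ are seen to lie in the LC-class of $G\setminus v$, \cref{thm:multi_vertex-minor} again collapses the case-split and yields the claim. The main obstacle here is the bookkeeping for the true-twin subcase, where the edge $uv$ and the self-adjacency conventions must be handled carefully.

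\textbf{Axil case.} Suppose $v$ is an axil with associated leaf $w$. The idea is to first apply the pivot $\rho_{(v,w)} = \tau_w\circ\tau_v\circ\tau_w$ — or directly $\tau_w\circ\tau_v$, noting $\tau_w$ acts trivially before $\tau_v$ since $w$ is a leaf — which by a standard fact about pivoting on a leaf-edge \emph{swaps the roles of $v$ and $w$}: in $\tau_w\circ\tau_v(G)$, the vertex $v$ becomes a leaf with axil $w$. Now $v$ is a leaf in $\tau_w\circ\tau_v(G)$, so we are reduced to the leaf case already handled: $G' < \tau_w\circ\tau_v(G)$ iff $G' < \tau_w\circ\tau_v(G)\setminus v$. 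But $\tau_w\circ\tau_v(G) \sim_{\mathrm{LC}} G$, and LC-equivalent graphs have the same vertex-minors, so $G'<G \Leftrightarrow G' < \tau_w\circ\tau_v(G) \Leftrightarrow G' < \tau_w\circ\tau_v(G)\setminus v$, which is exactly the stated equivalence. The only thing to verify carefully is that $v$ is genuinely a leaf (degree one) in $\tau_w\circ\tau_v(G)$; this is a short computation using \cref{def:LC} and the fact that $N_w=\{v\}$.

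I expect the twin case, and specifically the true-twin subcase, to be the main obstacle, since that is where the explicit edge-complementation rules for pivots and the treatment of the edge $uv$ have to be combined without error; the leaf and axil cases are comparatively mechanical once the pivot-on-a-leaf identity is in hand.
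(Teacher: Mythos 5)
Your axil reduction (pivot on the leaf edge turns $v$ into a leaf, then invoke the leaf case and LC-invariance of vertex-minors) is exactly the paper's, but the leaf and twin cases contain genuine errors. First, the framing is backwards: you eliminate $v$ \emph{last}, but the hypothesis that $v$ is a leaf (or twin) is a property of $G$, not of $P'(G)$ after the other $|V(G)|-|V(G')|-1$ vertices have been processed, so at the final step you know nothing special about $v$ and the three operations $X_v,Y_v,Z_v$ genuinely land in different LC-classes (e.g.\ for the path $a\text{--}v\text{--}b$, $Z_v$ gives two isolated vertices while $Y_v$ gives the edge $ab$). The paper instead uses \cref{cor:order_independent} to put $v$ \emph{first}, where its structure is available.

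Second, even with $v$ first, the central claim of your leaf case --- that $X_v(G),Y_v(G),Z_v(G)$ all lie in the LC-class of $G\setminus v$ --- is false. For a leaf $v$ with axil $w$, the identity $\rho_{(v,w)}(G)=\tau_v\circ\tau_w\circ\tau_v(G)=\tau_w(G)$ fails because after $\tau_w$ the vertex $v$ is adjacent to all of $N_w$, so the outer $\tau_v$ is not trivial; in fact $X_v(G)$ is $G\setminus v$ with $w$ \emph{isolated}. Concretely, on the path $v\text{--}w\text{--}x$ one gets $X_v(G)=\overline{K}_{\{w,x\}}$ while $G\setminus v$ is the edge $(w,x)$; since local complementation preserves connected components, these are not LC-equivalent. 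The same failure occurs for a true twin-pair $(v,u)$: $Y_v(G)=\tau_v(G)\setminus v$ isolates $u$, and no composition of further local complementations can repair a changed component structure, so the ``bookkeeping'' you defer to cannot succeed. The paper's proof avoids both problems: in the $X_v$ branch of the leaf case it observes that the isolated axil $w$ must itself be eliminated next and that any of $X_w,Y_w,Z_w$ then acts as plain deletion, producing an explicit witness sequence $(Z_w,P^{(3)},\dots)$ for $G\setminus v$; and for twins it does not argue LC-equivalence of the measured graphs at all, but first applies local complementations turning the twin into a leaf ($\tilde G=\tau_w\circ\tau_n(G)$ for false twins with common neighbour $n$, $\tilde G=\tau_w(G)$ for true twins), invokes the leaf case, and undoes the complementations via \cref{lem:VD_last}. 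You would need to adopt these two devices for the proof to go through.
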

\begin{proof}
    Firstly, if $G'$ is a vertex-minor of $G\setminus v$, then clearly $G'$ is also a vertex-minor of $G$.\\

    This means we only need to prove the other direction. Assume therefore that $G'$ is a vertex-minor of $G$.
    We start by proving the case where $v$ is a leaf in $G$. The cases where $v$ is an axil or a twin in $G$ then follow by a short argument. \\

    Hence assume that $v$ is a leaf in $V\setminus V'$, where $V=V(G)$ and $V'=V(G')$.
    Furthermore, let $\bs{u}$ be a sequence of vertices such that each element of $V\setminus V'$ occurs exactly once in $\bs{u}$.
    Since $G'$ is a vertex-minor of $G$, we know by \cref{thm:multi_vertex-minor} that there exists some sequence of operations $P\in\mathcal{P}_{\bs{u}}$, such that $P(G)\sim_\mathrm{LC}G'$.
    Let us denote the $i$-th operation in $P$ as $P^{(i)}$, such that $P=P^{(n-k)}\circ\dots\circ P^{(1)}$, where $n=\abs{G}$ and $k=\abs{G'}$.
    Remember that each operation $P^{(i)}$ deletes the $i$-th vertex of $\bs{u}$ from the graph.
    Furthermore, let's denote the sequence of operations from $i$ through $j$ in $P$ as
    \begin{equation}
        P_i^j=P^{(j)}\circ\dots P^{(i+1)}\circ P^{(i)}.
    \end{equation}
    By \cref{cor:order_independent} we know that such a $P$ exist for all orderings $\bs{u}$ of the vertices in $V\setminus V'$.
    Without loss of generality we can assume that $v$ is the first element in $\bs{u}$. This means that $P^{(1)}$ is either $Z_v$, $Y_v$ or $X_v$. We will now treat all three these cases separately.\\

    If $P^{(1)}$ is $Z_v$ or $Y_v$, then since $v$ is a leaf we have that
    \begin{equation}
        P^{(1)}(G)=G\setminus v.
    \end{equation}
    Then it is easy to see that $G'$ is also a vertex-minor of $G\setminus v$, since
    \begin{equation}
        G'\sim_\mathrm{LC}P(G)=P_2^{n-k}\circ P^{(1)}(G)=P_2^{n-k}(G\setminus v)
    \end{equation}
    \indent If $P^{(1)}$ is $X_v$ then the axil of $v$ cannot be in $V\setminus V'$, since the operation $X_v$ on a leaf disconnects the axil from its neighbors.
    Lets denote the axil of $v$ by $w$ and assume again w.l.o.g. that the ordering of $V\setminus V'$ is such that $w$ is the second element of $\bs{u}$.
    Since $w$ is a disconnected vertex after $P^{(1)}$, any of the three operations $\{X_w,Y_w,Z_w\}$ act the same, i.e. deleting $w$.
    So the action of $X_v$ followed by $P^{(2)}\in\{X_w,Y_w,Z_w\}$ is the same as deleting both $v$ and $w$ or in other words
    \begin{equation}
        P_1^2(G)=Z_w(G\setminus v)
    \end{equation}
    It is again clear that $G'$ is then a vertex-minor of $G\setminus v$, since
    \begin{equation}
        G\sim_\mathrm{LC}P(G)=P_3^{n-k}\circ P_1^2(G)=P_3^{n-k}\circ Z_w(G\setminus v)
    \end{equation}
    with a satisfying sequence taking $G\setminus v$ to an LC-equivalent graph of $G'$ being $(Z_w,P^{(3)},\dots,P^{(n-k)})$. This proves the theorem when $v$ is a leaf.\\
    Now assume that $v$ is a twin in $G$.
    To prove that the theorem also hold for twins, we first show that a twin can always be transformed into a leaf by local complementations.
    Assume that $v$ and $w$ are false twins, and denote one of their common neighbors as $n$.\footnote{Note that twins always have at least one common neighbor, except for the graph $K_2$ where the twins are anyway also leaves.}
    Then the graph $\tilde{G}=\tau_w\circ\tau_n(G)$ is a graph where $v$ is a leaf and $w$ is an axil.
    Since LC-equivalent graphs have the same vertex-minors, $G'$ is also a vertex-minor of $\tilde{G}$.
    From what we showed above and that $v$ is a leaf, $G'$ is also a vertex-minor of $\tilde{G}\setminus v$.
    Finally, $G'$ is then also a vertex-minor of
    \begin{equation}
        \tau_n\circ\tau_w(\tilde{G}\setminus v)=\tau_n\circ\tau_w\circ\tau_w\circ\tau_n(G)\setminus v=G\setminus v
    \end{equation}
    where we used \cref{lem:VD_last}.
    An almost identical argument can be made for the case where $v$ and $w$ are true twins by considering the graph $\tilde{G}=\tau_w(G)$.\\

    Now assume $v$ is an axil in $G$.
    If $v$ is an axil in $G$ and $v\notin G'$, then $v$ is a leaf in the graph $\tilde{G}=\tau_w\circ\tau_v(G)$, where $w$ is the leaf of $v$ in $G$.
    Since by assumption $G'<G$, we know that $G'<\tilde{G}$ and from the cases of leaves we have that also $G'<\tilde{G}\setminus v$, since $v$ is a leaf in $\tilde{G}$.
    This completes the proof.
\end{proof}

\subsubsection{Distance-hereditary graphs}\label{sec:DH}
In this section we introduce distance-hereditary graphs.
As shown by Bouchet in~\cite{Bouchet1988}, distance-hereditary graphs are exactly the graphs with rank-width one.
These graphs have nice properties which we make use of in \cref{sec:DH_alg}.

\begin{mydef}[Distance-hereditary]
    A graph $G$ is distance-hereditary if and only if, for each connected induced subgraph $G[A]$ and for any two vertices $u,v\in A$ the distance between $u$ and $v$ is the same in $G$ and in $G[A]$, i.e.
    \begin{equation}
        d_G(u,v)=d_{G[A]}(u,v).
    \end{equation}
\end{mydef}

The simplest example of a graph which is not distance-hereditary is the five-cycle $C_5$.
To see this pick two vertices which have distance two in $C_5$ and denote their unique common neighbor by $v$.
The distance between the same vertices in the connected induced subgraph $C_5[V\setminus v]$ is three and thus not the same as in $C_5$.
It turns out that distance-hereditary graphs are exactly the graphs which do not contain a vertex-minor isomorphic to $C_5$~\cite{Bouchet1994}.
We also note that distance-hereditary graphs form a strict subclass of circle graphs~\cite{Bouchet1994}.



In~\cite{Mulder1986} an equivalent property of distance-hereditary is shown: A graph is distance-hereditary if and only if it can be obtained from a single-vertex graph using the following three operations:
\begin{itemize}
    \item \emph{Add a leaf}: Let $u$ be a vertex in a graph $G$. Add the vertex $v$ and the edge $(u,v)$ to $G$.
    \item \emph{False twin-split}: Let $u$ be a vertex in a graph $G$. Add the vertex $v$ and the edges $\{(v,x):x\in N_u\}$.
    \item \emph{True twin-split}: Let $u$ be a vertex in a graph $G$. Add the vertex $v$ and the edges $\{(v,x):x\in \{u\}\cup N_u\}$.
\end{itemize}

Note that this implies that a distance-hereditary graph always has at least one leaf or twin, i.e. the foliage is non-empty. This fact will be a critical element of the algorithm presented in \cref{sec:DH_alg}\\

In the rest of this section we prove some properties of the foliage for distance-hereditary graphs, which we make use of in \cref{sec:DH_alg} to find an efficient algorithm for \VM\ on distance-hereditary graphs.
First we show that the twin relation is in fact transitive. This is a technical lemma we will use in later theorems.

\begin{lem}\label{lem:trans}
    Let $G$ be a graph and let $u$ be a vertex of $G$ that is a twin and has twin-partners $\{t_1,t_2,\dots,t_k\}.$
    Then all vertices in $u\cup\{t_1,t_2,\dots,t_k\}$ are pairwise twins.
\end{lem}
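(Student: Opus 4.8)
The plan is to show that any two vertices $x, y \in \{u\} \cup \{t_1,\dots,t_k\}$ are twins, by a case analysis on whether they are true or false twin-partners of $u$. First I would record the basic observation that the twin condition $N_x \setminus \{y\} = N_y \setminus \{x\}$ is symmetric in $x$ and $y$, and that the pairs $\{u,t_i\}$ are twin-pairs by hypothesis. The goal reduces to: given that $t_i$ and $t_j$ are both twin-partners of $u$, deduce that $t_i$ and $t_j$ are twin-partners of each other. So fix $i \neq j$ and write $a = t_i$, $b = t_j$.

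The key step is to compare the three neighborhoods $N_u$, $N_a$, $N_b$ carefully, keeping track of the three special vertices $u, a, b$ themselves, since the twin condition only controls neighborhoods \emph{outside} the pair in question. I would split into cases according to the true/false status of the pairs $\{u,a\}$ and $\{u,b\}$. For instance, if $\{u,a\}$ is a false twin-pair then $a \notin N_u$ and $u \notin N_a$, and $N_u \setminus \{a\} = N_u$, $N_a \setminus \{u\} = N_a$, so in fact $N_u = N_a$ except possibly we must be careful: $N_u$ could contain $b$ and $N_a$ could contain $b$ — these are consistent. The cleanest route is to first argue that for any vertex $w \notin \{u,a,b\}$, we have $w \in N_a \iff w \in N_u \iff w \in N_b$ (each biconditional from the respective twin hypothesis, since $w$ avoids the relevant pair), hence $N_a \setminus \{u,b\} = N_b \setminus \{u,a\}$. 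It then only remains to settle membership of $u$, $a$, $b$ in $N_a$ and $N_b$, i.e. whether $u \in N_a$, whether $u \in N_b$, whether $b \in N_a$, whether $a \in N_b$, and to check these are compatible with $N_a \setminus \{b\} = N_b \setminus \{a\}$, which is the twin condition for $\{a,b\}$.

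For those remaining memberships: $b \in N_a \iff a \in N_b$ always (undirected graph), so that pair is automatically consistent. And $u \in N_a$ is exactly the statement ``$\{u,a\}$ is a true twin-pair'', while $u \in N_b$ says ``$\{u,b\}$ is a true twin-pair''. So the only thing to verify is that these two can be absorbed: the twin condition $N_a \setminus \{b\} = N_b \setminus \{a\}$ requires $u \in N_a \iff u \in N_b$ (as $u \notin \{a,b\}$). Here I would invoke the hypothesis that $G$ is a simple graph together with the structure of twin-pairs sharing a common partner $u$ — actually this is where the genuine content lies, and I should check whether being a true twin of $u$ and being a false twin of $u$ can coexist among the $t_j$'s. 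I expect the resolution is that if $a$ is a true twin of $u$ ($u \in N_a = N_u \cup \{\text{stuff}\}$, so $a \in N_u$), then applying the twin condition for the pair $\{u,b\}$ to the vertex $a$ gives $a \in N_u \setminus \{b\} \iff a \in N_b \setminus \{u\}$; combined with $a \in N_u$ (assuming $a \neq b$) this forces $a \in N_b$, hence $b \in N_a$, and symmetrically one threads through to get $u \in N_b$ as well. The main obstacle is precisely this bookkeeping of the three-element set $\{u,a,b\}$ inside the neighborhoods — making sure every ``$\setminus$'' in the twin definition is accounted for and that the true/false cases are handled uniformly; once the ``outside'' part $N_a \setminus \{u,b\} = N_b \setminus \{u,a\}$ is established, this is the only delicate point, and I would handle it by a direct two-line implication in each of the (at most four) true/false cases.
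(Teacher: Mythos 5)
Your plan is correct and follows essentially the same route as the paper's proof: both compare $N_{t_i}$ and $N_{t_j}$ by pivoting through the common twin-partner $u$. You are in fact somewhat more careful than the paper, whose displayed identity $N_{t_i}=(N_u\setminus\{t_i\})\cup\{u\}$ silently assumes $u\in N_{t_i}$ (i.e.\ that every partner is a \emph{true} twin of $u$); your explicit argument that a true twin-partner and a false twin-partner of $u$ cannot coexist is precisely what justifies treating all the $t_i$ uniformly, so the delicate point you flagged is real and your resolution of it is right.
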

\begin{proof}
    Since $u$ and $t_i$ form a twin-pair, for $i\in\{1,2,\dots,k\}$, we have that
    \begin{equation}
        N_u\setminus\{t_i\}=N_{t_i}\setminus\{u\}
    \end{equation}
    which implies that
    \begin{equation}
        N_{t_i}=(N_u\setminus\{t_i\})\cup\{u\}.
    \end{equation}
    Thus, we have that
    \begin{align}
        N_{t_i}\setminus\{t_j\}&=((N_u\setminus\{t_i\})\cup\{u\})\setminus\{t_j\}\\
                               &=(\underbrace{(N_u\setminus\{t_j\})}_{N_{t_j}\setminus\{u\}}\cup\{u\})\setminus\{t_i\}\\
                               &=N_{t_j}\setminus\{t_i\}.
    \end{align}
    This shows that, for $i\neq j$, $t_i$ and $t_j$ form a twin-pair.
\end{proof}

Next we prove that adding leaves to a graph $G$ or performing (true or false) twin-splits never decreases the size of the foliage $T(G)$.

\begin{lem}\label{lem:T_mono}
    Assume $G$ is a connected, distance-hereditary graph.
    Let $G'$ be a graph formed by doing a twin-split on $G$ or adding a leaf to $G$.
    Then
    \begin{equation}
        \abs{T(G')}\geq\abs{T(G)},
    \end{equation}
    where $T(G)$ is the foliage of $G$.
\end{lem}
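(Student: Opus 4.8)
The plan is to argue that $G'$ is obtained from $G$ by adding a single vertex $v$ (either a leaf attached to some $u\in V(G)$, or a true/false twin of some $u\in V(G)$), and to exhibit an injection from $T(G)$ into $T(G')$. The natural candidate injection is simply the inclusion $V(G)\hookrightarrow V(G')$; so the real content is to show that every vertex $w\in T(G)$ remains in $T(G')$, i.e. remains a leaf, axil or twin after the addition of $v$. Once that is established, $\abs{T(G')}\ge\abs{T(G)}$ follows immediately (and in fact one typically even gains $v$ itself, since an added leaf is a leaf and an added twin is a twin, but the lemma only asks for $\ge$). I would split the argument according to which of the three Mulder operations produced $G'$, and within each case according to whether the fixed vertex $w\in T(G)$ is a leaf, an axil, or a twin in $G$, and according to whether $w$ is or is not adjacent to (or equal to the ``parent'' $u$ of) the new vertex $v$.

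First I would handle the twin-split cases, which are the cleanest. If $v$ is a true or false twin of $u$ in $G'$, then for any vertex $w\notin\{u,v\}$ we have $N^{(G')}_w = N^{(G)}_w$ if $w\notin N^{(G)}_u$, and $N^{(G')}_w = N^{(G)}_w\cup\{v\}$ if $w\in N^{(G)}_u$ (with the obvious adjustment for true twins, where $u\in N^{(G')}_v$). The key observation is that this operation preserves the twin relation: if $w,w'$ were twins in $G$, then either both or neither gain $v$ as a neighbor, so $N^{(G')}_w\setminus\{w'\} = N^{(G')}_{w'}\setminus\{w\}$ still holds; hence a twin of $G$ stays a twin in $G'$. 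Similarly a leaf $w$ of $G$ stays a leaf unless $w\in N^{(G)}_u$, but a leaf adjacent to $u$ forces $N^{(G)}_u\ni w$ with $w$ having degree one, so its unique neighbor is $u$; then in $G'$ the vertex $w$ has neighbors $\{u\}$ still if the split is false (since $w\notin N_u$ in that case — wait, $w\in N_u$), so one must check: if $w$ is a leaf with axil $u$, a false twin-split on $u$ makes $w$ adjacent to both $u$ and $v$, so $w$ is no longer a leaf — but then $v$ and $u$ are twins and $w$ is now a vertex of degree two adjacent to the twin-pair; crucially $w$ together with... here I would instead note that $w$ becomes a twin (of nothing new) — more carefully, after the split $w$ has $N^{(G')}_w=\{u,v\}$ and any other leaf of $u$ behaves identically, and in any case $w$ is still in $T(G')$ because $u$ is now an axil of $v$ or $v$ is a twin; the cleanest route is: the added vertex $v$ is itself in $T(G')$ and is adjacent to $u$, and one shows $w$ stays in $T(G')$ by a direct neighborhood computation. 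An axil $w$ of $G$ (say with leaf $\ell$) stays an axil since $\ell$ keeps degree one unless $\ell\in N^{(G)}_u$, i.e. unless $u=w$; if $u=w$ then splitting $w$ creates a new leaf-or-twin incident to $w$ and $w$ remains an axil/twin.

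Then I would do the add-a-leaf case: $v$ is a new leaf with axil $u\in V(G)$, so $N^{(G')}_v=\{u\}$, $N^{(G')}_u = N^{(G)}_u\cup\{v\}$, and $N^{(G')}_w=N^{(G)}_w$ for all other $w$. Here $v$ is a new leaf and $u$ is an axil of $G'$. For $w\in T(G)$: if $w\neq u$ then $N^{(G')}_w=N^{(G)}_w$, so $w$ is still a leaf/axil unless its partner leaf was $u$ (but a leaf-of-$u$ is still a leaf in $G'$ since only $u$'s neighborhood grew), and $w$ is still a twin of some $w'$ provided $w'\neq u$; if the only twin-partner of $w$ in $G$ was $u$, then in $G'$ we have $N^{(G')}_u\supsetneq N^{(G)}_u$ so $w$ might lose twinhood — but in that case $w$ and $u$ were false twins (a false twin-pair has no edge; a true twin-pair $\{w,u\}$ with $w$... ) and $w$ now satisfies $N^{(G')}_w = N^{(G)}_w = N^{(G)}_u = N^{(G')}_u\setminus\{v\}$, i.e. $w$ is adjacent to exactly $N^{(G)}_u$; but $v$'s only neighbor is $u$, so $w\in N^{(G)}_u$? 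No. I would resolve this last sub-case by observing that if $w$ is a twin of $u$ in $G$ and $u$ receives a new leaf $v$, then $w$ together with the \emph{same} vertices becomes a twin of... or simpler: $w$ is adjacent to all of $N^{(G)}_u$, and $u$ in $G'$ is adjacent to $N^{(G)}_u\cup\{v\}$; then $w$ is a leaf iff $\abs{N^{(G)}_u}=1$, handled separately, and otherwise I use Lemma \ref{lem:trans} and the structure of distance-hereditary graphs to locate another twin for $w$. The main obstacle is exactly this bookkeeping of the ``boundary'' sub-cases where $w$'s unique certifying partner (leaf, axil, or twin-partner) is the parent vertex $u$ that gets modified; I expect these to be finitely many small configurations, each resolvable by a direct neighborhood computation using Definitions \ref{def:leaf}, \ref{def:twin} and Lemma \ref{lem:trans}, and I would organize the write-up as a case table so that no case is missed.
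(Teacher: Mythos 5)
Your plan hinges on the claim that the inclusion $V(G)\hookrightarrow V(G')$ restricts to an injection $T(G)\to T(G')$, i.e.\ that every leaf, axil or twin of $G$ remains one after the operation. That claim is false, and this is exactly where your argument breaks. Take $G=P_4$ with vertices $u\text{--}a\text{--}b\text{--}w$ (a connected distance-hereditary graph with $T(G)=\{u,a,b,w\}$) and add a leaf $v$ to $u$, giving $P_5=v\text{--}u\text{--}a\text{--}b\text{--}w$. The vertex $a$ was an axil of $G$ (its only leaf was $u$), but in $G'$ it is the middle vertex of $P_5$: not a leaf, not an axil (since $u$ now has degree two), and not a twin of anything. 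So $a\in T(G)\setminus T(G')$. A similar failure occurs in the sub-case you flag but do not resolve, where $w$'s \emph{only} twin-partner is the parent $u$: e.g.\ in $C_4$ with vertices $1,2,3,4$, adding a leaf to $1$ destroys the twin-pair $\{1,3\}$ and ejects $3$ from the foliage. No local patch of the kind you sketch (``locate another twin for $w$'') can work, because these vertices genuinely leave $T(G')$.

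The lemma is nevertheless true, and the paper's proof is a counting (exchange) argument rather than an inclusion argument: it centers the case analysis on the parent vertex $u$ of the new vertex $v$. If $u\notin T(G)$ then no member of $T(G)$ is affected and $v$ joins, so the count can only grow. If $u\in T(G)$, one checks in each sub-case (leaf, axil-but-not-twin, twin with one partner, twin with several partners — the last using the transitivity of the twin relation, \cref{lem:trans}) that at most one old member of $T(G)$ drops out while $u$ stays in and $v$ enters, so $\abs{T(G')}\geq\abs{T(G)}$ even though $T(G)\not\subseteq T(G')$. (The paper also handles $\abs{V(G)}\leq 2$ separately.) If you want to salvage your write-up, replace ``show every $w\in T(G)$ stays in $T(G')$'' by ``show the number of departures from $T(G)$ is at most the number of arrivals,'' and organize the cases around $u$ rather than around an arbitrary $w\in T(G)$.
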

\begin{proof}
    To prove this, let us first consider the case when $\abs{G}\leq 2$.
    Since $G$ is connected it is necessary the case that $G=K_1$ or $G=K_2$:
    \begin{itemize}
        \item If $G=K_1$, then $G'=K_2$ and $\abs{T(G')}=2\geq0=\abs{T(G)}$.
        \item If $G=K_2$, then $G'=K_3$ or $G'=P_3$ and $\abs{T(G')}=3\geq2=\abs{T(G)}$.
    \end{itemize}
    Let's now consider the case when $\abs{G}>2$.
    We consider the two cases when $G'$ is formed by adding a leaf and performing a twin-split separately:
    \begin{itemize}
        \item Assume $G'$ is formed by adding a leaf $v$ to $G$, making $u$ an axil of $G'$.
            Note first that if $u\notin T(G)$, then $\abs{T(G)}$ can only increase since no vertex in $T(G)$ was affected.
            Let's therefore assume that $u\in T(G)$.
            There are then three possibilities: (1) $u$ is a leaf, (2) $u$ is an axil but not a twin and (3) $u$ is a twin.
            We consider these three cases separately:
            \begin{itemize}
                \item (1) Assume $u$ is a leaf in $G$.
                    Then the axil of $u$ in $G$, is not in $T(G')$, but both $u$ and $v$ are.
                    Therefore $\abs{T(G)}=\abs{T(G')}$.
                \item (2) Assume $u$ is an axil but not a twin in $G$.
                    Then $u$ is also an axil in $G'$ and we have that $\abs{T(G')}=\abs{T(G)}=1$.
                \item (3) Assume $u$ is a twin in $G$.
                    \begin{itemize}
                        \item Assume there is only one twin-pair containing $u$ in $G$.
                            Then the twin-partner of $u$ in $G$, is not in $T(G')$, but both $u$ and $v$ are.
                            Therefore $\abs{T(G')}=\abs{T(G)}$.
                        \item Assume there is more than one twin-pair containing $u$ in $G$.
                            Then the twin-partners of $u$ are all pairwise twins, by \cref{lem:trans}, and will still be in $G'$.
                            Therefore $\abs{T(G')}=\abs{T(G)}+1$.
                    \end{itemize}
            \end{itemize}
        \item Assume $G'$ is formed by twin-splitting $u$ in $G$, creating $v$ and making $v$ and $u$ a twin-pair.
            Note first that if $u\notin T(G)$, then $\abs{T(G)}$ can only increase since no vertex in $T(G)$ was affected.
            Let's therefore assume that $u\in T(G)$.
            There are then three possibilities: (1) $u$ is a leaf, (2) $u$ is an axil but not a twin and (3) $u$ is a twin.
            We consider these three cases separately:
            \begin{itemize}
                \item (1) Assume $u$ is a leaf in $G$.
                    Then the axil of $u$ in $G$ is either still an axil in $G'$ or not, depending on if $u$ and $v$ are true or false twins.
                    In either case, $\abs{T(G')}\geq\abs{T(G)}$ since $v\in T(G')$.
                \item (2) Assume $u$ is an axil but not a twin in $G$.
                    Note that all leaves with $u$ as an axil in $G$ are also twins.
                    These vertices are also twins in $G'$ since they are all now also adjacent to $v$.
                    Thus, $\abs{T(G')}=\abs{T(G)}+1$.
                \item (3) Assume $u$ is a twin in $G$.
                    \begin{itemize}
                        \item Assume there is only one twin-pair in $G$ containing $u$.
                            Then the twin-partner of $u$ in $G$, may or may not still be a twin-partner of $u$ in $G'$ depending on whether the considered twin-pairs are true or false.
                            Therefore the size of $T(G)$ either remains the same or increases by one since again $v\in T(G')$.
                        \item Assume there are more than one twin-pair in $G$ containing $u$.
                            Then the twin-partners of $u$ are all pairwise twins, by \cref{lem:trans}, and will still be in $G'$.
                            Therefore $\abs{T(G')}=\abs{T(G)}+1$.
                    \end{itemize}
            \end{itemize}
    \end{itemize}
\end{proof}

We now make use of the above theorem to prove that the foliage has a certain minimum size.

\begin{thm}\label{thm:T_size}
    Assume $G$ is a connected, distance-hereditary graph and $2\leq k\leq4$, then
    \begin{equation}
        \abs{G}\geq k\qq{}\Rightarrow\qq{}\abs{T(G)}\geq k.
    \end{equation}
\end{thm}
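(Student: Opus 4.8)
The plan is to split the proof into a finite base check and a short induction, after first observing that the theorem reduces to two facts: \textbf{(i)} every connected distance-hereditary graph on $2$, $3$ or $4$ vertices has $T(G)=V(G)$, and \textbf{(ii)} every connected distance-hereditary graph on at least $4$ vertices has $\abs{T(G)}\ge 4$. Indeed, given these, if $\abs{G}\le 4$ then $\abs{T(G)}=\abs{G}\ge k$ by (i) (using $k\le\abs{G}$), and if $\abs{G}>4$ then $\abs{T(G)}\ge 4\ge k$ by (ii).

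For (i) I would enumerate the connected graphs on at most four vertices — all of which are distance-hereditary, since the smallest graph that is not distance-hereditary has five vertices (for instance $C_5$) — and check the foliage of each: $K_2$; $P_3$ and $K_3$; and $P_4$, the claw $K_{1,3}$, the paw, $C_4$, the diamond ($K_4$ with one edge removed) and $K_4$. In every one of these, each vertex is a leaf, an axil or a twin: e.g.\ the two nonadjacent degree-two vertices of $C_4$ form a false twin-pair, the two degree-three vertices of the diamond form a true twin-pair, and in any complete graph all vertices are pairwise twins (cf.\ \cref{def:twin}). Hence $T(G)=V(G)$ in each case.

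For (ii) I would induct on $n=\abs{G}$, the base case $n=4$ being part of (i). For $n\ge 5$, the characterization of distance-hereditary graphs in~\cite{Mulder1986} (a connected distance-hereditary graph with at least two vertices is obtained from a smaller one by adding a leaf or by a true/false twin-split, so its last-added vertex is a leaf or a twin) supplies a leaf or a twin $v$ of $G$; let $u$ be its axil or a twin-partner. Then $G\setminus v$ is still connected — immediately if $v$ is a leaf, and if $v$ is a twin because every neighbour of $v$ other than $u$ is a neighbour of $u$ — and still distance-hereditary, since a connected induced subgraph of a distance-hereditary graph is distance-hereditary (directly from the definition, comparing distances in $G$, in $G\setminus v$, and in any common connected induced subgraph). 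So $G\setminus v$ is a connected distance-hereditary graph on $n-1\ge 4$ vertices, whence $\abs{T(G\setminus v)}\ge 4$ by the induction hypothesis. Finally $G$ is obtained from $G\setminus v$ by adding the leaf $v$ at $u$, or by a true (respectively false) twin-split of $u$ when $v,u$ are true (respectively false) twins, so \cref{lem:T_mono} gives $\abs{T(G)}\ge\abs{T(G\setminus v)}\ge 4$.

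The only genuinely delicate point is step (ii): one must make sure the reduction never leaves the class of connected distance-hereditary graphs, since both \cref{lem:T_mono} and the induction hypothesis require such an input, and one must also check that $G$ is truly recovered from $G\setminus v$ by one of the three Mulder operations, not merely that $G\setminus v$ is a vertex-minor of $G$. The base-case enumeration in (i) is routine but should be carried out with enough care to confirm that the foliage is all of $V(G)$ in each of the handful of small graphs.
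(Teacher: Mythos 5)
Your proof is correct and takes essentially the same route as the paper's: an explicit check of the connected graphs on at most four vertices combined with \cref{lem:T_mono} and Mulder's leaf/twin-split characterization of distance-hereditary graphs. The only cosmetic difference is that you run the induction downward by peeling off a leaf or twin (and you correctly verify that this stays within the class of connected distance-hereditary graphs and that $G$ is recovered by a single Mulder operation), whereas the paper follows the upward build sequence from a single vertex.
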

\begin{proof}
    First we explicitly check that the graphs on 2, 3 and 4 vertices has $T(G)=2$, $T(G)=3$ and $T(G)=4$, respectively.\footnote{The number of non-isomorphic graphs on 2, 3 and 4 vertices are 1, 2 and 6, respectively.}
    Then by \cref{lem:T_mono} and the fact that all distance-hereditary graphs can be built up by twin-splits and adding leaves~\cite{Bouchet1988}, the result follows.
\end{proof}

We point out that the theorem does not hold for $k>4$. Consider for example a path graph $P_k$ on more than four vertices. It is easy to see that size of the foliage in this case is $\abs{P_k}=4$.

Finally we show that an interesting property regarding the foliage, in relation to cut-vertices.\footnote{A cut-vertex is a vertex such that when it is deleted, the number of connected components increases.}

\begin{cor}\label{cor:CC}
    Assume that $G$ is a connected distance-hereditary graph and that $v\in G$ is a cut-vertex.
    Denote the connected components of $G\setminus v$ by $G_1,G_2,\dots,G_k$, where $k$ is the number of connected components of $G\setminus v$.
    Then for any $1\leq i\leq k$, there exist a vertex $u\in G_i$ such that $u\in T(G)$.
\end{cor}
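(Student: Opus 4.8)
The plan is to localize the question to the single component $G_i$ together with the cut-vertex. Fix $i$ and set $H=G[V(G_i)\cup\{v\}]$. Since $G$ is connected and $G_i$ is a connected component of $G\setminus v$, the vertex $v$ has a neighbour in $G_i$, so $H$ is connected; moreover every connected induced subgraph of $H$ is also a connected induced subgraph of $G$, so $H$ inherits the distance-hereditary property from $G$. The key elementary observation is that $N^{(G)}_w=N^{(H)}_w$ for every $w\in V(G_i)$, since in $G$ all neighbours of such a $w$ already lie in $V(G_i)\cup\{v\}$. This yields a ``transfer'' principle: if $w\in V(G_i)$ is a leaf of $H$ it is a leaf of $G$; if $w,w'\in V(G_i)$ form a twin-pair in $H$ they form a twin-pair in $G$; and if $w\in V(G_i)$ is the axil in $H$ of a leaf $\ell\in V(G_i)$, then $\ell$ is a leaf of $G$. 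Call a vertex of $T(H)\cap V(G_i)$ \emph{good} if it is in one of these three situations; exhibiting a good vertex produces the required $u\in T(G)\cap V(G_i)$.

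Next I would feed $H$ into \cref{thm:T_size}: since $\abs{H}\ge 2$ we get $\abs{T(H)}\ge 2$, and if $\abs{H}\ge 3$ then $\abs{T(H)}\ge 3$. The case $\abs{H}=2$ is immediate, as then $V(G_i)$ is a single vertex whose only neighbour in $G$ is $v$, hence a leaf of $G$. So assume $\abs{T(H)}\ge 3$. At most one element of $T(H)$ equals $v$, so $T(H)\cap V(G_i)$ contains at least two vertices, and it suffices to show that at most one of them can fail to be good. Unwinding \cref{def:T_set}, a non-good vertex $w\in T(H)\cap V(G_i)$ is necessarily either the unique neighbour of $v$ in the case that $v$ is a leaf of $H$, or a twin-partner of $v$ that has no twin-partner inside $V(G_i)$ (a leaf of $H$ in $V(G_i)$, a twin-pair inside $V(G_i)$, or an axil of a leaf of $V(G_i)$ would each make $w$ good, and these are the only ways a vertex of $V(G_i)$ can lie in $T(H)$).

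Bounding the number of such ``bad'' vertices is the only real content, and it is where \cref{lem:trans} enters. If $v$ is a leaf of $H$ with unique neighbour $w_0$, then any twin-partner $w\neq w_0$ of $v$ is forced to have $N^{(H)}_w=\{w_0\}$ and so is itself a leaf of $H$ (hence good), leaving $w_0$ as the only possible bad vertex. If $v$ is not a leaf of $H$, every bad vertex must be a twin-partner of $v$; but \cref{lem:trans}, applied in $H$, makes all twin-partners of $v$ pairwise twins, so if two bad vertices existed each would have a twin-partner inside $V(G_i)$ and would therefore be good — a contradiction. Hence in all cases $T(H)\cap V(G_i)$ contains a good vertex, and the transfer principle finishes the proof. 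I expect the delicate step to be verifying that no bad vertex slips past this count — in particular the small neighbourhood computations ruling out extra twin-partners of a leaf $v$ — while everything else is routine bookkeeping with neighbourhoods.
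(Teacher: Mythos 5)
Your proof is correct and follows essentially the same route as the paper's: restrict to the connected distance-hereditary graph $H=G[V(G_i)\cup\{v\}]$, invoke \cref{thm:T_size} to get $\abs{T(H)}\geq 3$ (handling $\abs{H}=2$ separately), and transfer leaves, twins and axils from $H$ back to $G$. The only difference is that you carefully count how many elements of $T(H)\cap V(G_i)$ can fail to transfer (using \cref{lem:trans} for the twin case), a step the paper's proof asserts without elaboration.
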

\begin{proof}
    Pick an arbitrary connected component $G_i$ with vertices $V_i$.
    If $G_i$ is just a single vertex, then this vertex is necessarily a leaf in $G$ and is therefore in $T(G)$.
    Now assume that $\abs{G_i}>1$, then by using \cref{thm:T_size}, we have that there exist at least one twin-pair not containing $v$ or a leaf which is not $v$ in $G[\{v\}\cup V_i]$.
    This proves the corollary.
\end{proof}

\section{Complexity}\label{sec:complexity}
In this section we will consider the time-complexity of \VM. In particular we will show that even a highly restrictive version of the vertex-minor problem is \NP-Hard, namely when $G'$ is a star graph and $G$ is in a strict subclass of circle graphs. Since we also prove that \VM~is in \NP~this then proves that \VM~is \NP-Complete.\\
\subsection{\VM ~is in \NP}
We begin by arguing that the vertex-minor problem is in $\NP$.
Given graphs $G$ and $G'$ such that $G'<G$, a witness to this relation would be a sequence of local complementations and vertex deletions that takes $G$ to $G'$.
It is not a priori clear that this sequence is polynomial in length w.r.t. to the number of vertices of $G$.
However from \cref{thm:multi_vertex-minor} one can argue that whenever there is such a sequence, there is also a sequence of polynomial length. This leads to the following theorem.\\
\begin{thm}\label{thm:vm_in_np}
The decision problem $\mathrm{VERTEXMINOR}$ is in $\NP$.
\end{thm}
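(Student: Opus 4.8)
The goal is to show that \VM\ is in \NP, i.e., that a positive instance $(G,G')$ admits a polynomial-size certificate checkable in polynomial time. The natural certificate is a sequence witnessing $G'<G$, but as noted one must first bound its length. The plan is to use \cref{thm:multi_vertex-minor} to reduce the problem to exhibiting a single element $P\in\mathcal{P}_\mathbf{u}$ together with evidence that $G'\sim_\mathrm{LC}P(G)$, and then to argue both components are polynomially bounded.

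First I would fix any ordering $\mathbf{u}=(v_1,\dots,v_l)$ of $V(G)\setminus V(G')$, where $l=|V(G)|-|V(G')|\le |V(G)|$. By \cref{thm:multi_vertex-minor}, $G'<G$ if and only if there exists $P=P_{v_l}\circ\dots\circ P_{v_1}\in\mathcal{P}_\mathbf{u}$ with each $P_{v_i}\in\{X_{v_i},Y_{v_i},Z_{v_i}\}$ such that $G'\sim_\mathrm{LC}P(G)$. A choice of such a $P$ is specified by $l$ symbols from $\{X,Y,Z\}$ (plus, in the case of $X_{v_i}$, a choice of incident edge, also of size at most $|V(G)|$), so it can be written down in $O(|V(G)|\log|V(G)|)$ bits — polynomial. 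Each of the operations $X_v$, $Y_v$, $Z_v$ is defined (see \cref{def:XYZ} and the surrounding discussion of the partition $(V_{vu},V_v,V_u)$) by a local complementation or pivot followed by a vertex deletion, each of which is computable from the adjacency matrix in time polynomial in $|V(G)|$; hence $P(G)$ can be computed in time $\poly(|V(G)|)$ by applying the $l$ operations in sequence.

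It then remains to certify $G'\sim_\mathrm{LC}P(G)$ efficiently. Here I would invoke the result of Bouchet~\cite{Bouchet1991}, cited in the excerpt, that LC-equivalence of two $n$-vertex graphs can be \emph{decided} in time $\mathcal{O}(n^4)$; since this is already a polynomial-time decision procedure, one does not even need a separate certificate for this part — the verifier simply runs it on the pair $(G', P(G))$. Putting the pieces together: the certificate is the string encoding $P$; the verifier computes $P(G)$ in polynomial time and then runs Bouchet's $\mathcal{O}(n^4)$ algorithm to check $G'\sim_\mathrm{LC}P(G)$, accepting iff it succeeds. Correctness in both directions is exactly the content of \cref{thm:multi_vertex-minor}: if such a $P$ exists the verifier accepts, and if the verifier accepts on some $P$ then $G'<G$.

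The only mild subtlety — the main "obstacle", though it is not a deep one — is precisely the a priori length of a witnessing sequence of local complementations and vertex deletions, which \cref{def:vertex-minor} does not bound; this is sidestepped entirely by working with the \cref{thm:multi_vertex-minor} characterization, since there the relevant data is just one operation per deleted vertex, giving an automatically linear-length object. A secondary point to check is that the encoding of the $X_v^{(u)}$ operations (which edge the pivot uses) is included in the certificate, but since there are at most $|V(G)|$ choices per step this adds only a logarithmic factor per symbol. With these observations the theorem follows.
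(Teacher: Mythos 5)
Your proof is correct and follows essentially the same route as the paper: the certificate is the sequence $P\in\mathcal{P}_\mathbf{u}$ guaranteed by \cref{thm:multi_vertex-minor}, the verifier computes $P(G)$ and then runs Bouchet's polynomial-time LC-equivalence test against $G'$. Your extra remark about encoding the edge choice for the $X_v^{(u)}$ operations is a small but welcome refinement that the paper's proof glosses over.
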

\begin{proof}
    Let $G $ and $G'$ be graphs, on $n$ and $k$ vertices respectively.
    Furthermore, let $\bs{u}$ be a sequence such that each element of $V(G)\setminus V(G')$ occur exactly once in $\bs{u}$.
    If $G' <G$ then, by \cref{thm:multi_vertex-minor}, there exists a sequence of operations $P\in\mathcal{P}_{\bs{u}}$, as specified in \cref{eq:P_u}, such that $P(G)\sim_\mathrm{LC}G'$. Furthermore, the sequence of operations $P$ consists of $\mathcal{O}(n-k)$ local complementations and vertex-deletions.
    A witness to the instance $(G,G')$ will then be the sequence of operations $P$. On the other hand, if $G'\nless G$, then by \cref{thm:multi_vertex-minor}, there exist no $P\in\mathcal{P}_{\bs{u}}$ such that $P(G)\sim_\mathrm{LC}G'$.\\

    Given $(G,G')$ and a sequence of operations $P$, a verifier can therefore perform the following protocol to check if $(G, G')$ is a \emph{yes}-instance of \SVM.
    \begin{enumerate}
        \item Compute $P(G)$.
        \item Decide if $P(G)\sim_\mathrm{LC}G'$ using Bouchet's algorithm for checking if two graphs are LC-equivalent~\cite{Bouchet1991}.
        \item Output \emph{yes} if Bouchet's algorithm outputs TRUE and \emph{no} otherwise.
    \end{enumerate}
    The verifier will therefore output \emph{yes} if $P$ is such that $P(G)\sim_\mathrm{LC}G'$ and \emph{no} if $G'\nless G$, since then $P(G)\nsim_\mathrm{LC}G'$ for any $P$.
    Computing $P(G)$ can be done in time $\mathcal{O}(n^2(n-k))$, since each local complementation can be performed in time $\mathcal{O}(n^2)$~\cite{Bouchet1991}.
    Furthermore, checking whether $P(G)$ and $G'$ are LC-equivalent can be done in time $\mathcal{O}(k^4)$ using Bouchet's algorithm~\cite{Bouchet1991}.
    Thus the verifier will output \emph{yes} or \emph{no} in time $\mathcal{O}(n^2(n-k))+\mathcal{O}(k^4)$.
\end{proof}

\subsection{\VM~is \NP-Complete}\label{sec:vmnpcomplete}

Next we will argue that the problem \VM~is also \NP-Hard and hence that it is \NP-Complete.
We will do this through a sequence of three reductions.
\begin{itemize}
    \item First we will reduce \SVM~to \VM. This is done in \cref{VM_NP_comp}.
    \item Secondly we will reduce a new problem, which we call the $\mathrm{SOET}$ problem, for Semi-Ordered Eulerian Tour, to \SVM. This is done in \cref{ssec:soet_to_svm}
    \item Finally we will reduce the problem of deciding whether a $3$-regular (or cubic) graph has a \emph{Hamiltonian cycle} or not (\CubHam), which is a known $\NP$-complete problem~\cite{Garey1976}, to the \SOET~problem.
    This is the most complicated part of the reduction and is done in several steps in \cref{ssec:cubham_to_soet}.
\end{itemize}
The reductions between these problems are summarized in \cref{fig:reductions}.
Eventually we will have the following theorem, which can be considered the main theorem of this section.\\

\begin{figure}[H]
    \centering
    \includegraphics[width=0.7\textwidth]{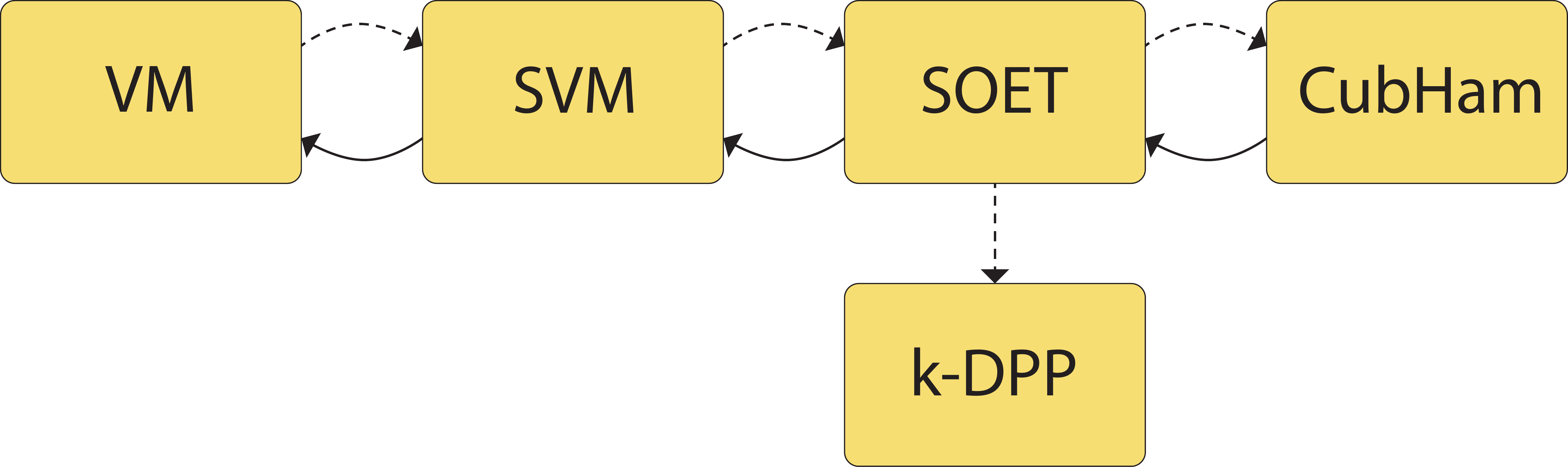}
    \caption{The different decision problems considered and how they can be reduced to each other.
    An solid arrow between problem $A$ and $B$ means that any instance of $A$ can be reduced to $B$ in polynomial time.
    A dashed line means that a subclass of instances of the $A$ can be reduced to $B$ in polynomial time.
    \VM\ (VM) and \SVM\ (SVM)  are presented in \cref{sec:vertex-minors}, \SOET\ in \cref{sec:soet}, \CubHam\ in \cref{sec:vmnpcomplete} and \DPP{k} ($k$-Disjoint Paths Problem) in \cref{sec:FPT}.
    The dashed line from \SVM\ to \SOET\ is restricted to circle graphs.
    The dashed line from \SOET\ to \CubHam\ is restricted to triangle-expanded graphs $\Lambda(R)$ where the \SOET\ is with respect to $V(R)$.
    Finally the dashed line from \SOET\ to \DPP{k} is for \SOET\ problems where the \SOET\ is with respect to a set of size $k$.}
    \label{fig:reductions}
\end{figure}

\begin{thm}\label{VM_NP_comp}
\VM~is \NP-Complete
\end{thm}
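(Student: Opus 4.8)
The plan is to combine the membership result with a chain of polynomial-time reductions establishing \NP-hardness. Since \cref{thm:vm_in_np} already shows that \VM\ is in \NP, it remains to exhibit an \NP-hard problem that reduces to \VM. We take this to be \CubHam, the problem of deciding whether a cubic ($3$-regular) graph has a Hamiltonian cycle, which is known to be \NP-Complete~\cite{Garey1976}. The reduction proceeds in three stages, as summarised in \cref{fig:reductions}: $\CubHam \to \SOET \to \SVM \to \VM$.

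The last two stages are comparatively easy. Reducing \SVM\ to \VM\ is immediate: an instance $(G,V')$ of \SVM\ is, by definition, the instance $(G,S_{V'})$ of \VM, so the identity map suffices. Reducing \SOET\ to \SVM\ (for circle graphs) rests on \cref{cor:reg_to_circle}: given a connected $4$-regular multi-graph $F$ and a vertex subset $V'$, one computes any Eulerian tour $U$ of $F$ (which exists since every vertex has even degree) in polynomial time, forms the double occurrence word $m(U)$, and builds the circle graph $G=\mathcal{A}(U)$; then $F$ admits a \SOET\ with respect to $V'$ if and only if $S_{V'}$ is a vertex-minor of $G$. If $F$ is disconnected one treats each connected component separately, placing the \SOET\ requirement on the intersection of $V'$ with each component. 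All of these operations are polynomial in $|V(F)|$.

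The main obstacle --- and the technical heart of the section --- is the reduction from \CubHam\ to \SOET. The idea is to introduce the class of \emph{triangular-expanded graphs} $\Lambda(R)$: given a cubic graph $R$, replace each vertex $v$ of $R$ by a triangle whose three corners inherit the three edges formerly incident to $v$, so that $\Lambda(R)$ is a $4$-regular multi-graph which still ``remembers'' $R$. One then designates, inside each triangle, a distinguished vertex, collecting these into a set $V(R)\subseteq V(\Lambda(R))$, and argues that $R$ has a Hamiltonian cycle if and only if $\Lambda(R)$ admits a \SOET\ with respect to $V(R)$. The forward direction turns a Hamiltonian cycle of $R$ into an Eulerian tour of $\Lambda(R)$ that meets the distinguished vertices in the cyclic order dictated by the cycle and then again in the same order on the ``way back''; the reverse direction shows that the rigid same-order-twice condition of \cref{def:SOET} forces the tour's behaviour at each triangle gadget, and that reading off the order in which it meets the distinguished vertices recovers a Hamiltonian cycle of $R$.

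Making this correspondence precise --- in particular, a careful case analysis of how an Eulerian tour can enter and leave a triangle gadget, and verifying that the \SOET\ condition genuinely excludes every ``non-Hamiltonian'' tour --- is where the real work lies. Once the three reductions are in place, composing them shows that \CubHam\ reduces to \VM\ in polynomial time, so \VM\ is \NP-hard; together with \cref{thm:vm_in_np} this yields that \VM\ is \NP-Complete, and since \SVM\ is a special case the same follows for \SVM.
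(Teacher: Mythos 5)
Your proposal is correct and follows essentially the same route as the paper: membership in \NP\ via \cref{thm:vm_in_np}, combined with the reduction chain $\CubHam \to \SOET \to \SVM \to \VM$ using the triangular-expansion gadget, which is exactly the argument given in \cref{VM_NP_comp} (relying on \cref{thm:red_SOET_SVM} and \cref{cor:SOETNP}). You also correctly identify that the technical heart is the case analysis showing a \SOET\ on $\Lambda(R)$ forces a Hamiltonian cycle on $R$, which the paper carries out in \cref{lem:HAM2SOET,lem:HAMSOET2HAM,lem:SOET2HAMSOET}.
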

\begin{proof}
Note first that \SVM~trivially reduces to \VM.
This is so since every \emph{yes}(\emph{no})-instance of \SVM~is also an \emph{yes}(\emph{no})instance of \VM.
From \cref{thm:red_SOET_SVM} we see that we can reduce the \SOET~problem to \SVM~and finally from \cref{cor:SOETNP} we see that we can reduce~\CubHam~to the \SOET~problem.
Since \CubHam~is a known \NP-Complete problem this implies that \VM~is \NP-Hard.
From \cref{thm:vm_in_np} we have that \VM ~is in \NP~and hence it is \NP-Complete.
\end{proof}

Now we will detail every step in the above reduction. We begin with proving that the \SOET\ decision problem reduces to \SVM.

\subsubsection{Reducing the \SOET~problem to \SVM}\label{ssec:soet_to_svm}
In this section we show that the \SOET\ problem reduces to \SVM. For this we will make use of the properties of circle graphs, discussed in \cref{sec:prel}.
In \cref{cor:reg_to_circle} we showed that a 4-regular multi-graph $F$ allows for a \SOET\ with respect to a subset of its vertices $V'\subseteq V(F)$ if and only if an alternance graph $\mathcal{A}(U)$ (which is a circle graph), induced by some Eulerian tour on $F$, has $S_{V'}$ as a vertex-minor.

Since circle graphs are a subset of all simple graphs we can then decide whether a $4$-regular graph $F$ allows for a \SOET~with respect to some subset $V'$ of it's vertices by constructing the circle graph induced by an Eulerian tour on $F$ and checking whether it has a star-vertex minor on the vertex set $V'$. This leads to the following theorem.

\begin{thm}\label{thm:red_SOET_SVM}
The decision problem \SOET~reduces to \SVM.
\end{thm}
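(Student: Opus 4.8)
The plan is to give a polynomial-time many-one reduction that is essentially a repackaging of \cref{cor:reg_to_circle}. Given an instance $(F,V')$ of \SOET\ — a $4$-regular multi-graph $F$ together with a subset $V'\subseteq V(F)$ — I would first check whether $F$ is connected, which takes linear time. If $F$ is disconnected it has no Eulerian tour, hence no \SOET, and the reduction outputs some fixed \emph{no}-instance of \SVM\ (any graph together with a vertex set on which it provably has no star vertex-minor). If $F$ is connected, I would compute an Eulerian tour $U$ of $F$; one exists because every vertex of a $4$-regular multi-graph has even degree, and it can be produced in polynomial time, e.g. by Hierholzer's algorithm. From $U$ I would form the induced double occurrence word $m(U)$ (\cref{def:eul_tour}) and then the alternance graph $G=\mathcal{A}(m(U))$ (\cref{def:alternance_graph}), which is a circle graph on vertex set $V(G)=V(F)$. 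The reduction then outputs the \SVM\ instance $(G,V')$; since $V'\subseteq V(F)=V(G)$, this is a well-formed instance.

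Correctness would follow directly from \cref{cor:reg_to_circle}: with $G=\mathcal{A}(U)$ the circle graph induced by the Eulerian tour $U$ on the connected $4$-regular multi-graph $F$, the corollary states that $S_{V'}$ is a vertex-minor of $G$ if and only if $F$ allows for a \SOET\ with respect to $V'$. Hence $(F,V')$ is a \emph{yes}-instance of \SOET\ precisely when $(G,V')$ is a \emph{yes}-instance of \SVM. I would also note that the particular Eulerian tour one picks does not matter, so the map is well defined: by \cref{thm:equiv_eul} any two Eulerian tours of $F$ induce LC-equivalent alternance graphs, and LC-equivalent graphs have exactly the same vertex-minors.

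Finally I would check that the whole construction is polynomial-time. Testing connectivity and producing the Eulerian tour take time polynomial in $|V(F)|$ and $|E(F)|=2|V(F)|$; and once $m(U)$ is in hand, building $G$ amounts to deciding, for each of the $\binom{|V(F)|}{2}$ pairs of vertices, whether they alternate in the length-$2|V(F)|$ word $m(U)$, so $G$ can be assembled in $\mathcal{O}(|V(F)|^3)$ time. I do not expect any genuine obstacle here: the only points needing a little care are handling the disconnected case so that the output is always a legal \SVM\ instance, and observing that the standard subroutines involved are polynomial. The real difficulty of the overall \NP-hardness argument lies one level down, in the subsequent reduction of \CubHam\ to \SOET.
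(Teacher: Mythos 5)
Your proposal is correct and follows essentially the same route as the paper's proof: construct an Eulerian tour on $F$, form the induced circle graph $G=\mathcal{A}(U)$, output $(G,V')$, and invoke \cref{cor:reg_to_circle} for correctness. The extra care you take with the disconnected case and with well-definedness via \cref{thm:equiv_eul} is sound but does not change the argument.
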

\begin{proof}
    Let $(F,V')$ be an instance of \SOET, where $F$ is a $4$-regular multi-graph and $V'$ a subset of the vertex set of $F$. Also let $G$ be a circle graph induced by any Eulerian tour $U$ on $F$. From \cref{cor:reg_to_circle} we see that $G$ has $S_{V'}$ as a vertex-minor if and only if $F$ allows for a \SOET~with respect to the vertex set $V'$. Since an Eulerian tour $U$ can be found in polynomial time~\cite{Fleury1883} and since $G$ can be efficiently constructed given $U$, this concludes the reduction.
\end{proof}

\subsubsection{Reducing \CubHam~to the \SOET~problem}\label{ssec:cubham_to_soet}

In this section we will prove that the \SOET~problem, as defined in~\cref{prob:SOET}, is $\mathbb{NP}$-Complete by reducing the problem of deciding if a $3$-regular graph is Hamiltonian (\CubHam), a well-known \NP-Complete problem~\cite{Garey1976}, to the \SOET~problem (it is in \NP~by \cref{thm:red_SOET_SVM} and \cref{thm:vm_in_np}).
For completeness we include the definition of a Hamiltonian graph.

\begin{mydef}[Hamiltonian]\label{def:hamtour}
    A graph is said to be Hamiltonian if it contains a Hamiltonian cycle.
    A Hamiltonian cycle is a cycle that visits each vertex in the graph exactly once.
\end{mydef}

We can use this to formally define the $\CubHam$ problem.

\begin{prm}[\CubHam]\label{prob:cubham}
    Let $R$ be a $3$-regular graph. Decide whether $R$ is Hamiltonian.
\end{prm}

The reduction of \CubHam~to the \SOET~problem is done by going though the following steps.
\begin{enumerate}
    \item Introduce the notion of a ($4$-regular) triangular-expansion $\Lambda(R)$ of a $3$-regular graph. This is done in \cref{def:triangular}. 
    \item Argue that given a $3$-regular graph $R$, its triangular-expansion can be constructed efficiently. This is done in \cref{lem:triang_eff}.
    \item Introduce the notions of \emph{skip} and \emph{true skip} that capture an essential behavior of \SOET s on triangular-expansions of $3$-regular graphs. This is done in \cref{sec:skips}.
    \item Prove that if a $3$-regular graph $R$ is Hamiltonian then the triangular-expansion $\Lambda(R)$ of $R$ allows for a \SOET~ with respect to the set $V(R)$. This is done in \cref{lem:HAM2SOET}.
    \item Prove that if the triangular-expansion $\Lambda(R)$ of a $3$-regular graph $R$ allows for a special kind of \SOET, called a \HAMSOET~with respect to $R$ (HAMSOETs are defined in \cref{def:HAMSOET}, but can be thought of as SOETs with no true skips), then the $3$-regular graph $R$ is Hamiltonian. This is done in \cref{lem:HAMSOET2HAM}.
    \item Prove that if the triangular-expansion $\Lambda(R)$ of a $3$-regular graph $R$ allows for a \SOET\ with respect to $V(R)$ then it also allows for a \HAMSOET\ with respect to $R$. This is done in \cref{lem:SOET2HAMSOET}.
\end{enumerate} 

Performing all these steps will lead to the following theorems.

\begin{thm}\label{thm:SOET_HAM}
    Let $R$ be a 3-regular graph and $\Lambda(R)$ be its triangular-expansion as defined in \cref{def:triangular}.
    $R$ is Hamiltonian if and only if $\Lambda(R)$ allows for a~\SOET~with respect to $V(R)$.
\end{thm}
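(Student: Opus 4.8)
The plan is to prove \cref{thm:SOET_HAM} by assembling the six ingredients listed in the roadmap, so the real content is to organize the two directions of the ``if and only if'' around the intermediate notion of a \HAMSOET. First I would establish the constructive side: assuming $R$ is Hamiltonian, I would apply \cref{lem:HAM2SOET} directly to conclude that the triangular-expansion $\Lambda(R)$ admits a \SOET\ with respect to $V(R)$. This direction should be the easier one — given a Hamiltonian cycle in $R$, one uses it to dictate the order in which the \SOET\ visits the vertices of $V(R)$, threading the tour through the gadget triangles so that each vertex of $R$ is met twice in the same cyclic order, and then one checks that the remaining (non-$V(R)$) vertices of $\Lambda(R)$ can be swept up by the leftover portions of the Eulerian tour.

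For the converse, the strategy is to \emph{not} try to extract a Hamiltonian cycle directly from an arbitrary \SOET, but instead to first normalize: given that $\Lambda(R)$ admits \emph{some} \SOET\ with respect to $V(R)$, invoke \cref{lem:SOET2HAMSOET} to upgrade it to a \HAMSOET\ with respect to $R$ (intuitively, a \SOET\ with no ``true skips''), and then invoke \cref{lem:HAMSOET2HAM} to read off a Hamiltonian cycle of $R$ from that \HAMSOET. Chaining these two lemmas gives: $\Lambda(R)$ has a \SOET\ w.r.t.\ $V(R)$ $\Rightarrow$ $\Lambda(R)$ has a \HAMSOET\ w.r.t.\ $R$ $\Rightarrow$ $R$ is Hamiltonian. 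Combined with the forward direction above, this yields the biconditional. The proof at this level is essentially a one-paragraph citation chase through \cref{lem:HAM2SOET}, \cref{lem:SOET2HAMSOET} and \cref{lem:HAMSOET2HAM}, with a remark that \cref{def:triangular} and \cref{lem:triang_eff} guarantee $\Lambda(R)$ is well-defined and $4$-regular so that the notion of \SOET\ even applies.

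I expect the main obstacle to live inside \cref{lem:SOET2HAMSOET} rather than in the top-level theorem: the danger is that a \SOET\ could use a ``true skip'' at some triangle gadget — passing through a gadget in a way that does not correspond to entering and leaving along the Hamiltonian-cycle edges — and one must argue that any such \SOET\ can be rerouted locally (using the flexibility of Eulerian tours on a $4$-regular multi-graph, and the structural constraint that each $V(R)$-vertex is visited exactly twice in a repeated order) into one with no true skips, without destroying the semi-ordered property. Controlling this rerouting, and checking that eliminating one true skip does not create another, is where the delicate casework on the local structure of $\Lambda(R)$ around each triangle will be needed; at the level of \cref{thm:SOET_HAM} itself, however, I would simply cite that lemma and present the theorem as the composition of the six steps, noting explicitly which lemma supplies each implication.
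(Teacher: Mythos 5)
Your proposal is correct and follows exactly the paper's own proof: the forward direction is \cref{lem:HAM2SOET}, and the converse chains \cref{lem:SOET2HAMSOET} (normalize an arbitrary \SOET\ to a \HAMSOET) with \cref{lem:HAMSOET2HAM} (extract the Hamiltonian cycle). In fact your citation chain is slightly cleaner than the paper's, whose proof text mistakenly cites \cref{lem:HAMSOET2HAM} for the \SOET$\to$\HAMSOET\ step where \cref{lem:SOET2HAMSOET} is the lemma actually doing that work.
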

\begin{proof}
    Let $R$ be a $3$-regular graph and let $\Lambda(R)$ be its triangular-expansion as defined in \cref{def:triangular}. If $R$ is Hamiltonian then \cref{lem:HAM2SOET} guarantees that $\Lambda(R)$ allows for a SOET with respect to the vertices $V(R)$. In the other direction, if $\Lambda(R)$ allows for a SOET with respect to the vertices $V(R)$ then we can see from \cref{lem:HAMSOET2HAM} that it also allows for a HAMSOET. The existence of a HAMSOET on $\Lambda(R)$ then implies, via \cref{lem:HAMSOET2HAM} that $R$ has a Hamiltonian cycle and hence that it is Hamiltonian. This proves the theorem.
\end{proof}

\begin{cor}\label{cor:SOETNP}
   The SOET problem is \NP-Complete. 
\end{cor}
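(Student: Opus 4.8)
The plan is to establish the two halves of \NP-completeness separately: membership in \NP\ and \NP-hardness, assembling results already proved in this section.

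First I would argue $\SOET \in \NP$. The most direct route is to exhibit a short witness: for a \emph{yes}-instance $(F,V')$ a certifying \SOET\ $U$ is itself a Eulerian tour on $F$, hence an alternating sequence of vertices and edges of length $2\abs{E(F)} = 4\abs{V(F)}$, which is polynomial in the input size. A verifier then checks in polynomial time that (i) $U$ is a closed walk traversing every edge of $F$ exactly once, and (ii) the induced double occurrence word restricted to $V'$ has the form $\bs{s}\bs{s}$, i.e. $m(U)[V'] = \bs{s}\bs{s}$ for a word $\bs{s}$ in which each element of $V'$ occurs once (scan the two halves and compare them letter by letter). Alternatively, one can simply invoke \cref{thm:red_SOET_SVM} together with \cref{thm:vm_in_np}: \SOET\ reduces in polynomial time to \SVM, which is the special case of \VM\ in which $G'$ is a star graph, and $\VM \in \NP$; since \NP\ is closed under polynomial-time many-one reductions, $\SOET \in \NP$.

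Next I would prove \NP-hardness by reducing \CubHam\ --- known to be \NP-complete~\cite{Garey1976} --- to \SOET. Given a $3$-regular graph $R$, the reduction outputs the pair $(\Lambda(R), V(R))$, where $\Lambda(R)$ is the triangular-expansion of $R$ from \cref{def:triangular}; by \cref{lem:triang_eff} this construction runs in time $\poly(\abs{V(R)})$, and $V(R)$ is read off directly. By \cref{thm:SOET_HAM}, $\Lambda(R)$ admits a \SOET\ with respect to $V(R)$ if and only if $R$ is Hamiltonian, so $(\Lambda(R),V(R))$ is a \emph{yes}-instance of \SOET\ exactly when $R$ is a \emph{yes}-instance of \CubHam. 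This is a valid polynomial-time many-one reduction, hence \SOET\ is \NP-hard, and combining with the previous paragraph, \NP-complete.

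I do not expect a genuine obstacle at this stage: all the difficulty has been pushed into \cref{thm:SOET_HAM} (and the lemmas feeding it, \cref{lem:HAM2SOET}, \cref{lem:HAMSOET2HAM} and \cref{lem:SOET2HAMSOET}) and into \cref{lem:triang_eff}. The only things left to verify carefully are that the \SOET\ certificate and the triangular-expansion have size polynomial in the input, both of which are immediate from the fact that $\Lambda(R)$ is a $4$-regular multi-graph whose vertex set is a bounded-size blow-up of $V(R)$.
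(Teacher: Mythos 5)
Your proposal is correct and follows essentially the same route as the paper: \NP-hardness via the reduction of \CubHam\ to \SOET\ through the triangular-expansion (\cref{lem:triang_eff} and \cref{thm:SOET_HAM}), and membership in \NP\ via the reduction of \SOET\ to \SVM\ (\cref{thm:red_SOET_SVM}) combined with \cref{thm:vm_in_np}. Your additional direct-witness argument for \NP-membership (the \SOET\ itself as a certificate of length $4\abs{V(F)}$) is a valid and slightly more self-contained alternative, but does not change the substance of the proof.
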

\begin{proof}
    The Hamiltonian cycle problem (\CubHam) is \NP-Complete on 3-regular graphs~\cite{Garey1976}. We will reduce this problem to the \SOET~problem. Let $R$ be an instance of \CubHam, i.e. a $3$-regular graph. From this $3$-regular graph we can construct its triangular-expansion $\Lambda(R)$.  In \cref{lem:triang_eff} it is argued that this construction can be performed in $O(|V(R)|)$ time. We can then use \cref{thm:SOET_HAM} to see that $R$ is Hamiltonian if and only if $\Lambda(R)$ allows for a SOET with respect to the vertex set $V(R)$. Hence there exists an efficient reduction of \CubHam~to the \SOET~problem. This means that the \SOET~problem is \NP-Hard.
    Furthermore, the \SOET\ problem is in \NP\ since it can be efficiently reduced to \SVM\ by \cref{thm:red_SOET_SVM}, which is in \NP\ by \cref{thm:vm_in_np}. Hence the \SOET~problem is \NP-Complete.
\end{proof}

\begin{cor}
    The \SOET~problem is \NP-Complete on graphs which are triangular-expansions of planar 3-regular triply-connected graphs, i.e. graphs in the set
    \begin{equation}
        \{\Lambda(R):R\text{ is planar, 3-regular and triply-connected}\}.
    \end{equation}
\end{cor}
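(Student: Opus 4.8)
The plan is to reuse essentially verbatim the reduction that proves \cref{cor:SOETNP}, noting that it never exploits any property of the input cubic graph beyond $3$-regularity, and simply to feed it a more restricted source problem. Concretely, the classical result of Garey, Johnson and Tarjan~\cite{Garey1976} establishes that \CubHam\ remains \NP-Complete when the input $3$-regular graph is in addition required to be planar and triply-connected (i.e.\ $3$-vertex-connected). I would take this strengthened form of \CubHam\ as the starting point of the reduction.

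Given such an instance $R$, the reduction outputs the pair $(\Lambda(R), V(R))$, exactly as in the proof of \cref{cor:SOETNP}. By \cref{lem:triang_eff} the triangular-expansion $\Lambda(R)$ is computable in $O(\abs{V(R)})$ time, so this is a polynomial-time many-one reduction; and since $R$ is planar, $3$-regular and triply-connected, the produced graph $\Lambda(R)$ indeed belongs to the advertised class $\{\Lambda(R):R\text{ planar, }3\text{-regular, triply-connected}\}$. Correctness is immediate from \cref{thm:SOET_HAM}: $R$ is Hamiltonian if and only if $\Lambda(R)$ admits a \SOET\ with respect to $V(R)$. Hence the \SOET\ problem restricted to this graph class is \NP-Hard. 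Membership in \NP\ requires nothing new, since the restricted problem is a special case of \SOET, which reduces to \SVM\ by \cref{thm:red_SOET_SVM}, and \SVM\ lies in \NP\ by \cref{thm:vm_in_np}; combining the two directions yields \NP-Completeness on the stated class.

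The only genuinely new ingredient relative to \cref{cor:SOETNP} is the invocation of the planar--cubic--$3$-connected version of Hamiltonicity; everything else is inherited, because the map $R\mapsto\Lambda(R)$ carries the planarity and connectivity of $R$ unchanged into the instance it produces and does not otherwise use them. I therefore expect no real combinatorial obstacle here: the single point requiring care is bibliographic, namely checking that the cited hardness result for \CubHam\ is stated with the triply-connected hypothesis (and not merely planarity), so that the class appearing in the corollary is legitimately hit by the reduction while \cref{thm:SOET_HAM} and \cref{lem:triang_eff} continue to supply all the combinatorial content.
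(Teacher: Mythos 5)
Your proposal is exactly the paper's argument: the paper proves this corollary by repeating the proof of \cref{cor:SOETNP} verbatim, substituting the known \NP-Completeness of \CubHam\ on planar, $3$-regular, triply-connected graphs as the source problem, with correctness and \NP-membership inherited from \cref{thm:SOET_HAM}, \cref{lem:triang_eff}, \cref{thm:red_SOET_SVM} and \cref{thm:vm_in_np}. No gaps; the one caveat you flag (verifying the cited hardness result includes the triply-connected hypothesis) is indeed the only point of care.
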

\begin{proof}
    The proof is the same as the proof of \cref{cor:SOETNP} but using the fact that \CubHam~is \NP-Complete on planar triply-connected graphs.
\end{proof}

\subsubsection{triangular-expansions}\label{sec:triangular_expansions}
It now remains to prove \cref{lem:HAM2SOET,lem:HAMSOET2HAM,lem:SOET2HAMSOET}. These lemmas will relate Hamiltonian cycles on 3-regular graphs and \SOET s on 4-regular multi-graphs by using a mapping from $3$-regular graphs to $4$-regular multi-graphs. We call this mapping `triangular-expansion'. We have the following definition.\\

\begin{mydef}[Triangular-expansion]\label{def:triangular}
    Let $R$ be a $3$-regular graph.
    A triangular-expansion $\Lambda(R)$ of a $3$-regular graph $R$ is constructed from $R$ by performing the following two steps:
    \begin{enumerate}
        \item Replace each vertex $v$ in $R$ with the subgraph below
            \begin{equation}\label{eq:triangle}
               \raisebox{-0.125\textwidth}{\includegraphics[width=0.3\textwidth]{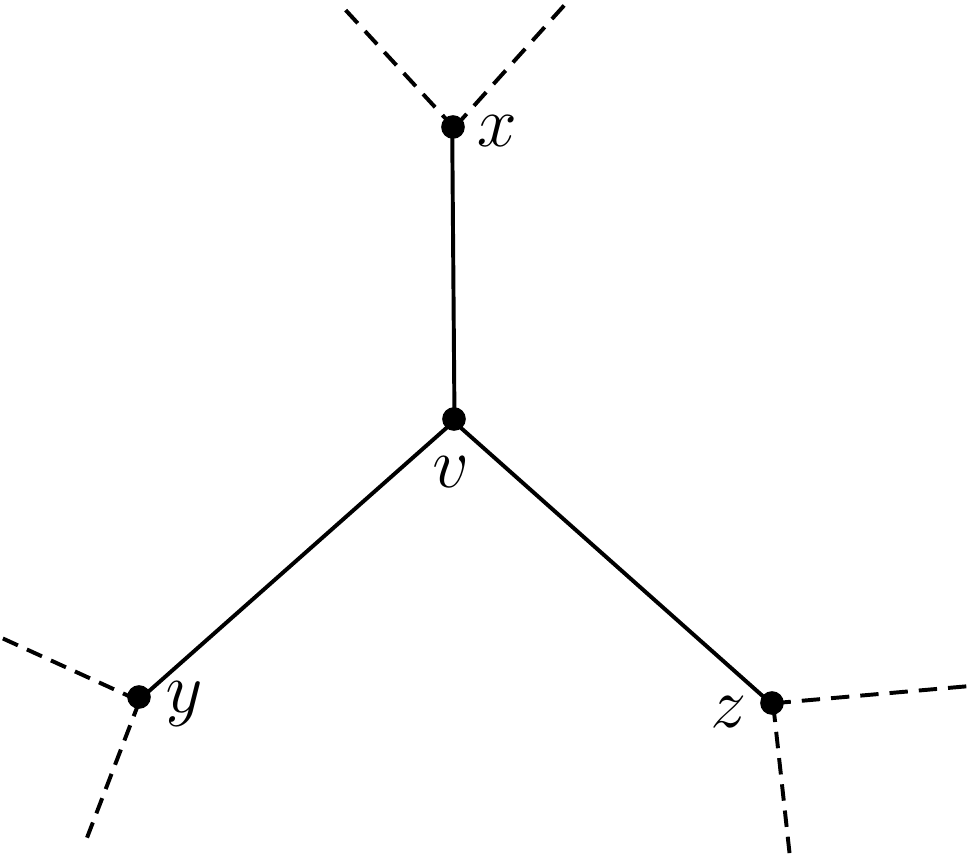}}\quad\rightarrow\quad\raisebox{-0.125\textwidth}{\includegraphics[width=0.3\textwidth]{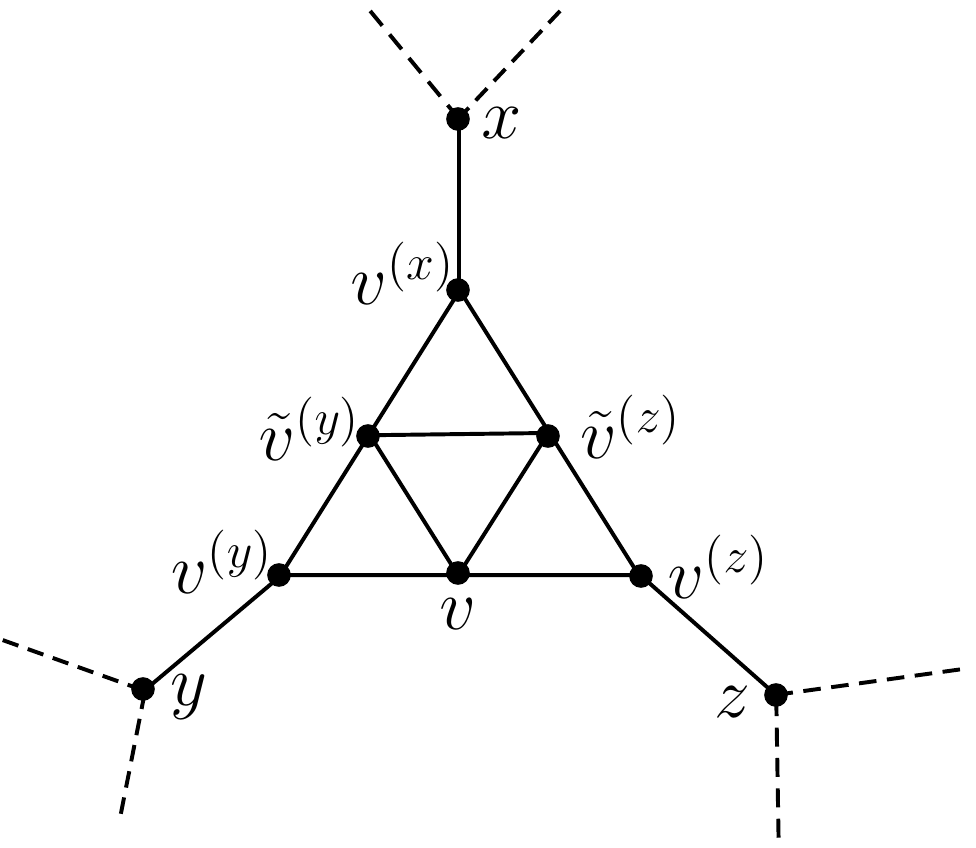}},
            \end{equation}
            where $x$, $y$ and $z$ are the neighbors of $v$. We will denote this \emph{triangle subgraph} associated to the vertex $v$ with $T_v$, i.e. $T_v=G[\{v,v^{(x)},v^{(y)},v^{(z)},\tilde{v}^{(y)},\tilde{v}^{(z)}\}]$.
        \item Double every edge that is incident on two subgraphs $T_v,T_{v'}$.
    \end{enumerate}
\end{mydef}
The graph $\Lambda(R)$ will be called a \emph{triangular-expansion} of $R$. A multi-graph $F$ that is the triangular-expansion of some $3$-regular graph $R$ will also be referred to as a triangular-expanded graph.
Note that the triangular-expansion is not uniquely defined, since for each vertex $v\in R$ there is a choice how to orient the triangle with respect to the neighbors of $v$.
Furthermore, the number of vertices in $\Lambda(R)$ is $6\cdot\abs{V(R)}$ and the number of edges is $2\cdot\abs{E(R)}+9\cdot\abs{V(R)}$. In \cref{fig:triangular_expansion_example} we show an example of a $3$-regular graph and its triangular-expansion.

For a given triangle subgraph $T_v$ in a triangular-expanded graph, we will refer to the vertices adjacent to other triangle subgraphs $T_{x},T_{y},T_{z}$ as 'outer vertices' and label them according to the triangle subgraph they are adjacent to. Concretely we label the vertex in $T_{v}$ that is adjacent to $T_{w}$ as $v^{(w)}$, the index signifies which triangle subgraph it connects to.

\begin{figure}[H]
\begin{equation*}
    \raisebox{-0.04\textwidth}{\includegraphics[width=0.11\textwidth]{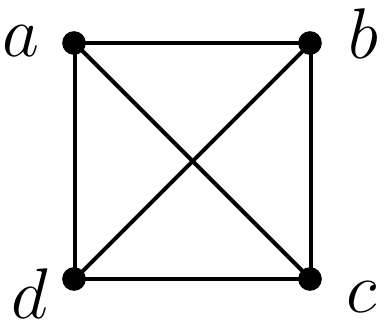}}\qquad\xrightarrow{\Lambda}\qquad\raisebox{-0.17\textwidth}{\includegraphics[width=0.42\textwidth]{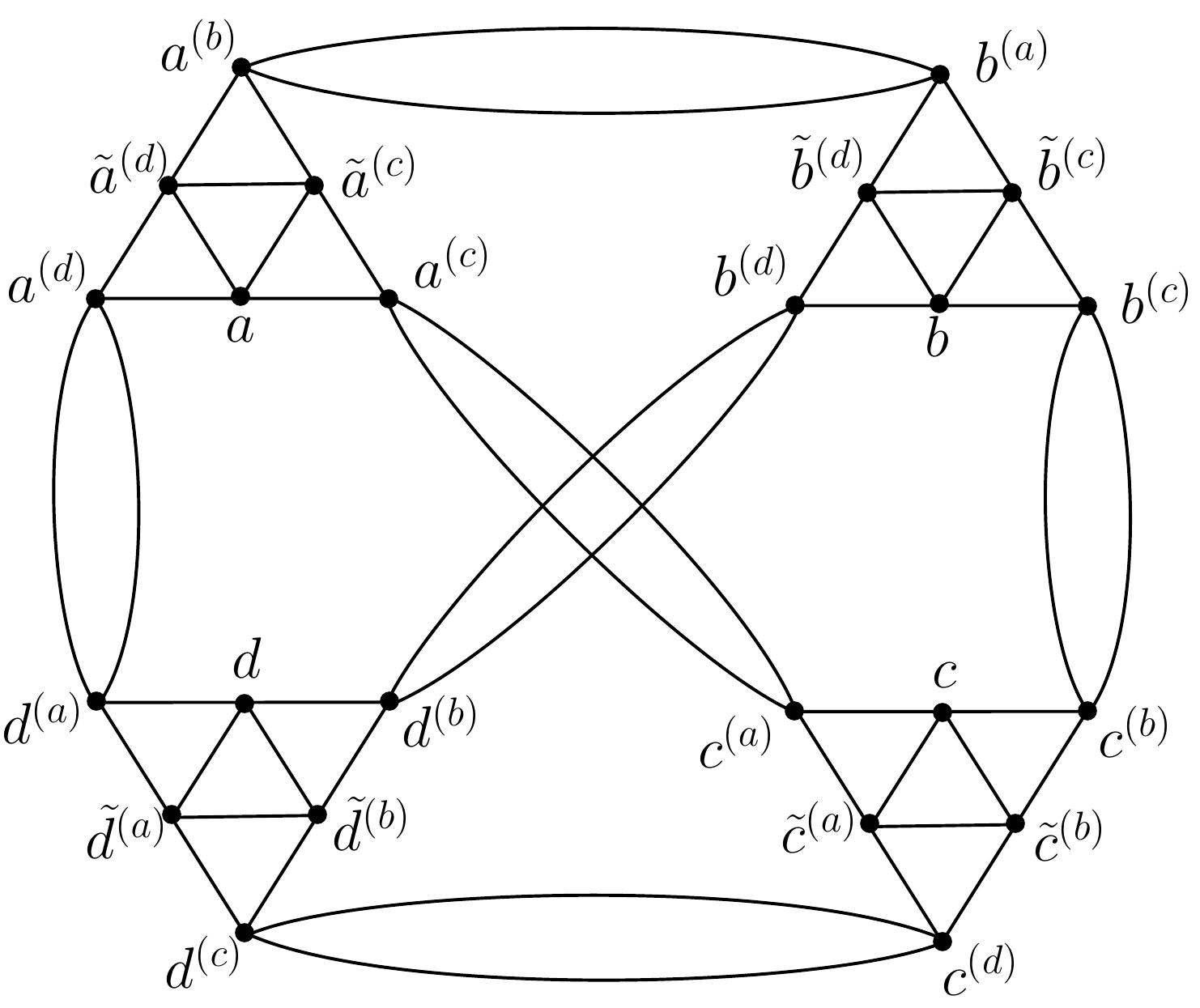}},
    \end{equation*}
    \caption{Figure showing the complete graph on vertices $V=\{a,b,c,d\}$ and its associated triangular-expansion $\Lambda(K_V)$.}\label{fig:triangular_expansion_example}
\end{figure}

In the following lemma we argue that this construction can be made efficiently in the size of $R$.

\begin{lem}\label{lem:triang_eff}
Let $R$ be a $3$-regular graph. We can construct its triangular-expansion in $O(|V(R)|^2)$ time. 
\end{lem}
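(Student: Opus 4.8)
The plan is to give an explicit construction of $\Lambda(R)$ from $R$ and account for its running time. First I would fix, for concreteness, an arbitrary ordering on $V(R)$ and on the three edges incident to each vertex; since $R$ is $3$-regular this ordering is read off from the adjacency list in $O(|V(R)|)$ time, using that $|E(R)| = \tfrac{3}{2}|V(R)|$. This fixes the orientation choice left open in \cref{def:triangular}, so the construction becomes deterministic.

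Next I would iterate over the vertices $v\in V(R)$. For each $v$ with neighbors $x,y,z$ (in the fixed order), I create the six new vertices $v, v^{(x)}, v^{(y)}, v^{(z)}, \tilde v^{(y)}, \tilde v^{(z)}$ and add the nine internal edges of the triangle subgraph $T_v$ as prescribed by \eqref{eq:triangle}. This is a constant amount of work per vertex of $R$, hence $O(|V(R)|)$ in total, and produces $6|V(R)|$ vertices and $9|V(R)|$ internal edges. Then, in a second pass over the edges $(v,w)\in E(R)$, I add the doubled connecting edge between the outer vertex $v^{(w)}\in T_v$ and the outer vertex $w^{(v)}\in T_w$; locating these two outer vertices is an $O(1)$ lookup if the new vertices were stored in a dictionary keyed by the pair (original vertex, neighbor label). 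Each edge of $R$ contributes $2$ edges, for $2|E(R)| = 3|V(R)|$ more edges, matching the count $2|E(R)|+9|V(R)|$ stated after \cref{def:triangular}.

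The only subtlety — and hence the "main obstacle," though it is mild — is bookkeeping: one must make sure that the outer vertex of $T_v$ pointing toward $w$ and the outer vertex of $T_w$ pointing toward $v$ are correctly matched, and that the $\tilde v$-labeled vertices are wired consistently with the chosen triangle orientation. With the dictionary keyed by $(v, \text{neighbor})$ this is handled uniformly. Summing the two passes gives total time $O(|V(R)| + |E(R)|) = O(|V(R)|)$; allowing for the overhead of an adjacency-matrix representation of the output multi-graph, which has $\Theta(|V(R)|)$ vertices and therefore $\Theta(|V(R)|^2)$ matrix entries to initialize, the construction runs in $O(|V(R)|^2)$ time as claimed.
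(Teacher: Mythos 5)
Your proposal is correct and follows essentially the same route as the paper: an explicit, vertex-by-vertex construction of the six new vertices and nine internal edges of each $T_v$, followed by adding the doubled connecting edges, with a running-time accounting that lands at $O(|V(R)|^2)$. The only (harmless) difference is where the quadratic factor arises — you get a linear-time combinatorial construction and charge $O(|V(R)|^2)$ to initializing the output adjacency matrix, whereas the paper charges $O(|V(R)|)$ per vertex to looking up neighbors in $E(R)$; both are valid and yield the stated bound.
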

\begin{proof}
Let $R$ be a $3$-regular graph. Without loss of generality assume some labeling on the vertices of $R$, i.e. $V(R) = \{v_1,\ldots,v_k\}$ where $k= |V(R)|$. We begin by constructing the vertex set $V(\Lambda(R))$ off the triangular-expansion $\Lambda(R)$ of $R$.\\

 For each $i\in [k]$ construct the set $V_i = \{v_i, \tilde{v}_i^{(v_j)},\tilde{v}_i^{(v_{j'})},v_i^{(v_j)},{v_i}^{(v_{j'})},{v}^{(v_{\hat{j}})}\}$ where $v_i\in V(R)$ and $v_j,v_{j'},v_{\hat{j}}$ are the three unique neighbors of $v_i$ in $R$. Constructing this set takes $O(|V(R)|)$ as we must search the set of edges $E(R)$ of $R$ to find the neighbors of $v_i$ and we have that $|E(R)|= O(|V(R)|)$ since $R$ is $3$-regular. Thus constructing the set $V(\Lambda(R)) = \cup_{i\in [k]}V_i$ takes $O(|V(R)|^2)$ time. \\

 Now we construct the edge multi-set $E(\Lambda(R))$ of the triangular-expansion of $R$. For each $i \in [k]$ we define the multi-set 
 \begin{align}
 E_i = \{(v_i, \tilde{v}_i^{(v_j)}),&(v_i, \tilde{v}_i^{(v_{j'})}),(v_i, {v}_i^{(v_j)}),(v_i, v_i^{(v_{j'})}),(\tilde{v}_i^{(v_{j})}, \tilde{v}_i^{(v_{j'})}), ({v}_i^{(v_{j})}, \tilde{v}_i^{(v_{j})}),({v}_i^{(v_{j'})}, \tilde{v}_i^{(v_{j'})}),\\& ({v}_i^{(v_{\hat{j}})}, \tilde{v}_i^{(v_{j})}),({v}_i^{(v_{\hat{j}})}, \tilde{v}_i^{(v_{j'})}), ({v}_i^{(v_{j})},{v}_j^{(v_{i})}),({v}_i^{(v_{j'})},{v}_{j'}^{(v_{i})}),({v}_{i}^{(v_{\hat{j}})}, 
 {v}_{\hat{j}}^{(v_{i})}) \}.
 \end{align}
This multi-set can be constructed in constant time. Hence the multi-set $E(\Lambda(R)) =\cup_{i\in [k]}E_i $  can be constructed in $O(|V(R)|)$ time. It is easy to check that the multi-graph defined by the vertex set $\Lambda(R)$ and edge multi-set $E(\Lambda(R))$ is indeed the triangular-expansion of $R$. This completes the lemma.

\end{proof}

\subsubsection{Skips and true skips}\label{sec:skips}

A key insight in the behavior of the \SOET~problem on triangular-expanded graphs is the notion of \emph{skips}.
The word skip stems from the fact that since any \SOET~$U$ on the triangular-expansion $\Lambda(R)$ of a $3$-regular graph is a Eulerian tour, it must traverse any triangle subgraph $T_v$ of $\Lambda(R)$ exactly three times.
However in order for $U$ to be a valid \SOET~with respect to $V(R)$ it must traverse the vertex $v$ exactly two of those three times.
This means it must \emph{skip} the vertex $v$ exactly once while traversing $T_v$.

We state a more formal definition of a (true) skip in terms of maximal sub-words (see \cref{def:maximal}).

\begin{mydef}[Skip]\label{def:skip}
    Let $R$ be a $3$-regular graph and let $\Lambda(R)$ be its triangular-expansion.
    Let $U$ be a SOET on $\Lambda(R)$ with respect to $V(R)$.
    Let $\bs{X}$ be a maximal sub-word of $m(U)$ (\cref{def:maximal}) associated to vertices $u,v \in V(R)$.
    We say the sub-trail described by $\bs{X}$ makes a \emph{skip} at a vertex $w\in V(R)\setminus\{u,v\}$ if $x_1^{(w)}w^{(x_1)}$ and $x_2^{(w)}w^{(x_2)}$ are sub-words of $\bs{X}$ (up to reflection), where $x_1, x_2 \in V(R)$.
    Furthermore, if $x_1\neq x_2$ then we say that the trail described by $\bs{X}$ makes a true skip at $w$ or sometimes that $T_w$ contains a \emph{true skip}.
\end{mydef}

Note that since $\bs{X}$ is a maximal sub-word associated to $u,v$ and $w\notin\{u,v\}$, $w$ cannot be a letter of $\bs{X}$.
As stated above, there is always exactly one maximal sub-word describing a sub-trail of a \SOET~that makes a skip at a certain triangle subgraph, as formalized in the following lemma. One can think of this lemma as giving necessary conditions for the existence of a \SOET~with respect to $V(R)$ on the triangular-expansion of a $3$-regular graph $R$\\

\begin{lem}\label{lem:single_skip}
    Let $R$ be a $3$-regular graph and let $\Lambda(R)$ be its triangular-expansion.
    Let $U$ be a Eulerian tour on $\Lambda(R)$.
    Let $w\in V(R)$ and let $T_w$ be its triangle subgraph in $\Lambda(R)$.
    If $U$ is a \SOET~on $\Lambda(R)$ with respect to $V(R)$ then here exist exactly one maximal sub-trail of $U$ that makes a skip at $w$.
\end{lem}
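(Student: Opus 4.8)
The plan is to analyze a single triangle subgraph $T_w$ and count how a Eulerian tour $U$ must traverse it, then use the SOET condition to pin down the skip structure. First I would recall the combinatorial structure of $T_w$ from \cref{def:triangular}: it has six vertices, namely $w$ together with the outer vertices $w^{(x)},w^{(y)},w^{(z)}$ and the auxiliary vertices $\tilde{w}^{(y)},\tilde{w}^{(z)}$ (with $x,y,z$ the neighbors of $w$ in $R$), and its internal edge set is exactly the $E_i$-type set written in the proof of \cref{lem:triang_eff}. In particular, the vertex $w$ has degree four \emph{inside} $T_w$ (all its incident edges are internal), so any Eulerian tour passes through $w$ exactly twice; the three outer vertices each have degree $2$ inside $T_w$ and degree $2$ toward the adjacent triangles. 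Since every vertex of $\Lambda(R)$ has degree four, a Eulerian tour visits each vertex exactly twice (except the base point), and I would fix a representative of the equivalence class $\bs{t}_U$ so that the base point lies outside $T_w$; then every vertex of $T_w$, including $w$, is visited exactly twice.

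Next I would count the number of \emph{maximal sub-trails} of $U$ that intersect $T_w$, i.e.\ the number of times the tour enters $T_w$ through an outer vertex, traverses some internal edges, and leaves through an outer vertex. Because $T_w$ is attached to the rest of the graph only through the three outer vertices, and each outer vertex carries exactly two external edges (the doubled edge to the neighboring triangle), the tour crosses the ``boundary'' of $T_w$ exactly $6$ times, hence decomposes into exactly $3$ sub-trails inside $T_w$. Each such sub-trail either passes through $w$ or does not; since $w$ is visited exactly twice, exactly two of the three sub-trails pass through $w$ and exactly one does not. That one sub-trail is precisely a sub-trail that enters and leaves $T_w$ without touching $w$, i.e.\ it realizes a skip at $w$ in the sense of \cref{def:skip}. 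Conversely any sub-trail of $U$ making a skip at $w$ must be one of these three sub-trails inside $T_w$ that avoids $w$, and we just argued there is exactly one. I would also need to check that this sub-trail is contained in a single maximal sub-word of $m(U)$ (in the sense of \cref{def:maximal}): this holds because a skip at $w$ only involves outer/auxiliary vertices of $T_w$, none of which lie in $V(R)$, so the sub-trail sits strictly between two consecutive occurrences of vertices of $V(R)$ and hence inside one maximal sub-word $\bs{X}_i$ or $\bs{Y}_i$.

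Finally I would assemble these observations: the unique sub-trail of $U$ inside $T_w$ that avoids $w$ gives a skip, it lies in exactly one maximal sub-word, and no other maximal sub-word can contain a skip at $w$ since any other passage of the tour near $w$ goes through $w$ itself (and a skip, by definition, requires sub-words $x_1^{(w)}w^{(x_1)}$ and $x_2^{(w)}w^{(x_2)}$ forcing the sub-trail to enter and exit via outer vertices of $T_w$ without hitting $w$). Hence there is exactly one maximal sub-trail of $U$ that makes a skip at $w$, which is the claim.

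The main obstacle I anticipate is the bookkeeping in the second step: carefully verifying, purely from the internal edge structure of $T_w$ and the $2$-regular attachment at each outer vertex, that the tour necessarily splits into exactly three internal sub-trails and that ``visited twice'' for $w$ forces the $2$–$1$ split (rather than, say, a single sub-trail winding through $T_w$ in a more complicated way). This is essentially a small Eulerian-tour parity argument on the six-vertex gadget, but one has to be precise about degrees (distinguishing internal degree from external degree at the outer vertices) and about the fact that doubled edges between triangles mean each outer vertex is a genuine degree-$4$ vertex with a clean $2+2$ split between ``inside'' and ``outside'' incidences.
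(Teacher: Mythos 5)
Your overall strategy is the same as the paper's: count the three edge-disjoint sub-trails of $U$ through $T_w$ and locate the unique one avoiding $w$. However, there is a genuine gap at the decisive step. From ``$w$ is visited exactly twice'' it does \emph{not} follow that exactly two of the three sub-trails pass through $w$: the alternative is that a \emph{single} sub-trail passes through $w$ twice, in which case \emph{two} of the three sub-trails would avoid $w$ and the uniqueness claim would fail. You flag this as the main obstacle but propose to resolve it by ``a small Eulerian-tour parity argument on the six-vertex gadget,'' and that diagnosis is wrong: the gadget alone admits the bad configuration. Concretely, a Eulerian tour can contain the internal sub-trail $w^{(y)}\, w\, \tilde{w}^{(y)}\, \tilde{w}^{(z)}\, w\, w^{(z)}$ (visiting $w$ twice), with the remaining four internal edges of $T_w$ covered by the sub-trails $w^{(y)}\,\tilde{w}^{(y)}\,w^{(x)}$ and $w^{(x)}\,\tilde{w}^{(z)}\,w^{(z)}$; all degrees and visit counts check out, so no local parity or degree argument can exclude it.

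What actually excludes it is the \SOET\ hypothesis, which your argument never invokes at this point: if one sub-trail contains $w$ twice, then $U$ traverses $w$ twice with no other vertex of $V(R)$ in between, so $m(U)[V(R)]$ contains $ww$ as a sub-word, which is impossible for a \SOET\ with respect to $V(R)$ since $m(U)[V(R)]$ must equal $\bs{s}\bs{s}$ with each vertex occurring exactly once in $\bs{s}$ (and $\abs{V(R)}\geq 2$ for a $3$-regular graph). This is precisely the case distinction and contradiction the paper's proof runs through. The rest of your argument (exactly three sub-trails from the six boundary edges, the skip-avoiding sub-trail sitting inside a single maximal sub-word, and no other maximal sub-word meeting $T_w$) is fine, but without the SOET-based exclusion of the double-pass case the proof does not go through.
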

\begin{proof}
    We will prove this by showing that there are exactly three maximal sub-trails of $U$ that traverse vertices in $T_w$ and that exactly one of these makes a skip at $w$.
    Note first that the Eulerian tour $U$ will enter and exit the triangle subgraph $T_w$ exactly three times, since there are six edges incident to $T_w$.
    Hence there exists exactly three distinct edge-disjoint sub-trails, $t_1$, $t_2$ and $t_3$ of $U$ that exit and enter $T_w$, i.e. the last vertices they traverse in $T_w$ will be $x_i^{(w)}$, where $x_i$ for $i\in[3]$ are the neighbors of $w$ in $R$.
    Note that $t_1$, $t_2$ and $t_3$ each contain at least one vertex in $T_w$ and they jointly traverse all edges in $T_w$ (Since $U$ is a Eulerian tour).\\

    Now consider the vertex $w$. The Eulerian tour $U$ traverses this vertex exactly twice. There are now two options for the trails $t_1,t_2,t_3$. Either (1) one of the trails contains the vertex $w$ exactly twice or (2) there are exactly two trails that contain the vertex $w$ exactly once.\\

    Now assume $U$ is a \SOET\ with respect to $V(R)$. If option (1) is true the tour $U$ traverses the vertex $w$ twice in succession before traversing any other vertex in the set $V'$. This is in contradiction with the assumption that $U$ is a SOET. Hence if $U$ is a SOET we must have that option (2) is true, that is, exactly two of the sub-trails $t_1$, $t_2$ and $t_3$ must contain $w$.\\

    Lets assume without loss of generality that $w\in t_2$ and $w\in t_3$.
    Hence we have that $w\not\in t_1$ and thus $t_1$ induces a sub-word $m(t_1)$ not containing $w$.
    The word $m(t_1)$ can be extended to a maximal sub-word $\bs{X}$ not associated to $w$ but describing a sub-trail traversing vertices in $T_w$. 
    Therefore by \cref{def:skip} $\bs{X}$ describes a sub-trail making a skip at $w$.
    Furthermore, $m(t_2)$ has an overlap with two maximal sub-words associated with $y_1$, $w$ and $w$, $y_2$, respectively for some $y_1,y_2\in V(R)$.
    Similarly for $m(t_3)$ and the vertices $z_1$, $w$ and $w$, $z_2$.
    Thus, the maximal sub-words that have an overlap with $m(t_2)$ or $m(t_3)$ do not make a skip at $w$.\\

    Finally, there is no other maximal sub-word describing a sub-trail that traverses a vertex in $T_w$.
    The reason for this is that $t_1$, $t_2$ and $t_3$ jointly traverse all edges in $T_w$, so any maximal sub-trail traversing a vertex in $T_w$ must have an overlap with $t_1$, $t_2$ or $t_3$.
    The lemma then follows since we found exactly one maximal sub-word describing a sub-trail of $U$ making a skip at $w$, i.e. the unique one that has an overlap with $t_1$.
\end{proof}

\subsubsection{Equivalence between \SOET's~and Hamiltonian cycles}\label{sssec:equiv}

Now that we have defined the triangular-expansion $\Lambda(R)$ of a $3$-regular graph $R$ and discussed skips we can finally make the central argument of the reduction given in \cref{cor:SOETNP}. We begin by proving that if a $3$-regular graph is Hamiltonian then its corresponding triangular-expansion $\Lambda(R)$ allows for a \SOET~ w.r.t. the vertex set $V(R)$. We have the following lemma.

\begin{lem}\label{lem:HAM2SOET}
    Let $R$ be a 3-regular graph and $\Lambda(R)$ be a triangular-expansion as defined in \cref{def:triangular}.
    If $R$ is Hamiltonian, then $\Lambda(R)$ allows for a SOET with respect to $V(R)$.
\end{lem}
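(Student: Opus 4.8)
The plan is to build an explicit semi-ordered Eulerian tour $U$ on $\Lambda(R)$ straight from a Hamiltonian cycle of $R$. Fix a Hamiltonian cycle $H=v_1v_2\cdots v_nv_1$ of $R$ and let $\bs{s}=v_1v_2\cdots v_n$ be the cyclic order it induces on $V(R)$. Since $R$ is $3$-regular and $H$ is a spanning $2$-regular subgraph, $M=E(R)\setminus E(H)$ is a perfect matching of $R$ (every vertex keeps exactly one incident edge); write $\sigma$ for the corresponding fixed-point-free involution, so that the edge at $v_i$ not used by $H$ joins $v_i$ to $v_{\sigma(i)}$. The goal is a closed trail $U$ using each edge of $\Lambda(R)$ exactly once with $m(U)[V(R)]=\bs{s}\bs{s}$, which by \cref{def:SOET} is precisely a SOET with respect to $V(R)$.

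The tour will be "two laps around $H$ plus detours". Lap one and lap two each run through the triangle subgraphs $T_{v_1},T_{v_2},\dots,T_{v_n}$ in cyclic order, crossing from $T_{v_i}$ to $T_{v_{i+1}}$ along one of the two parallel copies of the edge $(v_i,v_{i+1})\in E(H)$ (lap one taking the first copies, lap two the second), and inside each $T_{v_i}$ the tour follows a subtrail that passes through the central vertex $v_i$ exactly once. Hence the two laps alone contribute the letters $v_1\cdots v_nv_1\cdots v_n=\bs{s}\bs{s}$ to $m(U)[V(R)]$. All remaining edges — the internal edges of some $T_{v_i}$ not consumed by the two laps, together with the two parallel copies of each matching edge of $M$ — are grouped, one group per matching pair $\{i,\sigma(i)\}$, into a single closed subtrail visiting no vertex of $V(R)$; each such subtrail shares a vertex with one of the two laps and is spliced in there, which keeps $U$ a single closed trail and leaves $m(U)[V(R)]$ unchanged.

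What remains is the gadget-by-gadget routing, and here one must split into cases according to the orientation of each triangle $T_{v_i}$, i.e. which of the three neighbours of $v_i$ is the distinguished outer vertex having no ``tilde'' twin — a choice fixed by the given $\Lambda(R)$ and not ours to make. Using the mirror symmetry of $T_{v_i}$ (exchanging the two tilde branches) there are two essentially different cases. If the distinguished neighbour is the matching neighbour $v_{\sigma(i)}$, then both $H$-side outer vertices are adjacent to $v_i$, so the two lap-subtrails can run through $v_i$ between them — one along the short route $v_i^{(v_{i-1})}\,v_i\,v_i^{(v_{i+1})}$, the other along the long route through the two tilde-vertices — and the three leftover internal edges form a small triangle at the distinguished outer vertex which, joined to the two copies of the matching edge, becomes the skip subtrail of the pair $\{i,\sigma(i)\}$, spliced into the long lap at a shared tilde-vertex. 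If instead the distinguished neighbour is one of the two Hamiltonian neighbours, the matching outer vertex is now adjacent to $v_i$ while one $H$-side outer vertex is not; the naive plan of a skip at the matching outer vertex is now impossible, so instead both lap-subtrails run between the two $H$-side outer vertices (along slightly longer routes that still hit $v_i$) and together exhaust all nine internal edges, and the two copies of the matching edge constitute the entire local skip, which is absorbed into the matching pair's closed subtrail as a short ``poke'' spliced into the second lap at the matching outer vertex. In each case I would check that all nine internal edges of $T_{v_i}$ are used exactly once and that the three outer vertices route consistently with the two laps and the skip subtrails.

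Putting the pieces together: lap two resumes where lap one ends, so the two laps form a closed trail; each matching-pair subtrail is a closed trail sharing a vertex with a lap, so splicing it in yields again a single closed trail; by the per-gadget bookkeeping this trail uses every edge of $\Lambda(R)$ exactly once (so in particular $\Lambda(R)$ is connected, as it must be since $R$ is Hamiltonian hence connected), hence it is an Eulerian tour; and since only the two laps meet $V(R)$, each meeting the centres in the order $\bs{s}$, we get $m(U)[V(R)]=\bs{s}\bs{s}$, so $U$ is a SOET with respect to $V(R)$. The step I expect to be the main obstacle is the second case of the gadget analysis: the asymmetry of the triangle subgraph forces a genuinely different internal routing when the distinguished outer vertex coincides with a Hamiltonian neighbour, and one must verify that the leftover copies of the matching edge can still be woven into the tour consistently without disturbing the order $\bs{s}\bs{s}$.
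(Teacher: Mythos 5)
Your construction is essentially the paper's own proof: two laps through the triangle gadgets in the order given by the Hamiltonian cycle, each lap crossing one copy of each doubled cycle-edge and hitting each centre $v_i$ once, followed by splicing in the unused doubled matching edges (the paper inserts them as short back-and-forth pokes at the outer vertices $v_i^{(\hat v_i)}$, while you bundle them per matching pair with any leftover internal triangle edges — a cosmetic difference). The case analysis on the gadget orientation and the deferred routing check match the paper's level of detail, and the case-2 routing you flag as the main obstacle does exist, so the plan is sound.
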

\begin{proof}
    Let $R$ be Hamiltonian.
    This means there exists a Hamiltonian cycle $M$ on $R$.
    We will prove that there exists a SOET $U$ with respect to $V(R)$ on the triangular-expansion $\Lambda(R)$ of $R$ by constructing, from the cycle $M$ on $R$, a tour $U$ that visits every vertex $v\in V(R)$ twice in the same order.
    We will then argue that this tour can always be lifted to a Eulerian tour and hence can be made into a SOET.\\

    Note first that $M$ induces an ordering on the vertices of $R$ which without of loss of generality we will take to be $v_1, \ldots v_k$ where $k=|V(R)|$.
    Note that for all $i\in [k-1]$ the vertices $v_i$ and $v_{i+1}$ are adjacent in $R$ and so are $v_k$ and $v_1$.
    Now consider, for each $i\in [k-1]$ the triangle subgraphs $T_{v_i}$, $T_{v_{i+1}}$ and $T_{\hat{v}}$ in the triangular-expansion $\Lambda(R)$ of $R$ where $\hat{v}_i$ is the unique vertex adjacent to $v_i$ in $R$ that is not $v_{i+1}$ or $v_{i-1}$.
    There are now three cases, depending on the orientation of the triangle subgraph $T_{v_i}$.
    Either (1) $v_i$ is adjacent to $v_i^{(v_{i-1})}$  and $v_i^{(v_{i+1})}$ or (2) $v_i$ is adjacent to $v_i^{(v_{i-1})}$  and $v_i^{(\hat{v_{i}})}$ or (3) $v_i$ is adjacent to $v_i^{(v_{i+1})}$  and $v_i^{(\hat{v}_{i})}$.\\

    For case (1), i.e. $v_i$ is adjacent to $v_i^{(v_{i-1})}$  and $v_i^{(v_{i+1})}$ in $T_{v_i}$, we can define two edge-disjoint trails on the triangular-expansion $\Lambda(R)$ by their description as words $\bs{X}_i,\bs{X}'_i$ on the vertices of $\Lambda(R)$.
    \begin{align}
        \bs{X}_i &= v_i^{(v_{i-1})}\tilde{v}_i^{(v_{i-1})} v_i^{(\hat{v}_i)} \tilde{v}_i^{(v_{i+1})} v_i v_i^{(v_{i+1})} v_{i+1}^{(v_{i})}, \label{eq:subtrail1}\\
    \bs{X}_i' &= v_i^{(v_{i-1})}v_i \tilde{v}_i^{(v_{i-1})} \tilde{v}_i^{(v_{i+1})} v_i^{(v_{i+1})}v_{i+1}^{(v_{i})}. \label{eq:subtrailtwo}
    \end{align}
    An illustration of the trails described by these words is given in \cref{fig:ham2soet1}.
    We can similarly define two edge-disjoint trails for the cases (2) and (3).
    We will abuse notation and refer to the words as $\bs{X}_i$ and $\bs{X}_i'$ in all three cases.
    Importantly, for all three these cases the trails described by $\bs{X}_i,\bs{X}_i'$ are edge-disjoint and both begin at $v_i^{(v_{i-1})}$ and end at $v_{i}^{(v_{i+1})}$.
    We can extend this definition to trails $U_0,U_0'$ also for the case of the adjacent vertices $v_k$ and $v_1$.
    Note now that the walk $U$ described by the word $\bs{W} = \bs{X}_0 \bs{X_1} \ldots \bs{X}_{k-1}\bs{X}'_0\bs{X}_1'\ldots \bs{X}_{k-1}$ is a tour and moreover that it traverses the vertices in $V(R)$ twice in the order $v_1, \ldots v_k$.\\

    The tour $U$ described by the word $\bs{W}$ above visits every vertex in $V(R)$ such that $\bs{W}[V(R)]=v_1\dots v_kv_1\dots v_k$.
    However it is not yet a Eulerian tour, since it has not traversed all edges in $\Lambda(R)$.
    Note that the edges not traversed by $U$ are precisely the triangular-expansions of the edges in $R$ not traversed by the Hamiltonian cycle $M$.
    We can easily construct a Eulerian tour out of $U$ by looping over all elements of the word $\bs{X}$ and whenever we encounter a vertex $v_i^{(\hat{v}_i)}$ for all $i \in [k]$, where $\hat{v}_i$ is defined as above, we check whether $U$ already traverses the edges $(v_i^{(\hat{v}_i)},\hat{v}_i^{(v_i)})$.
    If so we continue the loop and if not we insert the trail $(v_i^{(\hat{v}_i)},\hat{v}_i^{(v_i)}) \hat{v}^{(v_i)}(v_i^{(\hat{v}_i)},\hat{v}_i^{(v_i)})v_i^{(\hat{v}_i)}$ into $U$ at this position.
    This procedure is illustrated in \cref{fig:ham2soet2}.
    Now $U$ is Eulerian and hence a \SOET.
    This completes the proof.
\end{proof}

\begin{figure}[H]
    \centering
    \begin{subfigure}{0.35\textwidth}
        \includegraphics[width=\textwidth]{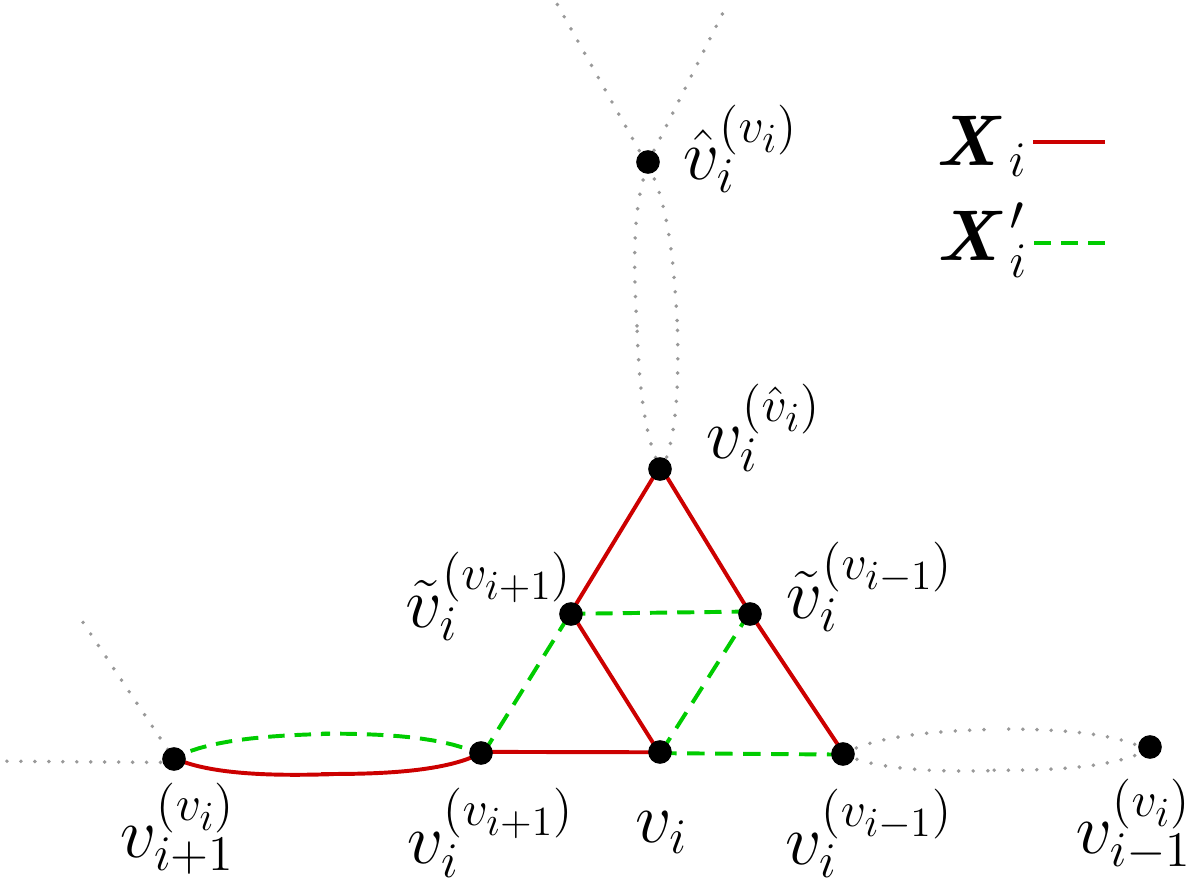}
        \caption{ }
        \label{fig:ham2soet1}
    \end{subfigure}
    \hspace{3em}
    \begin{subfigure}{0.35\textwidth}
        \includegraphics[width=\textwidth]{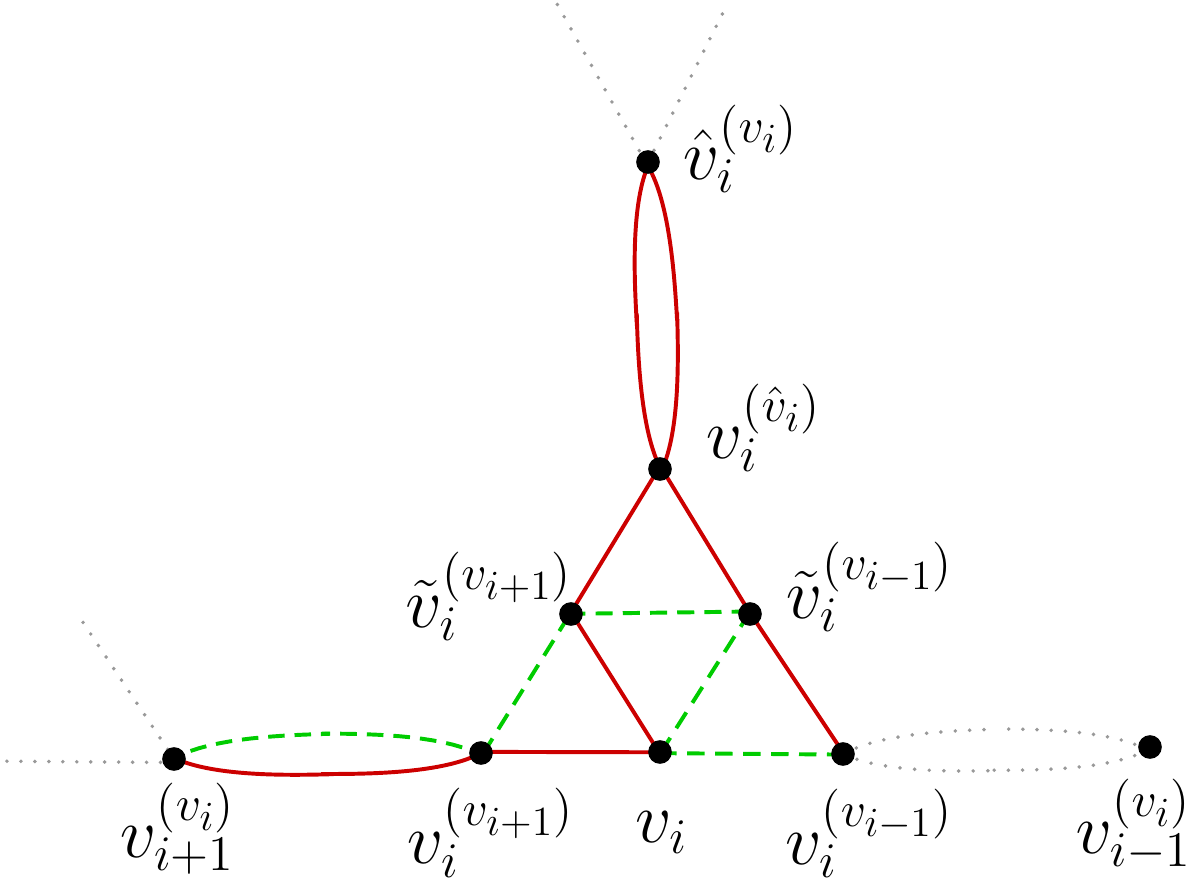}
        \caption{ }
        \label{fig:ham2soet2}
    \end{subfigure}
    \caption{The trails used in the proof of \cref{lem:HAM2SOET} to construct a \SOET\ on a triangular-expanded graph $\Lambda(R)$ from a Hamiltonian cycle on $R$.
        \Cref{fig:ham2soet1} shows the trails described by the words $\bs{X}_i$ (solid red) and $\bs{X}'_i$ (dashed green), defined in \cref{eq:subtrail1,eq:subtrailtwo}.
        \Cref{fig:ham2soet2} shows how these trails can be extended to form a Eulerian tour and therefore a \SOET\ with respect to $V(R)$.
        The dotted gray lines show edges of $\Lambda(R)$ which are not used by the trails.
    }
    \label{fig:ham2soet}
\end{figure}

Next we define a special type of \SOET~on triangular-expanded graphs, which we call \HAMSOET s.
These special \SOET s on a triangular-expanded graph $\Lambda(R)$, are closely related to Hamiltonian cycles on $R$.

\begin{mydef}[HAMSOET]\label{def:HAMSOET}
    Let $R$ be a $3$-regular graph and $\Lambda(R)$ its triangular-expansion.
    Furthermore, let $U$ be a \SOET~on $\Lambda(R)$ with respect to $V(R)$.
    $U$ is called a \HAMSOET\ with respect to $R$ if, for all vertices $u,v\in V(R)$ we have that if $u$ and $v$ are consecutive with respect to the \SOET~$U$ they are also adjacent in the graph $R$.
\end{mydef}

Next we prove that if the triangular-expansion of a $3$-regular graph $R$ allows for a \HAMSOET~with respect to $R$ then the $3$-regular graph $R$ is Hamiltonian. 

\begin{lem}\label{lem:HAMSOET2HAM}
    Let $R$ be a $3$-regular graph and let $\Lambda(R)$ be its triangular-expansion.
    If $\Lambda(R)$ allows for a \HAMSOET~with respect to $R$, then $R$ is Hamiltonian.
\end{lem}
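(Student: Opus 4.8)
The plan is to extract a Hamiltonian cycle of $R$ directly from the cyclic ordering of $V(R)$ that a \HAMSOET\ imposes; this is the ``easy half'' of the \SOET--Hamiltonicity equivalence, and amounts to unwinding the relevant definitions.

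First I would set up notation. Suppose $U$ is a \HAMSOET\ on $\Lambda(R)$ with respect to $V'=V(R)$. By \cref{def:SOET} we may write $m(U)=\bs{X}_0 s_1\bs{X}_1 s_2\cdots s_k\bs{X}_k s_1\bs{Y}_1 s_2\cdots s_k\bs{Y}_k$, where $k=\abs{V(R)}$, the word $\bs{s}=s_1 s_2\cdots s_k$ contains every vertex of $R$ exactly once, and the $\bs{X}_i,\bs{Y}_i$ are (possibly empty) words over $V(\Lambda(R))\setminus V(R)$. Identifying the maximal sub-words as in \cref{def:maximal} and applying \cref{def:consecutive}, the pairs of vertices that are consecutive with respect to $U$ are precisely $(s_1,s_2),(s_2,s_3),\dots,(s_{k-1},s_k)$ together with $(s_k,s_1)$; that is, they are exactly the cyclic-successor pairs of $\bs{s}$.

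Next I would invoke the \HAMSOET\ hypothesis. By \cref{def:HAMSOET}, every pair of vertices consecutive with respect to $U$ is an edge of $R$, so combining with the previous step we get $(s_i,s_{i+1})\in E(R)$ for all $i\in[k-1]$ and $(s_k,s_1)\in E(R)$. Hence $s_1 s_2\cdots s_k s_1$ is a closed walk in $R$ that uses only edges of $R$, repeats no vertex (each vertex of $R$ occurs exactly once in $\bs{s}$, and $k\geq 4$ since $R$ is $3$-regular), and visits every vertex of $R$. By \cref{def:hamtour} this closed walk is a Hamiltonian cycle, so $R$ is Hamiltonian, which is the claim.

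This direction is essentially a bookkeeping argument; the only point requiring a little care is matching the ``consecutive pairs'' of a SOET with the successor relation of the cyclic word $\bs{s}$, which is immediate from \cref{def:maximal}. I therefore do not expect a genuine obstacle here --- the combinatorial work in the $\Lambda(R)$--Hamiltonicity equivalence lives instead in \cref{lem:SOET2HAMSOET}, where one must show that an arbitrary \SOET\ can be modified to remove all true skips and so becomes a \HAMSOET.
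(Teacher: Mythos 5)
Your proposal is correct and follows essentially the same route as the paper's proof: write out $m(U)$ in the SOET normal form, observe that the consecutive pairs are exactly the cyclic-successor pairs of $\bs{s}=s_1\cdots s_k$, invoke the \HAMSOET\ definition to get adjacency in $R$, and conclude that $s_1\cdots s_k$ is a Hamiltonian cycle. The only (harmless) addition is your explicit remark that $k\geq 4$, which the paper leaves implicit.
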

\begin{proof}
    This follows by the definition of a \HAMSOET.
    A HAMSOET $U$ will induce a double occurrence word of the form 
    \begin{equation}
        m(U)=\bs{X}_0s_1\bs{X}_1s_2\dots s_k\bs{X}_{k}s_1\bs{X}'_{1}s_2\dots s_k\bs{X}'_{k}.
    \end{equation}
    where $s_1, \ldots s_k\in V(R)$ with $k=|V(R)|$ and where $\bs{X}_i$ and $\bs{X}'_i$ are maximal sub-words associated to $s_i,s_{i+1}\in V(R)$.
    Now consider the induced double occurrence word $m(U)[V(R)]$. We have 
    \begin{equation}
    m(U)[V(R)] = s_1 \ldots s_ks_1\ldots s_k.
    \end{equation}
    Consider now the sub-word $s_1\ldots s_k$ of $m(U)[V(R)]$ . This sub-word describes a Hamiltonian cycle on $R$.
    To see this, note that each vertex in $V(R)$ occurs exactly once in $s_1s_2\dots s_k$.
    Furthermore, since $s_i$ and $s_{i+1}$ are consecutive in $U$, they are adjacent in $R$ for all $i\in[k-1]$, by definition of a HAMSOET.
    Finally, the same also holds for $s_k$ and $s_1$. Hence the tour on $R$ described by $s_1\ldots s_k$ visits each vertex in $R$ exactly once and is hence a Hamiltonian cycle.

    \end{proof}

We would like to prove that for any $3$-regular graph $R$ the existence of a SOET on its triangular-expansion $\Lambda(R)$ implies the existence of a Hamiltonian cycle on $R$. So far we have proven this only when $\Lambda(R)$ allows for a HAMSOET. However, a priori not all SOETs on triangular-expansions $\Lambda(R)$ have to be HAMSOETs. The reason for this is that consecutive vertices in a \SOET~are not necessarily adjacent in $R$, since a \SOET\ can contain true skips.\\

In the following lemma we will prove that consecutive vertices in a \SOET~are actually adjacent in $R$, except for two special cases.
These special cases can be remedied since we show that for these cases we can always find a different \SOET~which is actually a \HAMSOET.\\

The arguments proven below are understood the easiest when one reproduces the visual aids given in \cref{fig:rrskip_explain,fig:11skip_explain} as one follows the arguments. We have the following lemma.

\begin{lem}\label{lem:SOET2HAMSOET}
    Let $R$ be a 3-regular graph and $\Lambda(R)$ be its triangular-expansion.
    If $\Lambda(R)$ allows for a \SOET ~with respect to $V(R)$, then $\Lambda(R)$ allows for a \HAMSOET~with respect to $V(R)$.
\end{lem}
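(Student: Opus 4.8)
The plan is to take a \SOET~$U$ on $\Lambda(R)$ with respect to $V(R)$ that \emph{minimises} the total number of true skips (\cref{def:skip}) among all such \SOET s --- one exists since by hypothesis at least one \SOET~does --- and to show this minimum is zero, i.e.\ that $U$ is a \HAMSOET. By \cref{def:HAMSOET} a \SOET~fails to be a \HAMSOET~exactly when some pair $u,v\in V(R)$ is consecutive in it but not adjacent in $R$, so the first step is to understand when that can happen. Let $\bs{X}$ be a maximal sub-word of $m(U)$ associated to such a consecutive pair (\cref{def:maximal}); since $\bs{X}$ contains no letter of $V(R)$, the sub-trail it describes leaves $T_u$, threads through a sequence of triangle subgraphs, and finally enters $T_v$, making a skip at every intermediate triangle $T_w$ with $w\notin\{u,v\}$ that it traverses. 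Using that every inter-triangle edge of $\Lambda(R)$ is doubled, one shows that if \emph{none} of these skips is a true skip then the sequence of traversed triangles must fold back on itself (each triangle is followed by an immediate return to the previous one), forcing $u=v$ or forcing $u$ and $v$ to be adjacent in $R$ --- a contradiction. Hence a bad consecutive pair always forces its connecting sub-trail to make a true skip at some $w\in V(R)$; equivalently, a \SOET~with no true skips is automatically a \HAMSOET, so it suffices to derive a contradiction from the assumption that the minimiser $U$ contains a true skip.

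Next I would pin down the local structure of such a true skip. Fix a true skip at $w$ and recall from \cref{lem:single_skip} that among the three edge-disjoint sub-trails $t_1,t_2,t_3$ of $U$ that meet $T_w$, exactly one --- say $t_1$ --- skips $w$, while the other two each traverse $w$ once. Because $T_w$ has only six vertices and a fixed internal edge set, there are only finitely many ways $t_1,t_2,t_3$ can be routed through $T_w$, and the plan is to enumerate them and discard every routing that is incompatible with $U$ being a single closed Eulerian tour, with \cref{lem:single_skip}, or with the ordering constraint $m(U)[V(R)]=\bs{s}\bs{s}$ that defines a \SOET. I expect this case analysis to be the main obstacle: one must combine the internal geometry of $T_w$, the doubling of its outer edges, and the global ordering constraint to show that precisely two configurations of (true skip together with the surrounding trail structure) can occur, namely the two depicted in \cref{fig:rrskip_explain} and \cref{fig:11skip_explain}.

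Finally, for each of these two configurations the plan is to exhibit an explicit alternative routing of the affected sub-trails, confined to $T_w$ together with the edges joining it to its neighbours (which amounts to a localised sequence of the $\bar\tau$-moves of \cref{def:tau_on_eul}), and to let $U'$ be the resulting tour. One then verifies three things: (i) $U'$ is again an Eulerian tour on $\Lambda(R)$, the surgery neither disconnecting the trail nor omitting an edge; (ii) $U'$ is again a \SOET~with respect to $V(R)$ --- this is the delicate point, and it is where it matters that the two configurations are special, since the rerouting must be chosen so that the word induced on $V(R)$ remains of the doubled form $\bs{t}\bs{t}$; and (iii) $U'$ has strictly fewer true skips than $U$, the skip at $w$ having been turned into a non-true skip without creating any new true skip. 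Then (iii) contradicts the minimality of $U$, so the minimiser has no true skips and is therefore a \HAMSOET, which proves the lemma.
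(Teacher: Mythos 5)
Your overall plan --- classify the configurations in which a bad structure can occur, repair them by tour surgery, and induct on a decreasing quantity --- has the same shape as the paper's argument, and your first paragraph (a consecutive pair that is non-adjacent in $R$ forces each of its two connecting sub-trails to make a true skip) is correct and matches the paper's Step~1. But both load-bearing steps are asserted in a form that does not work. First, the classification: you propose to enumerate the finitely many local routings of $t_1,t_2,t_3$ through the single triangle $T_w$ and discard the incompatible ones. The constraints that actually eliminate configurations are not local to $T_w$: the paper repeatedly applies \cref{lem:UNUSEDEDGE_1,lem:UNUSEDEDGE_2} to propagate forced consecutiveness from triangle to triangle until the tour is forced to visit a proper subset $\{s_1,\hat s_1,\dots,\hat s_k,s_2\}\subsetneq V(R)$ twice in order, contradicting the definition of a \SOET\ --- a condition involving all of $V(R)$. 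No inspection of the six vertices of $T_w$ can detect this, and the surviving configurations (the two valid $11$-skips of \cref{fig:11skip}) are characterised by the global arrangement of the four triangles $T_u,T_v,T_{s_1},T_{s'_1}$ and of the maximal sub-words joining them, not by a routing pattern inside one triangle.

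Second, the measure of progress and the repair move. You minimise the total number of true skips and aim to show the minimum is zero, so your surgery must remove \emph{every} true skip --- including one made by a sub-trail whose associated consecutive pair is already adjacent in $R$. The definition of a \HAMSOET\ does not exclude such skips, so you would be proving a statement strictly stronger than the lemma and not obviously true; the quantity that actually decreases in the paper is the number of consecutive pairs that are non-adjacent in $R$. Relatedly, the repair cannot be confined to $T_w$ together with the edges joining it to its neighbours: the paper's fix applies $\bar{\tau}$ at two vertices ($v^{(s_1)}$ and $s'^{(u)}_1$), reversing two long segments $\bs{U}_2,\bs{U}_4$ of the tour, and the verification that the result is still a \SOET\ uses $\bs{X}_1[V(R)]=\bs{X}_2[V(R)]$, i.e.\ global information about the word induced on $V(R)$. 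A re-pairing of edge-ends at the boundary of $T_w$ can split the Eulerian tour into several closed trails or scramble the order in which $V(R)$ is visited, and you give no mechanism for ruling this out. As written, the proposal therefore has genuine gaps at both the classification and the repair stages.
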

\begin{proof}
To prove this lemma we will go through the following steps
\begin{description}
    \item[{Step~1}] Note that if two vertices $u,v$ in $V(R)$ ($R$ being a $3$-regular graph) are not adjacent in $R$ but are consecutive in a \SOET~$U$ with respect to $V(R)$ on $\Lambda(R)$, then the  sub-trails of $U$ described by the maximal sub-words associated to $u$ and $v$ must make a non-zero number of true skips in $\Lambda(R)$
    \item[Step 2] Argue by contradiction that this non-zero number of true skips can never be greater than one.
        This argument leverages \cref{lem:UNUSEDEDGE_1} which states that if a sub-trail of a \SOET~$U$ with respect to $V(R)$ makes true skips at triangle subgraphs $T_{w_1},T_{w_2}$ for $w_1,w_2\in V(R)$ and $w_1,w_2$ are adjacent in $R$ then $w_1, w_2$ must actually be consecutive in the \SOET~$U$ and \cref{lem:UNUSEDEDGE_2} which, in a slightly different initial situation than \cref{lem:UNUSEDEDGE_1}, also concludes that two vertices must be consecutive in a \SOET~$U$ with respect to $V(R)$.
    \item[Step 3] Argue by contradiction that there are only two possible ways for the sub-trails described by the maximal sub-words associated to $u$ and $v$ to make one true skip each.
        This argument also leverages \cref{lem:UNUSEDEDGE_1,lem:UNUSEDEDGE_2}.

    \item[Step 4] Argue that if a triangular-expansion of a $3$-regular graph $R$ allows for a~\SOET~$U$ as in {\bf Step 3}, then it also allows for a \SOET~$U'$ that is a \HAMSOET.
\end{description}

{\bf \underline{Details of step 1}}

    Let $R$ be such that $\Lambda(R)$ allows for a \SOET~with respect to $V(R)$. Let $U$ be any such \SOET.
    If $U$ is also a \HAMSOET~then we are directly done, therefore assume that $U$ is not a \HAMSOET.
    Since $U$ is not a \HAMSOET~there must exist at least two vertices $u,v\in V(R)$ which are consecutive in $U$, but not adjacent in $R$.
    By definition, since $u$ and $v$ are consecutive and $U$ is a \SOET, there must exist exactly two different maximal sub-words $\bs{X},\bs{X}'$ of $m(U)$, associated to $u$ and $v$.
    Since $u$ and $v$ are not adjacent in $R$, the trails described by $\bs{X}$, $\bs{X}'$ must each traverse vertices in at least one (but possibly more) triangle subgraph that is not $T_u$ or $T_v$. Since $\bs{X}$ and $\bs{X}'$ are maximal sub-words, they can not contain any vertex in $V(R)$ as a letter and hence the sub-trails described by $\bs{X},\bs{X}'$ must each  make true skips (see \cref{def:skip}) at at least one triangle subgraph.\\

    Assume w.l.o.g. that the sub-trail described by $\bs{X}$ makes true skips at the triangle subgraphs $T_{s_1},T_{s_2},\dots T_{s_r}$, in this order, and similarly for $\bs{X}'$ and $T_{s'_1},\dots T_{s'_{r'}}$.
    The true skips that the sub-trail described by $\bs{X}$ makes, must be at different triangle subgraphs than the ones for $\bs{X}'$, since there can only be exactly one skip per triangle subgraph, due to \cref{lem:single_skip}.
    The triangle subgraphs $T_{s_1},T_{s_2},\dots T_{s_r},T_{s'_1},\dots T_{s'_{r'}}$ are therefore pairwise different.
    We will call this situation a $rr'$-skip, which is illustrated in \cref{fig:rrp-skip}.
    Note that by assumption, $r$ and $r'$ are both greater than zero.
    We will first show that the case where either $r$ or $r'$ are greater than one can never occur if $U$ is a \SOET.

    \begin{figure}[H]
        \centering
        \includegraphics[width=0.3\textwidth]{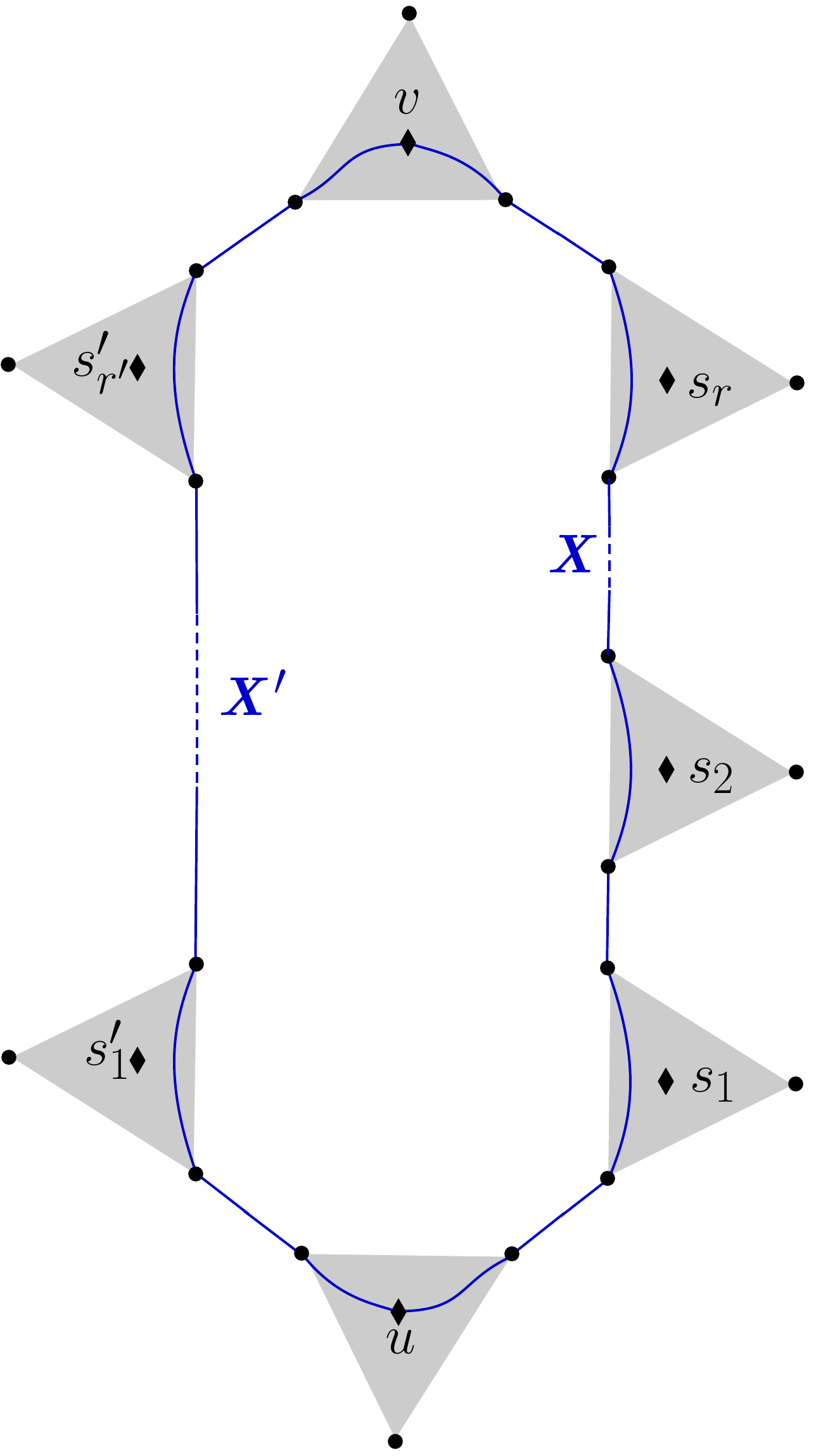}
        \caption{This figure is a visual aid for Step 1 of \cref{lem:SOET2HAMSOET}.
            A simplified visualization of a triangular-expansion $\Lambda(R)$ of a 3-regular graph $R$ is used.
            In the figure triangle subgraphs are shown in gray with only their outer vertices (circles) and vertices in $V(R)$ (diamonds) shown.
            Also shown are the sub-trails, of a SOET $U$, described by maximal sub-words $\bs{X}$ and $\bs{X}'$ associated to vertices $u,v$ making true skips at triangle subgraphs $T_{s_1}, \ldots, T_{s_r}$ and $T_{s'_1}, \ldots, T_{s'_r}$ respectively.
            These sub-trails always begin and end at a vertex in $V(R)$ but their path inside triangle subgraphs is not shown explicitly.
            Dashed lines are used to indicate that the sub-trails also traverse unspecified further parts of the graph $\Lambda(R)$.
        }
        \label{fig:rrp-skip}
    \end{figure}

{\bf\underline{Details of step 2}}

    \noindent As a visual aid for the following argument, refer to \cref{fig:rrskip_explain}.
        We will make an argument by contradiction. Therefore let $r$ be strictly greater than one.
        Consider the sub-trail described by the sub-word $\bs{X}$ defined above.
        By assumption this sub-trail makes true skips at at least two adjacent triangle subgraphs.
        Take these to be $T_{s_1}$ and $T_{s_2}$.
        As we will prove in \cref{lem:UNUSEDEDGE_1}, if two adjacent triangle subgraphs $T_{s_1},T_{s_2}$ contain true skips, then $s_1$ and $s_2$ must be consecutive in the SOET $U$ and that there is some maximal sub-word $\bs{Y}$ of $m(U)$ associated to $s_1,s_2$ that contains the sub-word $s_1^{(s_2)}s_2^{(s_1)}$.\\ 
    
        This implies that (by definition of SOET) there is some other maximal sub-word $\bs{Y}'$ of $m(U)$ associated to $s_1$ and $s_2$. This sub-word $\bs{Y}'$ cannot contain the sub-word $s_1^{(s_2)}s_2^{(s_1)}$, as $s_1^{(s_2)}s_2^{(s_1)}$ already appears in both the maximal sub-word $\bs{X}$ and the maximal sub-word $\bs{Y}$.

        This implies the sub-trail described by $\bs{Y}'$ must make a nonzero number of true skips at triangle subgraphs $T_{\hat{s}_1}, \ldots, T_{\hat{s}_k}$ in that order.
        Now consider the triangle subgraphs $T_{s_1}$ and $T_{\hat{s}_1}$.
        As we will prove below, \cref{lem:UNUSEDEDGE_2} applied to the triangle subgraphs $T_{s_1},T_{\hat{s}_1}$ implies that the vertices $s_1$ and $\hat{s}_1$ must be consecutive in $U$. Call the maximal sub-word of $m(U)$ associated to these vertices $\bs{W}_0$. Moreover, by \cref{lem:UNUSEDEDGE_2} this maximal sub-word contains the sub-word $s_1^{(\hat{s}_1)}\hat{s}_1^{(s_1)}$.
        Similarly we can see that the vertices $s_2$ and $\hat{s}_k$ must be consecutive in $U$.
        Call the maximal sub-word of $m(U)$ associated to these vertices $\bs{W}_{k}$. By \cref{lem:UNUSEDEDGE_2} this maximal sub-word contains the sub-word $s_2^{(\hat{s}_{k+1})}\hat{s}_{k+1}^{(s_2)}$.

        By applying \cref{lem:UNUSEDEDGE_1} again to all triangle subgraph pairs $T_{\hat{s}_i}, T_{\hat{s}_{i+1}}$ for $i \in [k-1]$ we come to the conclusion that $\hat{s}_i,\hat{s}_{i+1}$  for $i \in [k-1]$ must be consecutive in $U$.
        Call the maximal sub-word of $m(U)$ associated to these vertices $\bs{W}_i$.  By \cref{lem:UNUSEDEDGE_2} these maximal sub-words contain the sub-word $\hat{s}_i^{(\hat{s}_{i+1})}\hat{s}_{i+1}^{(\hat{s}_{i})}$.
        This implies that $U$ must traverse the vertices $s_1,\hat{s}_1, \ldots, \hat{s}_k, s_2,s_1, \hat{s}_1, \ldots, \hat{s}_k, s_2$ in order.
        Since by construction $\{s_1,s_2, \hat{s}_1,\ldots,
            \hat{s_k}\} \neq V(R)$ we have that $U$ is not a valid \SOET~on $V(R)$ (this can be seen by noting that e.g. $T_{s'_1}$ already contains a true skip, hence $s'_1$ can not be part of the set).
        Hence by contradiction we must have $r\leq 1$.
        We can make the same argument for $r'$ yielding $r'\leq 1$.

        \begin{figure}[H]
            \centering
            \includegraphics[width=0.5\textwidth]{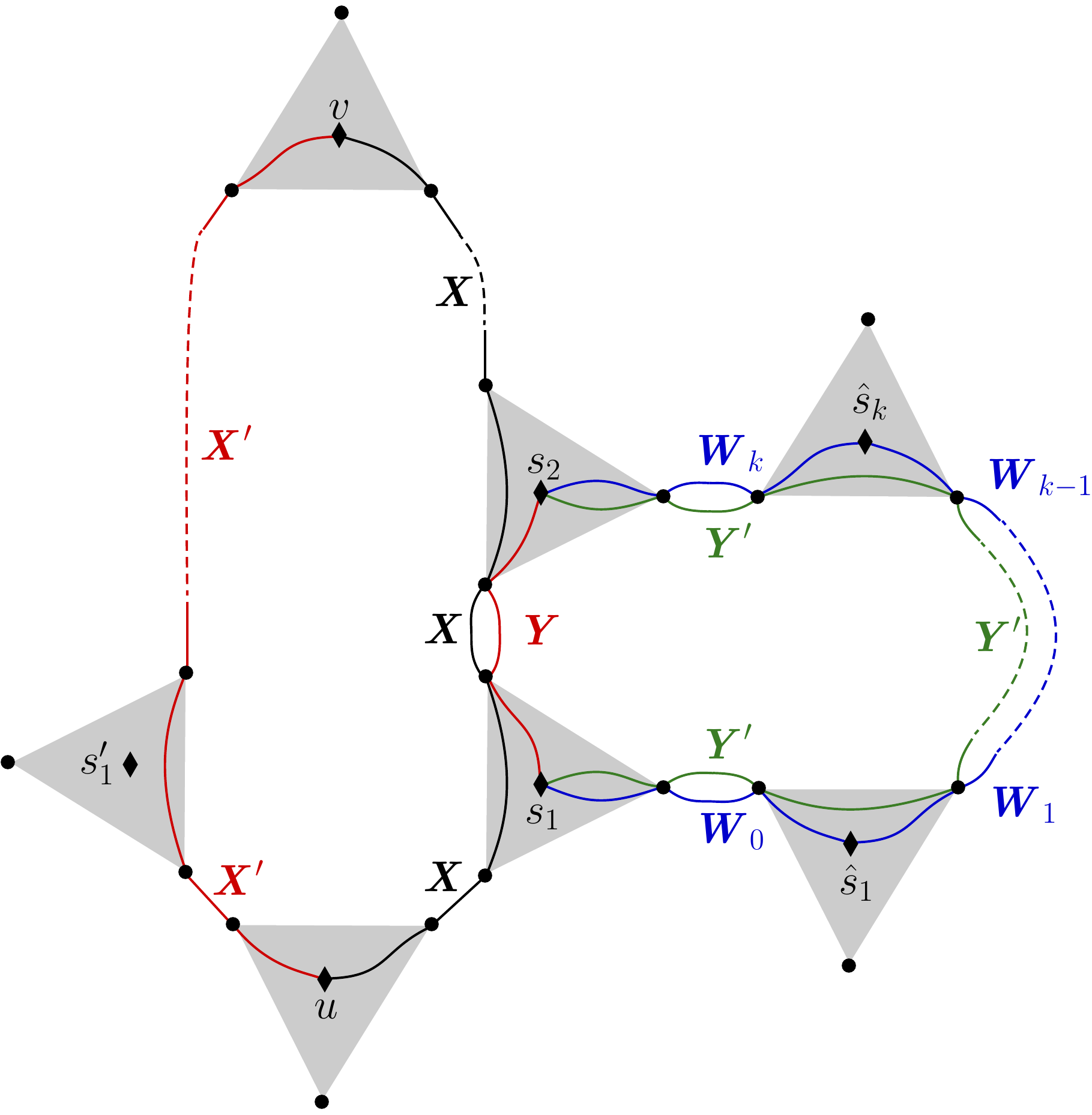}
            \caption{This figure is a visual aid for Step 2 of \cref{lem:SOET2HAMSOET}.
                A simplified visualization of a triangular-expansion $\Lambda(R)$ of a 3-regular graph $R$ is used.
                In the figure triangle subgraphs are shown in gray with only their outer vertices (circles) and the vertices in $V(R)$ (diamonds) shown.
                Also shown are sub-trail of a SOET $U$ labeled by the maximal sub-words that describe them.
                These sub-trails always begin and end at a vertex in $V(R)$ but their path inside triangle subgraphs is not shown explicitly.
                Dashed lines are used to indicate that the sub-trails also traverse unspecified further parts of the graph $\Lambda(R)$.
                In this argument a contradiction is arrived at by first assuming the that the SOET $U$ has a sub-trail described by the maximal sub-word $\bs{X}$ associated to the vertices $u,v$ makes true skips at triangle subgraphs $T_{s_1},T_{s_2}$.
                Using \cref{lem:UNUSEDEDGE_1} it is shown that the maximal sub-word $\bs{Y}$ must exist (associated to vertices $s_1,s_2$).
                This then means that the maximal sub-word $\bs{Y}'$ must exist (also associated to $s_1,s_2$).
                The sub-trail described by this sub-word must make true skips at triangle subgraphs $T_{w_1}, \ldots, T_{w_k}$.
                Using \cref{lem:UNUSEDEDGE_1} and \cref{lem:UNUSEDEDGE_2} it is then concluded that the SOET $U$ must visit the vertices $s_1, w_1, \ldots w_k,s_2,s_1$ consecutively which means $U$ is not a valid SOET (since $
            \{s_1, w_1, \ldots ,w_k, s_2\}\neq V(R)$). This is a contradiction.}
            \label{fig:rrskip_explain}
        \end{figure}

{\bf\underline{Details of step 3}}

    \noindent As a visual aid for the following argument, refer to \cref{fig:11skip_explain}.
        Now let $r = r'= 1$. This means the sub-trails described by the maximal sub-words $\bs{X}$ and $\bs{X}'$ associated to the vertices $u$ and $v$ make exactly one true skip each at triangle subgraphs triangle subgraphs $T_{s_1}, T_{s'_1}$ respectively. We will now argue that there are essentially only two ways that a SOET $U$ with the above properties can exist on $\Lambda(R)$. This argument will again go in steps:
        \begin{description}
        \item[Step 3.1] Argue that the SOET $U$ must have a maximal sub-word associated to the vertex $s_1$ and some other vertex $x$ that describes a trail traversing the edge connecting the triangle subgraphs $T_u$ and $T_{s_{1}}$.
        \item[Step 3.2] Argue that this sub-trail must make a true skip at the triangle subgraph $T_u$. Note that this is equivalent to arguing $x\neq u$. 
        \item[Step 3.3] Apply the same sequence of arguments for the vertices $s_1$ and $v$ and also for the vertices $s'_1$ and $u$ and $s'_1$ and $v$. 
        \item[Step 3.4] Conclude from the fact that the SOET $U$ can only make a single true skip at the triangle subgraphs $T_u$ and $T_v$ that $s_1$ and $s'_1$ must be consecutive in $U$ and moreover that the maximal sub-words $\bs{Z},\bs{Z}'$ associated to $s_1$ and $s'_2$ must make true skips at $T_u$ and $T_v$ respectively.
    \end{description}
    \begin{addmargin}{2em}

        {\bf \underline{Details of step 3.1}}

        Consider the second edge connecting the triangle subgraphs $T_u$ and $T_{s_{1}}$. This is the edge $(u^{(s_1)},s_1^{(u)})$.
        Since the \SOET~$U$ is Eulerian it must traverse this edge.
        Note also that $T_{s_1}$ already contains a true skip (made by the sub-trail described by $\bs{X}$). This means, by \cref{lem:single_skip} that any sub-trail of $U$ crossing the edge  $(u^{(s_1)},s_1^{(u)})$ connecting $T_u,T_{s_1}$ must traverse the vertex $s_1$.
        Let $\bs{Z}$ be the maximal sub-word describing the sub-trail starting at $s_1$ and containing the sub-word $u^{(s_1)}s_1^{(u)}$.
        This maximal sub-word is (by definition) associated to two vertices in $V(R)$.
        One of these vertices is $s_1$ and will label the other one $x$. 

        We will now argue that $x\neq u$ and hence that the sub-trail described by $\bs{Z}$ makes a true skip at $T_u$.
        We do this by contradiction in the following argument.\\

        {\bf \underline{Details of step 3.2}}

            For this part of the argument assume that $x=u$.
            This means that $u$ is consecutive to both $s_1$ and $v$.
            This means there must be some other maximal sub-word $\bs{Z}'$ associated to $s_1$ and $u$. Note that this sub-word cannot contain the sub-word $u^{(s_1)}s_1^{(u)}$ as it is already contained in the maximal sub-words $\bs{Z}$ and $\bs{X}'$.\\

            Now consider the unique third vertex that is adjacent to $u$ in $R$ (the vertex adjacent to $u$ which is not $s_1$ or $s'_1$).
            Let us label this vertex $w$.

            Since $T_{s'_1}$ already contains a true skip (which implies $\bs{Z}'$ cannot connect to $u$ by making a true skip at $T_{s'_1}$) the sub-trail described by the maximal sub-word $\bs{Z}'$ must make a true skip at $w$.
            Now consider the maximal sub-word $\bs{Y}$ of $m(U)$ that that describes a sub-trail traversing the unused edge between $T_{s'_1}$ and $T_{u}$, i.e. $\bs{Y}$ contains the sub-word $u^{(s'_1)}{s'}_1^{(u)}$.
            This sub-trail must be associated to $s'_1$ (since $T_{s'_1}$ already contains a true skip) and can not not be associated to $u$ since $u$ is already consecutive with two vertices in $V(R)$.
            This means the sub-trail described by the maximal sub-word $\bs{Y}$ must make a true skip at the triangle subgraph $T_u$.
            Since the sub-word $u^{(s_1)} s_1^{(u)}$ is already contained in $\bs{Z}$ and $\bs{X}'$ the maximal sub-word $\bs{Y}$ must contain the sub-word $u^{(w)}w^{(u)}$.
            Since $T_w$ already contains a true skip this implies that $w$ must be associated to the maximal sub-word $\bs{Y}$ and hence that $w$ and $s'_1$ are consecutive.
            This means there must be a second maximal sub-word $\bs{Y}'$ associated to $w$ and $s'_1$.

            Since both edges connecting $T_u$ and $T_w$ have already been traversed by, the sub-trail described by the maximal sub-word $\bs{Y}'$ must make true skips on some triangle subgraphs $T_{\hat{s}_1},\ldots, T_{\hat{s}_k}$.

            \begin{figure}[H]
            \begin{subfigure}{0.48\textwidth}
                \includegraphics[width=0.95\textwidth]{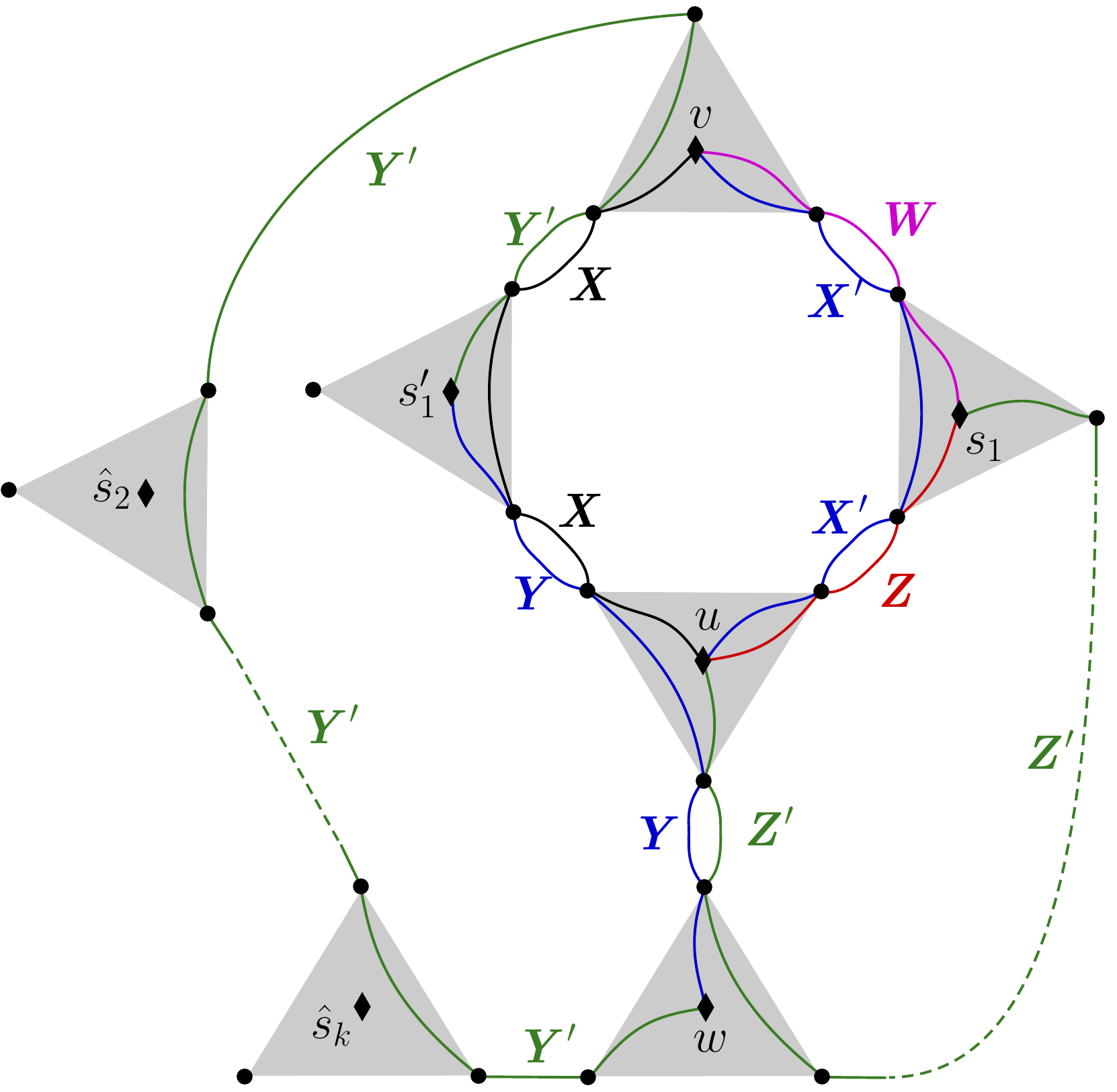}
                \caption{}\label{fig:11skip_v_is_s1}
            \end{subfigure}
            \begin{subfigure}{0.48\textwidth}
                \includegraphics[width=0.95\textwidth]{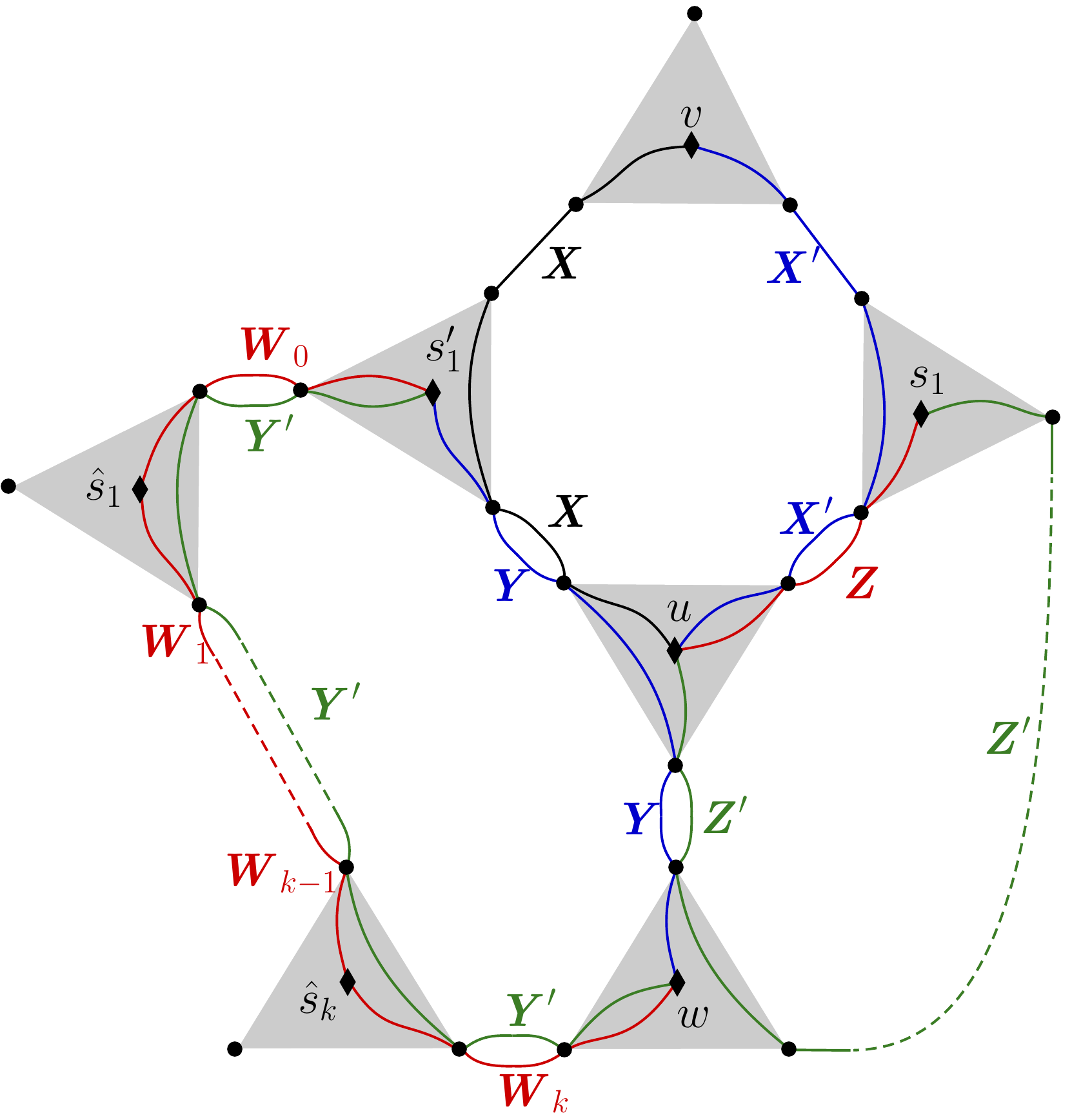}
                \caption{}\label{fig:11skip_v_is_not_s1}
            \end{subfigure} 
                \caption{
                This figure is a visual aid for Steps 3.1 and 3.2 of \cref{lem:SOET2HAMSOET}.
                A simplified visualization of a triangular-expansion $\Lambda(R)$ of a 3-regular graph $R$ is used.
                In the figure triangle subgraphs are shown in gray with only their outer vertices (circles) and vertices in $V(R)$ (diamonds) shown.
                Also shown are sub-trail of a SOET $U$ labeled by the maximal sub-words that describe them.
                These sub-trails always begin and end at a vertex in $V(R)$ but their path inside triangle subgraphs is not shown explicitly.
                Dashed lines are used to indicate that the sub-trails also traverse unspecified further parts of the graph $\Lambda(R)$.
                In Step 3.2 of \cref{lem:SOET2HAMSOET} it is argued that if the subtrail described by the maximal sub-word $\bs{X}'$ associated to vertices $u,v$ makes a true skip at the triangle subgraph $T_{s_1}$ then the sub-trail described by the maximal sub-word $\bs{Z}$ associated to $s_1$ and another vertex $x$ must make a true skip at the triangle subgraph $T_u$ (or equivalently that $x\neq u$).
                This is done by contradiction so it is assumed (as seen in both (a) and (b)) that $x=u$.
                This implies the existence of the maximal sub-word $\bs{Z}'$ which must make a true skip at $T_w$ where $w$ is the unique neighbor of $u$ s.t.
                $w\neq s_1, s_1'$.
                This in turn implies that $s_1'$ and $w$ must be consecutive and connected by a sub-trail described by the maximal sub-word labeled $\bs{Y}$ in both (a) and (b).
                This means another sub-trail connecting $s_1'$ and $w$ must exist, which described by a maximal sub-word $\bs{Y}'$.
                 This sub-word makes true skips at triangle subgraphs $\hat{s}_1, \ldots \hat{s}_k$.
                There are now $2$ options.
                Either, as shown in (a) we have that $v= \hat{s}_1$ or we have, as shown in (b) that $v\neq\hat{s}_1$.
                In the first case (a) \cref{lem:UNUSEDEDGE_2} is applied to conclude that $v$ must be consecutive to $s_1,s_1'$ and $u$ leading to a contradiction.
                In the second case (b) \cref{lem:UNUSEDEDGE_1} and \cref{lem:UNUSEDEDGE_2} are used to show that any SOET $U$ must traverse the vertices $s'_1,w, \hat{s}_k, \ldots ,\hat{s}_1, s_1'$ in order leading to a contradiction.
                }
                \label{fig:11skip_explain}
            \end{figure}
            There are now two possibilities. 
            Either (1) we have that $\hat{s}_1= v$ (see \cref{fig:11skip_v_is_s1}) or (2) that $\hat{s}_1 \neq v$ (see \cref{fig:11skip_v_is_not_s1}).
            We will now consider both of these cases:
            \begin{enumerate}
                \item If $\hat{s}_1 = v$ we can apply \cref{lem:UNUSEDEDGE_2} to the vertices $v$ and $s_1$ to conclude that $\hat{s}_1=v$ implies that $v$ and $s_1$ are consecutive.
                Call the maximal sub-word connecting them $\bs{W}$.
                We now have that the vertices $u,v$ and $s_1$ are pairwise consecutive to each other.
                Since $\{u,v,s_1\}$ is a strict subset of $V(R)$, $U$ cannot be a a \SOET\ which is a contradiction. 

                \item Now assume that $v\neq \hat{s}_1$. By \cref{lem:UNUSEDEDGE_2} we can now conclude that $w$ and $\hat{s}_k$ must be consecutive and that $s'_1$ and $\hat{s}_1\neq v$ must be consecutive. We have to perform one last construction to prove the lemma. This construction is visualized in \cref{fig:11skip_v_is_not_s1}.
                Call the maximal sub-words associated to these vertex pairs $w$ and $\hat{s}_k$ and  $s'_1$ and $\hat{s}_1$ $\bs{W}_k$ and $\bs{W}_0$ respectively. 
                We can again use \cref{lem:UNUSEDEDGE_1} to conclude that $\hat{s}_i$ is consecutive to $\hat{s}_{i+1}$ for all $i \in [k-1]$.
                Call the maximal sub-words associated to the vertices $\hat{s}_i$ and $\hat{s}_{i+1}$ $\bs{W}_i$.
                This implies that $U$ must traverse the vertices $s'_1,\hat{s}_1, \ldots, \hat{s}_k, w,s'_1, \hat{s}_1, \ldots, \hat{s}_k, w$ in order.
                Since by construction $\{s'_1,s_w, \hat{s}_1,\ldots, \hat{s_k}\} \neq V(R)$ we have that $U$ is not a valid SOET on $V(R)$ (this can be seen by noting that e.g. $T_{u}$ already contains a true skip, hence $u$ can not be part of the set).
            \end{enumerate}
            Hence in both cases we arrive at a contradiction. This means by contradiction that $x\neq u$ and thus that the sub-trail described by $\bs{Z}$ makes a true skip at $T_u$. \\

        {\bf \underline{Details of step  3.3}}

        Now similarly to Step 3.1 consider the edge connecting $T_{s_1}$ and $T_v$. We can again argue that there must exist a maximal sub-word $\bs{Z}'$ associated to the vertex $s_1$ and some other vertex $x'$ that describes a sub-trail that traverses this edge. By the same argument as Step 3.2 we can conclude that $x'\neq v$ and thus that $\bs{Z}'$ describes a sub-trail making a true skip at $T_v$.\\

        We can make the same argument for the vertex $s'_1$ establishing the existence of maximal sub-words $\hat{Z},\hat{Z}'$ that describe sub-trails making true skips at $T_u$ and $T_v$ respectively.\\

        {\bf \underline{Details of step 3.4}}

        Note that the sub-trails described by $\bs{Z}$ and $\bs{\hat{Z}}$ make true skips at the triangle subgraph $T_u$. Since $T_u$, by \cref{lem:single_skip} can only contain a single true skip and since $\bs{Z},\bs{\hat{Z}}$ are maximal sub-words we must have that $\bs{Z}\sim \bs{\hat{Z}}$ and thus that the vertices $s_1$ and $s'_1$ are consecutive with respect to the SOET $U$.  However since we must also conclude that $\bs{Z}'\sim \bs{\hat{Z}'}$ we have that $s_1$ and $s'_1$ are consecutive and have two maximal sub-words associated to them. Since there are no further constraints on $U$ imposed by the the fact that the sub-words $\bs{X},\bs{X'}$ associated to the vertices $v,u$ make true skips. Therefore SOETs with this type of behavior are in fact allowed. These SOETs can also be found explicitly, as can be sen in \cref{fig:11skip}. If a SOET $U$ has this type of behavior ($u$ and $v$ are not adjacent in $R$ but are consecutive in the SOET $U$ on $\Lambda(R)$) we say that the SOET $U$ has a \emph{valid} $11$-skip. Next we show that if a SOET $U$ has a valid $11$-skip, and thus is not a HAMSOET, it can always be turned into a HAMSOET by applying a fixed set of local complementation. 

    \end{addmargin}

    {\bf \underline{Details of step 4}}\\

    We now show that a \SOET~with valid $11$-skips can be turned into a \HAMSOET.
    The two possibilities for a \SOET~with a valid $11$-skip are shown in \cref{fig:11skip}. Note that these possibilities have the same `local' structure, the only difference is how the rest of the \SOET~$U$ is connected to the valid $11$-skip.  
    We first show the procedure that should be applied if the 11-skip is of the form in \cref{fig:11skip_1}.
    \begin{figure}[H]
        \centering
        \begin{subfigure}{0.4\textwidth}
            \includegraphics[width=1\textwidth]{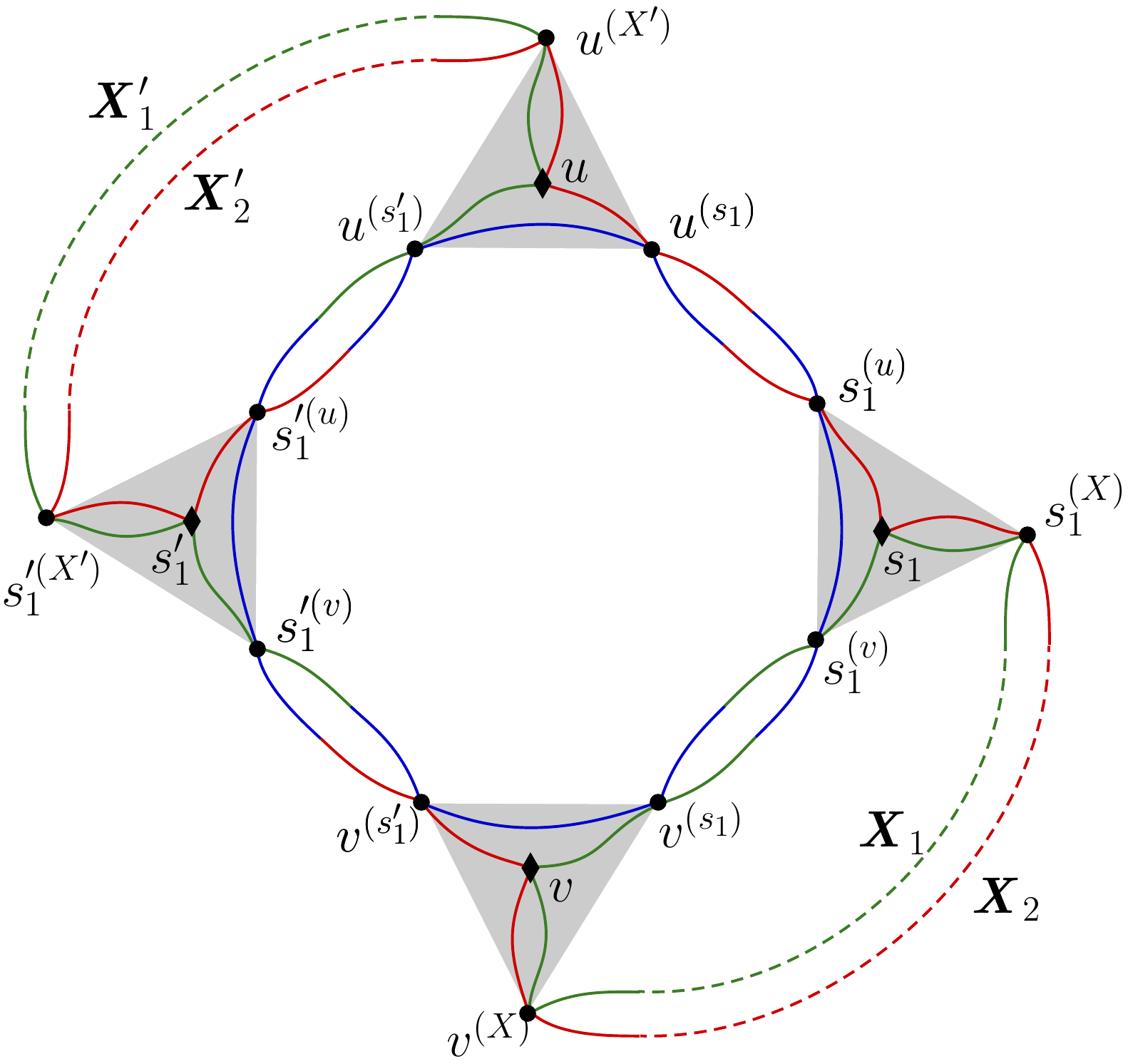}
            \caption{}
            \label{fig:11skip_1}
        \end{subfigure}
        \hspace{1cm}
        \begin{subfigure}{0.4\textwidth}
            \includegraphics[width=1\textwidth]{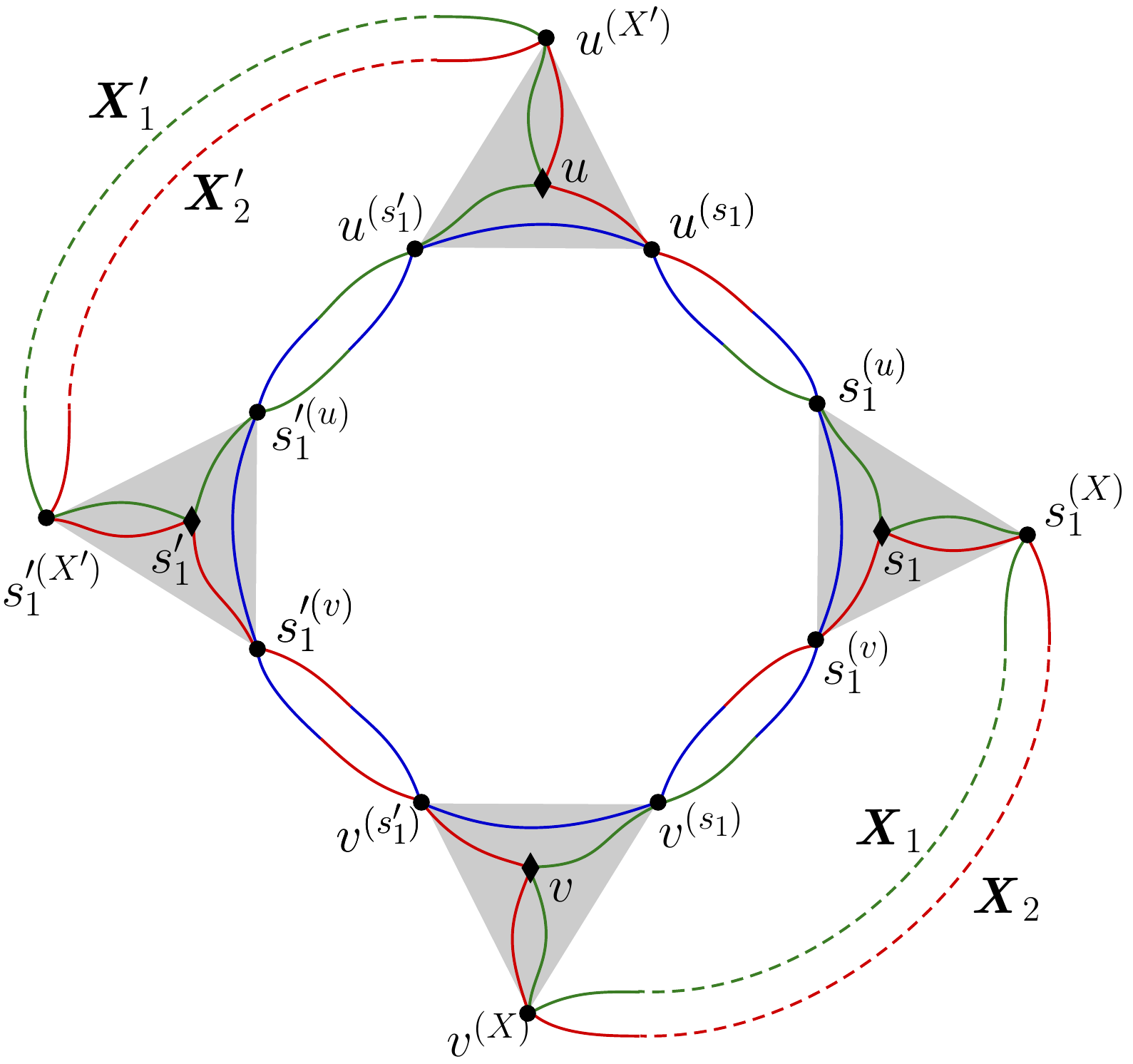}
            \caption{}
            \label{fig:11skip_2}
        \end{subfigure}
        \caption{This figure is a visual aid for Step 4 of \cref{lem:SOET2HAMSOET}.
            A simplified visualization of a triangular-expansion $\Lambda(R)$ of a 3-regular graph $R$ is used.
            In the figure triangle subgraphs are shown in gray with only their outer vertices (circles) and the vertices in $V(R)$ (diamonds) shown.
            Also shown are sub-trail of a SOET $U$ labeled by the maximal sub-words that describe them.
            These sub-trails always begin and end at a vertex in $V(R)$ but their path inside triangle subgraphs is not shown explicitly.
            Dashed lines are used to indicate that the sub-trails also traverse unspecified further parts of the graph $\Lambda(R)$.
            \Cref{fig:11skip_1,fig:11skip_2} show the two valid 11-skips.
            The sub-word $\bs{X}_1$ describes some sub-trail of $U$ connecting $T_v$ and $T_{s_1}$.
            With a slight abuse of notation we denote the endpoints of this trail by $v^{(X)}$ and $s^{(X)}_1$.
            Similarly for $\bs{X}_2$, $\bs{X}'_1$ and $\bs{X}'_2$.
        }
        \label{fig:11skip}
    \end{figure}

    The SOET $U$ in \cref{fig:11skip_1} is of the form
    \begin{equation}
        m(U)=\bs{U}_1 v^{(s_1)} \bs{U}_2 v^{(s_1)} \bs{U}_3 s'^{(u)}_1 \bs{U}_4 s'^{(u)}_1 \bs{U}_5
    \end{equation}
    where
    \begin{align}
        \bs{U}_1 &= u u^{(s_1)} s^{(u)}_1 s^{(v)}_1 \\
        \bs{U}_2 &= v v^{(X)} \bs{X_1} s^{(X)}_1 s_1 s^{(v)}_1 \\
        \bs{U}_3 &= v^{(s'_1)} s'^{(v)}_1 s'_1 s'^{(X')}_1 \bs{X'_1} u^{(X')} u u^{(s'_1)} \\
        \bs{U}_4 &= s'^{(v)}_1 v^{(s'_1)} v v^{(X)} \bs{X}_2 s^{(X)}_1 s_1 s^{(u)}_1 u^{(s_1)} u^{(s'_1)} \\
        \bs{U}_5 &= s'_1 s'^{(X')}_1 \bs{X_2'} u^{(X')}
    \end{align}
    where $\bs{X_1}$ is the word associated to the sub-trail connecting $v^{(X)}$ and $s_1^{(X)}$ as seen in \cref{fig:11skip_1} and similarly for $\bs{X_1'},\bs{X_2},\bs{X_2'}$
    Since our goal is to make this a \HAMSOET~we need to have that pairs of vertices in $V({R})$ are only consecutive w.r.t. $U$ if they are adjacent in ${R}$.
    This can be done by applying $\bar{\tau}$-operations to $U$ at $v^{(s_1)}$ and $s'^{(u)}_1$.
    The Eulerian tour $U'$ after these operations will be described by
    \begin{equation}
        m(U')=m\left(\bar{\tau}_{\left(v^{(s_1)},s'^{(u)}_1\right)}(U)\right)=\bs{U}_1 v^{(s_1)} \widetilde{\bs{U}_2} v^{(s_1)} \bs{U}_3 s'^{(u)}_1 \widetilde{\bs{U}_4} s'^{(u)}_1 \bs{U}_5
    \end{equation}
    where the over-set tilde indicates the mirror-inverting of a sub-word.
    Note that neither $(u,v)$ or $(s_1,s'_1)$ are consecutive anymore, but instead $(u,s_1)$ and $(v,s'_1)$ are now consecutive w.r.t. $U'$.
    To make sure that this procedure works we need to check two things: (1) $U'$ is a SOET and (2) there are no additional consecutive pairs of vertices in $U'$ that are not adjacent in $R$.
    To do this, lets look at the order $U$ and $U'$ traverse the vertices in $V(R)$, i.e. we will look at $m(U)[V(R)]$ and $m(U')[V(R)]$.
    Since $U$ is a SOET we must have that $\bs{X}_1[V(R)]=\bs{X}_2[V(R)]$ and similarly $\bs{X'}_1[V(R)]=\bs{X}'_2[V(R)]$.
    Lets denote these words by $\bs{X}_V=\bs{X}_1[V(R)]$  and $\bs{X}'_V=\bs{X'}_1[V(R)]$.
    We then have that the double occurrence word of $m(U)$ induced by $V(R)$ is
    \begin{equation}
        m(U)[V(R)]=uv\bs{X}_Vs_1s'_1\bs{X}'_Vuv\bs{X}_Vs_1s'_1\bs{X}'_V
    \end{equation}
    and similarly for $U'$ we have
    \begin{equation}
        m(U')[V(R)]=us_1\widetilde{\bs{X}_V}vs'_1\bs{X}'_Vus_1\widetilde{\bs{X}_V}vs'_1\bs{X}'_V
    \end{equation}
    It is therefore clear that the Eulerian tour $U'$ is a \SOET.
    Furthermore the only consecutive pairs of vertices in $U'$ which where not consecutive in $U$ are $(u,s_1)$ and $(v,s'_1)$.
    Since $(u,s_1)$ and $(v,s'_1)$ are edges of $R$ we see that we can iteratively apply this procedure to any valid 11-skip as in \cref{fig:11skip_1} and turn the \SOET~into a \HAMSOET.
    Similarly the \SOET~in \cref{fig:11skip_2} can be turned into a \HAMSOET~by applying $\tau$-operations to the vertices $s^{(u)}_1$ and $v^{(s'_1)}$.
    One can explicitly check this by applying the operations to $U$ in \cref{fig:11skip_2} which is given by
    \begin{equation}
        m(U)=\bs{U_1} s^{(u)}_1 \bs{U_2} s^{(u)}_1 \bs{U_3} v^{(s'_1)} \bs{U_4} v^{(s'_1)} \bs{U_5}
    \end{equation}
    where
    \begin{align}
        \bs{U}_1 &= u u^{(s_1)} \\
        \bs{U}_2 &= s^{(v)}_1 v^{(s_1)} v v^{(X)} \bs{X}_1 s^{(X)}_1 s_1 \\
        \bs{U}_3 &= u^{(s_1)} u^{(s'_1)} s'^{(u)}_1 s'_1 s'^{(X')}_1 \bs{X}'_1 u^{(X')} u u^{(s'_1)} s'^{(u)}_1 s'^{(v)}_1 \\
        \bs{U}_4 &= v v^{(X)} \bs{X}_2 s^{(X)}_1 s_1 s^{(v)}_1 v^{(s_1)} \\
        \bs{U}_5 &= s'^{(v)}_1 s'_1 s'^{(X')}_1 \bs{X}'_2
    \end{align}
    with everything defined similarly to the case of \cref{fig:11skip_2}. Going through a similar argument as above we can show that we can also turn the \SOET~$U$ into a \HAMSOET. This completes the lemma.
\end{proof}

\begin{lem}\label{lem:UNUSEDEDGE_1}
    Let $R$ be a 3-regular graph and $\Lambda(R)$ be its triangular-expansion.
    Also let $u,v$ be adjacent vertices on $R$.
    Let $U$ be a \SOET ~on $\Lambda(R)$ with respect to $V(R)$.
    Let $\bs{X}$ be a maximal sub-word  of $m(U)$ not associated to $u$ and/or $v$ containing $u^{(v)}v^{(u)}$ and describing a sub-trail that makes true skips at $T_u$ and $T_v$.
    Then $u$ and $v$ are consecutive in $U$ and moreover $m(U)$ contains a sub-word of the form
    \begin{equation}
        u \bs{Z}_1 u^{(v)}v^{(u)}\bs{Z}_2 v,\qquad \bs{Z}_1\subset V(T_u),\,  \bs{Z}_2\subset V(T_v),
    \end{equation}
\end{lem}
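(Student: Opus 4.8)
The plan is to localise everything to the two triangle subgraphs $T_u,T_v$ and to the parallel edges between them. Since $u$ and $v$ are adjacent in $R$, $\Lambda(R)$ contains exactly two parallel edges between $T_u$ and $T_v$, both equal to $\{u^{(v)},v^{(u)}\}$; I will show that the trail described by $\bs{X}$ crosses exactly one of them, call it $e$, and name the other $e'$. First I would recall from the proof of \cref{lem:single_skip} that any Eulerian tour on $\Lambda(R)$ enters and leaves each triangle subgraph exactly three times, so restricting $U$ to the internal edges of $T_u$ decomposes it into three edge-disjoint sub-trails (chords) within $T_u$, exactly one of which (call it $t^u$) avoids the central vertex $u$, while the other two each pass through $u$ exactly once --- this last fact being precisely what the proof of \cref{lem:single_skip} establishes for a \SOET. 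The same holds at $T_v$, with $v$-avoiding chord $t^v$.

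Next I would identify which chords carry $e$ and $e'$. Because $\bs{X}$ is a maximal sub-word not associated to $u$ or $v$, all its letters avoid $V(R)$, so the part of its trail inside $T_u$ is a complete traversal of a $u$-avoiding chord --- hence of $t^u$ --- and likewise inside $T_v$ only $t^v$; since $\bs{X}$ contains $u^{(v)}v^{(u)}$ it crosses at least one copy of the $T_u$--$T_v$ edge. This is where the \emph{true}-skip hypotheses enter: if $\bs{X}$ crossed both copies, then $t^v$ would have both its boundary edges at $v^{(u)}$, so $\bs{X}$ would enter and leave $T_v$ only towards $T_u$ and its skip at $v$ would have both legs pointing at $u$, contradicting that it is a true skip. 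Hence $\bs{X}$ crosses exactly one copy $e$; and the same reasoning forces the visit of $U$ to $u^{(v)}$ that uses $e$ to use an internal edge of $T_u$ as its other edge (not $e'$), so $e$ is a boundary edge of $t^u$ incident at $u^{(v)}$ and the second visit to $u^{(v)}$ uses $e'$ together with the remaining internal edge of $T_u$ at $u^{(v)}$; symmetrically at $v^{(u)}$.

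Now I would trace the edge $e'$, which $U$ uses exactly once. On the $T_u$ side its endpoint $u^{(v)}$ lies --- via the visit just described --- on a chord of $T_u$ different from $t^u$, hence on one of the two chords through $u$; following that chord inward from $u^{(v)}$ it reaches $u$ without leaving $T_u$ and without repeating $u$, so $m(U)$ contains a sub-word $u\,\bs{Z}_1\,u^{(v)}$ with $\bs{Z}_1\subseteq V(T_u)\setminus\{u\}$. Symmetrically $m(U)$ contains $v^{(u)}\,\bs{Z}_2\,v$ with $\bs{Z}_2\subseteq V(T_v)\setminus\{v\}$. Stitching these two pieces along the single traversal of $e'$ gives, up to reflection, the sub-word $u\,\bs{Z}_1\,u^{(v)}v^{(u)}\,\bs{Z}_2\,v$ of $m(U)$. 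Since none of $\bs{Z}_1$, $\bs{Z}_2$, $u^{(v)}$, $v^{(u)}$ lies in $V(R)$ --- the only vertex of $V(R)$ inside $T_u$, resp. $T_v$, being $u$, resp. $v$ --- the vertices $u$ and $v$ are consecutive in $U$ by \cref{def:consecutive}, and the displayed sub-word has exactly the required form.

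The step I expect to be the main obstacle is the bookkeeping in the second paragraph: setting up the chord decomposition of $U$ at $T_u$ and $T_v$ carefully enough that statements like ``the chord incident to $e'$ at $u^{(v)}$'' are unambiguous, and invoking the true-skip hypothesis at exactly the right point, namely to exclude the degenerate configuration in which a single visit of $U$ to $u^{(v)}$ (or to $v^{(u)}$) consumes both parallel $T_u$--$T_v$ edges. Once that case is excluded, the remainder is a routine edge-counting argument at each triangle subgraph combined with \cref{lem:single_skip}.
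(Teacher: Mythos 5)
Your proof is correct and rests on the same two pillars as the paper's: the doubled edge between $T_u$ and $T_v$ is traversed exactly twice, one traversal being consumed by $\bs{X}$, and the uniqueness of the skip at each triangle subgraph (\cref{lem:single_skip}) forces the sub-trail carrying the second traversal to pass through $u$ on one side and $v$ on the other. The paper phrases this at the level of maximal sub-words (the second occurrence of $u^{(v)}v^{(u)}$ lies in a maximal sub-word that cannot make a skip at $T_u$ or $T_v$, hence must be associated to both), whereas you carry out the equivalent bookkeeping explicitly at the edge level via the chord decomposition; the content is the same.
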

\begin{proof}
    The situation described in the lemma is described graphically in~\cref{fig:lem_unusededge_1}.
    Because $U$ is a SOET the sub-word $v^{(u)}u^{(v)}$ must be contained exactly twice in $m(U)$.
    Note that at most one of these instances can be contained in the maximal sub-word $\bs{X}$.
    The other instance must be contained in a different maximal sub-word $Z$.
    This maximal sub-word will be associated to two vertices $w_1,w_2$.
    Note that since $v^{(u)}u^{(v)}\in Z$  and $v^{(u)}u^{(v)} \in X$ either we must have that $w_1=u$ or that the sub-trail described by the maximal sub-word $\bs{Z}$ makes a true skip at $T_u$.
    Since $T_u$ already contains a true skip (made by the sub-trail described by $\bs{X}$) we must have that $w_1=u$.
    We can make the same argument for the vertex $v$.
    This means the maximal sub-word $\bs{Z}$ must be associated to $u$ and $v$ and hence that $u,v$ must be consecutive in $U$ and moreover we have that  

    \begin{equation}
         \bs{Z} =  \bs{Z}_1 u^{(v)}v^{(u)}\bs{Z}_2 ,\qquad \bs{Z}_1\subset V(T_u)\setminus\{u\},\;  \bs{Z}_2\subset V(T_v)\setminus\{v\}.
    \end{equation}
\end{proof}

\begin{figure}[H]
            \centering
            \includegraphics[width=0.4\textwidth]{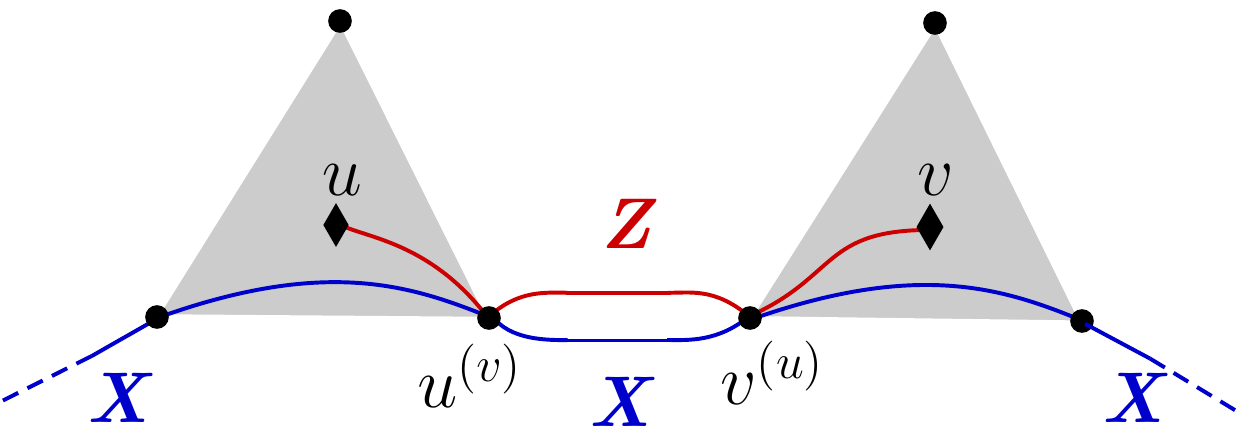}
            \caption{This figure is a graphical aid for \cref{lem:UNUSEDEDGE_1}. Shown are two triangle subgraphs $T_u$, $T_v$ (gray triangles) with outer vertices (circles) and the vertices $u$ and $u$ (diamonds) shown. The lemma starts from assuming the existence of the sub-trail described by the maximal sub-word $\bs{X}$, labeled as such in the figure. From this starting point the existence of the maximal sub-word $\bs{Z}$ associated to the vertices $u$ and $v$ is derived.}
            \label{fig:lem_unusededge_1}
\end{figure}

\begin{lem}\label{lem:UNUSEDEDGE_2}
    Let $R$ be a 3-regular graph and $\Lambda(R)$ be its triangular-expansion.
    Also let $u,v$ be adjacent vertices on $R$.
    Also take $x_1,x_2$ to be the vertices adjacent to $u$ in $R$ such that $x_1 \neq v,x_2 \neq v$.
    Let $U$ be a SOET on $\Lambda(R)$ with respect to $V(R)$.
    Let $\bs{Y}$ be a maximal sub-word of $m(U)$ not associated to $u$ and/or $v$ containing $u^{(x_1)}x_1^{(u)}$ and  $u^{(x_2)}x_2^{(u)}$ and describing a sub-trail making a true skip at $T_u$.
    Also let $\bs{X}$ be a maximal sub-word associated to $u$ and a vertex $x_3\neq v$ that describes a sub-trail making a true skip at $v$.
    Then $u,v$ are consecutive and moreover $m(U)$ contains a sub-word of the form
    \begin{equation}
        u \bs{Z}_1 u^{(v)}v^{(u)}\bs{Z}_2 v,\qquad \bs{Z}_1\subset V(T_u),\,  \bs{Z}_2\subset V(T_v),
    \end{equation}
\end{lem}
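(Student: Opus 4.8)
The plan is to mimic the proof of \cref{lem:UNUSEDEDGE_1}; the extra work is that here neither given sub-word $\bs{Y},\bs{X}$ itself traverses the doubled edge between $T_u$ and $T_v$, so I must first \emph{locate} the sub-words of $m(U)$ that do. Since $u$ and $v$ are adjacent in $R$, $\Lambda(R)$ contains the doubled edge $(u^{(v)},v^{(u)})$ joining $T_u$ and $T_v$, and as $U$ is Eulerian it traverses both copies; hence the word $u^{(v)}v^{(u)}$ (up to reflection) occurs exactly twice in $m(U)$. Write these occurrences as $\omega_1,\omega_2$ and let $\bs{W}_1,\bs{W}_2$ be the maximal sub-words of $m(U)$ containing them. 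The engine of the argument is \cref{lem:single_skip} and its proof: for a \SOET\ the three sub-trails of $U$ through $T_u$ are $\bs{Y}$'s true-skip sub-trail — which avoids $u$ and crosses only the $x_1$- and $x_2$-sides of $T_u$ — together with two sub-trails each passing through $u$ exactly once; symmetrically for $T_v$ with $\bs{X}$. In particular $\bs{Y}$ never traverses the $v$-side of $T_u$, so $\bs{Y}$ does not contain $u^{(v)}v^{(u)}$, and $\bs{W}_1,\bs{W}_2\neq\bs{Y}$.

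I would then show that $u$ is an endpoint of $\bs{W}_1$ and of $\bs{W}_2$, and that $\bs{W}_1\neq\bs{W}_2$. If $u$ were not an endpoint of $\bs{W}_i$, then $\bs{W}_i$ — being a maximal sub-word — would never visit $u$, so the stretch of its sub-trail inside $T_u$ containing the letter $u^{(v)}$ of $\omega_i$ would be a sub-trail through $T_u$ that avoids $u$ and crosses the $v$-side, which the first paragraph forbids. Similarly, if $\bs{W}_1=\bs{W}_2$ then this sub-word would traverse both copies of $(u^{(v)},v^{(u)})$, hence would visit $u^{(v)}$ twice; one of its stretches through $T_u$ would then avoid $u$ (as $u$ occurs in a maximal sub-word only at an endpoint) while still crossing the $v$-side — again impossible. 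Hence $u$ is an endpoint of the two distinct maximal sub-words $\bs{W}_1,\bs{W}_2$.

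Running the mirror argument at $T_v$ with $\bs{X}$: for each $i$, either $v$ is an endpoint of $\bs{W}_i$, or the stretch of $\bs{W}_i$ through $T_v$ containing $v^{(u)}$ avoids $v$, forcing that stretch (and hence $\bs{W}_i$) to be $\bs{X}$'s true skip, i.e. $\bs{W}_i=\bs{X}$. Since $\bs{W}_1\neq\bs{W}_2$, at most one of them equals $\bs{X}$, so at least one — call it $\bs{W}$ — has both $u$ and $v$ as its only two endpoints, i.e. is a maximal sub-word associated to $u$ and $v$. This already proves that $u$ and $v$ are consecutive in $U$. Writing $\bs{W}=\bs{Z}_1\,u^{(v)}v^{(u)}\,\bs{Z}_2$ around its copy of the crossing, $m(U)$ contains (up to reflection) $u\,\bs{Z}_1\,u^{(v)}v^{(u)}\,\bs{Z}_2\,v$. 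Finally the trail from $u$ to $u^{(v)}$ stays inside $T_u$: were it to leave, its last re-entry to $T_u$ before reaching $u^{(v)}$ would give a stretch through $T_u$ crossing the $v$-side but avoiding $u$ (it cannot revisit $u$, as $u$ occurs in $\bs{W}$ only at an endpoint) — once more contradicting the first paragraph. Thus $\bs{Z}_1\subset V(T_u)$, and symmetrically $\bs{Z}_2\subset V(T_v)$, which is the claimed form.

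I expect the main obstacle to be this bootstrapping: unlike in \cref{lem:UNUSEDEDGE_1}, where the given sub-word already carries the relevant crossing, here one must deduce from $\bs{Y}$ alone (whose skip is pinned to the $x_1$- and $x_2$-sides of $T_u$) that both copies of $u^{(v)}v^{(u)}$ lie on maximal sub-words ending at $u$, and then exploit $\bs{X}$ together with $\bs{W}_1\neq\bs{W}_2$ to single out one that also ends at $v$. Everything rests on the one observation repeatedly invoked above — that a sub-trail through $T_u$ which avoids the vertex $u$ can only be $\bs{Y}$'s true skip and therefore cannot cross the $v$-side of $T_u$ — which is exactly what \cref{lem:single_skip} delivers; the care needed is in arguing that a single maximal sub-word does not weave through $T_u$ (or $T_v$) in a way that would escape this dichotomy.
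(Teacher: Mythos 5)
Your proposal is correct and follows essentially the same route as the paper: both arguments locate the two occurrences of $u^{(v)}v^{(u)}$ forced by the doubled edge between $T_u$ and $T_v$, and use the uniqueness of the skip at $T_u$ and at $T_v$ (\cref{lem:single_skip}) to conclude that a maximal sub-word containing such an occurrence must be associated to $u$ and $v$. Your endpoint-by-endpoint bookkeeping — in particular allowing that one of the two crossings might lie in $\bs{X}$ and discarding that case via $\bs{W}_1\neq\bs{W}_2$ — is, if anything, slightly more careful than the paper's one-line assertion that neither crossing lies in $\bs{X}$.
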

\begin{proof}
    The situation described in the lemma is described graphically in~\cref{fig:lem_unusededge_2}.
    Because $U$ is a SOET the sub-word $v^{(u)}u^{(v)}$ must be contained exactly twice in $m(U)$.
    Note that at most one of these instances can be contained in the maximal sub-word $\bs{Y}$ and none can be contained in the maximal sub-word $\bs{X}$.
    This means there must be a maximal sub-word $\bs{Z}$ of $m(U)$ (different from $\bs{X}$ and $\bs{X}$) containing $v^{(u)}u^{(v)}$.
    This maximal sub-word must again be associated with two vertices $x,\hat{x}$.
    If these vertices are not $u,v$ then the sub-trail described by $\bs{Z}$ must make true skips at $T_u,T_v$ or both.
    Since both of these triangle subgraphs already contain true skips this is not possible and hence $\bs{Z}$ must be associated to $u$ and $v$ which means they are consecutive.
    Moreover, by construction of $\bs{Z}$ we have 
    \begin{equation}
         \bs{Z} =  \bs{Z}_1 u^{(v)}v^{(u)}\bs{Z}_2,\qquad \bs{Z}_1\subset V(T_u)\setminus\{u\},\;  \bs{Z}_2\subset V(T_v)\setminus\{v\}.
    \end{equation}
\end{proof}

\begin{figure}[H]
            \centering
            \includegraphics[width=0.4\textwidth]{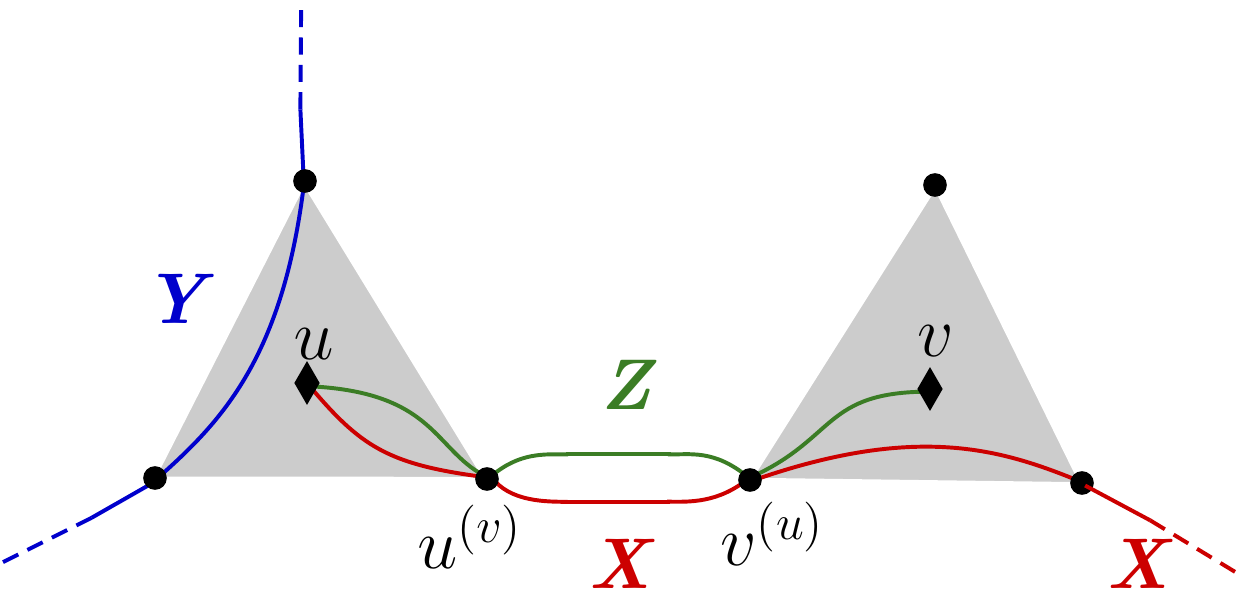}
            \caption{This figure is a graphical aid for \cref{lem:UNUSEDEDGE_2}. Shown are two triangle subgraphs $T_u$, $T_v$ (gray triangles) with outer vertices (circles) and the vertices $u$ and $v$ (diamonds) shown. The lemma starts from assuming the existence of the sub-trails described by the maximal sub-words $\bs{Y}$ and $\bs{X}$, labeled as such in the figure. From this starting point the existence of the maximal sub-word $\bs{Z}$ associated to the vertices $u$ and $v$ is derived. }
            \label{fig:lem_unusededge_2}
\end{figure}

\section{Algorithms}\label{eq:algorithms}
This section is concerned with providing algorithms for various versions of the decision problems \SVM\ with input $(G,V')$ and \VM\ with input $(G,G')$.
We begin by describing an efficient algorithm for \SVM\ whenever the input graph $G$ is distance-hereditary.
We prove that this algorithm always terminates and gives correct results.
We also analyze its runtime and show that it is $\mathcal{O}(\abs{V'}\abs{V(G)}^3)$.
Next we describe an algorithm for \SVM\ whenever the input graph $G$ is a circle graph.
We prove that this algorithm is fixed-parameter tractable in the size of the input vertex-set $V'$.
Finally we prove for the decision problem \VM\ with arbitrary connected input graphs $G$ and $G'$, that whenever $|V(G')|\leq3$ and $V(G')\subset V(G)$ we have that  $G'<G$ and provide an efficient algorithm for finding the sequence of local complementations and vertex-deletions that takes $G$ to $G'$.

\subsection{Star graph as vertex-minor of a distance-hereditary graph}\label{sec:DH_alg}
In this section we present an efficient algorithm for deciding whether a star graph on a given set of vertices $V'$ is a vertex-minor of a given distance-hereditary graph $G$.
Throughout this section we assume that the graph $G$ is distance-hereditary and that $V'$ is a subset of its vertices.
The algorithm presented in this section will return a sequence of vertices $\bs{v}$ in $V(G)$, such that $\tau_{\bs{v}}(G)[V']=S_{V'}$ if such a sequence exists and raise an error-flag otherwise, indicating that $S_{V'}$ is not a vertex-minor of $G$.
We first present the algorithm in \cref{sec:thealg}, analyze its runtime in \cref{sec:runtime} and prove that it is correct in \cref{sec:theproof}.
An implementation in SAGE~\cite{sage} of the algorithm can be found at~\cite{git}.

\subsubsection{The algorithm}\label{sec:thealg}
We first give a rough sketch of the idea behind the algorithm.
Remember that the task of the algorithm is to find a sequence of local complementations $\tau_{\bs{v}}$ such that the induced subgraph of $\tau_{\bs{v}}(G)$ on the vertices $V'$ is a star graph.


The algorithm starts by choosing a one vertex $c$ in $V'$ which will become the center of the star graph on $V'$. It then proceeds by different picking vertices $v \in V'$  and making them adjacent to $c$ by performing local complementations. After every vertex that is made adjacent the algorithm will check if the induced subgraph on the neighborhood of $c$ is a star graph. If it is not it will attempt to turn it into a star graph by local complementations. If it fails at doing so it will raise an error and if it succeeds it will pick another vertex in $V'$ and repeat the procedure until all vertices in $V'$ are in the neighborhood of $c$. We will often call this process of making a vertex $v$ adjacent to $c$ 'adding' the vertex $v$ to the star graph. To understand when the algorithm might fail we now zoom in on the situation where all but one vertex  of $V'$ has been added to the neighborhood of $c$. Let us call this vertex $f$. At this point in the algorithm the induced subgraph $G[V'\{f\}]$ is already a star graph (be previous successful iterations of this procedure).\\

The task is now to turn $G[V']$ into a star graph by making $f$ adjacent to the center $c$ of $G[V'\setminus \{f\}]$ but to no other vertex of $V'$, and at the same time not change any edges in $G[V'\setminus \{f\}]$.
This will be done in two steps, which are explained further below:
\begin{enumerate}
    \item Make $f$ and $c$ adjacent, without changing any edges in $G[V'\setminus\{f\}]$.
        The star graph $S_{V'}$ is then a subgraph of the graph, but not necessarily an induced subgraph, since $f$ could be also be adjacent to other vertices in $V'$ than $c$.
        We will call these edges between $f$ and vertices in $V'\setminus \{f\}$ \emph{bad edges}.
        This first step is the task of \cref{alg:Sn_vminor_2} below.
        Interestingly, this step always succeeds if the graph is connected, even if the graph is not distance-hereditary.
    \item Remove the bad edges, without changing any other edges between vertices in $V'$.
        The removal of the bad edges is the task of \cref{alg:Sn_vminor_1} below.
        \Cref{alg:Sn_vminor_1} tries to remove the bad edges by checking a few cases.
        Thus, one of the main results of this section is to prove that these cases provide a necessary condition for $S_{V'}$ being a vertex-minor of $G$.
\end{enumerate}

We will now describe the two main steps above of the algorithm in more detail.
Let's denote the vertices as above and furthermore the current leaves in $G[V'\setminus\{f\}]$ as $V'\setminus \{c,f\}=\{l_1,\dots,l_k\}$.

\noindent\underline{Details of step 1:}
\begin{addmargin}[1em]{0em}
    The vertices $c$ and $f$ are made adjacent by performing local complementations along the shortest path $P$ from $c$ and $f$, see \cref{fig:c_leaf1}.
    The operations along the path $P$ will in fact be either pivots, i.e. $\rho_{(u,v)}=\tau_v\circ\tau_u\circ\tau_v$, or single local complementations depending on the situation.
    The reason for this is to not remove edges between $c$ and the $l_i$'s.
    The details of the operations along the path $P$ are given in \cref{alg:Sn_vminor_2} together with the proof in section \cref{sec:theproof}.
    \begin{figure}[H]
        \centering
        \begin{subfigure}{0.5\textwidth}
            \includegraphics[width=\textwidth]{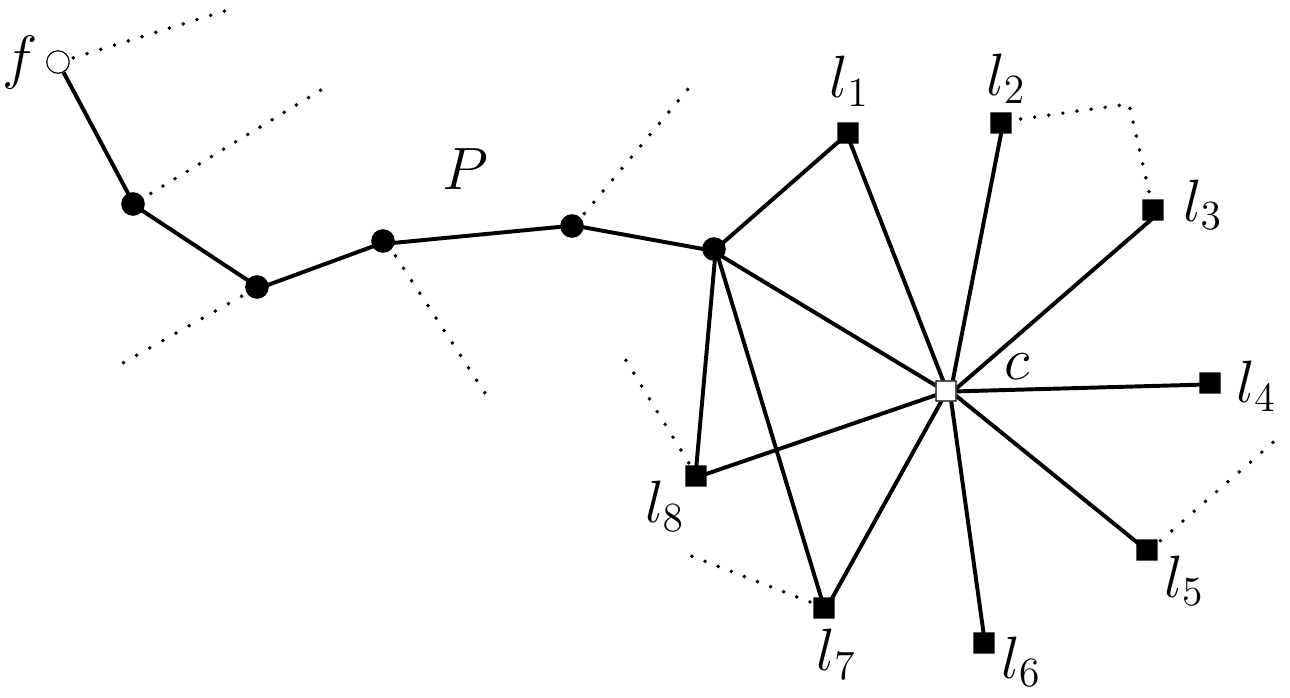}
            \caption{}
            \label{fig:c_leaf1}
        \end{subfigure}
        ~
        \begin{subfigure}{0.3\textwidth}
            \includegraphics[width=\textwidth]{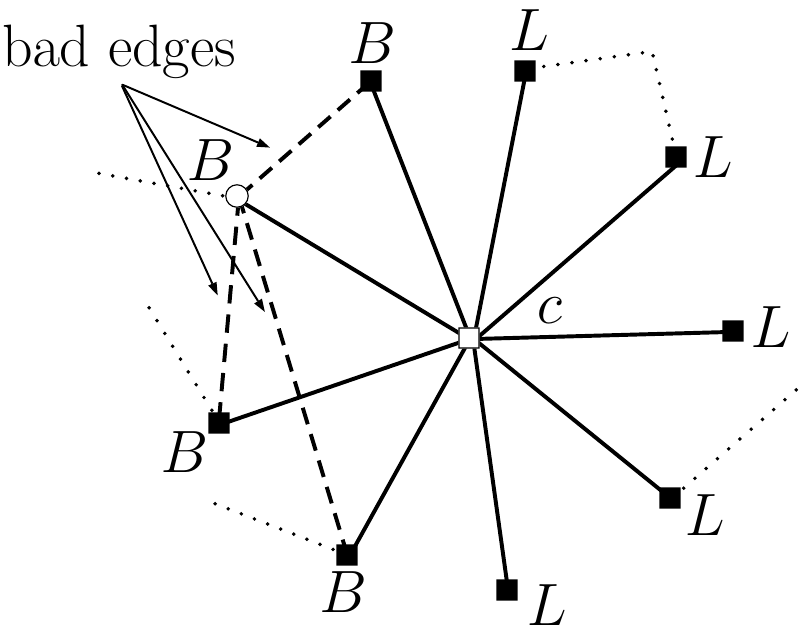}
            \caption{}
            \label{fig:c_leaf2}
        \end{subfigure}
        \caption{Visualization of how a vertex $f$ is added to the neighborhood of $c$. 
            The original graph is shown in \cref{fig:c_leaf1} where the vertices $\{c\}\cup \{l_i\}_i$ (squares) already form a star graph and the dotted lines are arbitrary edges to the rest of the graph.
    The new vertex $f$ (white circle) is made adjacent to the center $c$ (white square) by performing pivots and local complementations along the shortest path $P$ (black circles).
    After the pivots and local complementations the new vertex $f$ is adjacent to the center $c$ of the star graph but also to some leaves $B$ by \textit{bad edges} (dashed lines), see \cref{fig:c_leaf2}.}
        \label{fig:c_leaf}
    \end{figure}
\end{addmargin}
As mentioned above, by making $f$ and $c$ adjacent we might have also added bad edges between $f$ and some of the vertices $\{l_i\}_i$, see \cref{fig:c_leaf2}.
Let's denote the set of vertices which are incident to a bad edge by $B$ and the set of vertices not incident to a bad edge, apart from $c$, by $L=(V'\setminus\{c\})\setminus B$.
We call such a graph as the induced subgraph on  $V'$ a star-star graph, see \cref{def:SS}.

Next, we must remove the bad edges in order to turn $G[V']$ into a star graph.
Let $G$ now be the graph with $f$ and $c$ adjacent but with possibly some bad edges.

\noindent\underline{Details of step 2:}
\begin{addmargin}[1em]{0em}
    In this step we will remove the bad edges, if we can.
    A situation where bad edges can be removed, as we will show, is when there exists a vertex $u\notin V'$, which is adjacent to all vertices in $B$ but not to any vertex in $L$. The existence of such a vertex $u$ is thus a sufficient condition for the removal of bad edges. 
    When $G$ is a distance hereditary graph, it turns out that this condition is also necessary, that is if no such vertex $u$ exists, then the star graph on $V'$ is not a vertex-minor of $G$, and we can stop the algorithm. This is shown in detail in \cref{sec:theproof}.
    For this statement to hold $L$ cannot be empty, but this can always be achieved by performing a local complementation at $c$ first if needed, which is done in line 14 in \cref{alg:Sn_vminor_1}.
    Assume that there indeed exist such a vertex $u$, i.e.
    \begin{equation}
        (L\neq\emptyset)\;\land\;(u\notin V')\;\land\;(B\subseteq N_u)\;\land\;(L\cap N_u=\emptyset)
    \end{equation}
    see \cref{fig:good_com}.
    Now $u$ can be adjacent to $c$ or not.
    Let's consider these cases separately:
    \begin{itemize}
        \item Case 1 $u$ and $c$ are not adjacent:

            Remember that $f$ is the center of the induced star graph $G[B]$.
            If a local complementation is performed at $u$, the bad edges are removed but new ones will be created between the vertices in $B\setminus\{f\}$.
            These new bad edges will then form a complete graph on $B\setminus\{f\}$ and we call such a graph on the vertices $V'\setminus \{f\}$ a complete-star graph, see \cref{def:KS}.


            \begin{figure}[H]
                \centering
                \begin{subfigure}{0.3\textwidth}
                    \includegraphics[width=\textwidth]{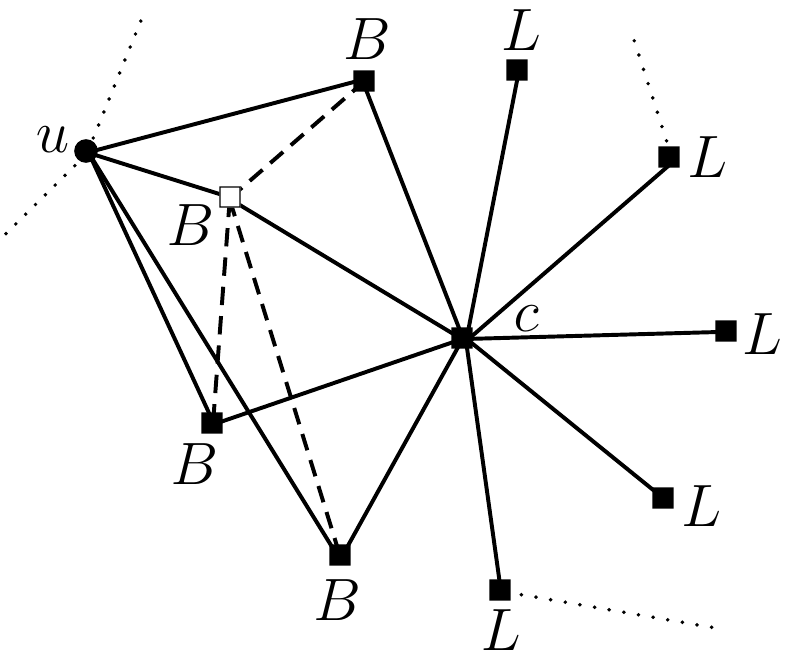}
                    \caption{}
                    \label{fig:good_com1}
                \end{subfigure}
                ~
                \begin{subfigure}{0.3\textwidth}
                    \includegraphics[width=\textwidth]{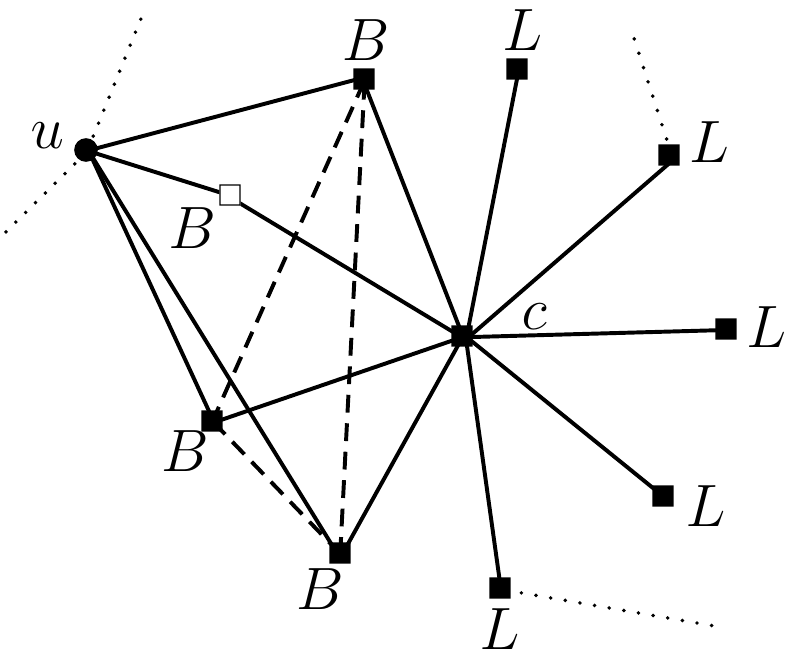}
                    \caption{}
                    \label{fig:good_com2}
                \end{subfigure}
                \caption{Visualization of how \textit{bad edges} are removed.
                    The original graph is shown in \cref{fig:good_com1} where the vertices $B\cup L\cup \{c\}$ (squares) are the desired vertices of the star graph, the dashed lines are the bad edges and vertex $u$ (black circle) is as in \cref{eq:cond}.
                    \Cref{fig:good_com2} shows the graph after performing a local complementation on $u$ which produces a new leaf (white square) and makes the bad edges form a complete graph. This complete graph of bad edges can then be removed by finding a vertex $u'$ that is adjacent to all vertices in $B$ (and to none in $L$) and performing a local complementation at $u'$.
                }
                \label{fig:good_com}
            \end{figure}

            Performing the same step again, i.e. doing a local complementation at another vertex adjacent to all vertices in $B\setminus\{f\}$, will remove all bad edges.
            We have then produced the star graph on $V'$ in two steps.

        \item Case 2: $u$ and $c$ are adjacent:

            In this case, if a local complementation is performed at $u$, some edges between $c$ and vertices in $L$ will be removed, which is not desired.
            We can solve this by finding another vertex $h$ adjacent to both $u$ and $c$ but not to any other vertex in $V'$, by which we can remove the edge $(u,c)$, see \cref{fig:h_helper}.
            In the following section we show that if there is no vertex $h$ of this form, the star graph is not a vertex-minor of $G$ and we can stop the algorithm.
    \end{itemize}
        To prove that the algorithm is correct we need to show that cases checked by \cref{alg:Sn_vminor_2} to remove the bad edges actually provides a necessary condition for $S_{V'}$ being a vertex-minor of $G$.
        To be precise, we will show that a necessary\footnote{This condition is not sufficient in itself, however \cref{thm:conds} provide a necessary and sufficient condition.} condition for the star graph on $V'$ being a vertex-minor of $G$ is
        \begin{equation}\label{eq:cond}
            \mathcal{P}(B,L,c)=\exists u\in V\setminus V':\qty(B\subseteq N_u\,\land\,L\cap N_u=\emptyset\,\land\,\qty((u,c)\notin E\,\lor\,\exists h:\qty(h\in N_u\cap N_c\setminus\bigcup_{x\in V'\setminus\{c\}}\!N_x))),
        \end{equation}
        where $V'=B\cup L\cup\{c\}$ and $L$ is assumed to be nonempty.
        It is important to note here that this condition is only valid if the graph is in the correct form, i.e. the induced subgraph on $V'$ form a star-star graph or a complete-star graph.

        \begin{figure}[H]
            \centering
            \includegraphics[width=0.3\textwidth]{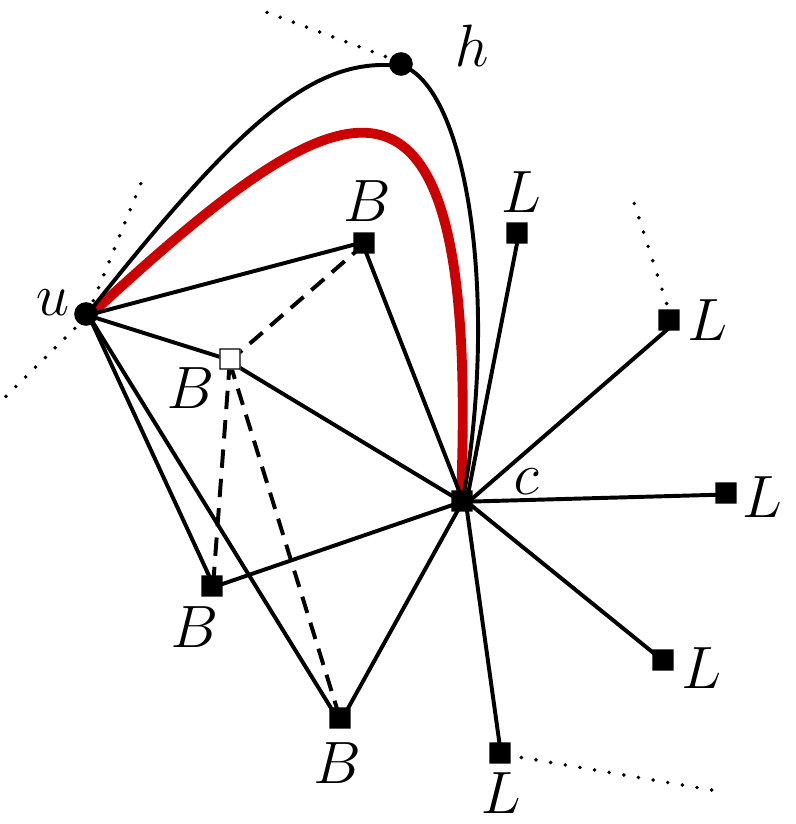}
            \caption{A visualization of the case where a vertex $u$ used to remove the bad edges is also adjacent to $c$ (red thick edge).
                The vertex $h$ can be used to remove the edge $(u,c)$ by applying a local complementation at $h$.
                Since $h$ is not adjacent to any other vertex in $V'$, no edges in the induced subgraph on $V'$ are changed by this local complementation.}
            \label{fig:h_helper}
        \end{figure}
\end{addmargin}

We formally state that \cref{eq:cond} is a necessary condition for the star graph on $V'$ being a vertex-minor of $G$ in \cref{thm:conds}, which we prove in \cref{sec:theproof}.
The theorem uses the notion of star-star and complete-star graphs which we formally define as:

\begin{mydef}[Star-star graph]\label{def:SS}
    A graph $G=(V,E)$ is called a star-star graph if there exists two subsets $B$ and $L$ and a vertex $c$, such that $\{B,L,\{c\}\}$ form a partition of $V$ and $\abs{B}>1$.
    Furthermore $N_l=\{c\}\;\forall l\in L$ and $c\in N_b\;\forall b\in B$.
    Finally $G[B]=S_B$.
    Such a graph is denoted $SS_{(B,L,c)}$.
\end{mydef}

\begin{mydef}[Complete-star graph]\label{def:KS}
    A graph $G=(V,E)$ is called a complete-star graph if there exists two subsets $B$ and $L$ and a vertex $c$, such that $\{B,L,\{c\}\}$ form a partition of $V$ and $\abs{B}>1$.
    Furthermore $N_l=\{c\},\;\forall l\in L$ and $c\in N_b,\;\forall b\in B$.
    Finally $G[B]=K_B$.
    Such a graph is denoted $KS_{(B,L,c)}$.
    Note that if $\abs{B}=2$, $G$ is also a star-star graph.
\end{mydef}


\begin{thm}\label{thm:conds}
   Let $G$ be a distance-hereditary graph on the vertices $V$ and let $V'$ be a subset of $V$.
   Furthermore, let $V'=B\cup L\cup\{c\}$ be a partition of $V'$ and let $S_{V'}$ be a star graph on the vertices $V'$.
   Then the following statements hold
   \begin{itemize}
       \item If $G[V']=SS_{(B,L,c)}$ is a star-star graph and $\abs{B}=2$, then
           \begin{equation}\label{eq:cond_small_SS}
              \mathcal{P}(B,L,c)\quad\Leftrightarrow\quad S_{V'}<G.
          \end{equation}
       \item If $G[V']=SS_{(B,L,c)}$ is a star-star graph then
           \begin{equation}\label{eq:cond_not_SS}
              \neg\mathcal{P}(B,L,c)\quad\Rightarrow\quad S_{V'}\nless G.
          \end{equation}
      \item If $G[V']=SS_{(B,L,c)}$ is a star-star graph and $f$ is the center of the star graph $G[B]$, then
           \begin{equation}\label{eq:cond_SS}
              \mathcal{P}(B,L,c)\quad\Leftrightarrow\quad KS_{(B\setminus\{f\},L\cup\{f\},c)}<G.
          \end{equation}
       \item If $G[V']=KS_{(B,L,c)}$ is a complete-star graph then
           \begin{equation}\label{eq:cond_KS}
              \mathcal{P}(B,L,c)\quad\Leftrightarrow\quad S_{V'}<G.
          \end{equation}
   \end{itemize}
\end{thm}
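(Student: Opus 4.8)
The plan is to split each of the four equivalences into a \emph{sufficiency} direction (producing a vertex-minor relation from $\mathcal{P}(B,L,c)$) and a \emph{necessity} direction (a vertex-minor relation forcing $\mathcal{P}(B,L,c)$); recall throughout that $\mathcal{P}(B,L,c)$ is only used when $L\neq\emptyset$, so $\abs{V'}\geq4$. The sufficiency directions --- the ``$\Rightarrow$'' of \cref{eq:cond_small_SS}, of \cref{eq:cond_SS}, and of \cref{eq:cond_KS} --- are handled by one explicit construction. Let $u\in V\setminus V'$ witness $\mathcal{P}(B,L,c)$, so $B\subseteq N_u$ and $L\cap N_u=\emptyset$. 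If $(u,c)\in E$, choose the guaranteed helper $h\in N_u\cap N_c\setminus\bigcup_{x\in V'\setminus\{c\}}N_x$; then $N_h\cap V'=\{c\}$, so by \cref{def:LC} applying $\tau_h$ changes no edge of $G[V']$, deletes the edge $(u,c)$, and preserves $B\subseteq N_u$ and $L\cap N_u=\emptyset$ (because $N_h\setminus\{u\}$ meets $V'$ only in $c$). In all cases we now have $u\notin V'$, $N_u\cap V'=B$, and $G[V']$ still a star-star (resp. complete-star) graph, and by LC-equivalence it suffices to examine $\tau_u(G)[V']$. By \cref{def:LC}, $\tau_u$ complements exactly $G[B]$ inside $G[V']$ and leaves the rest of $G[V']$ fixed. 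When $G[V']=KS_{(B,L,c)}$ this replaces $K_B$ by the empty graph on $B$, giving $S_{V'}$ and proving \cref{eq:cond_KS}; the same computation with $G[B]$ a single edge proves the ``$\Rightarrow$'' of \cref{eq:cond_small_SS}. When $G[V']=SS_{(B,L,c)}$ with $f$ the centre of $G[B]$, it replaces $S_B$ by its complement, namely $K_{B\setminus\{f\}}$ together with an isolated vertex $f$, which is precisely $KS_{(B\setminus\{f\},L\cup\{f\},c)}$, proving the ``$\Rightarrow$'' of \cref{eq:cond_SS}.

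For the necessity directions I would prove, by induction on $\abs{V(G)}$, the single statement: if $G$ is distance-hereditary and connected, $L\neq\emptyset$, $G[V']$ is a star-star graph $SS_{(B,L,c)}$ (resp. a complete-star graph $KS_{(B,L,c)}$), and a target graph $T$ on the vertex set $V'$ with $T\in\{S_{V'},\,KS_{(B\setminus\{f\},L\cup\{f\},c)}\}$ satisfies $T<G$, then $\mathcal{P}(B,L,c)$ holds; the ``$\Leftarrow$'' of \cref{eq:cond_small_SS}, the ``$\Leftarrow$'' of \cref{eq:cond_SS}, \cref{eq:cond_not_SS}, and the ``$\Leftarrow$'' of \cref{eq:cond_KS} are all instances. (We may first pass to the connected component of $G$ containing $V'$, since operations on the other components leave $V'$ untouched, and this property is preserved by the reductions below.) \textbf{Inductive step:} suppose $V\setminus V'$ contains a foliage vertex. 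If it contains an axil $v$, then its leaf $w$ must also lie outside $V'$ --- otherwise $w$ would be isolated in $G[V']$, impossible for a star-star or complete-star graph --- so $V\setminus V'$ contains a leaf as well; hence we may always pick a \emph{leaf or twin} $v\in V\setminus V'$. Then $G\setminus v$ is distance-hereditary and connected, $(G\setminus v)[V']=G[V']$ with the same partition, and $T<G\iff T<G\setminus v$ by \cref{thm:reductions}. A direct check from \cref{def:LC} shows that a witness $u$ for $\mathcal{P}(B,L,c)$ in $G\setminus v$ is still one in $G$ (passing to $G$ only adds back $v\notin V'$, which cannot affect membership of $B$, $L$, or the helper condition). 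So the induction hypothesis applied to $G\setminus v$ gives $\mathcal{P}(B,L,c)$ for $G$.

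\textbf{Base case:} $V\setminus V'$ contains no foliage vertex. If $V(G)=V'$, then $G=G[V']$ is itself a star-star or complete-star graph with $\abs{B}\geq2$ and $L\neq\emptyset$, hence has at least four vertices and is neither a star nor a complete graph; since the only graphs LC-equivalent to a star on $V'$ are the stars and the complete graph on $V'$, and (a short check shows) $KS_{(B\setminus\{f\},L\cup\{f\},c)}\not\sim_{\mathrm{LC}}SS_{(B,L,c)}$, no target $T$ is a vertex-minor of $G$, while $\mathcal{P}(B,L,c)$ is vacuously false --- so the implication holds. If $V(G)\supsetneq V'$, then by \cref{thm:T_size} $G$ has at least four foliage vertices, all of which lie in $V'$; here I would peel off the last operation in a Mulder-style construction of $G$ (a leaf-addition or twin-split), which must create a vertex of $V'$, and combine this with \cref{cor:CC} and the facts that the vertices of $B$ are mutual twins in $G[V']$ while those of $L$ are leaves there, to locate a vertex $u\in V\setminus V'$ realizing the neighbourhood pattern demanded by $\mathcal{P}(B,L,c)$ --- or, if no such $u$ exists, to contradict $T<G$ via \cref{thm:multi_vertex-minor}.

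\textbf{Main obstacle.} The delicate part is exactly this base case: proving that when $G$ has all of its foliage inside $V'$ but $\neg\mathcal{P}(B,L,c)$, the target $T$ cannot be produced by \emph{any} sequence of local complementations and vertex-deletions. The inductive shell and the sufficiency constructions are routine; the real work is the structural analysis of distance-hereditary graphs whose entire foliage lies in $V'$, together with the argument that the combinatorial obstruction encoded by $\neg\mathcal{P}(B,L,c)$ is invariant under all admissible operations of \cref{thm:multi_vertex-minor}.
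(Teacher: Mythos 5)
Your sufficiency construction ($\tau_h$ to clear the edge $(u,c)$, then $\tau_u$ to complement $G[B]$ inside $G[V']$) is correct and is essentially what the paper does in \cref{sec:theproof1}, and your inductive shell for the necessity directions (peel off leaves and twins of $G$ lying outside $V'$ via \cref{thm:reductions}, noting that a witness for $\mathcal{P}(B,L,c)$ in $G\setminus v$ lifts to $G$) is also sound. The problem is that everything you have actually proved merely reduces the four backward implications to your ``base case'', and that base case is precisely where the entire content of the theorem lives; you have only gestured at it. Concretely, you still owe the claim: if $G$ is connected and distance-hereditary, its whole foliage $T(G)$ lies inside $V'$, $V(G)\supsetneq V'$, $G[V']$ is star-star or complete-star, and $\neg\mathcal{P}(B,L,c)$, then the target is not a vertex-minor of $G$. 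Your sketch does not close this. Being a twin or leaf of the induced graph $G[V']$ says nothing about membership in $T(G)$, which is defined relative to $G$; \cref{thm:T_size} only guarantees $\abs{T(G)}\geq 4$ and so yields no contradiction with $T(G)\subseteq V'$ once $\abs{V'}>4$; and ``contradict $T<G$ via \cref{thm:multi_vertex-minor}'' would require ruling out exponentially many measurement patterns with no invariant in hand.

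The paper closes exactly this gap by a different two-stage reduction, and you would need to reproduce something of comparable weight. It first shrinks the \emph{target set} rather than the graph: \cref{thm:algSS} proves $\neg\mathcal{P}(B',L',c')\Rightarrow\exists l\in L':\neg\mathcal{P}(\{b_1,b_2\},\{l\},c')$ for an edge $(b_1,b_2)$ of $G[B']$, and \cref{thm:algKS} runs an induction on $\abs{B'}$ for the complete-star case; both arguments go through the contrapositive and a lengthy catalogue of five-to-eight-vertex induced subgraphs that would violate distance-hereditarity, together with the combinatorial \cref{thm:LU}. Since $S_{\{b_1,b_2,l,c'\}}<S_{V'}$, this reduces all necessity claims to $\abs{V'}=4$. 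Only then does the paper run your graph-shrinking induction: \cref{thm:alg4} shows that for the four-vertex star-star graph, $\neg\mathcal{P}$ together with $\abs{V(G)}>4$ \emph{forces} a deletable leaf or twin outside $V'$ (using \cref{thm:T_size} and \cref{cor:CC}, which do bite when $\abs{V'}=4$), so one always reaches $V(G)=V'$, where $G[V']$ is visibly not LC-equivalent to the target. Your terminal case $V(G)=V'$ is fine (the non-LC-equivalence of $KS_{(B\setminus\{f\},L\cup\{f\},c)}$ and $SS_{(B,L,c)}$ can be checked, e.g., by a cut-rank computation), but the case $V(G)\supsetneq V'$ with $T(G)\subseteq V'$ and general $\abs{V'}$ is a genuine hole in your argument, and it is the hard part of the theorem.
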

\Cref{thm:conds} implicitly gives a necessary and sufficient condition for when $S_{V'}$ is a vertex-minor of $G$, if $G[V']$ is a star-star graph.
More precisely, if the induced subgraph on $V'$ is a star-star graph and $\mathcal{P}(B,L,c)$ is true then we know that local complementations can be performed to turn the induced subgraph on $V'\setminus \{f\}$ into a complete-star graph, see \cref{eq:cond_SS}.
Then, if $\mathcal{P}(B\setminus \{f\},L\cup\{f\},c)$ is again true, then a star graph can be created on $V'$ by performing further local complementations, see \cref{eq:cond_KS}.
If in any of these two steps, $\mathcal{P}(B,L,c)$ or $\mathcal{P}(B\setminus\{f\},L\cup\{f\},c)$ is false, then $S_{V'}$ is not a vertex-minor of $G$, see \cref{eq:cond_not_SS,eq:cond_KS}.
In \cref{sec:theproof} we prove these statements.



\begin{algorithm}[H]
    \caption{Producing $S_{V'}$ from a distance-hereditary graph $G$
            }
            \label{alg:Sn_vminor_1}
    \begin{algorithmic}[1]
        \State \textbf{INPUT}:\hspace{1.2em} A graph $G$ and a subset of vertices $V'\subseteq V(G)$.
        \State \textbf{OUTPUT}: A sequence $\bs{v}$ such that $\tau_{\bs{v}}(G)[V']=S_{V'}$,$\phantom{\textproc{ERROR}}$if $S_{V'}<G$.
        \State \phantom{\textbf{OUTPUT}:} $\textproc{ERROR}$,$\phantom{\text{A sequence $\bs{v}$ such that $\tau_{\bs{v}}(G)[V']=S_{V'}$}}$if $S_{V'}\nless G$.
        \State \hrulefill
        \State
        \If{$\abs{V'}=1$}
            \State \textbf{Return} $()$
            \State \textproc{QUIT}
        \EndIf
        \State Find a $\bs{v}$ such that $\tau_{\bs{v}}(G)$ contain the star graph on $V'$ as a subgraph by calling \cref{alg:Sn_vminor_2}.
        \State Let $c$ be a vertex in $V'$, adjacent to all other in $V'$ (except itself).
        \For{$i$ in $\{0,1\}$} \Comment{Two iterations are always needed if there is more than one bad edge}
            \State Let $B$ be the vertices incident to a bad edge. \Comment{These are the vertices in $\tau_m(G)[V'\setminus\{c\}]$ of degree 1 or higher}
            \State Let $L=V'\setminus (\{c\}\cup B)$.
            \If{$B=\emptyset$} \Comment{If already $S_{V'}$, only for $i=0$}
                \State \textbf{Return} $\bs{v}$
                \State \textproc{QUIT}
            \Else
                \If{$B=V'\setminus\{c\}$} \Comment{I.e. if $L=\emptyset$}
                    \State Set $\bs{v}=\bs{v}\Vert (c)$
                    \State \textproc{BREAK}
                \EndIf
                \State Let $U$ be the set $U=\{u\in V(G)\setminus V':B\subseteq N_u\land L\nsubseteq N_u$\} \Comment{Candidates for the $u$ in \cref{eq:cond}}
                \If{$U=\emptyset$}
                \State \textbf{Raise} \textproc{ERROR}($S_{V'}$ is not a vertex-minor of $G$) \Comment{Actually not needed, only for clarity}
                \EndIf
                \State Set $found$=\textproc{False}
                \For{$u$ \textbf{in} $U$}
                    \If{$(u,c)\notin E(\tau_{\bs{v}}(G))$}
                        \State Set $\bs{v}=\bs{v}\Vert (u)$
                        \State Set $found$=\textproc{True} \Comment{Found a $u$ satisfying \cref{eq:cond}}
                        \State \textproc{Break}
                    \ElsIf{$\exists h:\qty(h\in N_u\cap N_c\setminus\bigcup_{x\in V'\{u,c\}}N_x)$}
                        \State Set $\bs{v}=\bs{v}\Vert (h,u)$
                        \State Set $found$=\textproc{True} \Comment{Found a $u$ and $h$ satisfying \cref{eq:cond}}
                        \State \textproc{Break}
                    \EndIf
                \EndFor
                \If{$\neg found$} \Comment{I.e. condition \cref{eq:cond} is false}
                    \State \textbf{Raise} \textproc{ERROR}($S_{V'}$ is not a vertex-minor of $G$)
                \EndIf
            \EndIf
        \EndFor
        \State \textbf{Return} $\bs{v}$
        \State \textproc{QUIT}
    \end{algorithmic}
\end{algorithm}

\begin{algorithm}[H]
    \caption{Find a $\bs{v}$ such that $\tau_{\bs{v}}(G)$ contain the star graph on $V'$ as a subgraph}\label{alg:Sn_vminor_2}
    \begin{algorithmic}[1]
        \State \textbf{INPUT}:\hspace{1.2em} A graph $G$ and a subset of vertices $V'\subseteq V(G)$.
        \State \textbf{OUTPUT}: A sequence $\bs{v}$ such that $\tau_{\bs{v}}(G)[V']=SS_{(B,L,c)}$, where $(B,L,\{c\})$ is a partition of $V'$.
        \State \hrulefill
        \State
        \State Pick an arbitrary vertex from $V'$ and denote this $f$
        \State Find a $\bs{v}$ such that $\tau_{\bs{v}}(G)[V'\setminus\{f\}]=S_{V'}$ and denote the center $c$ by calling \cref{alg:Sn_vminor_1}
        \State Find a shortest path $P=(p_0=f,p_1,\dots,p_k,p_{k+1}=c)$ between $f$ and $c$.
        \For{$i$ in $(1,\dots,k)$}
            \If{$f$ is adjacent to any vertex in $V'\setminus\{c\}$ in the graph $\tau_{\bs{v}}(G)$}
                \State Pick an arbitrary vertex in $N_{f}(\tau_{\bs{v}}(G))\cap V'\setminus\{c\}$ and denote this $v$
                \State Set $\bs{v}=\bs{v}\Vert v$
            \Else
                \State Set $\bs{v}=\bs{v}\Vert (f,p_i,f)$
            \EndIf
        \EndFor
        \State \textbf{Return} $\bs{v}$
        \State \textproc{QUIT}
    \end{algorithmic}
\end{algorithm}

\subsubsection{Runtime of the algorithm}\label{sec:runtime}
The algorithm described in the previous section checks if a star graph with vertex set $V'$ is a vertex-minor of a distance-hereditary graph $G$.
Here we show that the runtime of this algorithm is $\mathcal{O}(\abs{V'}\abs{V(G)}^3)$.
We will represent subsets of a base-set as unsorted binary lists\footnote{It is possible to represent the sets in different ways, by for example (un)sorted lists containing the vertices as entries. However most reasonable data structures will not affect the total runtime of the algorithm but can reduce the memory used.}, where $1$ indicates that an element in the base-set is in the represented set and $0$ that an element in the base-set is not in the represented set.
This will be the case both for sets of vertices and sets of edges.
The base-set for sets of vertices will be the set of vertices $V(G)$ of the input graph $G$ and the base-set for edges-sets will be $V(G)\times V(G)$.
Thus, we assume that the input graph $G$ is given as an unsorted binary list, of length $\abs{V(G)}^2$, indicating which edges are in $E(G)$.
This allows us to check if an edge $(u,v)$ is in the graph or not in constant time.
Furthermore, we assume that the input-set $V'$ is also represented as an unsorted binary list, of length $\abs{V(G)}$, indicating which of the vertices of $G$ are in $V'$.
We also assume that the size of $\abs{V'}$ is given together with its representation, which allows us to faster create representations of subsets of $V'$.

Sets used internally by the algorithm ($B$, $L$ and $U$) will also be represented as unsorted binary lists together with the size of the sets.
The sizes of the sets will be updated accordingly whenever an element is added.
Note that $B$ and $L$ are subsets of $V'$ and will therefore be represented as unsorted binary lists, of length $\abs{V'}$, indicating which elements of $V'$ are in these sets.
However, $U$ is not a subsets of $V'$ and will therefore be represented by an unsorted binary list of length $\abs{V}$.
Thus, given a vertex $v$, checking if $v$ is in a set of vertices $V$ can be done in constant time and adding a vertex to a set can be done in constant time (flipping the bit at the corresponding position).
Furthermore, iterating over elements in a set can be done in linear time with respect to the base-set, i.e. $\mathcal{O}(\abs{V(G)})$ for $V'$ and $U$ and $\mathcal{O}(\abs{V'})$ for $B$ and $L$.

As described, the full algorithm starts by calling \cref{alg:Sn_vminor_1}, which in turn calls \cref{alg:Sn_vminor_2}, which again calls \cref{alg:Sn_vminor_1} and so on.
We will see that the computation that dominates the runtime is updating the graph $\tau_{\bs{v}}(G)$ whenever $\bs{v}$ is concatenated, as in line 13 of \cref{alg:Sn_vminor_1} and line 6 of \cref{alg:Sn_vminor_2}.
We will assume that both \cref{alg:Sn_vminor_1} and \cref{alg:Sn_vminor_2} have access to a common graph which they can update to $\tau_{\bs{v}}(G)$, whenever $\bs{v}$ is concatenated, to prevent this from being done for the whole sequence $\bs{v}$ every time. Note that $\tau_{\bs{v}}(G)$ takes up the same amount of space regardless of $\bs{v}$. 
Each local complementation in the sequence can be performed in time $\mathcal{O}(\abs{V(G)}^2)$~\cite{Bouchet1991}.
Since \cref{alg:Sn_vminor_1} and \cref{alg:Sn_vminor_2} increase the length of $\bs{v}$ by $\mathcal{O}(1)$ and $\mathcal{O}(\abs{V(G)})$ respectively each call, the runtime to update the graph $\tau_{\bs{v}}(G)$ is $\mathcal{O}(\abs{V(G)}^3)$.
We will now show that all other parts of both \cref{alg:Sn_vminor_1} and \cref{alg:Sn_vminor_2} takes time less than $\mathcal{O}(\abs{V(G)}^3)$, which will imply that the total runtime is $\mathcal{O}(\abs{V'}\abs{V(G)}^3)$ since \cref{alg:Sn_vminor_2} is called $\mathcal{O}(\abs{V'})$ times\footnote{Note that \cref{alg:Sn_vminor_2} calls \cref{alg:Sn_vminor_1} with the set $V'\setminus \{f\}$, thereby reducing the size of $V'$ in each recursive call.}.
Let's start by going through the runtime of \cref{alg:Sn_vminor_1} line by line:
\begin{itemize}
    \item Line 6 (and 15, 24): Checking if there is only one element (or none) in $V'$ (in $B$, in $U$) can be done in constant time, since we keep track of the sizes of these sets.
    \item Line 11: Finding a vertex $c\in V'$ adjacent to all vertices in $V'$ (except itself) can be done in time $\mathcal{O}(\abs{V'}^2)$ by for each vertex $v$ in $V'$ checking if $\tau_{\bs{v}}(G)$ contains all edges in the set $\{(v,w):w\in V'\setminus\{v\}\}$. Let $c$ be the first such vertex $v$.
    \item Line 13 and 14: Constructing the sets $B$ and $L$ can be done in time $\mathcal{O}(\abs{V'}^2)$ by checking, for each vertex $v$ in $V'\setminus\{c\}$, if $\tau_{\bs{v}}(G)$ contains at least one edge from the set $\{(v,w):w\in V'\setminus\{c\}$. If this is the case, $v$ will be added to the array representing $B$, otherwise $v$ will be added to $L$.
    \item Line 19: Checking if $B=V'\setminus\{c\}$ can be done in time $\mathcal{O}(\abs{V'})$ by checking if all entries of the list representing $B$ are $1$, except at position $c$.
    \item Line 23: Constructing the set $U$ can be done in time $\mathcal{O}(\abs{V'}\abs{V(G)})$ by checking, for each $u$ in $V(G)\setminus V'$, that $\tau_{\bs{v}}(G)$ contains all edges in the set $\{(u,w):w\in B\}$ and no edges in the set $\{(u,w):w\in L\}$. If this is the case, $u$ will be added to the array representing $U$.
    \item Line 28-38: The body of this \emph{for}-loop will be executed $\mathcal{O}(\abs{V(G)})$ since there are at most $\mathcal{O}(\abs{V(G)})$ elements in $U$.
        \begin{itemize}
            \item Line 29: Checking if $(u,c)$ is an edge in $\tau_{\bs{v}}(G)$ can be done in constant time.
            \item Line 33: Finding a vertex $h$ which is adjacent to both $u$ and $c$ but to no other vertex in $V'$ can be done in time $\mathcal{O}(\abs{V'}\abs{V(G)})$ (or determining that there is none), by first finding the neighbors of $u$, i.e all the vertices $h$ such that $(u,h)$ is an edge in $\tau_{\bs{v}}(G)$ and then, for each neighbor $h$ of $u$, checking if $h$ is also adjacent to $c$ but to no other vertex in $V'$. This is done by checking if $(h,c)$ is an edge in $\tau_{\bs{v}}(G)$ and that no element of the set $\{(h,w):w\in V'\setminus\{c\}\}$ is.
        \end{itemize}
\end{itemize}
Thus, the total runtime of \cref{alg:Sn_vminor_1}, except for the recursive call to \cref{alg:Sn_vminor_2} in line 10, is $\mathcal{O}(\abs{V'}\abs{V(G)}^2)$ (from line 33 in the \emph{for}-loop.).

The runtime of each command in \cref{alg:Sn_vminor_2} is:
\begin{itemize}
    \item Line 5: Picking the vertex $f$ can be done in constant time (pick the first entry).
    \item Line 7: Finding a shortest path between $f$ and $c$ can be done in time $\mathcal{O}(\abs{V(G)}^2)$ by using Dijkstra's algorithm~\cite{Dijkstra1959}.
    \item Line 8-15: The body of this \emph{for}-loop will be executed $\mathcal{O}(\abs{V(G)})$ since the shortest path $P$ is necessarily shorter than the number of vertices in $\tau_{\bs{v}}(G)$.
        \begin{itemize}
            \item Line 9: Checking if $f$ is adjacent to any vertex in $V'\setminus\{c\}$ can be done in time $\mathcal{O}(\abs{V'})$ by checking if any of the edges $\{(f,w):w\in V'\setminus\{c\}$ are in $\tau_{\bs{v}}(G)$.
            \item Line 10: The column with entry $1$ in line 9 can be used for $v$ here and thus only adds a constant time to the runtime.
        \end{itemize}
\end{itemize}
Thus, the total runtime of \cref{alg:Sn_vminor_1}, except for the recursive call to \cref{alg:Sn_vminor_2} in line 6, is $\mathcal{O}(\abs{V(G)}^2)$ (from line 7 in the \emph{for}-loop).

To further substantiate the efficiency of the algorithm we give actual run-times for an implementation of the algorithm in \cref{fig:runtimes}.

\subsubsection{Proof that the algorithm is correct}\label{sec:theproof}
In this section we prove that the algorithm presented in the previous section works, i.e. it gives a sequence of local complementations ${\bs{v}}$ such that $\tau_{\bs{v}}(G)[V']=S_{V'}$, given a distance-hereditary graph $G$, if such a sequence exists.
It is relatively easy to show that the algorithm gives the desired results when it does not return an error, which we show in \cref{sec:theproof1}.
The hard part is to prove that, when the algorithm gives an error-flag it is in fact not possible to produce the star graph, i.e. the star graph is not a vertex-minor of G, which is done in \cref{sec:theproof2}.
The notation will be the same as in the previous section, $c$ is a vertex in $V'$ and is adjacent to the rest of the vertices in $V'$.
The vertices in $G[V'\setminus\{c\}]$ with degree greater than $0$, i.e. the vertices incident on some bad edge, are denoted as the set $B$.

\paragraph{Algorithm succeeds}\label{sec:theproof1}
In this section we show that if \cref{alg:Sn_vminor_1} returns a sequence ${\bs{v}}$, i.e. does not give an error-flag, then $\tau_{\bs{v}}(G)[V']=S_{V'}$.
We start by showing that \cref{alg:Sn_vminor_2} always succeeds and gives the desired result, assuming that \cref{alg:Sn_vminor_1} works.
After performing a pivot $\rho_{(v,u)}$, i.e. $\tau_u\circ\tau_v\circ\tau_u$, any neighbor of $v$ will become a neighbor of $u$, except $u$ itself.
This means that after the first pivot in line 9 in \cref{alg:Sn_vminor_2}, i.e. $\rho_{(p_1,f)}$, $f$ and $p_2$ will be adjacent.
We want to inductively show that this implies that after performing pivots along the whole path, $f$ and $c$ are adjacent.
To do this we only need to make sure that a pivot does not remove edges in the later part of the path.
More precisely, the pivot $\rho_{(p_i,f)}$ should not remove an edge $(p_j,p_{j+1})$ for $j>i$.
The fact that later edges in the path are not removed follows from the properties of the pivot and that the path is a shortest path.
Apart from edges incident on $u$ or $v$, a pivot $\rho_{(v,u)}$ can only flip edges in the set $N_v\times N_u$.
This shows that the pivot $\rho_{(p_i,f)}$ cannot remove an edge $(p_j,p_{j+1})$ since neither $p_j$ or $p_{j+1}$ is equal to $f$ or $p_i$ or is adjacent to $f$.
If $p_j$ or $p_{j+1}$ would be adjacent to $f$, then this would not be a shortest path.
We also need to make sure that we do not remove the edges from $E(S_{V'\setminus\{f\}})=\{(c,v):v\in V'\setminus\{c,f\}\}$, when doing pivots along the path.
By the same argument above we have that the pivot $\rho_{(p_i,f)}$ can only remove an edge in $E(S_{V'\setminus\{f\}})$ if $f$ is adjacent to a vertex in $V'\setminus\{c\}$.
This is the reason for the if-statement in line 5 in \cref{alg:Sn_vminor_2}, where we then just perform a local complementation on the corresponding vertex in $v\in V'\setminus\{c\}$ which will make $f$ and $c$ adjacent.
Performing the local complementation on such a vertex $v$ will not remove edges in $E(S_{V'\setminus\{f\}})$, since $v$ is a leaf in the induced subgraph on $V'$.
Note that there are only two cases where \textit{bad edges} are created.
If $f$ and $c$ are made adjacent by a local complementation on a vertex $v\in V'\setminus\{c\}$, as in line 7, the bad edge $(f,v)$ will be created.
On the other hand, if this is not the case but the last vertex $p_k$ is adjacent to some vertices $U\subseteq(V'\setminus\{c,f\})$, then the bad edges $\{(f,u)\}_{u\in U}$ will be created.
In both of these cases $\tau_{\bs{v}}(G)[V']$ will be a star-star graph, see \cref{def:SS}.
Note that $f$ can also be adjacent to some vertices in $V'\setminus\{f\}$, even before we perform the local complementations, but these edges will still form a star graph with $f$ as the center.
If one wants to minimize the number of local complementations and use local complementation instead of pivots along the path, this is in fact possible.
The only place where a pivot is needed instead of a local complementation is towards the end of the path, when $p_i$ is adjacent to a vertex in $V'\setminus\{c,f\}$ not on the path.

What is left to show is that if \cref{alg:Sn_vminor_1} succeeds and returns a $\bs{v}$, then $\tau_{\bs{v}}(G)=S_{V'}$.
This is easy to see, since if we perform local complementations on such vertices we are looking for, i.e. $u$ and possibly $h$ in \cref{eq:cond}, we will remove the bad edges and produce the star graph.
If $\abs{B}>2$ this has to be done twice, as captured by the loop over $i$ in \cref{alg:Sn_vminor_1}.
The reason for this is that, when doing a local complementation on such a $u$ we complement the induced subgraph $G[B]$.
Since $G[B]$ is a star graph, the induced subgraph after the local complementation will be a complete graph plus a single disconnected vertex which was the center of $G[B]$.
Performing the step once more will then complement the complete graph, without the disconnected vertex, and all bad edges are thus removed.

Note that we have nowhere in this section used the assumption that the graph is distance-hereditary.
This implies that if the algorithm succeeds we know that $\tau_{\bs{v}}(G)=S_{V'}$, independently of whether $G$ is distance-hereditary, in fact even independently of the rank-width of $G$.
Furthermore, since \cref{alg:Sn_vminor_2} always succeeds to make $\tau_{\bs{v}}(G)[V']$ connected and from the fact that any connected graph on two or three vertices is either a star graph or a complete graph, this implies that a star graph on any subset of size two or three is a vertex-minor of $G$, if the vertices are connected in $G$, which we make use of in \cref{sec:small_star}.
On the other hand, if the algorithm stops and gives an error-flag, then we do not know in general if $S_{V'}$ is a vertex-minor of $G$ or not.
In the next section we show that if the graph is distance-hereditary and the algorithm gives an error-flag we actually do know that $S_{V'}$ is not a vertex-minor of $G$.

\paragraph{Algorithm gives error}\label{sec:theproof2}
In this section we prove that if \cref{alg:Sn_vminor_1} gives an error-flag, i.e. if $\mathcal{P}(B,L,c)$ in \cref{eq:cond} is false, then the star graph is not a vertex-minor of the input graph.
At the steps in the algorithm where the error-flag can be raised, we know that the induced subgraph on $V'$ is either a star-star graph (\cref{def:SS}) or a complete-star (\cref{def:KS}) graph as shown in \cref{sec:theproof1}. The proof will follow the following sequence of steps.
\begin{enumerate}
    \item Prove for any distance-hereditary graph $G$ that if $\mathcal{P}(B,L,c)$ is false and $G[V']$ is a star-star graph (or complete-star graph) where $\abs{V'}=4$ then $S_{V'}$ is not a vertex-minor of $G$. This is done in \cref{thm:alg4}. 
    \item Use the case proven in step 1 to argue that if $\mathcal{P}(B,L,c)$ is false and $G[V']$ is a star-star graph where $\abs{V'}>4$ then $S_{V'}$ is not a vertex-minor of $G$. This is done in \cref{thm:algSS}.
    \item Use the case proven in step 1 to argue that if $\mathcal{P}(B,L,c)$ is false and $G[V']$ is a complete-star graph where $\abs{V'}>4$ then $S_{V'}$ is not a vertex-minor of $G$. This is done in \cref{thm:algKS}.
\end{enumerate}

\paragraph{Proof for a star-star (complete-star) graph of size $4$}

We will first show in \cref{thm:alg4} that if $\mathcal{P}(B,L,c)$ is false and $\abs{V'}=4$, then $S_{V'}$ is not a vertex-minor of $G$.
This will then allows us to prove the statement for the cases where $\abs{V'}\geq 4$.

\begin{thm}\label{thm:alg4}
   Let's assume that $G$ is a distance-hereditary graph with the following induced subgraph (which is both a star-star and a complete-star-graph)
   \begin{equation}
       G[V'=\{1,2,3,4\}]= \raisebox{-0.03\textwidth}{\includegraphics[width=0.1\textwidth]{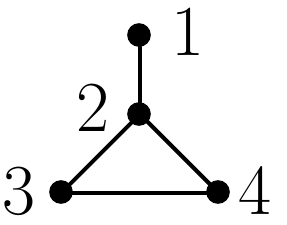}}.
   \end{equation}
   Furthermore assume that $\mathcal{P}(B,L,c)$ in \cref{eq:cond} is false, where $B=\{3,4\}$, $L=\{1\}$ and $c=2$.
   Then $S_{V'}\nless G$.
\end{thm}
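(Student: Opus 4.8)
The plan is to argue by contradiction, using strong induction on $\abs{V(G)}$: suppose $G$ is distance-hereditary, $G[V']$ is the tadpole above, $\mathcal{P}(B,L,c)$ of \cref{eq:cond} is false for $B=\{3,4\}$, $L=\{1\}$, $c=2$, and yet $S_{V'}<G$. Since deleting whole components of $G$ not meeting $V'$ affects neither $S_{V'}<G$ nor $\mathcal{P}(B,L,c)$ (the witness $u$ and helper $h$ of \cref{eq:cond} must lie in the component of $\{2,3,4\}$), we may assume $G$ is connected.

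The first move is to clean up $G$ outside $V'$ with \cref{thm:reductions}. Whenever $G$ has a leaf or a twin $v\notin V'$, the graph $G\setminus v$ is still distance-hereditary and connected, still has $(G\setminus v)[V']=G[V']$, still satisfies $S_{V'}<G\setminus v$, and still satisfies $\neg\mathcal{P}(B,L,c)$ --- the last point because deleting a vertex outside $V'$ changes no adjacencies among the remaining vertices and only removes candidates for the vertex $u$ and the helper $h$ in \cref{eq:cond}, so a witness of $\mathcal{P}$ in $G\setminus v$ would already witness $\mathcal{P}$ in $G$. Applying the induction hypothesis to $G\setminus v$ then yields $S_{V'}\nless G\setminus v$, a contradiction. (Only the leaf/twin clause of \cref{thm:reductions} is used, never the axil clause, which would alter $G[V']$.) Hence we may assume $G$ has no leaf and no twin outside $V'$. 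The base case $\abs{V(G)}=4$ is handled directly: the LC-orbit of the $4$-vertex tadpole is $\{\text{tadpole},\,P_4,\,C_4,\,K_4-e\}$, which contains neither the star $S_4$ nor $K_4$, so $S_{V'}\nless G$.

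It remains to treat $\abs{V(G)}\ge5$ with no leaf or twin of $G$ outside $V'$. A distance-hereditary graph on at least two vertices always has a leaf or a twin, and $\abs{T(G)}\ge4=\abs{V'}$ by \cref{thm:T_size}; since the only vertex of $V'$ that could be a leaf of $G$ is $1$ (the others already have degree at least $2$ in $G[V']$), the only possible axil of $G$ is $2$, so one gets $T(G)=V'$. Chasing the twin relation with \cref{lem:trans} then pins down the local picture: $1$ is forced to be a leaf of $G$ with $N_1=\{2\}$, and $3,4$ are forced to be true twins of $G$, with $N_3\setminus\{4\}=N_4\setminus\{3\}=\{2\}\cup M$ for some $M\subseteq V(G)\setminus V'$. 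Moreover $M\neq\emptyset$: if $M=\emptyset$ then $2$ is a cut-vertex of $G$ whose removal isolates $1$ and leaves $\{3,4\}$ as a separate component, so by \cref{cor:CC} every component of $G\setminus 2$ --- in particular the one carrying any vertex outside $V'$ --- must contain a vertex of $T(G)=V'$, which is impossible unless there are no vertices outside $V'$, returning us to the base case. So there is $u\in M$ with $u\sim2,3,4$ and $u\not\sim1$; if $u\not\sim2$ then $u$ already witnesses $\mathcal{P}(B,L,c)$, a contradiction, so $u\sim2$, and $\neg\mathcal{P}$ now forces that $u$ and $2$ have no common neighbour outside $\{3,4\}\cup M$. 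From this configuration one finishes by a finite case analysis on the remaining structure --- the internal edges among $\{3,4\}\cup M$ and how $M$ and the rest of $G$ attach to $2$ --- constrained by the absence of an outside foliage, by distance-hereditariness, and by $\neg\mathcal{P}$ holding for every such $u$: in each surviving case one either exhibits a vertex together with the required helper $h$ witnessing $\mathcal{P}(B,L,c)$ (contradiction), or shows the structure collapses --- via \cref{cor:CC} --- onto a graph whose LC-orbit restricted to $V'$ provably misses both $S_{V'}$ and $K_{V'}$, contradicting $S_{V'}<G$ through \cref{thm:multi_vertex-minor}.

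The principal obstacle is exactly this closing case analysis: the roles of $1,2,3,4$ in $G$ need not match their roles in $G[V']$, so one must enumerate carefully how a distance-hereditary graph with foliage precisely $\{1,2,3,4\}$ can wrap around the tadpole, and in every case verify both that $\neg\mathcal{P}$ obstructs the transformation and that any witness produced is genuinely of the form required by \cref{eq:cond} --- in particular that the helper $h$ can be found whenever the candidate vertex also sees the center $2$. A recurring technical point is tracking which local complementations at vertices outside $V'$ can, and crucially cannot, change the edges inside the triangle $\{2,3,4\}$, since this is what ultimately forbids the transformation in the degenerate configurations.
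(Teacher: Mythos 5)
Your overall architecture matches the paper's: reduce $G$ by repeatedly deleting leaves and twins outside $V'$ (via \cref{thm:reductions}, noting that such deletions preserve both $\neg\mathcal{P}(B,L,c)$ and the vertex-minor relation), and observe that the four-vertex base case fails because the tadpole's LC-orbit contains neither $S_{V'}$ nor $K_{V'}$. Your structural deductions for the inductive step are also sound as far as they go: with $T(G)=V'$ one can indeed force $1$ to be a leaf with $N_1=\{2\}$ and $3,4$ to be true twins, and $\neg\mathcal{P}$ then forces every common neighbour $u$ of $3,4$ outside $V'$ to be adjacent to $2$ with no admissible helper $h$.

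However, there is a genuine gap: the entire content of the theorem lives in the step you defer to ``a finite case analysis on the remaining structure,'' and that analysis is never carried out. The claim that needs proving is that a connected distance-hereditary graph on more than four vertices satisfying $\neg\mathcal{P}(B,L,c)$ always has a leaf or twin outside $V'$, and the hardest configuration --- exactly the one your deductions lead to, where $3$ and $4$ are true twins, every candidate $u$ is adjacent to $2$, and no helper exists --- does not succumb to either of the two outcomes you describe. In the paper this sub-case is resolved neither by exhibiting a witness for $\mathcal{P}$ nor by collapsing to a small graph; instead one assumes $T(G)=V'$, passes to $G\setminus 4$ (still distance-hereditary with foliage of size at least four by \cref{thm:T_size}), locates a vertex that enters the foliage only after deleting $4$, and shows via a twin-pair analysis that no such vertex can exist, contradicting \cref{eq:cond2}. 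Without this (or an equivalent) argument, your proof establishes the setup but not the conclusion; the description of how the contradiction would arise is a plan rather than a proof, and it is not evident that every surviving case yields a witness of the precise form demanded by \cref{eq:cond}, helper included.
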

\begin{proof}
    We will prove this by first showing that if $\mathcal{P}(B,L,c)$ is false and $\abs{V(G)}>4$, then there exist a removable leaf or twin.\footnote{As in section \cref{sec:reductions}, removable means a vertex not belonging to the target vertices $V'$}
    This then implies the we can actually delete removable leaves and twins, i.e. vertices in $T(G)\setminus V'$, until there is only the vertices in $V'$ left.
    Note that, if $\mathcal{P}(B,L,c)$ is false, then it is also false for any graph reached by deleting vertices in $V\setminus V'$.
    From \cref{thm:reductions}, i.e. the fact that deletion of removable twins or leafs does not change the property of whether a graph on a $V'$ is a vertex-minor and that $S_{V'}\neq_{LC}G[V']$, the theorem follows.
    More visually, $\mathcal{P}(B,L,c)$ is false if there exist no $u,h\in V\setminus V'$ such that
    \begin{equation}
        G[V'\cup\{u\}]=\raisebox{-0.03\textwidth}{\includegraphics[width=0.1\textwidth]{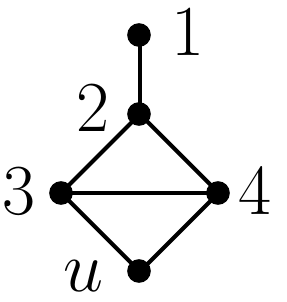}}\quad\lor\quad G[V'\cup\{u,h\}]=\raisebox{-0.03\textwidth}{\includegraphics[width=0.13\textwidth]{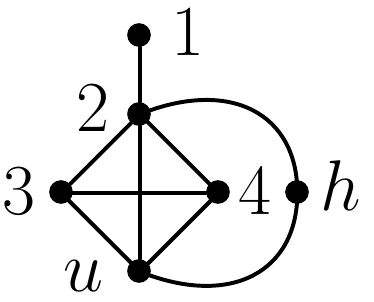}}.
    \end{equation}
    There are only two ways for $\mathcal{P}(B,L,c)$ to be false; either $(N_3\cap N_4)\setminus\{2\}=\emptyset$ or
    \begin{equation}\label{eq:cond2}
        (N_3\cap N_4)\setminus\{2\}\neq\emptyset\quad\land\quad\forall u\in(N_3\cap N_4)\setminus(N_1\cup\{2\}):\Big((u,2)\in E(G)\land (N_u\cap N_2)\setminus\big(N_1\cup N_3\cup N_4\big)=\emptyset\Big).
    \end{equation}
    We will consider these two cases separately and prove that if either is true, then $T(G)\setminus V'\neq\emptyset$.
    In most cases below we will do this by showing that one of the four vertices in $V'$ is not in $T(G)$ which shows that $T(G)\setminus V'\neq\emptyset$ since $T(G)\geq4$ by \cref{thm:T_size}.

    \noindent\underline{Case 1:}
    \begin{addmargin}[1em]{0em}
        To prove that if $(N_3\cap N_4)\setminus\{2\}=\emptyset$ and $\abs{V(G)}>4$, then there exists a removable leaf or twin, we consider the following cases.
        \begin{itemize}
            \item Assume that $\abs{(N_3\cup N_4)\setminus\{3,4\}}>1$.
                Since, by assumption $N_3\cap N_4=\{2\}$, $3$ and $4$ does not form a twin-pair.
                Also, neither $3$ nor $4$ is not a twin, since the twin-partner would have to be a common neighbor of $3$ and $4$.
                Furthermore, neither $3$ or $4$ is a leaf.
                Thus, the only way for $3\in T(G)$, is if $3$ is an axil, requiring some vertex not in $V'$ being a leaf.\footnote{The same for 4.}
                Finally, if $3\notin T(G)$, then there exist a vertex in $T(G)$ which is not in $V'$, since by \cref{thm:T_size} we know that $\abs{T(G)}\geq 4$.
            \item Assume that $(N_3\cup N_4)\setminus\{3,4\}=\{2\}$.
                \begin{itemize}
                    \item Assume that $\abs{N_1}>1$.
                        Then $1$ is not a leaf and $2$ is not an axil.
                        Furthermore, since nothing else is connected to $3$ and $4$, $2$ is not a twin.
                        Thus, the only way for $2\in T(G)$, is if $2$ is an axil, requiring some vertex not in $V'$ being a leaf.
                        Finally, if $2\notin T(G)$, then since $\abs{T(G)}\geq 4$ by \cref{thm:T_size}, there must exist a vertex in $T(G)$ which is not in $V'$.
                    \item Assume that $\abs{N_1}=1$.
                        Then $2$ is necessarily a cut-vertex and $G\setminus 2$ will contain a connected component with no vertices in $V'$, since $\abs{V(G)}>4$,.
                        Thus, there exist a vertex in $T(G)$ which is not in $V'$ by \cref{cor:CC}.
                \end{itemize}
        \end{itemize}
    \end{addmargin}

    \noindent\underline{Case 2:}
    \begin{addmargin}[1em]{0em}
        To prove that if \cref{eq:cond2} is true then there exists a removable leaf or twin, we consider the following cases.
        \begin{itemize}
            \item Assume that $\abs{N_1}>1$.
                Then if $2$ is an axil, the corresponding leaf cannot be in $V'$, since $1$ is not a leaf.
                Furthermore, since $2$ is not a leaf, if $2\in T(G)$ then $2$ has a twin-partner not in $V'$, which is then also in $T(G)$.
                On the other hand if $2\notin T(G)$, then there exist a vertex in $T(G)$ which is not in $V'$, by \cref{thm:T_size}.
            \item Assume that $N_1=\{2\}$.
                \begin{itemize}
                    \item Assume that $\abs{(N_3\cup N_4)\setminus\{3,4\}}>\abs{N_3\cap N_4}$.
                        In this case, $3$ and $4$ does not form a twin-pair.
                        Furthermore, neither $3$ or $4$ is a leaf.
                        Thus the only only way for $3(\text{or }4)\in T(G)$, is if $3(4)$ is an axil or a twin, requiring some vertex not in $V'$ being a leaf or a twin.
                        Finally if $3(4)\notin T(G)$, then there exist a vertex in $T(G)$ which is not in $V'$, since by \cref{thm:T_size} we know that $\abs{T(G)}\geq 4$.
                    \item Assume that $(N_3\cup N_4)\setminus\{3,4\}=N_3\cap N_4$.
                        We will for this case show that $\abs{T(G)\setminus V'}>0$ by assuming that $T(G)=V'$ and arriving at a contradiction.
                        Since this implies that $T(G)\neq V'$ and from the fact that $\abs{T(G)}\geq 4$, we know that $\abs{T(G)\setminus V'}>0$.
                    Consider the induced subgraph $G\setminus 4$ which is also distance-hereditary.\footnote{Induced subgraphs of a distance-hereditary graph are distance-hereditary.}
                        From \cref{thm:T_size} we know that $\abs{T(G\setminus 4)}\geq 4$, since $(N_3\cap N_4)\setminus \{2\}\neq\emptyset$ and therefore $\abs{G\setminus 4}\geq 4$.
                        Thus, there is a vertex $v\notin V'$ such that $v\in T(G\setminus 4)$ but $v\notin T(G)$.
                        Note that by the assumption from \cref{eq:cond2}, any neighbor of $4$, except $2$, is also a neighbor of both $2$ and $3$.
                        The removal of $4$ cannot therefore create a new leaf in $V\setminus V'$.
                        The only option left is if there are two vertices $v,v'\in V(G)\setminus\{4\}$, such that $v,v'$ form a twin-pair in $G\setminus4$ but not in $G$.
                        If $v$ and $v'$ are such vertices, it must be the case that $4$ is adjacent to exactly one of $v$ and $v'$.
                        Assume without loss of generality that $4$ is adjacent to $v'$ but not to $v$.
                        The neighborhoods of these vertices are then
                        \begin{equation}
                            N_v=N_{v'}\setminus\{4\}\quad\land\quad 4\in N_{v'}.
                        \end{equation}
                        Note that the vertices adjacent to $4$ are $(N_3\cap N_4)\cup\{3\}$.
                        Firstly, $v'$ cannot be in $N_3\cap N_4$, since $v$ is then necessarily adjacent to $3$ but not to $4$ which contradicts the assumption that $(N_3\cup N_4)\setminus\{3,4\}=N_3\cap N_4$.
                        Secondly, $v'$ cannot be $3$, since $v$ is then necessarily a neighbor of $2$ and all vertices in $N_3\cap N_4$, contradicting the second part of \cref{eq:cond2}.
                \end{itemize}
        \end{itemize}
    \end{addmargin}
\end{proof}

\paragraph{Proof for star-star graphs}

We are now able to prove the same statement as in \cref{thm:alg4} but for $\abs{V'}\geq4$.
The case when $G[V']$ is a star-star graph is given in \cref{thm:algSS}.

\begin{thm}\label{thm:algSS}
    Let's assume that $G$ is a distance-hereditary graph and $V'$ is a subset $V'\subseteq V(G)$ such that the induced subgraph $G[V']$ is a star-star graph $SS_{(B',L',c')}$.
    Furthermore assume that $\mathcal{P}(B',L',c')$ is false, then $S_{V'}\nless G$.
\end{thm}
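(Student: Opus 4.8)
The plan is to reduce the statement to its four-vertex instance, \cref{thm:alg4}, via the monotonicity of vertex-minors under vertex-deletion. First I would record the elementary observation that if $S_{V'}<G$ then $S_W<G$ for every $W\subseteq V'$ that contains the centre $c'$ of $G[V']=SS_{(B',L',c')}$: one appends the deletion of $V'\setminus W$ to any sequence of local complementations and vertex-deletions witnessing $S_{V'}<G$, and notes that the induced subgraph of a star on such a $W$ is again the star on $W$ with centre $c'$. Consequently it suffices to find a four-element subset $W\ni c'$ of $V'$ for which $S_W\nless G$, since the statement then follows by contraposition.

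I would then pick $W$ so that $G[W]$ is the tadpole graph of \cref{thm:alg4}. Let $f$ be the centre of the star $S_{B'}=G[B']$, so $f$ is adjacent in $G$ to every vertex of $B'\setminus\{f\}$. For any $b\in B'\setminus\{f\}$ and any $\ell\in L'$ (the case $L'=\emptyset$ being handled by an analogous argument; after a local complementation at $f$ it becomes a complete-star configuration), set $W=\{c',f,b,\ell\}$. Reading off the defining properties of $SS_{(B',L',c')}$ from \cref{def:SS} --- $c'$ is adjacent to all other vertices of $V'$, each vertex of $L'$ is adjacent only to $c'$, and $f,b$ are adjacent to one another and to $c'$ --- one gets $G[W]=SS_{(\{f,b\},\{\ell\},c')}$, which is exactly the four-vertex configuration of \cref{thm:alg4}. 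So by \cref{thm:alg4} it is enough to exhibit some choice of $b\in B'\setminus\{f\}$ and $\ell\in L'$ for which $\mathcal{P}(\{f,b\},\{\ell\},c')$ is false in $G$.

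This last step is the heart of the argument and the point I expect to be the main obstacle. It is not a formal consequence of $\neg\mathcal{P}(B',L',c')$: a witness $u$ of $\mathcal{P}(\{f,b\},\{\ell\},c')$ only needs to be adjacent to $f$ and $b$ and to avoid $\ell$, whereas a witness of $\mathcal{P}(B',L',c')$ must be adjacent to all of $B'$ and avoid all of $L'$, so the four-vertex condition is a priori far easier to satisfy. I would argue it by contradiction: assume that for every $b\in B'\setminus\{f\}$ and every $\ell\in L'$ there is a witness $u_{b,\ell}$, and show that these witnesses can be amalgamated into a single vertex that is adjacent to all of $B'$, disjoint from $L'$, and satisfies the $(u,c')$-clause --- i.e. a witness of $\mathcal{P}(B',L',c')$, contradicting the hypothesis. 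The distance-hereditary (rank-width one) hypothesis is doing essential work in this amalgamation, since it sharply limits how a vertex outside $V'$ can be joined to $V'$; without this restriction the problem is \NP-hard even on circle graphs. Concretely, I would first clean up $G$ by repeatedly deleting a removable leaf or twin of $G$ lying outside $V'$: by \cref{thm:reductions} each such deletion preserves whether $S_{V'}$ is a vertex-minor, it keeps $G$ distance-hereditary and keeps $G[V']=SS_{(B',L',c')}$, and it preserves $\neg\mathcal{P}(B',L',c')$ because deleting a vertex only removes candidate witnesses; one continues until $T(G)\subseteq V'$, at which point \cref{thm:T_size}, \cref{cor:CC} and \cref{lem:trans} pin down the foliage tightly enough to force the amalgamation. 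Finally, if the clean-up terminates at $V(G)=V'$ the statement is immediate, because $G=SS_{(B',L',c')}$ with $\abs{V'}\ge4$ and $\abs{B'}\ge2$ has a vertex of degree two and a vertex of degree at least three, hence is neither a star nor a complete graph and so is not LC-equivalent to $S_{V'}$, giving $S_{V'}\nless G$.
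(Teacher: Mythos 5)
Your overall skeleton is the same as the paper's: reduce to the four-vertex instance of \cref{thm:alg4} by contraposition, i.e.\ show that if every four-vertex sub-configuration $\{c',f,b,\ell\}$ admits a witness of $\mathcal{P}$, then these witnesses can be amalgamated into a single witness of $\mathcal{P}(B',L',c')$. You also correctly identify that this amalgamation is the heart of the matter and is not a formal consequence of the four-vertex conditions. The problem is that you then do not prove it. The paper's proof of exactly this implication (its \cref{eq:contrapos1}) is the bulk of \cref{thm:algSS}: for each $l\in L'$ one fixes a witness $u_l$ of $\mathcal{Q}(u_l,\{b_1,b_2\},\{l\},c')$ and then establishes, in order, that (i) $u_l\notin V'$, (ii) $B'\subseteq N_{u_l}$ for \emph{every} $l$ --- proved by exhibiting explicit five-vertex induced subgraphs that would violate distance-heredity if some $\tilde b\in B'$ were missed, with a separate sub-argument showing the helper vertex $h_l$ meets no vertex of $B'$, (iii) at least one $u_l$ satisfies $L'\cap N_{u_l}=\emptyset$ --- this is where the standalone combinatorial \cref{thm:LU} is applied to the bipartite-like graph on $L'\cup\{u_l\}_l$ to extract the forbidden configuration of \cref{eq:step3}, and (iv) the $(u,c')$/$h$-clause transfers from the four-vertex condition to the global one. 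None of this content appears in your proposal; "show that these witnesses can be amalgamated" is precisely the statement to be proved.

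Moreover, the concrete mechanism you do propose for the amalgamation --- repeatedly deleting removable leaves and twins via \cref{thm:reductions} until $T(G)\subseteq V'$, then invoking \cref{thm:T_size}, \cref{cor:CC} and \cref{lem:trans} --- is the strategy the paper uses for the base case \cref{thm:alg4}, and it does not extend as stated. For $\abs{V'}=4$ the clean-up terminates in a contradiction because $T(G)\subseteq V'$ and $\abs{T(G)}\ge 4$ force $T(G)=V'$, which the case analysis of \cref{thm:alg4} rules out. For $\abs{V'}>4$ the conclusion $T(G)\subseteq V'$ is perfectly consistent (the foliage bound of \cref{thm:T_size} saturates at four and $V'$ has room to spare), so no contradiction is "forced" and it is not explained how a global witness $u$ would emerge from the foliage lying inside $V'$. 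There is also a bookkeeping issue: your contradiction hypothesis supplies witnesses $u_{b,\ell}$ in the original graph $G$, but the clean-up deletes vertices outside $V'$ and may delete those very witnesses, so the hypothesis you want to exploit after clean-up is not the one you assumed. The correct and intended route is the direct amalgamation over the fixed graph $G$ using forbidden induced subgraphs and \cref{thm:LU}; as written, your argument has a genuine gap at its central step.
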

\begin{proof}
    Pick an edge in $(b_1,b_2)\in G[B']$, which exist since $\abs{B'}>1$ and $G[B']$ is a star graph.
    We will prove this by first showing that
    \begin{equation}\label{eq:Pimpl}
    \neg\mathcal{P}(B',L',c')\quad\Rightarrow\quad\exists l\in L':\qty(\neg\mathcal{P}(\{b_1,b_2\},\{l\},c')).\footnote{Remember that $L'\neq\emptyset$, by definition of a star-star graph.}
    \end{equation}
    Then from \cref{thm:alg4} we know that $S_{\{l,c',b_1,b_2\}}\nless G$ for some $l\in L'$ and the corollary follows, because if $S_{\{l,c',b_1,b_2\}}$ is not a vertex-minor of $G$ then neither is $S_{V'}$, since $S_{\{l,c',b_1,b_2\}}<S_{V'}$.
    To show that \cref{eq:Pimpl} is true, we instead show the contrapositive statement, i.e.
    \begin{equation}\label{eq:contrapos1}
        \forall l\in L':\qty(\mathcal{P}(\{b_1,b_2\},\{l\},c'))\quad\Rightarrow\quad\mathcal{P}(B',L',c').
    \end{equation}
    Let $\mathcal{Q}(u,B,L,c)$ be the expression on $\mathcal{P}(B,L,c)$ such that
    \begin{equation}
        \mathcal{P}(B,L,c)=\exists u\in V\setminus V':\mathcal{Q}(u,B,L,c)
    \end{equation}
    For each $l\in L'$, let $u_l$ be a vertex in $V\setminus \{l,c',b_1,b_2\}$ such that $\mathcal{Q}(u_l,\{b_1,b_2\},\{l\},c')$ is true.
    We will now show that for at least one of these $u_l$, $u_l\in V\setminus V'$ and $\mathcal{Q}(u_l,B',L',c')$ is true.
    To show that for at least one $u_l$, $u_l\in V\setminus V'$ and $\mathcal{Q}(u_l,B',L',c')$ is true we will go trough the following steps:
    \begin{enumerate}
        \item Show that
            \begin{equation}\label{eq:SSstep1}
                \forall l\in L':\qty(u_l\notin V')
            \end{equation}
        \item Show that
            \begin{equation}\label{eq:SSstep2}
                \forall l\in L':\qty(B'\subseteq N_{u_l})
            \end{equation}
        \item Show that
            \begin{equation}\label{eq:SSstep3}
                \exists l\in L':\qty(L'\cap N_{u_l}=\emptyset)
            \end{equation}
        \item Fix $l\in L'$ to be such that the corresponding expression in \cref{eq:SSstep3} is true.
        \item Show that
            \begin{equation}\label{eq:SSstep5}
                (u_l,c')\in E(G)\;\lor\;\exists h:\qty(h\in N_{u_l}\cap N_{c'}\setminus\bigcup_{x\in V'\setminus\{c'\}}N_x)
            \end{equation}
    \end{enumerate}
    If all the statements in the above steps are shown to be true we know that there exist a $u_l$ in $V\setminus V'$ such that $\mathcal{Q}(u_l,B',L',c')$ and therefore $\mathcal{P}(B',L',c')$ is true.
    It is important to note that even if we consider the statements in the above steps separately we know that there exist at least one $u_l$ that simultaneously satisfy all.
    To see this, note that we only use a existential quantifier in step 3 and in step 5 we consider a $u_l$ which satisfies the corresponding property in step 3.
    Let's now consider the steps 1 through 5 one by one.\footnote{Step 4 is trivial.}
    We will in these steps often claim that certain small graphs are not distance-hereditary.
    Verifying this can be done by hand or using our code supplied at~\cite{git}.

    \noindent\underline{Step 1:}
    \begin{addmargin}[1em]{0em}
        Firstly, since $\mathcal{Q}(u_l,\{b_1,b_2\},\{l\},c')$ is true we know that $u_l\neq c'$.
        Similarly, we know that $u_l\in(N_{b_1}\cap N_{b_2})\setminus\{c'\}$, thus $u_l$ is not in $B'$, since $G[B']$ is a star graph.
        Furthermore, since $b$ and $l$ are not adjacent $\forall b\in B'$ and $\forall l\in L'$, $u_l$ is not in $L'$.
        We therefore know that $u_l\notin V'$.
        \hfill$\diamond$
    \end{addmargin}

    \noindent\underline{Step 2:}
    \begin{addmargin}[1em]{0em}
        To see that $B'\subseteq N_{u_l}$, assume that first that this is not the case, i.e. $\exists \tilde{b}\in B':\tilde{b}\notin N_{u_l}$.
        We will now show that this contradicts the distance-hereditary property.
        Note that $\tilde{b}$ is not the center of the star graph $G[B]$, because either $b_1$ or $b_2$ is the center, since $(b_1,b_2)$ is an edge in $G[B]$.
        Let's assume without loss of generality that $b_1$ is the center of $G[B]$.
        Furthermore, let's consider the cases where $u_l$ and $c'$ are adjacent or not separately.
        \begin{itemize}
            \item Assume that $u_l$ and $c'$ are not adjacent and consider the following induced subgraph
            \begin{equation}\label{eq:step2_1}
                G[\{c',\tilde{b},b_1,b_2,u_l\}]=\raisebox{-0.05\textwidth}{\includegraphics[scale=0.5]{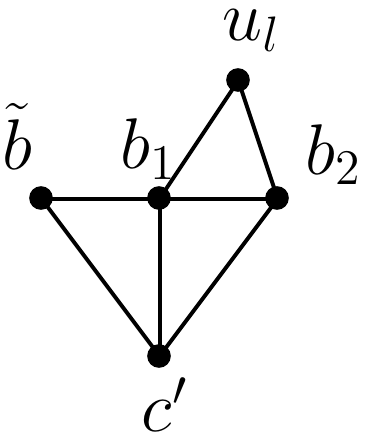}}.
            \end{equation}
            The graph in \cref{eq:step2_1} is not distance-hereditary, since the distance between for example $\tilde{b}$ and $u_l$ increase if $b_1$ is removed.
            This is therefore a contradiction to the assumption that $G$ is distance-hereditary.

            \item Assume that $u_l$ and $c'$ are adjacent.
            We then know that there exist a vertex $h_l$ which is adjacent to $u_l$ and $c'$, since $\mathcal{Q}(u_l,\{b_1,b_2\},\{l\},c')$ is true.
            First, let's show that $h_l$ and $\tilde{b}$ are not adjacent.
            In fact we will show that $h_l$ is not adjacent to any vertex in $B$, which will be useful in step 5.

            \begin{addmargin}[1em]{0em}
                Assume the opposite, i.e. $h_l$ is adjacent to some vertex $\hat{b}\in B'\setminus\{b_1,b_2\}$.
                We already know that $h_l$ is not adjacent to $b_1$ or $b_2$, since $\mathcal{Q}(u_l,\{b_1,b_2\},\{l\},c')$ is true.
                Consider therefore the following induced subgraph
                \begin{equation}\label{eq:step2_2}
                    G[\{c',\hat{b},b_1,b_2,h_l\}]=\raisebox{-0.04\textwidth}{\includegraphics[scale=0.5]{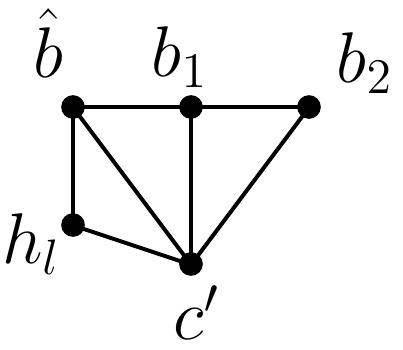}}
                \end{equation}
                which is not distance-hereditary and we therefore know that $N_{h_l}\cap B'=\emptyset$.
            \end{addmargin}


            Consider now on the other hand the following induced subgraph, with the knowledge that $h_l$ is not adjacent to $\tilde{b}$
            \begin{equation}\label{eq:step2_3}
                G[\{c',\tilde{b},b_1,u_l,h_l\}]=\raisebox{-0.06\textwidth}{\includegraphics[scale=0.5]{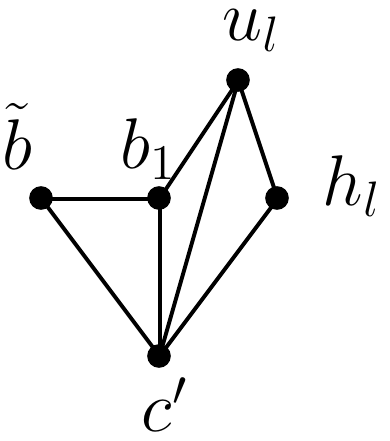}}
            \end{equation}
            which is also not distance-hereditary.
        \end{itemize}
        Since in all cases we arrived at a non-distance-hereditary graph we know that $B'\subseteq N_{u_l}$.
        \hfill$\diamond$
    \end{addmargin}

    \noindent\underline{Step 3:}
    \begin{addmargin}[1em]{0em}
        We show that at least for one of the $u_l$, $L'\cap N_{u_l}=\emptyset$.
        We will do this by contradiction, assume therefore that $\forall l\in L':\qty(L'\cap N_{u_l}\neq\emptyset)$.
        Since $\mathcal{Q}(u_l,\{b_1,b_2\},\{l\},c')$ is true, we know that $\{l\}\cap N_{u_l}=\emptyset$ for all $l\in L'$.
        Consider now the graph $G[L'\cup\{u_l\}_l]$.
        From \cref{thm:LU} we know that there exist $l_1,l_2\in L'$ and $u_{l_3},u_{l_4}\in V\setminus V'$, such that $u_{l_3}$ is adjacent to $l_1$ but not to $l_2$ and $u_{l_4}$ is adjacent to $l_2$ but not to $l_1$.\footnote{Note that for example $l_1$ and $l_3$ could be the same vertex, but not necessarily.}
        But this is in contradiction with that the graph is distance-hereditary.
        To see this, consider the induced subgraph
        \begin{equation}\label{eq:step3}
            G[\{c',b_1,l_1,l_2,u_{l_3},u_{l_4}\}]=\raisebox{-0.06\textwidth}{\includegraphics[scale=0.5]{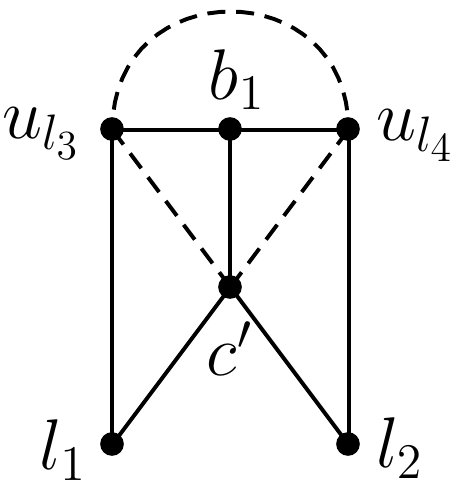}}
        \end{equation}
        which is not distance-hereditary, independently if the edges $(u_{l_3},u_{l_4})$, $(c',u_{l_3})$ and $(c',u_{l_4})$ are individually present or not.
        Since this contradicts the distance-hereditary property we know that $\exists l\in L':(L'\cap N_{u_l}=\emptyset)$.
        \hfill$\diamond$
    \end{addmargin}

    \noindent\underline{Step 5:}
    \begin{addmargin}[1em]{0em}
        Let's assume that $u_l$ is then a vertex such that $B\subseteq N_{u_l}$ and $L'\cap N_{u_l}=\emptyset$.
        If $u_l$ is not adjacent to $c'$, then clearly $\mathcal{Q}(u_l,B',L',c')$.
        On the other hand if $u_l$ and $c'$ are adjacent we know that there exist a $h_l$ in $N_{u_l}\cap N_{c'}\setminus\bigcup_{x\in \{l,b_1,b_2\}}N_x$.
        We thus need to show that $h_l$ is not adjacent to any vertex in $V'$, other than $c'$.
        Firstly, $h_l$ cannot be adjacent to a vertex in $L'$, since this would violate the distance-hereditary property.
        To see this, assume that $h_l$ is adjacent to $\tilde{l}\in L'$ and consider the following induced subgraph
        \begin{equation}\label{eq:step5}
            G[\{c',b_1,\tilde{l},u_l,h_l\}]=\raisebox{-0.06\textwidth}{\includegraphics[scale=0.5]{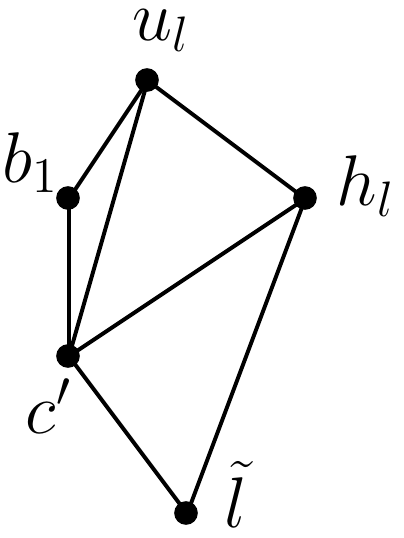}}
        \end{equation}
        which is not distance-hereditary.
        This is a contradiction with the distance-hereditary property and we therefore know that $N_{h_l}\cap L'=\emptyset$.
        As we already shown in step 2, $h_l$ is also not adjacent to any vertex in $B'$.
        Thus, $h_l$ is not adjacent to any vertex in $V'=B'\cup L'\cup \{c'\}$.
        \hfill$\diamond$
    \end{addmargin}
    We have therefore shown that \cref{eq:contrapos1} is true which implies that \cref{eq:Pimpl} is true.
    Finally, as we described in the beginning of the proof, this implies that if $\mathcal{P}(B',L',c')$ is false then $S_{V'}\nless G$.
\end{proof}
\newpage
\begin{thm}\label{thm:LU}
    Assume $G$ is a graph on the vertices $U\cup L$ such that $U\cap L=\emptyset$ and $U\neq\emptyset$.
    Furthermore, assume that for each $l$ in $L$, there is at least one vertex in $U$ not adjacent to $l$ and for each $u$ in $U$, there is at least one vertex in $L$ adjacent to $u$, i.e. $G$ satisfies the following expression
    \begin{equation}\label{eq:LUcond}
        \mathcal{R}(U,L)=\forall l\in L:\Big(\exists u\in U:u\notin N_l\Big)\;\land\;\forall u\in U:\Big(\exists l\in L:l\in N_u\Big)
    \end{equation}
    Then there exist two vertices $u_1$ and $u_2$ in $U$ and two vertices $l_1$ and $l_2$ in $L$ such that $u_1$ is adjacent to $l_1$ but not to $l_2$ and $u_2$ is adjacent to $l_2$ but not to $l_1$. In other words the induced subgraph is of the following form
    \begin{equation}\label{eq:LUind}
        G[\{u_1,u_2,l_1,l_2\}]=\raisebox{-0.03\textwidth}{\includegraphics[width=0.1\textwidth]{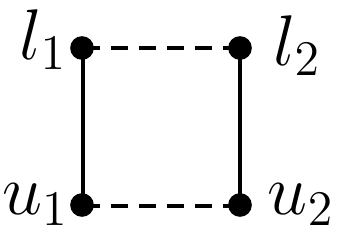}}.
    \end{equation}
    where the dashed edges are individually either present or not.
\end{thm}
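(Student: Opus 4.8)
The plan is to translate the statement into a question about the set system $\mathcal{F}=\{\,N_u\cap L : u\in U\,\}$ of restricted neighbourhoods on the ground set $L$. The first step is to observe that the configuration displayed in \cref{eq:LUind} is exactly what is produced by two \emph{incomparable} members of $\mathcal{F}$: if $u_1,u_2\in U$ are such that neither of $N_{u_1}\cap L$ and $N_{u_2}\cap L$ contains the other, then picking $l_1\in(N_{u_1}\setminus N_{u_2})\cap L$ and $l_2\in(N_{u_2}\setminus N_{u_1})\cap L$ makes $u_1$ adjacent to $l_1$ but not to $l_2$, and $u_2$ adjacent to $l_2$ but not to $l_1$. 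The four vertices are automatically distinct in the required way: $l_1\neq l_2$ since $l_1\in N_{u_1}$ while $l_2\notin N_{u_1}$, and $u_1\neq u_2$ since $l_1\in N_{u_1}$ while $l_1\notin N_{u_2}$. Nothing needs to be said about the possible edges inside $U$ or inside $L$, which matches the dashed edges in \cref{eq:LUind}. Hence it suffices to show that $\mathcal{F}$ is not a chain under inclusion.

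Next I would establish this by contradiction, using the hypothesis $\mathcal{R}(U,L)$ from \cref{eq:LUcond}. Suppose $\mathcal{F}$ were totally ordered by inclusion. Since the graph is finite, $\mathcal{F}$ is a finite family and therefore has a $\subseteq$-minimum element, say $N_{u_1}\cap L$. By the second clause of $\mathcal{R}(U,L)$ (for each $u\in U$ there is an $l\in L$ adjacent to $u$) this minimum set is non-empty, so fix some $l\in N_{u_1}\cap L$. Because $N_{u_1}\cap L$ is contained in every member of the chain $\mathcal{F}$, we get $l\in N_u$ for every $u\in U$; equivalently $u\in N_l$ for every $u\in U$. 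This contradicts the first clause of $\mathcal{R}(U,L)$, which asserts the existence of some $u\in U$ with $u\notin N_l$. Therefore $\mathcal{F}$ contains two incomparable members, and by the previous paragraph the desired $u_1,u_2,l_1,l_2$ exist.

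I do not expect a genuine obstacle in this argument: the entire content is the reformulation in terms of the set system $\mathcal{F}$ together with the elementary fact that a finite chain has a least element. The only points needing a little care are checking that the four chosen vertices are pairwise distinct in exactly the claimed pattern, and noting the implicit finiteness of the graph; as an alternative to the ``finite chain has a minimum'' step one may instead observe that a strictly descending chain of subsets of the finite set $L$ must terminate, which is what one would iterate starting from an arbitrary $u_1\in U$ and repeatedly applying the first clause of $\mathcal{R}(U,L)$.
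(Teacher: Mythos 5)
Your proof is correct, and it takes a genuinely different route from the paper's. You recast the claim as the statement that the family $\mathcal{F}=\{N_u\cap L: u\in U\}$ of neighbourhood traces on $L$ contains two $\subseteq$-incomparable members, and then rule out the alternative (a chain) in one stroke: a finite nonempty chain has a least element, that least element is nonempty by the second clause of $\mathcal{R}(U,L)$, and any $l$ in it would be adjacent to every $u\in U$, contradicting the first clause. All the details check out — incomparability of $N_{u_1}\cap L$ and $N_{u_2}\cap L$ yields exactly the required pattern on $\{u_1,u_2,l_1,l_2\}$, the four vertices are distinct for the reasons you give, and the dashed edges inside $U$ and inside $L$ are irrelevant. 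The paper instead argues by first establishing $\abs{L}\geq 2$ and $\abs{U}\geq 2$, proving the case $\abs{L}=2$ directly by a small contradiction argument, and then for $\abs{L}>2$ showing that either the configuration already appears or some $l\in L$ can be deleted while preserving $\mathcal{R}(U,L\setminus\{l\})$, recursing down to the base case. Your argument buys a shorter, induction-free proof that isolates the combinatorial content (an antichain of size two must exist in $\mathcal{F}$); the paper's recursive formulation is more pedestrian but its deletion step is closer in spirit to how the lemma is actually deployed inside the proof of \cref{thm:algSS}, where vertices are stripped away one at a time. Either proof is acceptable; yours is arguably the cleaner one.
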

\begin{proof}
    We will first show that $\abs{L}\geq 2$ and $\abs{U}\geq 2$.
    Pick an element $u_1\in U$, which exists since $U\neq\emptyset$, by assumption there is a $l_1\in L$ which is adjacent to $u_1$.
    Furthermore there exist a $u_2\in U$ which is not adjacent to $l_1$, thus $u_1\neq u_2$.
    Finally, by assumption there is a $l_2\in L$ which is adjacent to $u_2$, thus $l_1\neq l_2$.
    Note, that this does not yet prove the theorem, since $u_1$ and $l_2$ might be adjacent.

    We will first prove the theorem for $\abs{L}=2$ and then use this to prove the general case.

    \noindent\underline{$\abs{L}=2$ :}
    \begin{addmargin}[1em]{0em}
        Let's denote the vertices in $L$ by $l_1$ and $l_2$.
        We first show by contradiction that both vertices in $L$ must have at least one neighbor in $U$.
        Assume that $l_1$ does not have a neighbor in $U$, then all vertices in $U$ must be adjacent to $l_2$ by the second part of \cref{eq:LUcond}, but then the first part of \cref{eq:LUcond} is false.
        Thus $l_1$ has at least one neighbor in $U$ and by symmetry the same is true for $l_2$.
        Now choose such a neighbor of $l_1$ in $U$ and denote this $u_1$.
        We now show by contradiction that there exist another vertex $u_2\in U$ which is adjacent to $l_2$ but not to $l_1$.
        Assume that this is not the case, i.e. all vertices in $U\setminus\{u_1\}$ are adjacent to $l_1$ or not adjacent to $l_2$.
        If a vertex in $U$ is not adjacent to $l_2$ then it is necessarily adjacent to $l_1$, thus by assumption all vertices in $U\setminus\{u_1\}$ are adjacent to $l_1$.
        This is in contradiction with the first part of \cref{eq:LUcond} and the theorem for $\abs{L}=2$ follows.
    \end{addmargin}

    \noindent\underline{$\abs{L}>2$ :}
    \begin{addmargin}[1em]{0em}
        We will show that the following is true: (1) $G$ has an induced subgraph as in \cref{eq:LUind} or (2) there exist a $l\in L$ such that $G\setminus l$ satisfy $\mathcal{R}(U,L\setminus\{l\})$.
        The theorem then follows since if (1) is true the theorem follows directly and if (2) is true we can make the same argument for $G\setminus l$ for some $l\in L$ and then possibly for $(G\setminus l)\setminus l'$ etc., which at some point will give the case $\abs{L}=2$, which we have proven above.
        Note that if the graph reached by deleting vertices from $G$, has the graph in \cref{eq:LUind} as an induced subgraph, then so does $G$.

        To prove that (1) or (2) is true, we show that if (2) is false then (1) is necessarily true.
        Therefore, assume now that (2) is false, which means that for every choice of $l$, $G\setminus l$ does not satisfy $\mathcal{R}(U,L\setminus\{l\})$.
        The only possibility for this to happen, i.e. the deletion of $l$ makes the graph not satisfy \cref{eq:LUcond}, is if the deletion of $l$ makes some $u\in U$ not adjacent to any vertex in $L$.
        It is easy to see that this can only happen if $\exists u\in U:(L\cap N_u=\{l\})$.
        Since this should be true for all $l\in L$, we have that if (2) is false, the following is true,
        \begin{equation}\label{eq:2false}
            \forall l\in L:\Big(\exists u\in U:(L\cap N_u=\{l\})\Big).
        \end{equation}
        But \cref{eq:2false} implies that (1) is true.
        To see this pick two different vertices $l_1$ and $l_2$ in $L$.
        From \cref{eq:2false} we know that there exist a vertex $u_1\in U$ such that $L\cap N_{u_1}=\{l_1\}$ and similarly a $u_2$ for $l_2$.
        Note that $u_1\neq u_2$ since $N_{u_1}\neq N_{u_2}$.
        Furthermore, since $L\cap N_{u_1}=\{l_1\}$ and $L\cap N_{u_2}=\{l_2\}$ the induced subgraph $G[\{u_1,u_2,l_1,l_2\}]$ is as in \cref{eq:LUind}.
    \end{addmargin}
\end{proof}

\paragraph{Proof for a complete-star graph}
Here we prove that if $\mathcal{P}(B',L',c')$ is false, then $S_{V'}\nless G$ for the case where $G[V']$ is a complete-star graph. We have the following theorem.
\begin{thm}\label{thm:algKS}
    Let's assume that $G$ is a distance-hereditary graph and $V'$ is a subset $V'\subseteq V(G)$ such that the induced subgraph $G[V']$ is a complete-star graph $KS_{(B',L',c')}$.
    Furthermore assume that $\mathcal{P}(B',L',c')$ is false, then $S_{V'}\nless G$.
\end{thm}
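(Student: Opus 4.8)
The plan is to prove \cref{thm:algKS} by the same route used for \cref{thm:algSS}: reduce the general complete-star instance to the four-vertex tadpole instance settled in \cref{thm:alg4}, use \cref{thm:LU} to pick the right ``leaf'', and assemble a witness for \cref{eq:cond}.

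I would argue by contradiction. Suppose $G$ is distance-hereditary, $G[V']=KS_{(B',L',c')}$, $\mathcal{P}(B',L',c')$ is false, and yet $S_{V'}<G$. For any two distinct $b_1,b_2\in B'$ and any $l\in L'$ the induced subgraph $G[\{b_1,b_2,l,c'\}]$ is precisely the tadpole of \cref{thm:alg4}: $b_1\sim b_2$ because $G[B']=K_{B'}$, while $c'\sim b_1,b_2,l$ and $l\not\sim b_1,b_2$, with the partition $B=\{b_1,b_2\}$, $L=\{l\}$, $c=c'$. Moreover $S_{\{b_1,b_2,l,c'\}}<S_{V'}<G$, since $S_{\{b_1,b_2,l,c'\}}$ is obtained from $S_{V'}$ by deleting leaves; so by the contrapositive of \cref{thm:alg4} the predicate $\mathcal{P}(\{b_1,b_2\},\{l\},c')$ must hold. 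Hence for \emph{every} pair $\{b_1,b_2\}$ of vertices of $B'$ and every $l\in L'$ there is a vertex $u\in V\setminus\{b_1,b_2,l,c'\}$ with $\{b_1,b_2\}\subseteq N_u$, $l\notin N_u$, and satisfying the auxiliary clause of \cref{eq:cond} (either $(u,c')\notin E(G)$, or some $h\in N_u\cap N_{c'}$ is adjacent to none of $b_1,b_2,l$). The task is to glue these local witnesses into one witness for $\mathcal{P}(B',L',c')$, contradicting our assumption.

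Fixing one pair $\{b_1,b_2\}\subseteq B'$ and letting $l$ range over $L'$, I would then follow the five-step pattern of \cref{thm:algSS}, writing $u_l$ for the witness attached to $\{b_1,b_2\}$ and $l$: (1) each $u_l$ lies in $V\setminus V'$; (2) each $u_l$ is adjacent to all of $B'$, not just to $b_1,b_2$; (3) by \cref{thm:LU}, with $\{u_l\}_l$ in the role of the set $U$ and $L'$ in the role of $L$, there is some $l$ with $L'\cap N_{u_l}=\emptyset$; (4) fix that $l$; (5) for this $l$, $u_l$ still meets the auxiliary clause with respect to all of $V'$, i.e. $(u_l,c')\notin E(G)$ or some $h\in N_{u_l}\cap N_{c'}$ is adjacent to no other vertex of $V'$. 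Together, (1)--(5) exhibit a witness for $\mathcal{P}(B',L',c')$. Steps (1), (2) and (5) would be verified, as in \cref{thm:algSS}, by enumerating the finitely many induced subgraphs on the five or six relevant vertices that would arise if the claim failed and checking that each fails to be distance-hereditary (by hand or with the code at~\cite{git}); step (3) is a verbatim application of \cref{thm:LU}.

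The main obstacle is steps (1) and (2), where the clique structure of $G[B']$, rather than the star structure exploited in \cref{thm:algSS}, genuinely changes the argument. In the star-star case no vertex of $B'$ is adjacent to two other vertices of $B'$, so a witness adjacent to $b_1$ and $b_2$ is automatically outside $B'$, and the five-vertex subgraphs failed to be distance-hereditary because deleting the center of the inner star lengthened a shortest path; when $G[B']$ is a clique of pairwise true twins both shortcuts disappear, and a priori a witness could be a third vertex of $B'$ or could miss some $b_k\in B'$. To repair this I would use that, under the contradiction hypothesis, a witness is available for every pair in $B'$: short distance arguments on the induced subgraphs spanned by a few of the $b_k$'s together with their witnesses, combined with the auxiliary vertices $h$ coming from the $c'$-clause and with the fact that $B'$ is a set of pairwise twins in $G[V']$, force one of the $u_l$ to be adjacent to all of $B'$ and, after \cref{thm:LU}, to none of $L'$. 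This bookkeeping, not any new conceptual ingredient, is the technical core; once it is done, \cref{thm:alg4} closes the argument.
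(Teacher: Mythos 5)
Your overall strategy---contradiction, reduction to the four-vertex tadpole of \cref{thm:alg4}, and an appeal to \cref{thm:LU}---points in the right direction, but the route you describe has a genuine gap exactly where you locate the ``technical core,'' and the fix you sketch will not work as stated. Your local witnesses are attached to a \emph{pair} $\{b_1,b_2\}\subseteq B'$ and a single $l\in L'$, so all you know about $u_l$ is that it is adjacent to $b_1,b_2$ and misses $l$. To upgrade this to $B'\subseteq N_{u_l}$ and $L'\cap N_{u_l}=\emptyset$ you propose to reuse the five-vertex forbidden-induced-subgraph checks of \cref{thm:algSS}, but those checks genuinely fail when $G[B']$ is a clique. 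For instance, the analogue of \cref{eq:step2_1} is $G[\{c',\tilde b,b_1,b_2,u_l\}]$ with $\{c',\tilde b,b_1,b_2\}$ inducing $K_4$ and $u_l$ adjacent to exactly $b_1,b_2$; this graph \emph{is} distance-hereditary (build $K_3$ on $\{c',b_1,b_2\}$, add $\tilde b$ as a true twin of $b_1$, add $u_l$ as a false twin of $\tilde b$ --- or note it is none of the house, gem, or $C_5$), because the clique edge $(\tilde b,b_2)$ restores the shortcut that the star structure destroyed. The same collapse kills the analogue of \cref{eq:step2_2} used to show that the auxiliary vertex $h$ avoids $B'$. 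So the universal claims ``every $u_l$ is adjacent to all of $B'$'' and ``every $h$ avoids all of $B'$'' are not derivable by your method, and I see no reason to believe they are even true: a witness for the four-vertex sub-instance may legitimately miss vertices of $B'$ (or even lie inside $B'$, since any $b_3\in B'$ is adjacent to $b_1,b_2$ and to no vertex of $L'$).

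The paper's proof sidesteps this by changing the induction variable: it inducts on $\lvert B'\rvert$ and takes as hypotheses the predicates $\mathcal{P}(B'\setminus\{b\},L',c')$ for every $b\in B'$, whose witnesses $u_b$ are already adjacent to all of $B'\setminus\{b\}$ and avoid all of $L'$. It then proves only the much weaker (and correct) statements that \emph{at most one} $u_b$ can lie in $B'$ and \emph{at most one} further $u_b$ can fail $B'\subseteq N_{u_b}$; since $\lvert B'\rvert\ge 3$ in the inductive step, some witness survives, and the exceptional configurations are excluded by larger (six-, seven- and eight-vertex) non-distance-hereditary induced subgraphs such as \cref{eq:sub1,eq:sub6,eq:sub9}, not five-vertex ones. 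Your closing paragraph correctly identifies that the clique structure breaks the star-star argument, but ``short distance arguments \ldots\ bookkeeping'' is doing all the work there, and the bookkeeping you would need is of a different shape (existential exclusion of two simultaneous failures under a strong inductive hypothesis) than the universal upgrading of pair-local witnesses that your plan calls for.
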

\begin{proof}
    We will prove this by induction on the size of $B'$.
    The base-case, $\abs{B'}=2$, is true due to \cref{thm:algSS}, since for $\abs{B'}=2$, the graph $G$ is also a star-star graph.
    Let's now assume that \cref{thm:algKS} is true for $\abs{B'}=k$.
    We will now prove that the theorem is true for $\abs{B'}=k+1$ by showing that
    \begin{equation}\label{eq:Pimpl2}
        \neg\mathcal{P}(B',L',c')\quad\Rightarrow\quad\exists b\in B':\neg\mathcal{P}(B'\setminus\{b\},L',c').
    \end{equation}
    where $\abs{B'}=k+1\geq 3$.
    Then from the induction hypothesis we know that $S_{V'\setminus\{b\}}\nless G$ for some $b\in B'$ and the corollary follows, because if $S_{V'\setminus\{b\}}$ is not a vertex-minor of $G$ then neither is $S_{V'}$, since $S_{V'\setminus\{b\}}<S_{V'}$.
    To show that \cref{eq:Pimpl2} is true, we instead show the contrapositive statement, i.e.
    \begin{equation}\label{eq:contrapos2}
        \forall b\in B':\mathcal{P}(B'\setminus\{b\},L',c')\quad\Rightarrow\quad\mathcal{P}(B',L',c').
    \end{equation}
    Let's therefore assume that $\mathcal{P}(B'\setminus\{b\},L',c')$ is true for all $b$ in $B'$.
    Let $\mathcal{Q}(u,B',L',c')$ be the expression on $\mathcal{P}(B',L',c')$ such that
    \begin{equation}
        \mathcal{P}(B',L',c')=\exists u\in V\setminus V':\mathcal{Q}(u,B',L',c')
    \end{equation}
    For each $b\in B'$, let $u_b$ be a vertex in $V\setminus (V'\setminus\{b\})$ such that $\mathcal{Q}(u_b,B'\setminus\{b\},L',c')$ is true.
    We now need to show that for at least one of these $u_b$, $u_b\in V\setminus V'$ and $\mathcal{Q}(u_b,B',L',c')$ is true.
    Note that $L'\cap N_{u_b}=\emptyset$ for all $b$, since $\mathcal{Q}(u_b,B'\setminus\{b\},L',c')$ for all $b$.
    Thus, to show that for at least on $u_b$, $u_b\in V\setminus V'$ and $\mathcal{Q}(u_b,B',L',c')$ is true we will go trough the following steps:
    \begin{enumerate}
        \item Show that there can maximally be one $\tilde{b}\in B'$ such that $u_{\tilde{b}}\in B'$, i.e.
            \begin{equation}\label{eq:KSstep1}
                \exists \tilde{b}\in B':\Big(\forall b\in B':\big(b=\tilde{b}\;\lor\;u_b\notin B'\big)\Big)
            \end{equation}
        \item Fix $\tilde{b}\in B'$ to be such that the corresponding expression in \cref{eq:KSstep1} is true.\footnote{Note that this does not imply that $u_{\tilde{b}}\in B'$.}
        \item Show that there can maximally be one $\hat{b}\in B'\setminus\{\tilde{b}\}$ such that $B'\nsubseteq N_{u_{\hat{b}}}$, i.e.
            \begin{equation}\label{eq:KSstep3}
                \exists \hat{b}\in B'\setminus\{\tilde{b}\}:\Big(\forall b\in B'\setminus\{\tilde{b}\}:\big(b=\hat{b}\;\lor\;B'\subseteq N_{u_b}\big)\Big)
            \end{equation}
        \item Fix $\hat{b}\in B'$ to be such that the corresponding expression in \cref{eq:KSstep3} is true.
        \item Use step 1 to 4 to show that there exist a $b\in B'$ such that $u_b\notin B'$, $B'\subseteq N_{u_b}$ and
            \begin{equation}\label{eq:KSstep5}
                (u_b,c')\notin E(G)\;\lor\;\exists h_b:\qty(h_b\in N_{u_b}\cap N_{c'}\setminus\bigcup_{x\in V'\setminus\{c'\}}N_x)
            \end{equation}
            i.e. $\mathcal{Q}(u_b,B',L',c')$ is true.
    \end{enumerate}
    Let us now consider the steps 1 through 5 one by one.\footnote{Step 2 and 4 are trivial.}
    We will in these steps often claim that certain small graphs are not distance-hereditary.
    Verifying this can be done by hand or using our code supplied at~\cite{git}.

    \noindent\underline{Step 1:}
    \begin{addmargin}[1em]{0em}
        Here we show that \cref{eq:KSstep1} is true.
        Firstly, if for all $b\in B'$ we have that $u_b\notin B'$, then \cref{eq:KSstep1} is clearly true, since $\tilde{b}$ can then be chosen as any element in $B'$.\footnote{Remember that $\abs{B'}\geq 3$.}
        We now show by contradiction that there cannot exist two different vertices $\tilde{b}_1,\tilde{b}_2\in B'$, such that $u_{\tilde{b}_1}\in B'$ and $u_{\tilde{b}_2}\in B'$.
        Thus, let's assume that such vertices $\tilde{b}_i$ for $i\in \{1,2\}$, does exist.
        Note that, since $\mathcal{Q}(u_{\tilde{b}_i},B'\setminus\{\tilde{b}_i\},L',c')$ is true, we know that $u_{\tilde{b}_i}$ is adjacent to all vertices in $B'\setminus\{\tilde{b}_i\}$ and therefore $u_{\tilde{b}_i}=\tilde{b}_i$.
        Since $u_{\tilde{b}_i}=\tilde{b}_i$, we know that $u_{\tilde{b}_1}\neq u_{\tilde{b}_1}$.
        Furthermore, from the fact that $u_{\tilde{b}_i}\in B'$, we know that $u_{\tilde{b}_i}$ is adjacent to $c'$ and thus, since $\mathcal{Q}(u_{\tilde{b}_i},B'\setminus\{\tilde{b}_i\},L',c')$ is true, there exist a vertex $h_i$ such that
        \begin{equation}
            h_i\in N_{u_{\tilde{b}_i}}\cap N_{c'}\setminus\bigcup_{x\in V'\setminus\{c',\tilde{b}_i\}}N_x.
        \end{equation}
        The vertices $h_i$ are necessarily different, since $h_1$ is adjacent to $b_1=u_{\tilde{b}_1}$ but not to $b_2$ and vice versa for $h_2$.
        Now consider the following induced subgraph
        \begin{equation}\label{eq:sub1}
            G[\{c',\tilde{b}_1,\tilde{b}_2,b,h_1,h_2\}]=\raisebox{-0.06\textwidth}{\includegraphics[scale=0.5]{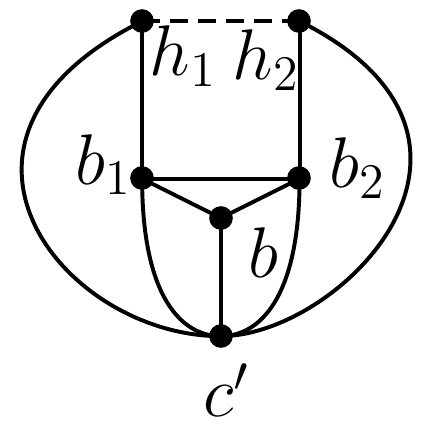}}
        \end{equation}
        where $b$ is a vertex in $B'\setminus\{b_1,b_2\}$, which exists since $\abs{B'}\geq 3$ by assumption.
        The graph in \cref{eq:sub1} is not distance-hereditary, independently if the edge $(h_1,h_2)$ is present or not.
        Since this contradicts the fact that $G$ is distance-hereditary, we know that \cref{eq:KSstep1} is true.
        \hfill$\diamond$
    \end{addmargin}

    \noindent\underline{Step 3:}
    \begin{addmargin}[1em]{0em}
        Here we show that \cref{eq:KSstep3} is true.
        Firstly, if for all $b\in B'\setminus\{\tilde{b}\}$ we have that $B'\subseteq N_{u_b}$, then \cref{eq:KSstep3} is clearly true, since $\hat{b}$ can then be chosen as any element in $B'\setminus\{\tilde{b}\}$.\footnote{Remember that $\abs{B'}\geq 3$.}
        We now show by contradiction that there cannot exist two different vertices $\hat{b}_1,\hat{b}_2\in B'$, such that $B'\nsubseteq N{u_{\hat{b}_1}}$ and $B'\nsubseteq N{u_{\hat{b}_2}}$.
        Thus, let's assume that such vertices $\hat{b}_i$ for $i\in\{1,2\}$, does exist.
        Let's for the remainder of this step denote $u_{\hat{b}_i}$ as $u^{(i)}$.
        From the previous steps we know that $u^{(i)}\notin B'$ and furthermore $(B'\setminus\{\hat{b}_i\})\subseteq N_{u^{(i)}}$ since $\mathcal{Q}(u^{(i)},B'\setminus\{\hat{b}_i\},L',c')$ is true.
        Thus, by assumption, we have that $\hat{b}_i\notin N_{u^{(i)}}$.
        The vertices $u^{(i)}$ are then necessarily different, i.e. $u^{(1)}\neq u^{(2)}$, since for example $\hat{b}_1$ is a neighbor of $u^{(2)}$ but not of $u^{(1)}$.
        We will now show that this contradicts the fact that $G$ is distance-hereditary, by considering the following cases:
        \begin{itemize}
            \item Assume that $u^{(1)}$ is not adjacent to $u^{(2)}$ and consider the following induced subgraph
                \begin{equation}\label{eq:sub2}
                    G[\{b,\hat{b}_1,\hat{b}_2,u^{(1)},u^{(2)}\}]=\raisebox{-0.03\textwidth}{\includegraphics[scale=0.5]{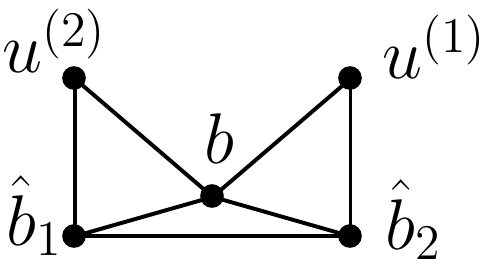}}
                \end{equation}
                where $b$ is a vertex in $B'\setminus\{\hat{b}_1,\hat{b}_2\}$, which exists since $\abs{B'}\geq 3$ by assumption.
                The graph in \cref{eq:sub2} is not distance-hereditary.
            \item Assume that $u^{(1)}$ is adjacent to $u^{(2)}$.
                \begin{itemize}
                    \item Assume that neither $u^{(1)}$ or $u^{(2)}$ is adjacent to $c'$ and consider the following induced subgraph
                        \begin{equation}\label{eq:sub3}
                            G[\{c',\hat{b}_1,\hat{b}_2,u^{(1)},u^{(2)}\}]=\raisebox{-0.05\textwidth}{\includegraphics[scale=0.5]{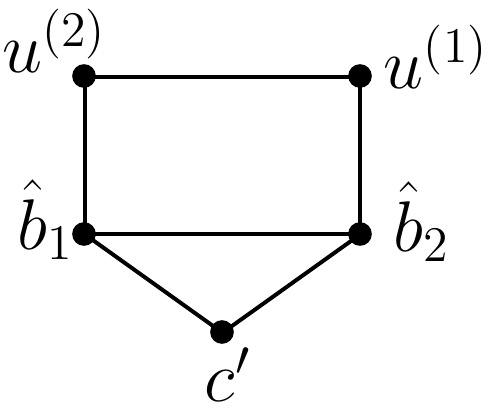}}
                        \end{equation}
                        which is not distance-hereditary.
                    \item Assume that exactly one of $u^{(1)}$ and $u^{(2)}$ is adjacent to $c'$ and let's assume without loss of generality that it is $u^{(1)}$ that is adjacent to $c'$.
                        Since $\mathcal{Q}(u^{(1)},B'\setminus\{\hat{b}_1\},L',c')$ is true and $u^{(1)}$ is adjacent to $c'$, we know that there exist a vertex $h_1$ such that
                        \begin{equation}
                            h_1\in N_{u^{(1)}}\cap N_{c'}\setminus\bigcup_{x\in V'\setminus\{c',b_1\}}N_x.
                        \end{equation}
                        Consider now the following induced subgraph
                        \begin{equation}\label{eq:sub4}
                            G[\{c',b,\hat{b}_2,u^{(1)},u^{(2)},h_1\}]=\raisebox{-0.06\textwidth}{\includegraphics[scale=0.5]{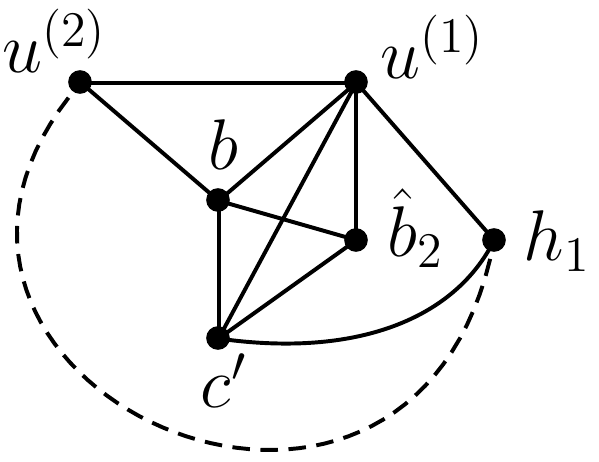}}
                        \end{equation}
                        where $b$ is a vertex in $B'\setminus\{\hat{b}_1,\hat{b}_2\}$, which exists since $\abs{B'}\geq 3$ by assumption.
                        The graph in \cref{eq:sub4} is not distance-hereditary, independently if the edge $(h_1,u^{(2)})$ is present or not.
                    \item Assume that both $u^{(1)}$ and $u^{(2)}$ is adjacent to $c'$.
                        Since $\mathcal{Q}(u^{(i)},B'\setminus\{\hat{b}_i\},L',c')$ is true and $u^{(i)}$ is adjacent to $c'$, we know that there exist a vertex $h_i$ such that
                        \begin{equation}
                            h_i\in N_{u^{(i)}}\cap N_{c'}\setminus\bigcup_{x\in V'\setminus\{c',b_i\}}N_x.
                        \end{equation}
                        \begin{itemize}
                            \item Assume that $h_1=h_2$, which implies that $h_1$ is not adjacent to $\hat{b}_1$ or $\hat{b}_2$ and consider the following induced subgraph
                                \begin{equation}\label{eq:sub5}
                                    G[\{\hat{b}_1,\hat{b}_2,u^{(1)},u^{(2)},h_1\}]=\raisebox{-0.04\textwidth}{\includegraphics[scale=0.5]{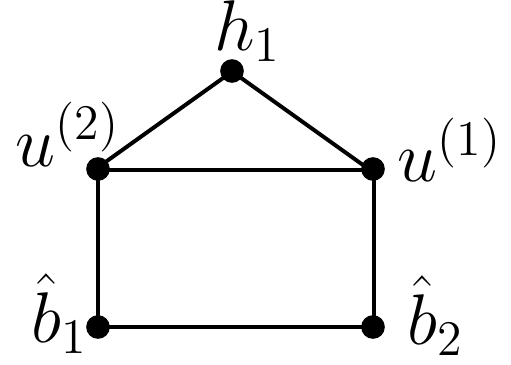}}
                                \end{equation}
                                which is not distance-hereditary.
                            \item Assume that $h_1\neq h_2$ and consider the following induced subgraph
                                \begin{equation}\label{eq:sub6}
                                    G[\{c',b,\hat{b}_1,\hat{b}_2,u^{(1)},u^{(2)},h_1,h_2\}]=\raisebox{-0.06\textwidth}{\includegraphics[scale=0.5]{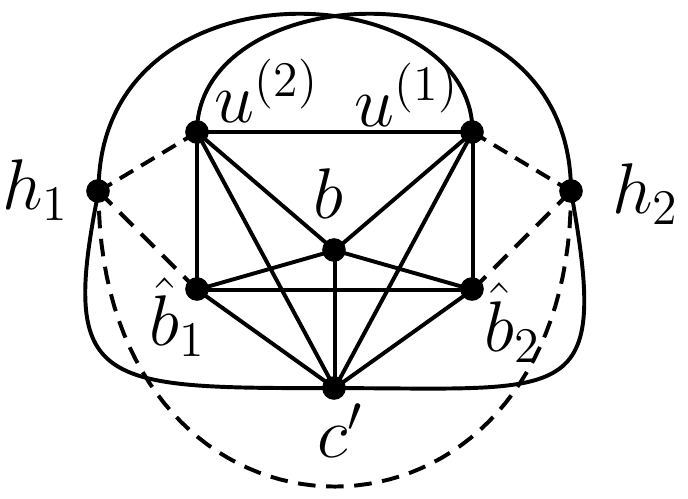}}
                                \end{equation}
                                where $b$ is a vertex in $B'\setminus\{\hat{b}_1,\hat{b}_2\}$, which exists since $\abs{B'}\geq 3$ by assumption.
                                The graph in \cref{eq:sub6} is not distance-hereditary, independently if the edges
                                \begin{equation}
                                    (h_1,h_2),\;(h_1,u^{(2)}),\;(h_2,\hat{b}_2),\;(h_1,u^{(1)}),\;(h_1,\hat{b}_2)
                                \end{equation}
                                are individually present or not.
                                To make this statement more transparent, we also provide the adjacency matrix of the graph in \cref{eq:sub6}.
                                The graph in \cref{eq:sub6} has adjacency matrix
                                \begin{equation}\label{eq:adjx1x5}
                                    \Gamma=\begin{blockarray}{ccccccccc}
                                        & c' & b & \hat{b}_1 & \hat{b}_2 & u^{(1)} & u^{(2)} & h_1 & h_2 \\
                                        \begin{block}{c(cccccccc)}
                                            c'        & 0   & 1   & 1   & 1   & 1   & 1   & 1   & 1   \\
                                            b         & 1   & 0   & 1   & 1   & 1   & 1   & 0   & 0   \\
                                            \hat{b}_1 & 1   & 1   & 0   & 1   & 0   & 1   & x_1 & 0   \\
                                            \hat{b}_2 & 1   & 1   & 1   & 0   & 1   & 0   & 0   & x_2 \\
                                            u^{(1)}   & 1   & 1   & 0   & 1   & 0   & 1   & 1   & x_3 \\
                                            u^{(2)}   & 1   & 1   & 1   & 0   & 1   & 0   & x_4 & 1   \\
                                            h_1       & 1   & 0   & x_1 & 0   & 1   & x_4 & 0   & x_5 \\
                                            h_2       & 1   & 0   & 0   & x_2 & x_3 & 1   & x_5 & 0   \\
                                        \end{block}
                                   \end{blockarray}
                                \end{equation}
                                where $x_1,\dots,x_5\in\{0,1\}$.
                                By explicit computation one can check that for any assignment of the variables $x_1,\dots,x_5$, the graph with adjacency matrix as in \cref{eq:adjx1x5} is not distance-hereditary.
                        \end{itemize}
                \end{itemize}
        \end{itemize}
        Since in all cases we arrived at a contradiction of the fact that $G$ is distance-hereditary, we know that \cref{eq:KSstep3} is true.
        \hfill$\diamond$
    \end{addmargin}

    \noindent\underline{Step 5:}
    \begin{addmargin}[1em]{0em}
        Here we show that there exist a $b\in B'$ such that $u_b\notin B'$, $B'\subseteq N_{u_b}$ and
        \begin{equation}\label{eq:KSstep5_2}
            (u_b,c')\notin E(G)\;\lor\;\exists h_b:\qty(h_b\in N_{u_b}\cap N_{c'}\setminus\bigcup_{x\in V'\setminus\{c'\}}N_x).
        \end{equation}
        Note that $B'\subseteq N_{u_b}$ implies $u_b\notin B'$, thus we can focus on the first property.
        We prove the statement by contradiction and assume therefore that there exist no such $b$, i.e. there exist no $b\in B'$ such that $B'\subseteq N_{u_b}$ and for which \cref{eq:KSstep5_2} is true.
        Let's first introduce the set $\mathcal{B}$ of vertices in $B'$ which satisfy the first of these properties, i.e.
        \begin{equation}
            \mathcal{B}=\{b\in B':\qty(B'\subseteq N_{u_b})\}.
        \end{equation}
        From the previous steps we know that $\mathcal{B}$ is not empty.
        Furthermore, from our assumption we must have that for all $b\in\mathcal{B}$, \cref{eq:KSstep5_2} is false, i.e
        \begin{equation}\label{eq:helper1}
            \forall b\in\mathcal{B}:\Bigg((u_b,c')\in E(G)\;\land\;\forall h_b:\qty(h_b\notin N_{u_b}\cap N_{c'}\setminus\bigcup_{x\in V'\setminus\{c'\}}N_x)\Bigg).
        \end{equation}
        Let $b$ now be a fixed element of $\mathcal{B}$.
        Since $u_b$ is adjacent to $c'$ and from the fact that $\mathcal{Q}(u_b,B'\setminus\{b\},L',c')$ is true, we know that there exist a $h_b$ such that
        \begin{equation}\label{eq:helper2}
            h_b\in N_{u_b}\cap N_{c'}\setminus\bigcup_{x\in V'\setminus\{c',b\}}N_x.
        \end{equation}
        Note that, \cref{eq:helper1} together with \cref{eq:helper2} implies that $h_b$ is adjacent to $b$ but to no other vertex in $B'$.
        We will now show that this leads to a contradiction by considering a vertex $b_1\in B'\setminus\{b\}$, such that $u_{b_1}\notin B'$, which we showed exists in step 1.
        Furthermore, let $b_2$ a vertex in $B'\setminus\{b,b_1\}$, which is necessarily adjacent to $u_{b_1}$, since $\mathcal{Q}(u_{b_1},B'\setminus\{b_1\},L',c')$ is true.
        Let's consider the following cases:
        \begin{itemize}
            \item Assume that $u_{b_1}$ is not adjacent to $c'$.
                By assumption we then have that $B'\nsubseteq N_{u_{b_1}}$, which implies that $u_{b_1}$ is not adjacent to $b_1$.
                Consider now the following induced subgraph
                \begin{equation}\label{eq:sub7}
                    G[\{c',b,b_1,b_2,u_{b_1},h_b\}]=\raisebox{-0.06\textwidth}{\includegraphics[scale=0.5]{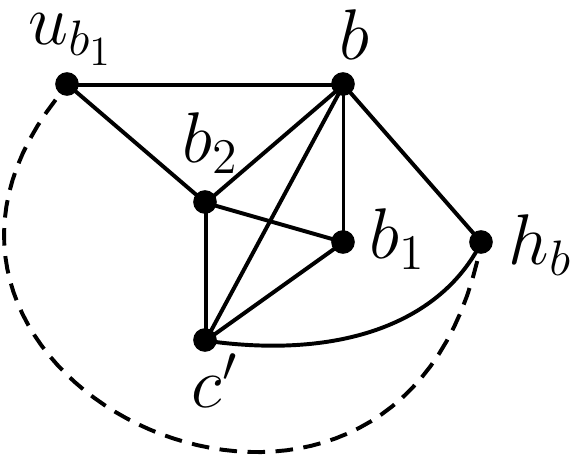}}
                \end{equation}
                which is not distance-hereditary, independently if the edge $(h_b,u_{b_1})$ is present or not.
            \item Assume that $u_{b_1}$ is adjacent to $c'$.
                \begin{itemize}
                    \item Assume that $B'\subseteq N_{u_{b_1}}$.
                        By the same argument as for $b$ and $u_b$, we know that there exist a vertex $h_{b_1}$ which is adjacent to $u_{b_1}$, $c'$ and $b_1$ but not to any other vertex in $B'$.
                        Consider now the following induced subgraph
                        \begin{equation}\label{eq:sub8}
                            G[\{c',b,b_1,b_2,h_b,h_{b_1}\}]=\raisebox{-0.06\textwidth}{\includegraphics[scale=0.5]{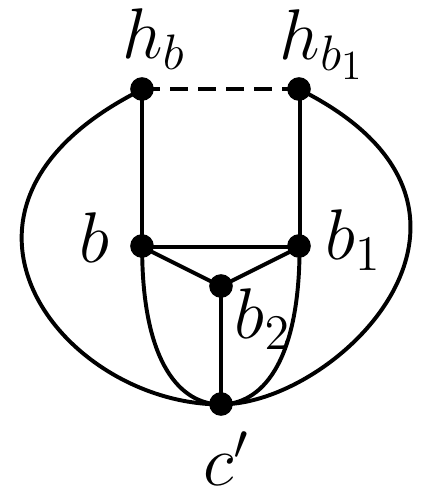}}
                        \end{equation}
                        which is not distance-hereditary, independently if the edge $(h_b,h_{b_1})$ is present or not.
                    \item Assume that $B\nsubseteq N_{u_{b_1}}$, which implies that $b_1$ is not adjacent to $u_{b_1}$ since $\mathcal{Q}(u_{b_1},B'\setminus\{b_1\},L',c')$ is true.
                        Furthermore, we know that there is a vertex $h_{b_1}$ which is adjacent to $u_{b_1}$ and $c'$ and possibly to $b_1$ but no other vertex in $B'$.
                        Consider now the following induced subgraph
                        \begin{equation}\label{eq:sub9}
                            G[\{c',b,b_1,b_2,u_b,u_{b_1},h_b,h_{b_1}\}]=\raisebox{-0.06\textwidth}{\includegraphics[scale=0.5]{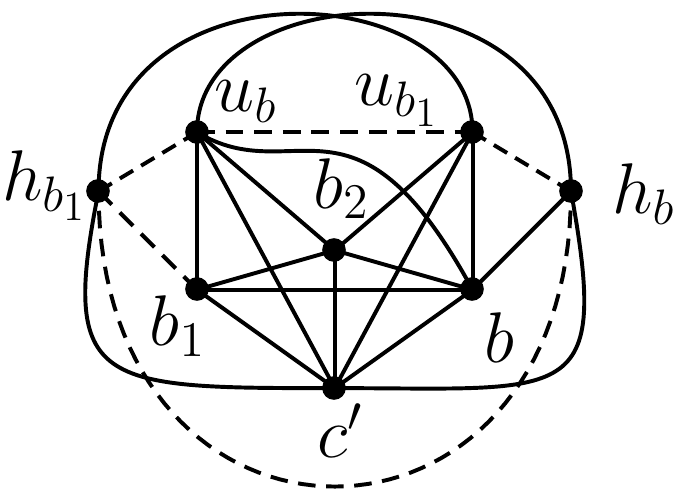}}
                        \end{equation}
                        which is not distance-hereditary, independently if the edges
                        \begin{equation}
                            (h_b,h_{b_1}),\;(h_b,u_{b_1}),\;(h_{b_1},b_1),\;(h_{b_1},u_b),\;(u_b,u_{b_1})
                        \end{equation}
                        are individually present or not.
                        To make this statement more transparent, we also provide the adjacency matrix of the graph in \cref{eq:sub9}.
                        The graph in \cref{eq:sub9} has adjacency matrix
                        \begin{equation}\label{eq:adjx1x5_2}
                            \Gamma=\begin{blockarray}{ccccccccc}
                                & c' & b & b_1 & b_2 & u_b & u_{b_1} & h_b & h_{b_1} \\
                                \begin{block}{c(cccccccc)}
                                    c'      & 0   & 1   & 1   & 1   & 1   & 1   & 1   & 1   \\
                                    b       & 1   & 0   & 1   & 1   & 1   & 1   & 1   & 0   \\
                                    b_1     & 1   & 1   & 0   & 1   & 1   & 0   & 0   & x_1 \\
                                    b_2     & 1   & 1   & 1   & 0   & 1   & 1   & 0   & 0   \\
                                    u_b     & 1   & 1   & 1   & 1   & 0   & x_2 & 1   & x_3 \\
                                    u_{b_1} & 1   & 1   & 0   & 1   & x_2 & 0   & x_4 & 1   \\
                                    h_b     & 1   & 1   & 0   & 0   & 1   & x_4 & 0   & x_5 \\
                                    h_{b_1} & 1   & 0   & x_1 & 0   & x_3 & 1   & x_5 & 0   \\
                                \end{block}
                           \end{blockarray}
                        \end{equation}
                        where $x_1,\dots,x_5\in\mathbb{F}_2$.
                        By explicit computation one can check that for any assignment of the variables $x_1,\dots,x_5$, the graph with adjacency matrix as in \cref{eq:adjx1x5_2} is not distance-hereditary.
                \end{itemize}
        \end{itemize}
        Since in all cases we arrived at a contradiction of the fact that $G$ is distance-hereditary, we know that there exist a $b\in B'$ such that $u_b\notin B'$, $B'\subseteq N_{u_b}$ and such that \cref{eq:KSstep5_2} is true.
        \hfill$\diamond$
    \end{addmargin}
    We have therefore shown that \cref{eq:contrapos2} is true which implies that \cref{eq:Pimpl2} is true.
    Finally, as we described in the beginning of the proof, this implies, by induction, that if $\mathcal{P}(B',L',c')$ is false then $S_{V'}\nless G$.
\end{proof}

\subsection{Fixed-parameter tractable algorithm for unbounded rank-width}\label{sec:FPT}
In this section we will show that the star vertex-minor problem (\SVM) is fixed-parameter tractable for circle graphs, in terms of the size of the considered star graph.
More specifically, we will show that there exists an efficient algorithm to decide if $S_{V'}$ is a vertex-minor of $G$, given that $G$ is a circle graph and the subset $V'\subseteq V(G)$ is of size $k$.
We will do this by showing that we can map this problem in polynomial time to deciding whether the $4$-regular multi-graph that defines $G$ has a~\SOET~on the vertices $V'$.
This is done in \cref{sssec:ksvm_to_ksoet}.
We will then give an algorithm that decides whether a 4-regular multi-graph has a \SOET~on a subset $V'$ of its vertices where $V'=k$ is fixed.
This is done in \cref{sssec:ksoet_in_p}.
We begin by formally stating the decisions problems considered.\\

 We first define the problem \kSVM.

\begin{prm}[\kSVM]\label{prob:ksvm}
Let $G$ be a graph and let $V'$ be a subset of $V(G)$ with $|V'|=k$. Decide whether $S_{V'}$ is a vertex-minor of $G$.
\end{prm}
We also define the problem \kSOET.

\begin{prm}[\kSOET]\label{prob:ksoet}
Let $F$ be a 4-regular multi-graph and let $V'$ be a subset of $V(F)$ with $|V'|=k$. Decide whether $F$ allows for a \SOET~with respect to $V'$
\end{prm}

\begin{thm}\label{thm:ksvm_in_p}
\kSVM, restricted to circle graphs, is in $\mathbb{P}$.
\end{thm}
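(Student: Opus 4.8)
The plan is to prove \cref{thm:ksvm_in_p} by following the two dashed arrows on the right of \cref{fig:reductions}: first reduce \kSVM\ on circle graphs to \kSOET, and then reduce \kSOET\ (for a fixed-size vertex set) to the $k$-disjoint paths problem \DPP{k}, which is solvable in polynomial time for every fixed $k$. Combined, this places \kSVM\ on circle graphs in $\mathbb{P}$.

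\textbf{Step 1: reducing \kSVM\ on circle graphs to \kSOET.} Given an instance $(G,V')$ with $G$ a circle graph and $|V'|=k$, I would first invoke a known polynomial-time algorithm that recognizes circle graphs and outputs a double occurrence word $\bs{X}$ with $\mathcal{A}(\bs{X})=G$ (see e.g.~\cite{Golumbic2004}). Such a word is an Eulerian tour $U$ of a connected $4$-regular multi-graph $F$ on vertex set $V(G)$, which can be written down explicitly in polynomial time. By \cref{cor:reg_to_circle}, $S_{V'}$ is a vertex-minor of $G=\mathcal{A}(U)$ if and only if $F$ allows for a \SOET\ with respect to $V'$. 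Hence $(G,V')\mapsto(F,V')$ is a polynomial-time reduction from \kSVM\ on circle graphs to \kSOET.

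\textbf{Step 2: \kSOET\ is in $\mathbb{P}$.} Let $(F,V')$ be an instance of \kSOET\ with $F$ connected $4$-regular and $|V'|=k$. A \SOET\ traverses the $k$ vertices of $V'$ twice in a common cyclic order $\bs{s}=s_1\dots s_k$, and visits each $s_j$ exactly twice, thereby pairing the four edges at $s_j$ into two pairs (one per visit). Since there are at most $k!$ cyclic orders and only a constant number ($6$) of ways to pair and order the edges at each $s_j$, I would enumerate all of these; this is a constant number $O(k!\cdot 6^{k})$ of guesses. For a fixed order $\bs{s}$ and a fixed pairing/labelling, split each $s_j$ into two degree-$2$ vertices $s_j',s_j''$, one per visit, obtaining a multi-graph $\hat F$. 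A \SOET\ realizing this data corresponds exactly to a decomposition of $E(\hat F)=E(F)$ into $2k$ edge-disjoint trails whose endpoint pairs form the cyclic chain $(s_1',s_2'),(s_2',s_3'),\dots,(s_k',s_1''),(s_1'',s_2''),\dots,(s_k'',s_1')$. The point I would then establish is that the ``cover all edges'' constraint is \emph{free}: any $2k$ edge-disjoint trails with these endpoint pairs must use an even number of the four edges at each vertex of $V(F)\setminus V'$ and both edges at each split vertex (a degree-$2$ endpoint of exactly two of the trails), so the uncovered edges form an even subgraph of $F[V(F)\setminus V']$, i.e. a union of cycles; since $F$ is connected, each such leftover cycle meets one of the trails at a vertex of $V(F)\setminus V'$ and can be spliced in without changing endpoints. (The same degree count forces every trail's internal vertices to lie in $V(F)\setminus V'$ automatically.) Consequently a \SOET\ with the fixed data exists iff $\hat F$ admits $2k$ edge-disjoint trails with the prescribed endpoint pairs, and since a trail contains a path with the same endpoints on a subset of its edges, this is precisely an instance of \DPP{2k} (edge-disjoint variant). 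Solving these constantly many \DPP{2k} instances with a polynomial-time algorithm for \DPP{k} decides \kSOET, so \kSOET$\,\in\mathbb{P}$, and composing with Step 1 finishes the proof.

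\textbf{Main obstacle.} The routine ingredients are polynomial-time circle-graph recognition (Step 1) and the finite enumeration of orders and edge-pairings. The substantive part is Step 2: formulating, for fixed combinatorial data, the existence of a \SOET\ as a disjoint-paths instance, and in particular the observation that the Eulerian ``use every edge once'' requirement can be discharged by the leftover-cycle splicing argument once endpoints are matched and $F$ is connected. After that, the theorem leans on the known (but deep) fact that \DPP{k} is polynomial-time solvable for each fixed $k$.
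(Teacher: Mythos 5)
Your proposal is correct and follows essentially the same route as the paper: the same reduction of \kSVM\ on circle graphs to \kSOET\ via double occurrence words and \cref{cor:reg_to_circle}, followed by the same enumeration of the $k!$ orderings and $6^k$ vertex-splittings with a call to \DPP{2k}, including the same argument that leftover unused edges form even-degree Eulerian components that can be spliced into the tour. The only cosmetic difference is that you phrase the endpoint pairs as one $2k$-cycle chain while the paper uses two $k$-cycles glued at $v_1$; these are equivalent once all splittings are enumerated.
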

\begin{proof}
This will follow from \cref{thm:ksvm_to_ksoet} which provides an efficient mapping of every circle graph instance of $\kSVM$ to a corresponding instance of $\kSOET$. By \cref{cor:ksoet_in_p} $\kSOET$ is in $\Poly$ and hence so is $\kSVM$. An efficient algorithm for $\kSOET$ is given in \cref{alg:kSOET}.
\end{proof}
This theorem has the following corollary

\begin{cor}\label{cor:svm_is_fpt}
\SVM~is fixed-parameter tractable in the size of the input vertex-set $V'$ if the input graph $G$ is a circle graph.
\end{cor}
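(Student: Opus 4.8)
The plan is to derive the corollary directly from \cref{thm:ksvm_in_p} together with a careful accounting of how the running time established there depends on $k$. First I would observe that an instance $(G,V')$ of \SVM\ in which $G$ is a circle graph and $\abs{V'}=k$ is literally an instance of \kSVM\ restricted to circle graphs, so it suffices to exhibit an algorithm whose running time is of the form $f(k)\cdot\abs{V(G)}^{\mathcal{O}(1)}$, where the exponent of $\abs{V(G)}$ is a constant independent of $k$ (this being precisely the definition of fixed-parameter tractability with parameter $\abs{V'}$).

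Next I would retrace the reduction chain underlying \cref{thm:ksvm_in_p}. By \cref{thm:ksvm_to_ksoet} one maps the circle-graph instance $(G,V')$, in time polynomial in $\abs{V(G)}$ and independent of $k$, to an instance $(F,V')$ of \kSOET\ in which $F$ is a connected $4$-regular multi-graph with $\abs{V(F)}$ polynomial in $\abs{V(G)}$; concretely one writes $G=\mathcal{A}(\bs{X})$ for a double occurrence word $\bs{X}$ (computable in polynomial time, as circle graphs are recognizable in polynomial time) and reads off the $4$-regular multi-graph whose Eulerian tour induces $\bs{X}$, using \cref{cor:reg_to_circle}. One then invokes \cref{alg:kSOET} and its analysis in \cref{cor:ksoet_in_p}, which decides \kSOET\ in time $g(k)\cdot\abs{V(F)}^{\mathcal{O}(1)}$: the factor $g(k)$ arises from enumerating the at most $k!$ possible orders in which a \SOET\ can traverse the $k$ designated vertices, while testing a fixed order for realizability by an Eulerian tour on $F$ costs only $\poly(\abs{V(F)})$. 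Composing the two steps yields an algorithm for \SVM\ on circle graphs running in time $f(k)\cdot\abs{V(G)}^{\mathcal{O}(1)}$ with $f(k)=g(k)\cdot\poly$ absorbing all the $k$-dependence and with the exponent of $\abs{V(G)}$ fixed, which is exactly fixed-parameter tractability in $\abs{V'}$.

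The one point that requires genuine care — and the step I expect to be the main obstacle — is verifying that the bound obtained from \cref{cor:ksoet_in_p} really has the uniform FPT shape $g(k)\cdot n^{c}$ rather than merely $n^{h(k)}$; that is, that \cref{alg:kSOET} decomposes cleanly into a $k$-only part (choosing the traversal order of $V'$, together with the local bookkeeping of how subtrails through non-$V'$ triangle/vertex structure attach to it) and a graph-size part whose polynomial degree does not grow with $k$. Once that decomposition is in place, the corollary is just the composition of a polynomial-time reduction with an FPT algorithm, so nothing further needs to be proved.
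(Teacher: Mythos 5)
Your proposal is correct and follows essentially the same route as the paper: \cref{cor:svm_is_fpt} is read off directly from \cref{thm:ksvm_in_p}, i.e.\ from composing the polynomial-time, $k$-independent reduction of \cref{thm:ksvm_to_ksoet} with \cref{alg:kSOET} and checking that the latter has the uniform shape $g(k)\cdot\abs{V(F)}^{\mathcal{O}(1)}$. One small correction: the $k$-dependence does not come only from the $k!$ traversal orders — it also includes the $6^k$ \SOET-splittings and, more importantly, the factor of order $f(2k)^{\mathcal{O}(k f(2k))}$ hidden inside each \DPP{2k} call (and the ``triangle subgraph'' bookkeeping you mention belongs to the NP-hardness reduction, not to this algorithm) — but since the \DPP{2k} subroutine of~\cite{Kawarabayashi2015} is itself of FPT form with a $k$-independent exponent on $n$, the composition still yields $f(k)\cdot\abs{V(G)}^{\mathcal{O}(1)}$ exactly as you conclude.
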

The existence of this fixed-parameter tractable algorithm is theoretically interesting but it is not likely to be of practical use. This is so because while the algorithm is efficient in the size of the input graph $G$ it suffers from a hidden constant that is of size $O(k!\cdot (f(k)^{\mathcal{O}(k f(k))}))$, where $f(k)=2^{2^{\mathcal{O}(k^2)}}$ making its practical implementation unlikely.

\subsubsection{Mapping \kSVM~to~\kSOET}\label{sssec:ksvm_to_ksoet}

In this section we will prove that there exists an efficient mapping from instances of $\kSVM$ that are circle graphs to $\kSOET$. This is formalized in the following theorem, the proof of which also provides a prescription of the algorithm that defines the mapping. 

\begin{thm}\label{thm:ksvm_to_ksoet}
Let $(G,V')$ be an instance of $\kSVM$ and let $G$ be a circle graph. These is an efficient mapping from this instance to an instance of $\kSOET$ and moreover the instance $(G,V')$ is a \emph{yes}-instance of $\kSVM$ if and only if its image under the mapping is a \emph{yes}-instance of $\kSOET$.
\end{thm}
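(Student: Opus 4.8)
The plan is to leverage the chain of equivalences already established in the preliminaries, and in particular \cref{cor:reg_to_circle}, which says that for a connected $4$-regular multi-graph $F$ with Eulerian tour $U$, the star graph $S_{V'}$ is a vertex-minor of $\mathcal{A}(U)$ if and only if $F$ allows for a \SOET~with respect to $V'$. The essential task, then, is to show that given a circle graph $G$ one can efficiently recover a connected $4$-regular multi-graph $F$ together with an Eulerian tour $U$ on $F$ such that $G = \mathcal{A}(U)$; the image of the instance $(G,V')$ under the mapping is then simply $(F,V')$, and the ``only if / if'' statement is immediate from \cref{cor:reg_to_circle} once the sizes are checked to match (note $|V(F)| = |V(\mathcal{A}(U))| = |V(G)|$, so $V' \subseteq V(F)$ makes sense and still has size $k$).

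First I would handle the connectivity bookkeeping. A circle graph $G$ need not be connected; but $S_{V'}$ is a vertex-minor of $G$ if and only if all of $V'$ lies inside a single connected component $G_0$ of $G$ and $S_{V'}$ is a vertex-minor of that component (local complementations and vertex-deletions never create edges between components, and a star graph on $V'$ is connected). One can test in linear time whether $V'$ sits in one component; if not, output a trivial \emph{no}-instance of $\kSOET$ (e.g. a fixed small $F$ with no \SOET~on the chosen $V'$); if so, replace $G$ by $G_0$. So without loss of generality $G$ is connected.

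The heart of the reduction is constructing $(F,U)$ from a connected circle graph $G$. Here I would use the characterization of circle graphs via double occurrence words: $G = \mathcal{A}(\bs{X})$ for some double occurrence word $\bs{X}$ on $V(G)$, and from $\bs{X}$ one builds $F$ by taking $V(F) = V(G)$ and adding, for the double occurrence word $\bs{X} = x_1 x_2 \dots x_{2k}$, the edges $(x_i, x_{i+1})$ for $i \in [2k]$ (indices cyclic); then $\bs{X}$ is exactly the double occurrence word $m(U)$ induced by the obvious Eulerian tour $U$ of $F$, and $F$ is connected and $4$-regular since each vertex occurs twice in $\bs{X}$. The one genuine obstacle is obtaining such a word $\bs{X}$ efficiently: this is the recognition problem for circle graphs together with producing a chord diagram / double occurrence word witness. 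I would invoke the known polynomial-time circle-graph recognition algorithms (e.g. Bouchet's or the later improvements surveyed in Golumbic~\cite{Golumbic2004}), which on input a circle graph output a representing double occurrence word; everything else in the construction is linear in $|V(G)|$. Since $k = |V'|$ is carried along unchanged, the whole map runs in polynomial time, and by \cref{cor:reg_to_circle} applied to $F$ and $U$ (using $G = \mathcal{A}(U)$) we get that $(G,V')$ is a \emph{yes}-instance of $\kSVM$ iff $(F,V')$ is a \emph{yes}-instance of $\kSOET$, completing the proof. The step I expect to be the main obstacle is precisely citing/setting up the efficient circle-graph recognition-with-witness; the rest is routine translation between the three equivalent representations already developed in \cref{sec:circle}.
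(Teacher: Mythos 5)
Your proposal is correct and follows essentially the same route as the paper: obtain a double occurrence word $\bs{X}$ with $G=\mathcal{A}(\bs{X})$ via a polynomial-time circle-graph representation algorithm (the paper cites Spinrad), build the $4$-regular multi-graph $F$ by joining cyclically consecutive letters of $\bs{X}$ (equivalently, contracting same-labelled vertices of the cycle $C_{\bs{X}}$), and conclude via \cref{cor:reg_to_circle}. Your connectivity preprocessing is harmless but not actually needed, since the multi-graph obtained by contracting a cycle is always connected and \cref{cor:reg_to_circle} does not require $G=\mathcal{A}(U)$ itself to be connected.
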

\begin{proof}
We will prove this by providing an explicit mapping.
An instance $(G,V')$ of \kSVM, where $G$ is a circle graph and $V'$ a vertex set, can be mapped to an instance of $\kSOET$ by the following two steps:
\begin{itemize}
    \item Find a double occurrence word $\bs{X}$ with letters in $V(G)$ such that $G=\mathcal{A}(\bs{X})$.
        This can be done in time $\mathcal{O}(\abs{V(G)}^2)$ by using Spinrad's algorithm~\cite{Spinrad1994}.
    \item Construct a 4-regular multi-graph $F$, such that $\bs{X}=m(U)$ for some Eulerian tour $U$ on $F$.
        As shown in~\cite{Bouchet1994}, this can be done in the following way:
        \begin{itemize}
            \item Let $C_{\bs{X}}$ be a cycle graph with the vertices labeled as the consecutive letters of $\bs{X}$.
            \item Contract every pair of vertices with the same label, while keeping all the edges. Note that this step can create multi-edges or self-loops.
                This step can be done in time $\mathcal{O}(\abs{V(G)})$ by adding the corresponding rows of the adjacency matrix and deleting one row and one column.
        \end{itemize}
        The graph obtained from these steps is then a 4-regular multi-graph $F$ with a Eulerian tour $U$, such that $\bs{X}=m(U)$.
    Therefore, constructing the 4-regular multi-graph $F$, given $\bs{X}$, can also be done in time $\mathcal{O}(\abs{V(G)}^2)$.
\end{itemize}
One can see that the above algorithm runs in $\mathcal{O}(\abs{V(G)}^2)$ and given a circle graph $G$ outputs a $4$-regular graph $F$ that has a Eulerian tour $U$ such that $\mathcal{A}(U)=G$.
From \cref{cor:reg_to_circle} we then know that $S_{V'}<G$ if and only if $F$ allows for a \SOET\ with respect to $V'$.
Using the above two steps we see that any circle graph instance of \kSVM\  can be mapped to \kSOET\ in time $\mathcal{O}(\abs{V(G)}^2)$.
\end{proof}

Given this mapping, the next logical step is to find an efficient algorithm for $\kSOET$. This is done in the next section.

\subsubsection{\kSOET~is in \Poly}\label{sssec:ksoet_in_p}

In this section we will prove that $\kSOET$ is in \Poly. We will do this by explicitly writing down an efficient algorithm. This algorithm will make use of an algorithm which solves a well known graph problem we call $\DPP{k'}$. $\DPP{k'}$, for $k'$-Disjoint Path Problem is formally defined as follows.

\begin{prm}[\DPP{k'}]\label{prob:kdpp}
Let $H$ be a multi-graph and let $L = \{(v_1,v'_1),\ldots,(v_{k'},v'_{k'})\} $ be a set of two-tuples of vertices of $H$ such that $|L|=k'$. Decide whether there exist $k'$ edge-disjoint paths $P_i$ on $H$ that start at $v_i$ and end at $v'_i$ for $i \in [k']$.
\end{prm}
Robertson and Seymour proved that there exist an efficient algorithm for solving \DPP{k'}, if $k'$ is fixed.
Their proof of correctness is based on 23 papers named Graph minors. I , $\dots$, Graph minors. XXIII \cite{RobertsonFullSeries}.
In \cite{Kawarabayashi2015} an improved algorithm for \DPP{k'} is given, with a much smaller hidden constant describing the scaling in terms of $k'$.
This hidden constant is still quite large as running time for the algorithm in \cite{Kawarabayashi2015} is $(f(k')^{\mathcal{O}(k' f(k'))})n^{\mathcal{O}(1)}$, where $n$ is the number of vertices in the graph and $f(k')=2^{2^{\mathcal{O}(k'^2)}}$\\

We will discuss an algorithm that solves $\kSOET$ efficiently. It will use an algorithm for $\DPP{k'}$ as a subroutine, calling it a constant number of times.\\

The algorithm for \kSOET~is sketched as follows.
Let $F$ be a $4$-regular multi-graph and let $V'\subset V(F)$ be of size $k$.
Recall that a \SOET~is a Eulerian path that visits the vertices in $V'$ in some order.
A necessary condition for a \SOET~to exist is that there are, for some ordering of the vertices in $V'$ two edge-disjoint trails from the first vertex in $V'$ to the second vertex in $V'$ and from the second to the third and so on.
If one is given such a collection of paths it is not hard to see that one can connect these paths to each other to form a tour and moreover extend this total tour to be Eulerian.
Hence we could use the \DPP{k'} algorithm described above to to find such a tour by finding all the edge disjoint trails that connect the vertices in $V'$.\\

There are two problems with this.
The first problem is that a \SOET~with respect to $V'$ can exist with respect to any possible ordering of the set $V'$.
This means that in order to find this tour we might have to apply \DPP{k} to all $k!$ different orderings.
This is a large overhead but acceptable since we are only looking for an algorithm that is efficient for fixed $k$.
The second problem is that the \DPP{k'} algorithm expects $k'$ pairs of vertices and requires these pairs to be different.\\

This means we cannot input the same vertex-pair twice to get two edge disjoint paths.
We can resolve this at some overhead by running the \DPP{k'} algorithm on a modified multi-graph $H$.
This multi-graph, which we call a \SOET~splitting of $F$, is created by taking each vertex $v_i$ in $V$ and splitting it into two vertices $v_i^{(a)}, v_i^{(b)}$ such that the four edges $e_1,\ldots e_4$ originally incident on $v_i$ are now pairwise incident on $v_i^{(a)}, v_i^{(b)}$.\\

As an example we could have for instance that $e_1,e_2$ are incident on $v_i^{(a)}$ and $e_3,e_4$ are incident on $v_i^{(b)}$.
We must consider all possible choices of pairings here (of which there are $6$) and since there are $k$ vertices on which to perform this procedure (each vertex in $V'$) there are $6^k$ such \SOET~splittings of $F$ (which are not $4$-regular anymore).
This means we must call the \DPP{2k} subroutine $6^k\cdot k!$ times to account for all possible orderings of the \SOET.\\

Let's make this a little more rigorous.
We begin by defining the notion of a \SOET-splitting of a $4$-regular multi-graph $F$ with respect to a set $V'$.

\begin{mydef}[\SOET-splitting]
Let $F$ be a $4$-regular multi-graph and let $V'$ be a subset of it's vertices. A \SOET-splitting $H$ of $F$ with respect to $V$ is a multi-graph created from $F$ y performing the following operation on all vertices in $V$
\begin{equation}\label{eq:soet_split}
    \raisebox{-0.035\textwidth}{\includegraphics[scale=0.5]{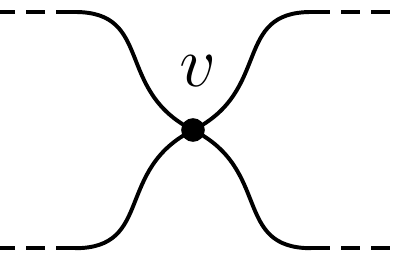}}\;\mapsto\;\raisebox{-0.035\textwidth}{\includegraphics[scale=0.5]{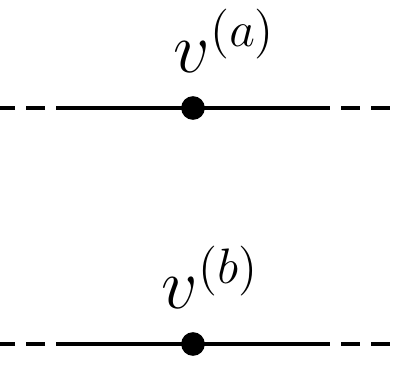}}
\end{equation}
We label the two vertices that originate from a vertex $v\in V'$ as $v^{(a)}$ and $v^{(b)}$. Note that we have not specified how to connect the edges that were originally incident on $v$ to $v^{(a)}$ and $v^{(b)}$. There are six possible ways to do this for each $v\in V'$ and each choice leads to an a priori distinct \SOET-splitting multi-graph $H$.
\end{mydef}
We also define the subroutine $\DPP{k'}$
\begin{algorithm}[H]
    \caption{\DPP{k'}}\label{alg:DPP}
    \begin{algorithmic}[1]
        \State \textbf{INPUT}:\hspace{1.2em} A multi-graph $H$ and a set of $2$-tuples $\{(v_1, \hat{v}_1), \ldots (v_{k'}, \hat{v}_{k'}) \}$
        \State \textbf{OUTPUT}: TRUE if there exist edge-disjoint paths in $H$ connecting $v_i,\hat{v}_i$ for all $i\in [k']$.
        \State \phantom{\textbf{OUTPUT}:} FALSE otherwise
    \end{algorithmic}
\end{algorithm}
Note that we have only specified the inputs and outputs of this subroutine. For details on the inner workings of this algorithm see~\cite{Kawarabayashi2015}.
Using this subroutine we can write down an algorithm for $\kSOET$.

\begin{algorithm}[H]
    \caption{$\kSOET$}\label{alg:kSOET}
    \begin{algorithmic}[1]
        \State \textbf{INPUT}:\hspace{1.2em} A $4$-regular multi-graph $F$ and a set $V' =\{v_1, \ldots,v_k\}$ such that $V'\subset V(F)$
        \State \textbf{OUTPUT}: TRUE if there exist a \SOET\ on $F$ with respect to $V'$
        \State \phantom{\textbf{OUTPUT}:} FALSE otherwise
        \State \hrulefill
        \State
        \State Generate a list $L$ of all possible \SOET-splittings of $F$ with respect to $V'$
        \State \For{all graphs $H$ in $L$}
            \State \For{all permutations $\pi$ of the set $[1:k]$}
                \State Run \DPP{2k} on the set $\{(v^{(a)}_{\pi(1)},v^{(a)}_{\pi(2)}), \ldots ,(v^{(a)}_{\pi(k)},v^{(a)}_{\pi(1)}), (v^{(b)}_{\pi(1)},v^{(b)}_{\pi(2)}), \ldots ,(v^{(b)}_{\pi(k)},v^{(b)}_{\pi(1)})\}$
                \State \If{\DPP{2k} returns TRUE}
                    \State \textbf{RETURN} TRUE
                    \State \textbf{QUIT}
                \EndIf
            \EndFor
        \EndFor        
        \State \textbf{RETURN} FALSE
        \State \textbf{QUIT}            

    \end{algorithmic}
\end{algorithm}
We now prove that this algorithm returns TRUE if and only if the multi-graph $F$ allows for a \SOET~with respect to the vertex-subset $V'$. We begin by proving that if the algorithm returns TRUE the multi-graph $F$ allows for a \SOET~with respect to the vertex-subset $V'$. We have the following theorem.

\begin{thm}\label{thm:kdpp_to_soet}
    Let $F$ be a connected $4$-regular multi-graph and let $V'\subset V(F)$ be a subset of its vertices with $|V'|=k$. If \cref{alg:kSOET} returns TRUE then $F$ allows for a \SOET~with respect to the vertex set $V'$.
\end{thm}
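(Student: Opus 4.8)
The plan is to unwind what ``Algorithm~\ref{alg:kSOET} returns TRUE'' means, and then reassemble the objects it produces into a genuine \SOET\ on $F$. If the algorithm returns TRUE, then for some \SOET-splitting $H$ of $F$ with respect to $V'$ and some permutation $\pi$ of $[k]$, the $\DPP{2k}$ subroutine returns TRUE on the terminal set $\{(v^{(a)}_{\pi(i)},v^{(a)}_{\pi(i+1)})\}_i\cup\{(v^{(b)}_{\pi(i)},v^{(b)}_{\pi(i+1)})\}_i$ with indices taken modulo $k$. By the input/output specification of $\DPP{2k}$ this yields $2k$ pairwise edge-disjoint paths $P^{(a)}_1,\dots,P^{(a)}_k,P^{(b)}_1,\dots,P^{(b)}_k$ in $H$, where $P^{(a)}_i$ runs from $v^{(a)}_{\pi(i)}$ to $v^{(a)}_{\pi(i+1)}$, and similarly for the $b$-paths. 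The first step is a degree bookkeeping in $H$: each split vertex $v^{(a)}$ (resp.\ $v^{(b)}$) has degree exactly two, and $v^{(a)}_{\pi(i)}$ is a terminal of exactly two of the $2k$ pairs, so its two incident edges are consumed entirely by $P^{(a)}_{i-1}$ (ending there) and $P^{(a)}_i$ (starting there). Since all $2k$ paths are edge-disjoint, no other path uses an edge at $v^{(a)}_{\pi(i)}$; hence no path passes through any split vertex as an interior vertex, and in particular no $b$-path touches an $a$-split vertex, and vice versa.

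Next I would pull everything back to $F$ along the contraction map $\phi$ that identifies $v^{(a)}$ with $v^{(b)}$ for each $v\in V'$ and is the identity elsewhere. Since $\phi$ is a bijection on edge sets, the edge-disjoint paths in $H$ map to edge-disjoint trails $\hat P^{(a)}_i,\hat P^{(b)}_i$ in $F$, and by the previous paragraph each $\hat P^{(a)}_i$ meets $V'$ only in its endpoints $v_{\pi(i)},v_{\pi(i+1)}$ (similarly for $\hat P^{(b)}_i$). Concatenating, $T_a=\hat P^{(a)}_1\Vert\dots\Vert\hat P^{(a)}_k$ is a closed trail based at $v_{\pi(1)}$ that visits the vertices of $V'$ in the cyclic order $v_{\pi(1)},\dots,v_{\pi(k)}$, and likewise for $T_b$; moreover $T_a$ and $T_b$ are edge-disjoint. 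As both are closed trails through $v_{\pi(1)}$, the concatenation $T=T_a\Vert T_b$ is again a closed trail of $F$, and a direct count (each $v_{\pi(i)}$ with $i\geq2$ occurs once inside $T_a$ and once inside $T_b$; $v_{\pi(1)}$ occurs at the $T_a$/$T_b$ junction and at the closing point) shows $T$ visits each vertex of $V'$ exactly twice, with $m(T)[V']=\bs{s}\bs{s}$ for $\bs{s}=v_{\pi(1)}\dots v_{\pi(k)}$. This is exactly the defining pattern of a \SOET\ --- except that $T$ need not be Eulerian.

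The remaining step, which I expect to be the main obstacle, is to extend $T$ to an Eulerian tour of $F$ without disturbing the $V'$-pattern. The key observation is that $T$ already uses all four edges incident to every $v\in V'$ (two edges per visit, two visits), so every edge of $F$ not used by $T$ lies strictly inside $V(F)\setminus V'$, and in the subgraph of unused edges every vertex still has even degree. I would then run the standard Euler-completion argument: while $T$ is not Eulerian, connectivity of $F$ produces an unused edge incident to some vertex $w$ visited by $T$ (necessarily $w\notin V'$), the even-degree property lets us grow a closed trail from $w$ through unused edges, and we splice that closed trail into $T$ at an occurrence of $w$; the process terminates because the number of unused edges strictly decreases. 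Every spliced closed trail lies entirely outside $V'$, hence inserts only letters from $V(F)\setminus V'$ into $m(T)$ and leaves $m(\cdot)[V']$ unchanged. The resulting Eulerian tour $U$ therefore satisfies $m(U)[V']=\bs{s}\bs{s}$, so by \cref{def:SOET} it is a \SOET\ on $F$ with respect to $V'$. The degenerate case $k=1$ is immediate, since any Eulerian tour of a connected $4$-regular $F$ visits the single vertex of $V'$ exactly twice. The only genuinely delicate part of the write-up is the degree accounting of the first two paragraphs, which forces the paths to avoid split vertices internally and forces $T$ to saturate the $V'$-vertices; everything else is routine bookkeeping.
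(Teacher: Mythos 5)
Your proposal is correct and follows essentially the same route as the paper's proof: assemble the $2k$ edge-disjoint paths returned by \DPP{2k} into a closed trail on $F$ that visits $V'$ twice in the order $\pi$, then splice in closed trails of unused edges (which all avoid $V'$) to reach an Eulerian tour. Your explicit degree accounting at the split vertices — forcing the paths to avoid split vertices internally and forcing the assembled tour to saturate all four edges at each $v\in V'$ — is left implicit in the paper, but it is the same argument.
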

\begin{proof}
    Let $F$ be a connected $4$-regular multi-graph and let $V'\subset V(F)$ be a subset of its vertices with $|V'|=k$.
    If \cref{alg:kSOET} returns true this means there exists a \SOET-splitting $H$ of $F$ and a permutation $\pi$ of the set $[k]$ such that there exist $2k$ edge-disjoint paths in $H$ that connect the vertices $v^{(a)}_{\pi(1)}$ and $v^{(a)}_{\pi(2)}$, $v^{(a)}_{\pi(2)}$ and $v^{(a)}_{\pi(3)}$ and so forth.
    Undoing the \SOET-splitting operation that defines $F$ we can see that these edge-disjoint paths can be attached to one another to form a closed trail\footnote{Note that a path is also a trail.} (a tour) $U$ on $F$ that visits all vertices of $V'$ twice in the order $v_{\pi(1)}, \ldots ,v_{\pi(k)}$.
    This is not yet a Eulerian tour however.
    To construct a Eulerian tour out of the tour $U$ we consider the multi-graph $\hat{H}$ obtained from $F$ by deleting all vertices in $V$ (looking at the induced subgraph $F[V\setminus V']$) and subsequently removing all edges in the tour $U$ from the remaining multi-graph.
    The resulting multi-graph will consist of multiple connected components.
    These connected components will either consist of a single vertex or will be multi-graphs with vertices of degree two and four.
    This means all these connected components are Eulerian.
    Moreover each of these connected components will contain a vertex that is also a vertex in the tour $U$.
    Consider on each of these connected components then a Eulerian tour.
    These tours will thus form tours on the original multi-graph $F$ as well and since all such tours have at least one vertex in common with the tour $U$ we can extend $U$ to a Eulerian tour on the multi-graph $F$ by for each connected component cutting $U$ at such a vertex and inserting the tour originating from the connected components of $G$.
    This means that $U$ can be turned into a Eulerian tour and hence there exists a \SOET~on the multi-graph $F$.
    This proves the theorem.
\end{proof}

Next we prove the converse statement.

\begin{thm}\label{thm:soet_to_kdpp}
Let $F$ be a connected $4$-regular multi-graph and let $V'\subset V(F)$ be a subset of its vertices with $|V'|=k$. If $F$ allows for a \SOET~with respect to $V'$ then \cref{alg:kSOET} will return TRUE.
\end{thm}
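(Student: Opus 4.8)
The plan is to prove the converse of the previous theorem: that if $F$ allows for a \SOET\ with respect to $V'$, then \cref{alg:kSOET} returns TRUE. First I would fix a \SOET\ $U$ on $F$ with respect to $V'$, so that $m(U) = \bs{X}_0 s_1 \bs{X}_1 s_2 \dots s_k \bs{X}_k s_1 \bs{Y}_1 s_2 \dots s_k \bs{Y}_k$ for some ordering $\bs{s} = s_1 \dots s_k$ of the elements of $V'$, where the $\bs{X}_i$ and $\bs{Y}_i$ are words in $V(F) \setminus V'$. This ordering $\bs{s}$ immediately singles out the permutation $\pi$ of $[k]$ that \cref{alg:kSOET} must try, namely the one with $s_i = v_{\pi(i)}$; since the algorithm iterates over all $k!$ permutations, it will at some point consider this $\pi$.

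Next I would identify the correct \SOET-splitting $H$. Each vertex $v = s_i \in V'$ is traversed exactly twice by $U$; each traversal uses two of the four edges incident on $v$, and the two traversals partition these four edges into two pairs (they must be disjoint pairs since $U$ is a trail and each traversal is a single pass through $v$). Split $v$ into $v^{(a)}, v^{(b)}$ according to this pairing — assign the pair of edges used in the ``first'' pass (the one surrounding $s_i$ between $\bs{X}_{i-1}$ and $\bs{X}_i$, say) to $v^{(a)}$ and the pair used in the ``second'' pass to $v^{(b)}$. This is one of the six splittings at $v$ the algorithm enumerates; doing this at every vertex of $V'$ yields one of the $6^k$ \SOET-splittings $H$ in the list $L$. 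The key point is then that in $H$, the tour $U$ breaks into $2k$ edge-disjoint trails: the sub-trail of $U$ corresponding to $\bs{X}_i$ (for $i=1,\dots,k-1$) becomes a trail from $v_{\pi(i)}^{(a)}$ to $v_{\pi(i+1)}^{(a)}$, the sub-trail corresponding to $\bs{X}_k$ together with $\bs{X}_0$ closes up to a trail from $v_{\pi(k)}^{(a)}$ to $v_{\pi(1)}^{(a)}$, and similarly the $\bs{Y}_i$ give trails among the $v^{(b)}$ vertices. (One must be slightly careful about where the $a/b$ labels land relative to the cyclic wrap-around at index $k$; the splitting should be chosen consistently so that the two passes land on $a$ and $b$ respectively for every vertex, and this is exactly what the enumeration over all $6^k$ splittings guarantees we can achieve.) Since a trail contains an edge-disjoint path between its endpoints (remove cycles), these $2k$ trails contain $2k$ edge-disjoint paths with precisely the endpoint structure $\{(v^{(a)}_{\pi(1)},v^{(a)}_{\pi(2)}),\dots,(v^{(a)}_{\pi(k)},v^{(a)}_{\pi(1)}),(v^{(b)}_{\pi(1)},v^{(b)}_{\pi(2)}),\dots,(v^{(b)}_{\pi(k)},v^{(b)}_{\pi(1)})\}$ that \DPP{2k} is called on in line 9. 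Hence \DPP{2k} returns TRUE for this $H$ and $\pi$, so \cref{alg:kSOET} returns TRUE.

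The main obstacle, I expect, is the bookkeeping of the $a/b$ labels under the cyclic structure of the \SOET: one has to verify that a single global choice of splitting (one of the $6^k$) simultaneously makes all $k-1$ ``$\bs{X}$-trails'' run between $a$-copies and all $k-1$ ``$\bs{Y}$-trails'' run between $b$-copies, including the two wrap-around trails $\bs{X}_k\bs{X}_0$ and $\bs{Y}_k$. The cleanest way is to observe that the cyclic word $m(U)[V'] = s_1\dots s_k s_1 \dots s_k$ visits $V'$ in two full ``laps''; at each $s_i$ we designate the lap-1 passage to go into $v^{(a)}$ and the lap-2 passage into $v^{(b)}$, which is well-defined precisely because each $s_i$ occurs once per lap, and this fixes the splitting at every vertex consistently. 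After that, everything else (trails become edge-disjoint, paths extracted from trails, the algorithm enumerates the needed $(H,\pi)$ pair) is routine. I would then close the proof by noting this completes the equivalence, so combined with \cref{thm:kdpp_to_soet} the algorithm is correct, and the runtime analysis (finitely many — $6^k k!$ — calls to a polynomial-time \DPP{2k} subroutine) gives \cref{cor:ksoet_in_p}.
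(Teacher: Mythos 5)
Your proposal is correct and follows essentially the same route as the paper: extract the $2k$ edge-disjoint sub-trails of the \SOET\ between consecutive vertices of $V'$, prune each trail to a path, and observe that some \SOET-splitting and some permutation in the algorithm's enumeration then witnesses a TRUE answer from \DPP{2k}. One caveat on the point you yourself single out as the main obstacle: your ``cleanest'' prescription (lap-1 passage $\mapsto v^{(a)}$, lap-2 passage $\mapsto v^{(b)}$ at every vertex) does not in fact yield the required endpoint structure at the seam, because with $m(U)=\bs{X}_0 s_1\bs{X}_1\cdots s_k\bs{X}_k s_1\bs{Y}_1\cdots s_k\bs{Y}_k$ the trail along $\bs{X}_k$ leaves $s_k$ on its \emph{first} passage but arrives at $s_1$ on its \emph{second} passage, so under your rule it connects $s_k^{(a)}$ to $s_1^{(b)}$ (and dually the $\bs{Y}_k\bs{X}_0$ trail connects $s_k^{(b)}$ to $s_1^{(a)}$), whereas \DPP{2k} is asked for $a$-to-$a$ and $b$-to-$b$ connections. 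The repair is to use a non-passage-based pairing at $s_1$ alone — pair the first-passage out-edge with the second-passage in-edge as $s_1^{(a)}$ and the remaining two edges as $s_1^{(b)}$ — which is still one of the six splittings at $s_1$, so your earlier appeal to the $6^k$ enumeration does carry the argument; the paper's own proof glosses over exactly this bookkeeping.
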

\begin{proof}
Let $U$ be a $\SOET$~on $F$ with respect to $V'$ and without loss of generality assume that $U$ traverses the vertices of $V'$ in the order $v_1,v_2, \ldots,v_k$.
Hence for the vertices $v_1,v_2$ there are $2$ sub-trails $U_1^{(a)},U^{(b)}_1$ of $U$ that start at $v_1$ and end at $v_2$.
The sub-trail $U_1^{(a)}$ (or $U_1^{(b)}$) might not be a path, but one can easily pick a subset of the edges in $U_1^{(a)}$ which gives a path, for example as the shortest path $P_1^{(a)}$ between $v_1$ and $v_2$ in the subgraph of $F$ given by the vertices and edges of $U_1^{(a)}$.
Similarly for $U_1^{(b)}$ and $P_1^{(b)}$.
We can make the same argument for the vertices $v_2,v_3$ and so forth.
By the definition of \SOET~splittings there must thus exist a \SOET~splitting $G$ of $F$ with respect to $V'$ such that the path $P_1^{(a)}$ starts at $v^{(a)}_1$ and ends at $v^{(a)}_2$ and such that the path $P_1^{(b)}$ starts at $v^{(b)}_1$ and ends at $v^{(b)}_2$.
We can make the same argument for the vertices $v_2,v_3$ and so forth.
Hence there must exist a \SOET-splitting $G$ of $F$ with respect to $V'$ such that the algorithm \DPP{2k}$(G,S)$ with
$S = \{(v^{(a)}_{1},v^{(a)}_{2}), \ldots ,(v^{(a)}_{k},v^{(a)}_{1}), (v^{(b)}_{1},v^{(b)}_{2}), \ldots ,(v^{(b)}_{k},v^{(b)}_{1})\}$ returns TRUE.
Since \cref{alg:kSOET} runs over all possible orderings of $V'$ and all possible \SOET~splittings this particular call to \DPP{2k}~will always happen and hence \cref{alg:kSOET} will return TRUE.
This proves the theorem.
\end{proof}
This leads to the following corollary.

\begin{cor}\label{cor:ksoet_in_p}
$\kSOET$~is in $\mathbb{P}$.
\end{cor}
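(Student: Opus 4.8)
The plan is to derive the corollary from the correctness of \cref{alg:kSOET}, which is established by \cref{thm:kdpp_to_soet} and \cref{thm:soet_to_kdpp}, together with a running-time analysis showing that for every fixed $k$ the algorithm terminates in time polynomial in $\abs{V(F)}$. By \cref{thm:kdpp_to_soet} and \cref{thm:soet_to_kdpp}, \cref{alg:kSOET} returns TRUE if and only if $F$ allows for a \SOET\ with respect to $V'$, so the only remaining task is to bound the running time. (One may first test in linear time whether $F$ is connected and return FALSE otherwise, since a disconnected $4$-regular multi-graph has edges in more than one component and hence admits no Eulerian tour and no \SOET; thus I would assume $F$ connected from here on, matching the hypotheses of those theorems.)

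First I would bound the objects the algorithm manipulates. A \SOET-splitting $H$ of $F$ with respect to $V'$ has exactly $\abs{V(F)}+k$ vertices and the same edge multiset as $F$ (of size $2\abs{V(F)}$, since $F$ is $4$-regular), so $\abs{V(H)}=\mathcal{O}(\abs{V(F)})$ and $H$ can be constructed from $F$ in time $\mathcal{O}(\abs{V(F)})$. There are exactly $6^k$ such splittings, one for each way of partitioning into two pairs the four edges incident to each of the $k$ vertices of $V'$, so line~6 of \cref{alg:kSOET} produces the list $L$ in time $\mathcal{O}(6^k\abs{V(F)})$.

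Next I would analyze the nested loops. The outer loop runs over the $6^k$ graphs of $L$ and the inner loop over the $k!$ permutations of $[k]$, so the inner body is executed at most $6^k\cdot k!$ times, and each execution makes a single call to \DPP{2k} on a multi-graph with $\mathcal{O}(\abs{V(F)})$ vertices. By the algorithm of~\cite{Kawarabayashi2015}, such a call runs in time $(f(2k)^{\mathcal{O}(2k\,f(2k))})\,\abs{V(F)}^{\mathcal{O}(1)}$ with $f(2k)=2^{2^{\mathcal{O}(k^2)}}$. Multiplying by the number of iterations and adding the cost of constructing $L$, the total running time is bounded by $6^k\cdot k!\cdot(f(2k)^{\mathcal{O}(2k\,f(2k))})\,\abs{V(F)}^{\mathcal{O}(1)}$, which is polynomial in $\abs{V(F)}$ for every fixed $k$. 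Combined with the correctness statements above, this proves that $\kSOET$ is in $\mathbb{P}$.

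The main (and essentially the only) obstacle here is conceptual rather than technical: polynomiality relies crucially on $k$ being a fixed constant, since both the combinatorial prefactor $6^k\cdot k!$ and, far more dramatically, the hidden constant $f(2k)^{\mathcal{O}(2k\,f(2k))}$ inherited from the $2k$-DPP subroutine of~\cite{Kawarabayashi2015} grow extremely rapidly in $k$. For the statement "$\kSOET\in\mathbb{P}$" this is harmless, but — as already remarked around \cref{cor:svm_is_fpt} — it means the resulting algorithm, while theoretically efficient, is not of practical use.
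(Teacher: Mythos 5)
Your proposal is correct and follows essentially the same route as the paper: correctness of \cref{alg:kSOET} is taken from \cref{thm:kdpp_to_soet} and \cref{thm:soet_to_kdpp}, and polynomiality follows because the algorithm makes a number of calls to the \DPP{2k} subroutine that is constant in $\abs{V(F)}$ (though exponential in $k$), each on a multi-graph of size $\mathcal{O}(\abs{V(F)})$. Your version is in fact slightly more explicit than the paper's, spelling out the $6^k\cdot k!$ iteration count and the hidden constant from~\cite{Kawarabayashi2015}, and adding the harmless preliminary connectivity check.
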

\begin{proof}
By $\cref{thm:soet_to_kdpp}$ and $\cref{thm:kdpp_to_soet}$, $\cref{alg:kSOET}$ returns TRUE if and only if the tuple $(F,V')$, with $F$ a $4$-regular multi-graph and $V'\subset V(F)$ a subset of its vertices with $|V'|=k$, is a YES instance of \kSOET.
 Moreover the function $\DPP{k'}$ runs in polynomial time in the size of the input graph and we have for all \SOET-splittings $G$ of $F$ with respect to $V'$ that $|V(G)|=|V(F)|+k$ and $|E(G)| = |E(F)|$. \Cref{alg:kSOET} hence performs a constant number of function calls (constant in the size of $F$, not in $k$) of $\DPP{k'}$ with an input multi-graph of size $O(|V(F)|,|E(F)|)$. Hence \Cref{alg:kSOET} runs in polynomial time with respect to $|V(F)|,|E(F)|$ and thus we have that \kSOET~is in $\mathbb{P}$.
\end{proof}
This corollary then leads to \cref{thm:ksvm_in_p}.

\section{Connected vertex-minor on three vertices or less}\label{sec:small_star}
In this section we show that a fixed connected graph with vertices $V'$ is a vertex-minor of a connected graph $G$ if $V'\subseteq V(G)$ and $\abs{V'}\leq 3$.
Furthermore we provide an algorithm that reduces the number of operations one need to transform $G$ to the desired graph.

\begin{thm}\label{thm:small_star}
    Let $G$ be a connected graph and $G'$ be a connected graph with vertices $V'$, such that $\abs{V'}\leq 3$.
    Then we have that
    \begin{equation}
        G'<G\quad\Leftrightarrow\quad V'\subseteq V(G)
    \end{equation}
\end{thm}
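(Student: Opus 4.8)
The $(\Rightarrow)$ direction is immediate from the definition of a vertex-minor: local complementations leave the vertex set fixed and vertex-deletions only remove vertices, so $G'<G$ forces $V(G')=V'\subseteq V(G)$. All the work is in the converse, and the plan is to reduce it, via LC-equivalence, to exhibiting a \emph{single} connected graph on the vertex set $V'$ that is a vertex-minor of $G$; we may assume $V'\neq\emptyset$, the case $V'=\emptyset$ being trivial by deleting all of $V(G)$.

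First I would record the classification of connected graphs on at most three vertices: up to the labeling of $V'$ these are $K_1$, $K_2$, and, on three vertices, the three labeled paths $P_3=S_{V',x}$ together with $K_{V'}$. Using the identities $\tau_x(S_{V,x})=K_V$ and $\tau_x(K_V)=S_{V,x}$ recalled in \cref{sec:prel}, all connected graphs on a fixed vertex set of size at most three are LC-equivalent to one another. By \cref{thm:multi_vertex-minor}, two graphs on the same vertex set that are LC-equivalent are vertex-minors of $G$ simultaneously (since $H<G\Leftrightarrow\exists P\in\mathcal P_{\bs u}:H\sim_{\mathrm{LC}}P(G)$, and $\sim_{\mathrm{LC}}$ is an equivalence relation). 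Hence it suffices to show that for every $V'\subseteq V(G)$ with $1\le\abs{V'}\le3$ there is some connected graph on $V'$ that is a vertex-minor of $G$; applying this to a connected graph LC-equivalent to $G'$ then gives $G'<G$.

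Next I would dispatch the three sizes. For $\abs{V'}=1$, deleting every vertex of $V(G)\setminus V'$ leaves $K_1$ on $V'$, so $G'<G$. For $\abs{V'}\in\{2,3\}$ I would invoke the byproduct of the analysis in \cref{sec:theproof1}: \cref{alg:Sn_vminor_2}, whose correctness proof there never uses the distance-hereditary hypothesis, produces for a connected graph $G$ a sequence $\bs{v}$ of local complementations and pivots such that $\tau_{\bs{v}}(G)[V']$ is connected — it makes two chosen vertices of $V'$ adjacent and then attaches the remaining vertex along a shortest path while preserving the first edge. Deleting $V(G)\setminus V'$ then yields a connected graph on $V'$ (a path or a complete graph), which is LC-equivalent to $G'$, so $G'<G$. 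This is precisely the statement, recorded in \cref{sec:theproof1}, that $S_{V'}<G$ whenever $\abs{V'}\in\{2,3\}$ and the vertices of $V'$ are connected in $G$, which holds here since $G$ is connected; and since the argument runs \cref{alg:Sn_vminor_2} explicitly it is constructive, giving an explicit short transformation as claimed in the section.

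The only genuinely delicate point, and the one I expect to be the main obstacle, is the $\abs{V'}=3$ case: once two of the three target vertices have been made adjacent, the third must be connected to them without destroying that edge and without disconnecting $\tau_{\bs{v}}(G)[V']$. The resolution is exactly that of \cref{sec:theproof1}: attach the third vertex $f$ along a \emph{shortest} path $(f=p_0,p_1,\dots,p_{k+1})$ using pivots $\rho_{(p_i,f)}$, and observe that a pivot $\rho_{(v,u)}$ changes only edges incident on $u$ or $v$ together with edges in $N_u\times N_v$, so minimality of the path guarantees that the edge one wants to keep (and the later path-edges) are untouched. With that lemma available the theorem follows immediately, so no further obstacles remain.
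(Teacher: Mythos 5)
Your proposal is correct, but it proves the converse by a genuinely different route than the paper. The paper's proof of \cref{thm:small_star} is a self-contained deletion argument: it orders the vertices of $V(G)\setminus V'$ and removes them one at a time with an operation $P_{v}\in\{X_v,Y_v,Z_v\}$, showing that at each step either $G^{(i)}\setminus v$ or $\tau_v(G^{(i)})\setminus v$ is connected (the cut-vertex case is handled by observing that a local complementation at a cut-vertex joins the neighborhoods lying in different components, and the other case follows because local complementation is an involution); connectivity of the final three-vertex graph plus LC-equivalence of all connected graphs on at most three vertices then finishes the argument. You instead build the target bottom-up, importing the shortest-path/pivot construction of \cref{alg:Sn_vminor_2} and the observation from \cref{sec:theproof1} that its correctness never uses the distance-hereditary hypothesis — a fact the paper itself records and flags for use in \cref{sec:small_star}, though its actual proof of the theorem does not rely on it (that machinery instead underlies the ``more practical'' \cref{alg:S1,alg:S2} given after the theorem). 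Both arguments are sound and non-circular; your reduction of $G'$ to ``some connected graph on $V'$'' via \cref{thm:multi_vertex-minor} and transitivity of $\sim_{\mathrm{LC}}$ is also the same normalization the paper performs. The trade-off is that the paper's proof is elementary and independent of the algorithmic sections, and directly yields the recursive connectivity-checking procedure mentioned after the theorem, whereas your proof leans on the correctness analysis of \cref{alg:Sn_vminor_2} but in exchange produces a short, localized transformation touching only vertices near the shortest paths — essentially recovering \cref{alg:S2} as the proof itself.
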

\begin{proof}
    Note that if $G'<G$ then clearly $V'\subseteq V(G)$.
    Assume therefore that $V'\subseteq V(G)$.
    Let's denote the vertices in $V(G)\setminus V'$ as $U=(v_1,v_2,\dots,v_{n-k}$, where $n=\abs{V(G)}$ and $k=\abs{G'}$.
    We will now show that $G'<G$ by finding a sequence of operations $P=P_{v_{n-k}}\circ\dots\circ P_{v_1}$, where $P_{v_i}\in\{X_v,Y_v,Z_v\}$, such that each intermediate graph $G^{(i)}=P_{v_i}\circ\dots\circ P_{v_1}(G)$ is connected.
    Since any connected graph with vertices $V'$, for $\abs{V'}\leq 3$, is either a star graph or a complete graph, which are LC-equivalent, this shows that $G'<G$.
    Let's therefore consider such an intermediate graph $G^{(i)}$ for some $i\in[n-k]$ and the next vertex $v_{i+1}$.
    We will now show that $G^{(i)}\setminus v_{i+1}$ or $\tau_{v_{i+1}}(G^{(i)})\setminus v_{i+1}$ is connected.
    This will be done by considering the following two cases:

    \noindent\underline{Assume that $G^{(i)}\setminus v_{i+1}$ is not connected:}
    \begin{addmargin}[1em]{0em}
        This means that $v_{i+1}$ is a cut vertex.
        Let $G_1$ be a connected component in the graph $G^{(i)}\setminus v_{i+1}$ and $G_2$ the rest of the vertices, see \cref{eq:cut_vertex1,eq:cut_vertex2} for an illustration.
        Since $v_{i+1}$ is a cut vertex, we know that no vertex in $N_{v_{i+1}}\cap V(G_2)$ is adjacent to any vertex in $N_{v_{i+1}}\cap V(G_1)$, in the graph $G^{(i)}$.
        Thus, all vertices in $N_{v_{i+1}}\cap V(G_2)$ are adjacent to all vertices in $N_{v_{i+1}}\cap V(G_1)$ in the graph $\tau_{v_{i+1}}(\hat{G})\setminus {v_{i+1}}$, showing that this graph is connected, see \cref{eq:cut_vertex3}.
        \begin{align}
            G^{(i)}\;&=\;\raisebox{-0.04\textwidth}{\includegraphics[scale=0.5]{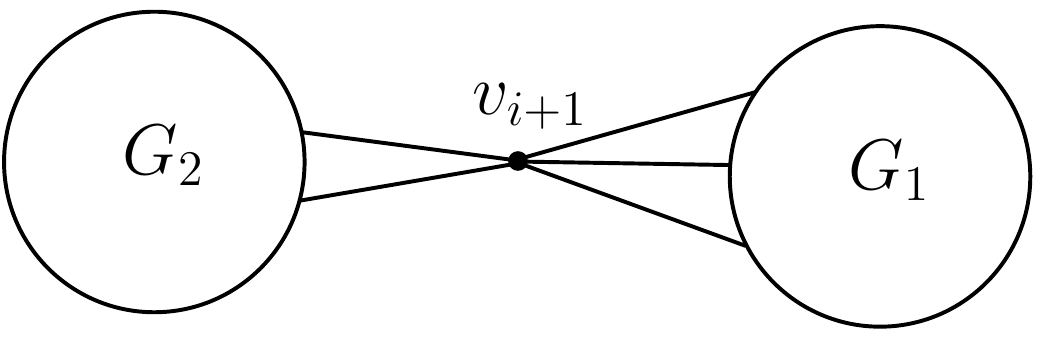}}\label{eq:cut_vertex1}\\
            G^{(i)}\setminus v_{i+1}\;&=\;\raisebox{-0.04\textwidth}{\includegraphics[scale=0.5]{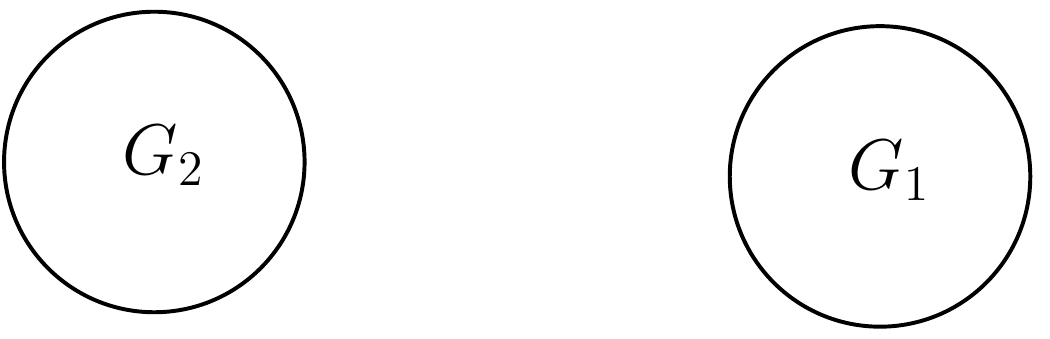}}\label{eq:cut_vertex2}\\
            \tau_{v_{i+1}}(G^{(i)})\setminus v_{i+1}\;&=\;\raisebox{-0.04\textwidth}{\includegraphics[scale=0.5]{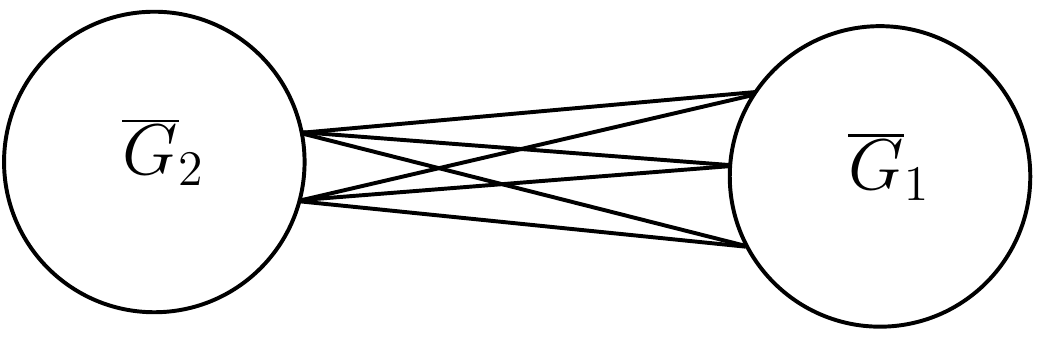}}\label{eq:cut_vertex3}
        \end{align}
    \end{addmargin}

    \noindent\underline{Assume that $\tau_v(\hat{G})\setminus v$ is not connected:}
    \begin{addmargin}[1em]{0em}
        This case follows from the previous.
        To see this, let $\tilde{G}$ be the graph $\tau_v(G^{(i)})$.
        From the above case we then know that $\tau_v(\tilde{G})\setminus v_{i+1}$ is connected.
        But this graph is exactly $G^{(i)}\setminus v_{i+1}$ since local complementations are involutions and the theorem follows.
    \end{addmargin}
\end{proof}

From \cref{thm:small_star} we can easily formulate an algorithm that finds a sequence of operations taking $G$ to a desired connected graph $G'$ on $V'$, by simply checking recursively if $\hat{G}\setminus v$ or $\tau_v(\hat{G})\setminus v$ is connected.
However, if the vertices in $V'$ are already close in $G$ and $G$ is a very large, it would be practical to not have to consider all the vertices in $G$.
Next, we present a more practical algorithm to find a sequence of local complementations and vertex-deletions that take some graph $G$ to a star graph\footnote{Note that any other connected graph on $\{a,b,c\}$ can easily be constructed.} on the vertices $a$, $b$ and $c$.
The algorithm performs the following steps:
\begin{itemize}
    \item Find the shortest path $P_1$ between $a$ and $b$ and connect these by doing $\tau$-operations along this path.
        This first step already give an algorithm for creating a star graph on vertices $a$ and $b$, see \cref{alg:S1}.
    \item Then do the same with $b$ and $c$ by finding the shortest path $P_2$ between $b$ and $c$.
        The question is now whether we removed the edge between $a$ and $b$ while connecting $b$ and $c$.
        This could only happen if $P_2$ goes through a vertex which is a common neighbor of $a$ and $b$.
        Furthermore, since $P_2$ is the shortest path from $c$ to $b$ this could only be the case for the last vertex on the path, before $b$.
        \begin{itemize}
            \item If $c$ is connected to $a$ before the last local complementation on $P_2$, then stop, since the induced graph is already connected.
            \item Assume that $c$ is not connected to $a$ before the last local complementation is performed.
                So in this case, after performing the local complementation along $P_2$, $a$ and $b$ will not be connected but they will both be connected to $c$.
        \end{itemize}
\end{itemize}

The full protocol is formalized in \cref{alg:S2}.

\begin{algorithm}[H]
    \caption{Producing $S_1$ on vertices $a$ and $b$}\label{alg:S1}
    \begin{algorithmic}[1]
        \State \textbf{INPUT}:\hspace{1.2em} A graph $G$ and two vertices $a,b\in V(G)$.
        \State \textbf{OUTPUT}: A sequence of vertices $\bs{v}$ such that $\tau_{\bs{v}}(G)[\{a,b\}]$ is a star graph.
        \State \hrulefill
        \State
        \State Find the shortest path $P$ between $a$ and $b$
        \State Perform $\tau_p$ for all $p\in P\setminus\{a,b\}$
    \end{algorithmic}
\end{algorithm}

\begin{algorithm}[H]
    \caption{Producing $S_2$ on vertices $a$, $b$ and $c$}\label{alg:S2}
    \begin{algorithmic}[1]
        \State \textbf{INPUT}:\hspace{1.2em} A graph $G$ and three vertices $a,b,c\in V(G)$.
        \State \textbf{OUTPUT}: A sequence of vertices $\bs{v}$ such that $\tau_{\bs{v}}(G)[\{a,b,c\}]$ is a star graph.
        \State \hrulefill
        \State
        \State Find the shortest path $P_1$ between $a$ and $b$
        \State Perform $\tau_p$ for all $p\in P_1\setminus\{a,b\}$
        \State Find the shortest path $P_2=(p_0=b,p_1,\dots,p_n,p_{n+1}=c)$
        \For{$i$ \textbf{in} $1,\dots,n$}
            \If{$a$ and $c$ are not adjacent}
                \State Perform $\tau_{p_i}$
            \EndIf
        \EndFor
    \end{algorithmic}
\end{algorithm}

\section{Conclusion}\label{sec:conclusion}
We have shown that deciding if a graph state $\ket{G'}$ can be obtained from another graph state $\ket{G}$ using \LCLPMCC\ is \NP-Complete, by showing that \VM\ is \NP-Complete.
The computational complexity of \VM\ was previously unknown and was posted as an open problem in~\cite{Pivot-minors2016}.
It is important to note that our results are for labeled graphs, since vertices correspond to physical qubits in the case of transforming graph states.

We presented an efficient algorithm for $\SVM$ when the input graph is restricted to be distance-hereditary.
It would be interesting to know if the same approach generalize to qudit graph states described by weighted graphs of low rank-width.

We presented an efficient algorithm for $\kSVM$ on circle graphs, that is the problem of deciding if the star graph on a subset of vertices $V'$, with $\abs{V'}=k$, is a vertex-minor of another graph.
However, the computational complexity of \kSVM\ or \kVM\ on general graphs is still unknown.

Below we list some more open questions regarding the computational complexity of deciding how graph states can be transformed using local operations:

\begin{itemize}
    \item Is the problem still \NP-Complete if one allows for arbitrary local operations and classical communication (LOCC), instead of restricting to only \LCLPMCC?
        Is there an efficient algorithm for states with bounded Schmidt-rank width also in this case?
        It has been shown that many graph state are equivalent under LC if and only if they are equivalent under stochastic LOCC (SLOCC)~\cite{Zeng2007}.
        An interesting question is therefore whether graph states described by distance-hereditary or circle graphs fall into this category.
    \item What is the computational complexity of deciding if a graph state $\ket{G}$ can be transformed in to another $\ket{G'}$ if the qubits are partitioned into multiple local sets, within which multi-qubit Clifford operations and multi-qubit Pauli measurements can be performed.
    \item How do the computational complexity results of this paper relate to the complexity of \emph{code switching} in stabilizer quantum error correction codes, e.g. fault-tolerant interconversion between two stabilizer codes~\cite{gottesman1997stabilizer}?
        More precisely, can one re-use the techniques presented here, to prove that it is computationally hard to decide whether one stabilizer code can be fault-tolerantly transformed into another.
        We conjecture that this could be possible, since many stabilizer codes (such as topological stabilizer codes) have a notion of `locality' with fault-tolerant conversions being the conversions that respect this locality~\cite{nautrup2017fault}.
\end{itemize}

While finishing this work we became aware of work by F. Hahn et al~\cite{Hahn2018}, where specific instances of \VM\ are considered in the context of quantum network routing.

\subsection*{Acknowledgements}
We thank Kenneth Goodenough, Think Le, Bas Dirkse, Victoria Lipinska, Stefan B\"{a}uml, Tim Coopmans, J\'{e}r\'{e}my Ribeiro, Ben Criger and Jens Eisert for discussions.
AD, JH and SW were supported by STW Netherlands, NWO VIDI grant and an ERC Starting grant.

\bibliographystyle{abbrv}
\bibliography{references,refs_misc}

\end{document}